\newcommand{\longonly}[1]{}
\newcommand{\shortonly}[1]{#1}
\newcommand{\longonly}[1]{#1} %stuff to put only in long version
\newcommand{\shortonly}[1]{} %stuff to put only in short version
\newcommand{\longonly}[1]{}
\newcommand{\shortonly}[1]{#1}
\theoremstyle{definition}
\newtheorem{example}{Example}
\newtheorem{remark}{Remark}
\newtheorem{thm}{Theorem}
\newtheorem{defn}{Definition}
\newtheorem*{defn*}{Definition}
\newtheorem*{lemma*}{Lemma}
\newtheorem{lemma}[thm]{Lemma}
\newtheoremstyle{thmnum}{\topsep}{\topsep}{\itshape}{0pt}{\bfseries}{.}{ }{\thmname{#1}\thmnote{ \bfseries #3}}
\theoremstyle{thmnum}
\newtheorem{duplicatelemma}{Lemma}
\newtheorem{duplicatedefn}{Definition}
\xdef\csname vec\x \endcsname{\noexpand\ensuremath{\noexpand\bm{\x}}}
\xdef\csname vec\x \endcsname{\noexpand\ensuremath{\noexpand\bm{\x}}}
\xdef\csname c\x \endcsname{\noexpand\ensuremath{\noexpand\mathcal{\x}}}
\xdef\csname bb\x \endcsname{\noexpand\ensuremath{\noexpand\mathbb{\x}}}
\newcommand{\Prob}{\ensuremath{{\mathbb P}}}
\newcommand{\defineqq}{\ensuremath{\stackrel{\textup{\tiny def}}{=}}}
\def\tlx{\ensuremath{\tilde{x}}}
\def\tla{\ensuremath{\tilde{a}}}
\def\tlb{\ensuremath{\tilde{b}}}
\def\tlc{\ensuremath{\tilde{c}}}
\def\msg{\ensuremath{m}} %message
\def\Msg{\ensuremath{M}} %message random variable
\def\Msgh{\ensuremath{\hat{\Msg}}} %reconstruction random variable
\def\nummsg{\mbox{$N$}} %number of messages
\def\deterministic{\ensuremath{\mathrm{deterministic}}}
\def\random{\ensuremath{\mathrm{random}}}
\def\strong{\ensuremath{\mathrm{strong}}}
\def\weak{\ensuremath{\mathrm{weak}}}
\newcommand{\bmac}{\ensuremath{\text{byzantine-MAC}}\xspace}
\newcommand{\bmacs}{\ensuremath{\text{byzantine-MACs}}\xspace}
\newcommand{\mach}{\ensuremath{W_{Y|X_1X_2X_3}}\xspace}
\newcommand{\machijk}{\ensuremath{W_{Y|X_iX_jX_k}}\xspace}
\newcommand{\indep}{\raisebox{0.05em}{\rotatebox[origin=c]{90}{$\models$}}\xspace}
\newcommand{\inb}[1]{\left\{#1\right\}}
\newcommand{\inp}[1]{\left(#1\right)}
\newcommand{\insq}[1]{\left[#1\right]}
\newcommand{\dm}[2]{\ensuremath{\cD^{(#1)}_{#2}}\xspace}
\newcommand{\set}[1]{\left\{#1\right\}}
\newcommand{\blue}{\textcolor{black}}
\newcommand{\pink}{\textcolor{black}}
\definecolor{bluishgreen}{RGB}{0,158,115}
\definecolor{vermillion}{RGB}{213,94,0}
\definecolor{myblue}{RGB}{0,114,200}
\definecolor{myorange}{RGB}{230,159,0}
\definecolor{reddishPurple}{RGB}{204,121,167}
\def\BibTeX{{\rm B\kern-.05em{\sc i\kern-.025em b}\kern-.08em
    T\kern-.1667em\lower.7ex\hbox{E}\kern-.125emX}}
\begin{document}

\title{Byzantine Multiple Access Channels --- Part I: Reliable Communication\thanks{This work was presented in part at the 2019 IEEE Information Theory Workshop \cite{8989065}.\\
N. Sangwan and V. M. Prabhakaran were supported by DAE under project no. RTI4001. N. Sangwan was additionally supported by the TCS Foundation through the TCS Research Scholar Program.  The work of M. Bakshi was supported in part by the Research Grants Council of the Hong Kong Special Administrative Region, China, under Grant GRF 14300617, and in part by the National Science Foundation under Grant No. CCF-2107526. The work
of B. K. Dey was supported by Bharti Centre for
Communication in IIT Bombay. V. M. Prabhakaran was additionally supported by SERB through project MTR/2020/000308.}}

\author[1]{Neha Sangwan}
\author[2]{Mayank Bakshi}
\author[3]{Bikash Kumar Dey}
\author[1]{Vinod M. Prabhakaran}
\affil[1]{Tata Institute of Fundamental Research, Mumbai, India}
\affil[2]{Arizona State University, Tempe, AZ, USA}
\affil[3]{Indian Institute of Technology Bombay, Mumbai, India}

\makeatletter
\patchcmd{\@maketitle}
  {\addvspace{0.5\baselineskip}\egroup}
  {\addvspace{-1.5\baselineskip}\egroup}
  {}
  {}
\makeatother

\maketitle

%!TeX root=paper.tex
\begin{abstract}
We study communication over a Multiple Access Channel (MAC) where users can possibly be adversarial. The receiver is unaware of the identity of the adversarial users (if any). When all users are non-adversarial, we want their messages to be decoded reliably. When a user behaves adversarially, we require that the honest users' messages be decoded reliably. An adversarial user can mount an attack by sending any input into the channel rather than following the protocol. It turns out that the $2$-user MAC capacity region follows from the point-to-point Arbitrarily Varying Channel (AVC) capacity. For the $3$-user MAC in which at most one user may be malicious, we characterize the capacity region for  deterministic codes and randomized codes (where each user shares an independent random secret key with the receiver). These results are then generalized for the $k$-user MAC where the adversary may control all users in one out of a collection of given subsets. 
\end{abstract}

%!TeX root=paper.tex
\section{Introduction} 
\subsection{Motivation and setup}\label{sec:motivation_and_setup}
%Wireless communication systems are increasingly being deployed in non-traditional settings like the wireless Internet-of-Things (IoTs). These applications bring about new scenarios where traditional models may be insecure, for example, when devices of varying security levels are connected in a wireless network, it might be difficult to ensure that every device follows the communication protocol honestly. In fact, it would be safer to assume that any device can be malicious. Since, the devices share the same communication medium, a malicious\footnote{We use `malicious' and `adversarial' interchangeably.} device can cause decoding error for the honest device(s). This motivates the current problem, where we study the up-link of a communication network where any user may behave maliciously.

Communication systems such as the wireless Internet-of-Things (IoTs), which consist of devices of varying security levels connected over a wireless network, pose new security challenges \cite{8232533,ammar2018internet}. Since, the devices share the same communication medium, a malicious\footnote{We use `malicious' and `adversarial' interchangeably.} device may attempt to cause decoding errors for the honest device(s). This motivates the present problem. We study the uplink of a communication network in which users may behave maliciously.

Consider a Multiple Access Channel (MAC) with users who are potentially adversarial. An adversarial user may not follow the protocol and may choose its channel input maliciously to disrupt the communication of the other users. The receiver is unaware of whether any of the users is adversarial and the identity of the adversarial user(s) (if any). We call such a channel a ``\bmac''. 
If all users are non-adversarial ({\em i.e.}, honest), we require that their messages be reliably decoded. 
However, if some of the users are adversarial, the decoder must correctly recover the messages of all the other (honest) users. \pink{Adversarial users have no side information about the messages of the honest users.}
We call this the problem of {\em reliable communication in a \bmac}.

\subsection{Related works}\label{sec:related_works}
The present model is different from other well-studied models involving \pink{non-byzantine users and adversaries}, both {\em passive} and {\em active}. 
\pink{In all such models, the adversary is different from all the legitimate communicating parties and its identity is known to all parties.}
% \red{ say it here}. Internal to mean byzantine. There is a role assigned to the adversary. Non-byzantine adversary, external was there to mean non byzantine. We use non-byzantine and byzantine adversary. }

For example, a wiretap channel \cite{Wyner75} has a passive eavesdropper who gets a noisy version of the communication between the sender and the receiver. The goal is to ensure reliable and private (from the eavesdropper) communication from the sender to the receiver. On the other hand, in Arbitrarily Varying Channels (AVC) \cite{BlackwellBTAMS60,survey}
the  adversary is active and controls the channel. The adversary can change the channel law for each channel use with the goal of jamming the communication between the sender and the receiver. 
Arbitrarily Varying Multiple Access Channels (AV-MAC) \cite{BeemerCNS20,Jahn81,Gubner90,AhlswedeC99,PeregS19,WieseB13}, which consider a Multiple Access Channel (MAC) where the channel law is controlled by an adversary, have also been studied. 
Jahn~\cite{Jahn81} obtained the randomized coding capacity region where each user has independent randomness shared with the receiver. He also showed that this region is also the deterministic coding capacity region under average probability of error whenever the
latter has a non-empty interior, a result along the lines of Ahlswede’s dichotomy for the AVC~\cite{Ahlswede78}. Gubner~\cite{Gubner90} proved necessary conditions (non-symmetrizability conditions) for the deterministic coding capacity region to be non-empty. Ahlswede and Cai~\cite{AhlswedeC99} showed that Gubner’s necessary conditions are also sufficient for the deterministic coding capacity region to have a non-empty interior. More recently, Pereg and Steinberg~\cite{PeregS19} obtained the capacity region for the AV-MAC with state constraints. Wiese and Boche~\cite{WieseB13} considered the two-user AV-MAC with conferencing encoders. 
In a recent work, Beemer, Graves, Kliewer, Kosut, and Yu~\cite{BeemerCNS20} considered an authentication like model in a two-user AV-MAC, where all states, except one, are treated as adversarial states. Under adversarial states, the decoder's output can be a declaration of the presence of an  adversary while also decoding at least one user’s message.

\pink{In contrast to these works, the current model has {\em byzantine users}, {\em i.e.}, one of the legitimate users is potentially adversarial.} 
 % [17] is byz because the relay may turn adv. }, i.e., one of the legitimate users is potentially adversarial. 
There are other works on \pink{models with {\em byzantine users} in the information theory literature}, mostly in the setting of network coding. Byzantine attacks on the nodes and edges of networks have been studied under omniscient and weaker adversarial models in~\cite{KTong} and~\cite{Jaggi7,WangSK10}, respectively. He and Yener~\cite{Yener} considered a Gaussian two-hop network with an eavesdropping and byzantine adversarial relay where the requirement is decoding the message with secrecy and byzantine attack detection.  La and Anantharam~\cite{LaADIMACS04} studied the MAC with strategic users modeled as a cooperative game with the objective of fairly allocating communication rate among the users.
%There are several other related works. Wiese and Boche~\cite{WieseB13} considered two-user arbitrarily varying multiple access channels with conferencing encoders. 

% For the \bmac, in a previous work \cite{NehaBDP19}, we looked at a weaker decoding guarantee than the present model, called {\em authenticated communication}. The goal was to ensure reliable decoding when all users are non-adversarial. However, if any user behaves adversarially, the decoder (with high probability) must either output the correct messages for the honest users or declare \pink{an} error, i.e., an adversary should not be able to cause an undetected erroneous output for the honest users.

\pink{For the \bmac, in a previous work \cite{NehaBDP19}, we looked at a weaker decoding guarantee than the present model, called {\em authenticated communication}. 
Under this weaker guarantee, the decoder must still  reliably recover the messages when all the users are honest. However, if any user behaves adversarially, the decoder may either output the correct messages for the honest users or declare an error, {\em i.e.}, an adversary should not be able to cause an undetected erroneous output for the honest users.
% While reliable decoding is required when all users are non-adversarial, if any user behaves adversarially, the decoder (with high probability) may either output the correct messages for the honest users or declare \pink{an} error, {\em i.e.}, an adversary should not be able to cause an undetected erroneous output for the honest users.
In a similar model of {\em communication with adversary identification} \cite{NehaBDP21} in a \bmac with two users, a slightly stronger decoding guarantee was considered.
% In the slightly stronger model of {\em communication with adversary identification} \cite{NehaBDP21} in a \bmac \pink{with two byzantine users}, a similar decoding guarantee was considered, except that the decoder was required to identify the malicious user(s) while declaring error
Again, reliable decoding was required when all users are honest. In the presence of a malicious user, the decoder may either output a pair of messages out of which the message of the honest user is correct, or declare an error together with the identity of the malicious user. Both these models are different from the present model, where we always require reliable decoding of the honest users' messages and the decoder may never declare an error\footnote{Journal versions of~\cite{NehaBDP19} and~\cite{NehaBDP21} are in preparation. \pink{Together with the present paper these constitute our multi-part study of Byzantine MACs encompassing various decoding requirements.}\label{footnote:2}}.}

\subsection{Two-user \bmac}
For the $2$-user \bmac, consider the problem of reliable communication when any one of the users might be adversarial (though the decoder does not know {\em a priori} which, if any, of the users is adversarial). Clearly, each user can at least achieve the capacity of the AVC where the other user's input is treated as the channel state. Further, it is also easy to see that no higher rate is possible as, for the honest user's perspective,  the other user, when adversarial, can behave exactly like an adversary in the AVC setup (see Figure~\ref{fig:2userAVMAC}(a) and (b)). 
Thus, the capacity region is the rectangular region defined by the AVC capacities of the two users' channels (Figure~\ref{fig:2userAVMAC}(c)), \emph{i.e.}, there is no trade-off between the rates\footnote{This observation holds true under deterministic coding, stochastic encoding (where the encoders have private randomness), and randomized coding settings under both maximum and average probabilities of error. 
A similar observation can be made for a $k$-user \bmac where up to $k-1$ users may adversarially collude.}. Thus, the simplest non-trivial case is that of a 3-user \bmac with at most one adversarial user.

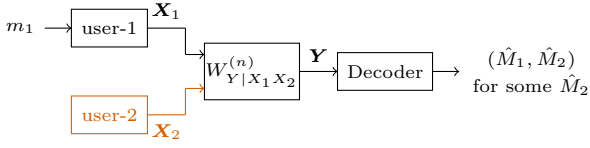
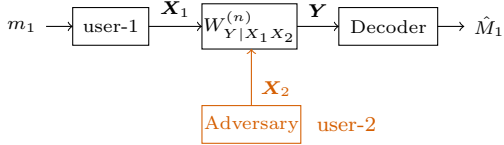
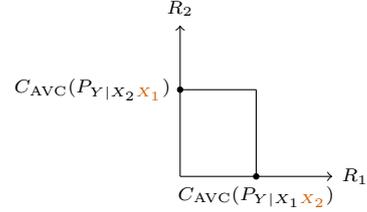
\begin{figure}
 \begin{subfigure}[b]{0.5\textwidth}
 	 \begin{subfigure}[b]{0.9\textwidth}
 	 \centering
	\begin{tikzpicture}[scale=0.5]
	\draw (1.7,6) rectangle ++(2,1) node[pos=.5]{\scriptsize user-1};
	\draw (4.2+1,4.6) rectangle ++(1.3+1.2,1.5) node[pos=.5]{\scriptsize $W^{(n)}_{Y|X_1X_2}$};
	\draw[->][vermillion] (3.7+1,4.9) -- ++ (0.5,0);
	\draw[-][vermillion] (3.7+1, 4.9) -- ++(0,-0.7);
	\draw[-][vermillion] (3.7+1, 4.2) -- node[below]{\scriptsize $\vecX_2$} ++(-1, 0);
	\draw[vermillion] (1.7,3.7) rectangle ++(2,1) node[pos=.5]{\scriptsize user-2};
	\draw[->] (1,6.5) node[anchor=east]{\scriptsize $m_1$} -- ++ (0.7,0) ;
	\draw[->] (3.7+1,5.80) -- ++(0.5,0);
	\draw[-] (3.7+1,5.8) -- ++ (0,0.7);
	\draw[-] (2.7+1, 6.5) -- node[above]{\scriptsize $\vecX_1$} ++(1,0);
	\draw[->] (5.7+1+1,5.35) -- node[above] {\scriptsize $\vecY$} ++ (1,0);
	\draw (6.7+1+1, 4.85) rectangle ++(2.5,1) node[pos=0.5]{\scriptsize Decoder};
	\draw[->] (7.7+1+1+1+0.5, 5.35) -- ++ (0.7,0) ;
	\node at (13.8, 5) {\scriptsize for some $\hat{M}_2$};
	\node at (13.8, 5.7) {\scriptsize $(\hat{M}_1, \hat{M}_2)$};
	\end{tikzpicture}
	\caption{A two-user \bmac where user-2 is malicious.}
	\end{subfigure}
 	\begin{subfigure}[b]{0.9\textwidth}
 	\centering
	\begin{tikzpicture}[scale=0.5]
	\draw[white] (3.2,7) rectangle ++(2.5,1.2) ;
	\draw (-0.2,6) rectangle ++(2,1) node[pos=.5]{ \scriptsize user-1};
	\draw (3.2,5.9) rectangle ++(2.5,1.2) node[pos=.5]{\scriptsize $W^{(n)}_{Y|X_1X_2}$};
	\draw[->,color=vermillion] (4.5, 4.4) node[anchor = south west]{\scriptsize $\vecX_2$} -- ++ (0, 1.5) ;
	\draw[color=vermillion] (3.2, 3.4) rectangle ++(2.6,1) node[pos=0.5]{\scriptsize Adversary};
	\draw[->] (-0.7-0.2,6.5) node[anchor=east]{\scriptsize $m_1$} -- ++ (0.7,0) ;
	\draw[->] (2-0.2, 6.5) -- node[above]{\scriptsize $\vecX_1$} ++(1.4,0);
	\draw[->] (5.7,6.5) -- node[above] {\scriptsize $\vecY$} ++ (1.1,0);
	\draw (6.8, 6) rectangle ++(2.6,1) node[pos=0.5]{\scriptsize Decoder};
	\draw[->] (9.4, 6.5) -- ++ (0.7,0)  node[right] {\scriptsize $\hat{M}_1$};
	\node[align=center, color = vermillion] at (7,3.9) {{\footnotesize user-2}};
\end{tikzpicture}
\caption{A malicious user-2 can simulate an AVC from user-1 to the receiver where the input of user-2 is treated as the adversarial state. Thus, user-1 cannot communicate reliably an any rate above the capacity of this AVC. On the other hand, user-1 can achieve all the rates below the capacity of this AVC by using an appropriate AVC code. 
}
\end{subfigure}
\end{subfigure}
\begin{subfigure}[b]{0.5\textwidth}
\centering
\begin{tikzpicture}[scale=0.5]
	\draw[->] (0,0)  --  ++(0,4) node[anchor= south]{\scriptsize $R_2$} ;
	\draw[->] (0,0) --  ++ (4,0)node[anchor= west] {\scriptsize $R_1$};
	\draw[-] (2,0) node[anchor= north]{\scriptsize ${C}_{\text{AVC}}(P_{Y|X_1{\color{vermillion} X_2}})$} --  ++ (0,2.3) ;
	\draw[-] (0,2.3) node[anchor= east] {\scriptsize ${C}_{\text{AVC}}(P_{Y|X_2{\color{vermillion} X_1}})$} --  ++ (2,0);
	\filldraw [black] (0,2.3) circle (2pt);
	\filldraw [black] (2,0) circle (2pt);
\end{tikzpicture}
	\vspace{0.4cm}
\caption{Capacity region of a two-user \bmac. \pink{${C}_{\text{AVC}}(P_{Y|X{\color{vermillion} S}})$  is the capacity of AVC $P_{Y|X{\color{vermillion} S}}$ with input $X$,  state $S$ and output $Y$.}}\label{fig:2userAVMAC_1c}
\end{subfigure}
\caption{Capacity region of a two-user \bmac is given by the rectangular capacity region obtained by treating the channel from each user to the receiver as an AVC with the other user's input as the AVC state sequence.}\label{fig:2userAVMAC}
\end{figure}

\subsection{Three-user \bmac \pink{with at most one adversary}}\label{sec:three_user_1.4}
It turns out that all the key ideas can be presented in the relatively simpler setting of a 3-user \bmac (Figure~\ref{fig:advMAC_3}) with at most one adversarial user. 
%It turns out that this model is good enough to present  in a . 
The general results then build on this. %So, we consider reliable communication in a three-user \bmac where at most one of the users can be malicious. 
For this model, we characterize the capacity region under randomized coding where each user shares independent secret keys with the decoder, and deterministic coding with an average probability of error criterion.

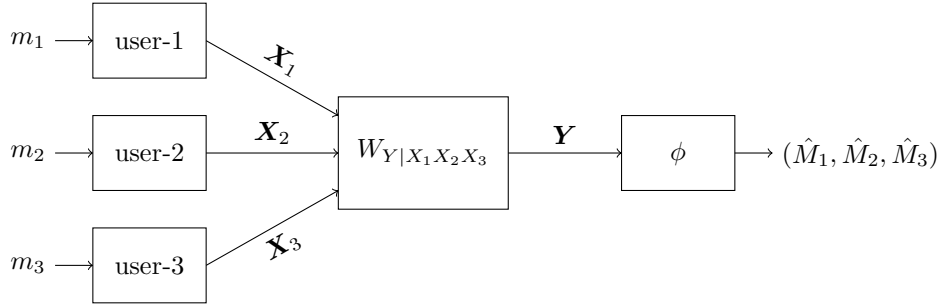
\begin{figure}[h]\centering
\resizebox{0.7\columnwidth}{!}{\begin{tikzpicture}[scale=0.5]
	\draw (2,0) rectangle ++(3,2) node[pos=.5]{user-3};
	\draw (2,3) rectangle ++(3,2) node[pos=.5]{user-2};
	\draw (2,6) rectangle ++(3,2) node[pos=.5]{user-1};
	\draw (8.5,2.5) rectangle ++(4.5,3) node[pos=.5]{$\mach$};
	\draw (16,3) rectangle ++(3,2) node[pos=.5]{$\phi$};
	\draw[->] (1,1) node[anchor=east]{$\msg_3$} -- ++ (1,0) ;
	\draw[->] (5,1) -- node[midway,below,sloped] {$\vecX_3$} ++ (3.5,2);
	
	\draw[->] (1,4) node[anchor=east]{$\msg_2$} -- ++ (1,0) ;
	\draw[->] (5,4) -- node[above] {$\vecX_2$} ++ (3.5,0);

	\draw[->] (1,7) node[anchor=east]{$\msg_1$} -- ++ (1,0) ;
	\draw[->] (5,7) -- node[midway,above,sloped] {$\vecX_1$} ++ (3.5,-2);

	\draw[->] (13,4) -- node[above] {$\vecY$} ++ (3,0);
	\draw[->] (19,4) -- ++ (1,0) node[anchor=west]{$(\hat{M}_1,\hat{M}_2,\hat{M}_3)$};
\end{tikzpicture}}
\caption{Byzantine-MAC: At most one user may be adversarial. Reliable decoding of the messages of all honest users is required. Clearly, no decoding guarantees are given for an adversarial user.}\label{fig:advMAC_3}
\end{figure}
\subsubsection{Randomized coding}\label{intro:random_coding}
Consider a three-user \bmac in which each user shares independent randomness with the decoder which is unknown to the other users. Notice that similar to the two-user \bmac where a malicious user could induce an AVC from the honest user to the receiver,  in a three-user \bmac, a malicious user-$i$, $i\in \{1, 2, 3\}$ can induce a two-user AV-MAC $W^{(i)}$ from the honest users $\{1, 2, 3\}\setminus\{i\}$ to the receiver, where the input of the malicious user is treated as the adversarially chosen state sequence. For instance, if a rate triple $(R_1,R_2,R_3)$ is achievable for the \bmac, then the rate pair $(R_1,R_2)$ is also be achievable over the two-user AV-MAC $W^{(3)}$.\footnote{In fact, a stronger necessary condition follows by noting that the encoder of each user must not depend on the knowledge of which user, if any, is the adversary. Thus, as in compound channels, the same code should work for $W^{(i)}, \, i\in \{1, 2, 3\}$. We use this observation in our converse arguments. } We use this intuition to show the converse of the randomized coding capacity region (Theorem~\ref{thm:random}). 
We show the achievability by using a randomized code (from \cite{Jahn81}\footnote{Note that similar to the current model, in the AV-MAC model of \cite{Jahn81}, users share independent randomness with the decoder.}) for the two-user AV-MAC $W^{(i)}, \, i\in \{1, 2, 3\}$ and using random hashes for each message, generated using the independent shared randomness. See Section~\ref{sec:random_proof_sketch} for a sketch of achievability and Section~\ref{app:A} for a detailed proof of achievability and converse.

%%%%%%%%%%%%%%%%%%%%% commented out below %%%%%%%%%%%%%%%%%%%%%%%%%%%%%%%%%%
\iffalse We show this by using a randomized code (from \cite{Jahn81}) for the two-user AV-MAC $W^{(i)}, \, i\in \{1, 2, 3\}$. The code is such that the same codebook of user-$i$ works for both AV-MAC $W^{(j)}$ and $W^{(k)}$. For each user (say user-$1$) we get two decoded messages, one from $W^{(2)}$'s decoder and another from $W^{(3)}$'s decoder. They may be different. 
To get around this and reject the incorrect decoded message, we choose a small subset of valid codewords from each codebook using the available independent shared randomness. 
Only the codewords in this randomly chosen subset will be sent by a legitimate sender. 
The receiver, upon decoding the messages of $W^{(k)}, \, k\in \{1, 2, 3\}$, runs a further post-processing step where each decoded message is accepted only if it belongs to the valid set of codewords. 
Since the valid set of codewords are chosen using independent shared randomness, this prevents a malicious user from correlating the attack with a valid codeword of other user. 
With this, it can be shown that any decoded message which is different from the one sent by a legitimate user will be rejected with high probability. 
\fi
%%%%%%%%%%%%%%% commented out above %%%%%%%%%%%%%%%%%%%%%%%%%%%%
\subsubsection{Deterministic coding} For deterministic coding, before discussing the capacity region, let us consider the following question: in which channels can all users {\em communicate  reliably}?% \emph{(ii)} What is the capacity region? 

In the AVC literature, the channels over which reliable communication is infeasible are called {\em symmetrizable channels} \cite{Ericson,CsiszarN88}.	
In a symmetrizable AVC, the adversary can mount an attack which introduces a spurious message that can be confused with the actual message, resulting in a non-vanishing probability of error.

%for every pair of symbols $x, x'$ in the channel input alphabet, the exists an adversary attack such that the channel output distribution is same under the two scenarios -- {\em (i)} The sender sends $x$ and adversary chooses the state based on $x'$, {\em (ii)} The sender sends $x'$ and the adversary chooses the state based on $x$.  
%Under such an attack, the receiver cannot tell apart between $x$ and $x'$ as the channel input. 
%This argument can be extended to codewords to show that reliable communication cannot be achieved in a symmetrizable channel. %A user in an AV-MAC can be symmetrized in more than one ways (see \cite{}), leading to multiple symmetrizability conditions. 

To answer the question, we first recall that a malicious user-$i$, $i\in \{1, 2, 3\}$, in a three-user \bmac, can induce a two-user AV-MAC $W^{(i)}$ formed by treating user-$i$'s input as an adversarially chosen state and the inputs of other two users as the inputs of legitimate users in the two-user AV-MAC. 
%from the honest users $\{1, 2, 3\}\setminus\{i\}$ to the receiver, where the input of the malicious user is treated as the adversarially chosen state sequence. 
Thus, we inherit the %feasibility conditions which characterize the channels with empty interior of the capacity region, called the 
symmetrizability conditions~\cite[Definition~3.1-3.3]{Gubner90} from the three AV-MAC $W^{(1)}, W^{(2)}$ and $W^{(3)}$. We show that, in addition to the symmetrizability conditions inherited from the AV-MAC, fully characterizing the feasibility of reliable communication of a 3-user \bmac requires three additional symmetrizability conditions (Eq.~\eqref{eq:symm3}). Roughly speaking, each of these conditions reflect whether or not an adversarial user at a node $k$ can attack in a manner that is  also consistent with an adversarial user at a node $j\neq k$ while resulting in a decoding ambiguity about the remaining user's message (see Figure~\ref{fig:symm3}). Example~\ref{ex:symmetrizable} (page~\pageref{ex:symmetrizable}) shows that the new symmetrizability conditions are not redundant given the symmetrizability conditions inherited from the two-user AV-MAC.

We characterize the deterministic coding capacity region under the average error criterion for most channels.\footnote{Our characterization for deterministic codes is incomplete for channels in which some, but not all users are symmetrizable (for an appropriate notion of symmetrizability for a 3-user \bmac). See remark~\ref{rem:gap}. \pink{We only study average probability of error under deterministic coding since the capacity under maximum probability of error remains open for multiple access channels (even with non-byzantine users)\cite{dueck1978maximal}.}}\label{footnote:12}
There are two different  approaches towards showing the achievability for the AVC using  deterministic codes. We show achievability for the $3$-user \bmac using both approaches and show a more general result for $k$-user \bmac using one of them. 

\paragraph*{First approach.}\label{intro:first_approach} The first approach uses a {\em randomness reduction} argument of Ahlswede~\cite{Ahlswede78} (and its extension for AV-MAC by Jahn~\cite{Jahn81}). He showed that given a randomized code of achievable rate $R$ and block-length $n$, there exists 
another randomized code of achievable rate $R$ which requires only $O(\log{n})$ bits of randomness. This small amount of shared randomness can be established using deterministic codes. This shows the surprising fact that when the deterministic capacity is positive (which is the case for non-symmetrizable channels), it is in fact equal to the randomized coding capacity.
Thus, to show achievability under deterministic codes, it suffices to show that all non-symmetrizable channels admit positive rates. 
%If it can be shown that for a non-symmetrizable \bmac, reliable communication can be achieved, at a small positive rate, for each user, %This small amount of shared randomness can be established using deterministic codes.
%then, the deterministic coding capacity under average probability of error criterion, which is upper bounded by the randomized coding capacity, is the same as the randomized coding capacity whenever the former is non-zero. With this, the achievability under deterministic codes is shown by characterizing the class of channels which admit positive rates \cite{CsiszarN88}. 
Ahlswede and Cai in \cite{AhlswedeC99} took this route for the achievability proof of the two-user AV-MAC. 
For \bmacs, we may follow a similar recipe (in fact, we do this for the general $k$-user \bmac). We show a {\em randomness reduction} argument along the lines of Jahn~\cite{Jahn81} and Ahlswede~\cite{Ahlswede78} (see Appendix~\ref{app:randomness_reduction}). 
With this and the randomized coding scheme discussed above (Section~\ref{intro:random_coding}), all that remains is to show that in a non-symmetrizable \bmac, all users can transmit at positive rates using deterministic coding. 
The main difference from \cite{AhlswedeC99} in showing this, is that the code should be able to handle any user behaving maliciously. Please see Section~\ref{sec:k_user} for details.

\paragraph*{Second approach.}\label{intro:second_approach} The second approach is a direct argument based on the method of types which establishes a deterministic code. The technique does not rely on the achievability of the randomized coding capacity. For the (point-to-point) AVC, Csisz\'ar and Narayan \cite{CsiszarN88} established the deterministic coding capacity using such an approach. 
Their achievability proof is based on a  concentration result \cite[Lemma A1]{CsiszarN88}. A similar approach for multi-user channels (e.g. AV-MAC, \bmac etc.) requires extending this concentration result.
We specialize the concentration result in \cite[Theorem 2.1]{SJ} to obtain just such an extension (Lemma~\ref{lemma:codebook} on page~\pageref{lemma:codebook}). %Using this, we obtain properties of joint codebooks (see Lemma~\ref{lemma:codebook}) which can be thought of as an extension \cite[Lemma 3]{CsiszarN88}. 
This allows us to directly achieve all rate triples in the capacity region of a non-symmetrizable three-user \bmac (see Section~\ref{sec:det_proofs}). Our technique can also be used to obtain the deterministic coding capacity region of a two-user AV-MAC directly. 
We believe that this technique may have applications in other multi-user deterministic coding settings for adversarial channels and may be of independent interest.

 %Direct achievability for deterministic coding capacity exist for point-to-point AVC where the codebook construction is based on a concentration result in \cite{CsiszarN88}. However, such proofs are not known for AV-MAC. We specialize the concentration result in \cite[Theorem 2.1]{SJ} for jointly distributed random variables to obtain properties of joint codebooks (see Lemma~\ref{lemma:codebook}) which can be thought of as an extension \cite[Lemma 3]{CsiszarN88} for point-to-point AVC. Our technique can be used to obtain the deterministic coding capacity region of a 2-user AV-MAC directly. %Crucial to our proof is Lemma~\ref{lemma:codebook} which provides properties of joint codebooks which are can be thought of as an extension \cite[Lemma 3]{CsiszarN88} for point-to-point AVC. 
%We believe that this can also have applications in other multi-user deterministic coding settings for adversarial channels and can be of independent interest. %The lemma uses a concentration result \cite[Theorem 2.1]{SJ} for jointly distributed random variables. 

\subsection{$k$-user \bmac}
In Section~\ref{sec:k_user}, we consider a general $k$-user \bmac in which an adversary may control all users in any one of a set of subsets of users, called an {\em adversary structure}\footnote{The term `adversary structure' is borrowed from cryptography. An adversary structure is a collection of subsets of users. The adversary may control all the users in any one of these subsets and use them to mount an attack (see, e.g., \cite{fitzi1999general,hirt2000player,wu2022two}).} (see Fig.~\ref{fig:advMAC}). The receiver is unaware which of these subsets the adversary controls.
We provide a general symmetrizability condition for the $k$-user \bmacs. 
On the achievability side, we take the first achievability approach described above (see Section~\ref{intro:first_approach}) and show a {\em randomness reduction} argument along the lines of Jahn~\cite{Jahn81} and Ahlswede~\cite{Ahlswede78}.
We then show that as long as the given \bmac is non-symmetrizable, {\em i.e.}, none of the symmetrizability conditions hold, the deterministic coding capacity region has a non-empty interior, in other words, all users can communicate at positive rates. 
Finally, we characterize the randomized coding capacity region using the same ideas as that for the three-user case. For the $k$-user \bmac, we do not pursue a direct proof using the second achievability approach described above (in Section~\ref{intro:second_approach}) as it appears to be cumbersome.

\begin{figure}\centering
\resizebox{0.9\columnwidth}{!}{\begin{tikzpicture}[scale=0.5]
	\draw (2,0) rectangle ++(3,1.2) node[pos=.5]{user-$k$};
	\draw[vermillion] (2,5.7) rectangle ++(3,1.2) node[pos=.5]{user-2};
	\draw[vermillion] (2,7.4) rectangle ++(3,1.2) node[pos=.5]{user-1};
	\draw (2,4) rectangle ++(3,1.2) node[pos=.5]{user-3};
	\draw (8.5,2.5) rectangle ++(4.5,3) node[pos=.5]{$W^{(n)}_{Y|X_1X_2\ldots X_k}$};
	\draw (16,3) rectangle ++(3,2) node[pos=.5]{Decoder};
	\draw[->] (1,0.6) node[anchor=east]{$\msg_k$} -- ++ (1,0) ;
	\draw[->] (5,1) -- node[midway,below,sloped] {$\scriptstyle \vecX_k$} ++ (3.5,2);
	
	\draw[->] (1,4.6) node[anchor=east]{$\msg_3$} -- ++ (1,0) ;
	\draw[vermillion,->] (5,6.3) -- node[above,,sloped] {$\scriptstyle \vecX_2$} ++ (3.5,-1.6);
	\draw[->] (5,4.6) -- node[above,,sloped] {$\scriptstyle \vecX_3$} ++ (3.5,-0.2);

	%\draw[->] (1,7+1) node[anchor=east]{$\msg_1$} -- ++ (1,0) ;
	\draw[vermillion,->] (5,8) -- node[midway,above,sloped] {$\scriptstyle \vecX_1$} ++ (3.5,-3);

	\node at (3.5,2.8) {$\vdots$};
	\draw[->] (13,4) -- node[above] {$\scriptstyle \vecY$} ++ (3,0);
	%\draw[->] (19,4) -- ++ (1,0) node[anchor=west]{\begin{array}{c}\hat{M}_1\\ \hat{M}_2\\\vdots \\\hat{M}_k\end{array}};
	\draw[->] (19,4) -- ++ (1,0) node[anchor=west]{$\stackrel{\inp{\hat{M}_1,\hat{M}_2,\hat{M}_3,\ldots,\hat{M}_k}}{\scriptstyle \text{where }(\hat{M}_3,\ldots,\hat{M}_k) = \inp{m_3,\ldots,m_k}\text{ w.h.p.}}$};
\end{tikzpicture}}
\caption{Consider a $k$-user byzantine-MAC where the set containing users~1
and~2 belongs to the adversary structure. The figure depicts the case
when users~1 and~2 deviate from the protocol under the control of an
adversary. Then we require reliable decoding of the messages of all
the honest users, {\em i.e.}, users 3 to $k$. 
}\label{fig:advMAC}
\end{figure}
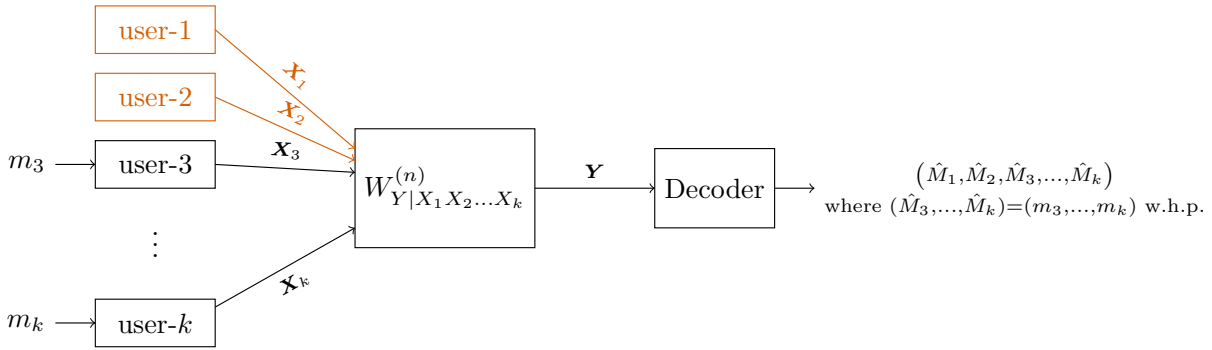

\subsection{Summary of contributions}
\begin{itemize}
\item  We introduce the model of reliable communication in a \bmac, where malicious users may deviate from the prescribed protocol. The model requires that decoded messages should be correct for the honest users with high probability. 
\item  We completely characterize the capacity region under both deterministic codes (with an average probability of error criterion) and randomized codes for any $k$-user \bmac. 
\item We also provide an alternate direct achievability for the 3-user \bmac, in the spirit of \cite{CsiszarN88}, where the achievability is based on a recent concentration result. This technique can be used to obtain a similar direct achievability for the 2-user AV-MAC (see Section~\ref{intro:second_approach}) and may be of independent interest. 
\end{itemize}

%!TeX root=paper.tex
\subsection{Outline}
The system model and main results for the 3-user \bmac are given in Section~\ref{section:MAC2_model_main} (Page \pageref{section:MAC2_model_main}). This section also contains the proof sketches. The main proofs of the results in Section~\ref{section:MAC2_model_main} are given in Section~\ref{sec:proof_3_user} (page \pageref{sec:proof_3_user}). Others are deferred to the appendices. Section~\ref{sec:k_user} presents the $k$-user \bmac model and gives main results. All the proofs of theorems in this section are given in the appendices.
\section{Notation}
Random variables are denoted by capital letters (possibly indexed) like $X_1, X_2, X_3, Y,$ etc. 
The corresponding alphabets are denoted by calligraphic letters in the same format, for example, the random variable $X_1$ has alphabet $\cX_1$. 
Its $n$-product set is denoted by $\cX_1^n$. We use bold faced letters to denote $n$-length vectors, for example, $\vecx$ denotes a vector in $\cX^n$ and $\vecX$ denotes a random vector taking values in $\cX^n$. 
For a random variable $X$, we denote its distribution by $P_X$ and use the notation $X\sim P_{X}$ to indicate this. 
For an alphabet $\cX$, let $\mathcal{P}^n_{\cX}$ denote the set of all empirical distributions of $n$ length strings from $\cX^n$. 
For a random variable $X\sim P_{X}$ such that  $P_X \in \mathcal{P}^n_{\cX}$, let $T^n_X$ be the set of all $n$-length strings with empirical distribution $P_X$. %of typical sequences with relative frequencies specified by $P_X$. 
For $\vecx\in \cX^n$, the statement $\vecx\in T^n_{X}$ defines $P_{X}$ as the empirical distribution of $\vecx$ and a random variable $X\sim P_{X}$. 
For a set $\cS$, $2^{\cS}$ denotes it power set, $\cS^c$ denotes its compliment and $\textsf{int}(\cS)$ denotes its interior. A uniform distribution on any set $\cS$ is denoted by $\textsf{Unif}(\cS)$.
For any $n$, the set $\inb{1, 2, \ldots, n}$ will be denoted by $[1:n]$. We will use the acronyms `w.h.p.' to mean `with high probability'. \pink{For any real number $A$, we use $|A|^{+}$ to mean $A$ if $A\geq 0$. Otherwise, $|A|^{+}=0$.}

The following notation will be used in Section~\ref{sec:k_user}. For any sets $\cS_i$, $i\in [1:k]$ and for $\cB\subseteq[1:k]$, $\cS_{\cB}$ denotes the product set $\times_{i\in \cB}\cS_i$.  The tuple $(s_i\in \cS_i: i\in [1:k])$ will be denoted by $s_{[1:k]}\in \cS_{[1:k]}$ and when restricted to $\cB$, we write $s_{\cB}\in \cS_{\cB}$. The notation $g_{\cB}(s_{\cB})$ denotes $(g_i(s_i): i\in \cB)$ for function $g_i$ defined on $\cS_i$, $i\in [1:k]$.

%!TeX root=paper.tex
\section{The three user \bmac \pink{with at most one adversary}}\label{section:MAC2_model_main}
\subsection{System model}\label{section:MAC2_model}

Consider the 3-user \bmac setup shown in Fig.~\ref{fig:advMAC_3}. The memoryless channel \mach has input alphabets $\mathcal{X}_1,\mathcal{X}_2,\mathcal{X}_3$, and output alphabet ${\mathcal Y}$. 
\shortonly{An $(\nummsg_1,\nummsg_2,\nummsg_3,n)$ deterministic {\em code} for the multiple access channel \mach consists of: \vspace{-0.25em}
\begin{enumerate}[label=(\roman*)]
\item three message sets, $\mathcal{M}_i = \{1,\ldots,\nummsg_i\}$, $i=1,2,3$,
\item three encoders, $f_{i}:\mathcal{M}_i\rightarrow \mathcal{X}_i^n$, $i=1,2,3$, and
\item a decoder, $\phi:\mathcal{Y}^n\rightarrow\mathcal{M}_1\times\mathcal{M}_2\times\mathcal{M}_3.$ 
\end{enumerate}}

\longonly{\begin{defn}[Deterministic code]\label{code:det_3user}
An $(\nummsg_1,\nummsg_2,\nummsg_3,n)$ deterministic {\em code} for the \bmac \mach consists of the following: 
\begin{enumerate}[label=(\roman*)]
\item three message sets, $\mathcal{M}_i = \{1,\ldots,\nummsg_i\}$, $i=1,2,3$,
\item three encoders, $f_{i}:\mathcal{M}_i\rightarrow \mathcal{X}_i^n$, $i=1,2,3$, and
\item a decoder, $\phi:\mathcal{Y}^n\rightarrow\mathcal{M}_1\times\mathcal{M}_2\times\mathcal{M}_3.$ 
\end{enumerate}
\end{defn}}
\noindent We define the average probability of error $P_e$ as the maximum of average error probabilities under four different scenarios, one corresponding to all users being honest and three corresponding to exactly one user being adversarial. Let $(\Msgh_1,\Msgh_2,\Msgh_3)=\phi(Y^n)$. 
\[ P_e(f_1,f_2,f_3,\phi)\defineqq\max \{ P_{e,0}, P_{e,1}, P_{e,2}, P_{e,3} \},\]
where the terms on the right-hand side are defined below. Note that our notation suppresses their dependence on the code. $P_{e,0}$ is the average probability of error when none of the users are adversarial,
\begin{align*}
&P_{e,0}\defineqq \frac{1}{\nummsg_1\nummsg_2\nummsg_3}
\sum_{(m_1,m_2,m_3)\in \cM_1\times\cM_2\times\cM_3} e_0(m_1,m_2,m_3),\text{ where}\\
&e_0(m_1,m_2,m_3)=
\Prob\Big( (\Msgh_1,\Msgh_2,\Msgh_3) \neq (m_1,m_2,m_3) \Big|\vecX_1=f_1(m_1),  \vecX_2=f_2(m_2), \vecX_3=f_3(m_3)\Big).
\end{align*}
$P_{e,i}$, $i=1,2,3$ is the average error {probability under worst case deterministic attacks} when user-$i$ is adversarial. $P_{e,1}$ is as below.  $P_{e,2},P_{e,3}$ are defined similarly.
\begin{align}
&P_{e,1}\defineqq \max_{\vecx_1\in \cX^n_1} \frac{1}{\nummsg_2\nummsg_3} \sum_{(m_2,m_3)\in \cM_2\times\cM_3} e_1(\vecx_1,m_2,m_3),\text{ where }\nonumber\\
&e_1(\vecx_1,m_2,m_3)= \Prob\Big( (\Msgh_2,\Msgh_3) \neq (m_2,m_3) \Big| \vecX_1=\vecx_1, \vecX_2=f_2(m_2), \vecX_3=f_3(m_3) \Big).\label{eq:error_adv}
\end{align}
{We emphasize that 
\begin{enumerate}[label=\alph*)]
\item the decoder is unaware of whether any of the users is adversarial and the identity of the adversarial user (if any).
\item the adversary knows the encoders and the decoder, but is unaware of the messages transmitted by the other (non-adversarial) users\footnote{Recall that at most one user is adversarial.}.
\end{enumerate}}
Note that it is sufficient to define $P_{e,i}$ under deterministic attacks by the adversarial user. {To see this, consider the setting where user-1 is adversarial. Then, under any randomized attack $\tilde{\vecX}_1\sim Q$ for any distribution $Q$ on $\cX^n_1$, }
\begin{align}
&\bbE_{Q}\insq{\frac{1}{\nummsg_2\nummsg_3} \sum_{m_2,m_3} \Prob\Big( (\Msgh_2,\Msgh_3) \neq (m_2,m_3) \Big| \vecX_1=\tilde{\vecX_1}, \vecX_2=f_2(m_2), \vecX_3=f_3(m_3) \Big)}\nonumber\\
&=\sum_{\vecx_1\in\cX^n_1}Q(\vecx_1)\frac{1}{\nummsg_2\nummsg_3} \sum_{m_2,m_3} \Prob\Big( (\Msgh_2,\Msgh_3) \neq (m_2,m_3) \Big| \vecX_1={\vecx_1}, \vecX_2=f_2(m_2), \vecX_3=f_3(m_3) \Big)\nonumber\\
&\leq \sum_{\vecx_1\in\cX^n_1}Q(\vecx_1)P_{e,1}\nonumber\\
&=P_{e,1}. \label{eq:random_attacks}
\end{align}
In other words, the probability of error is maximized when the adversarial user selects a deterministic attack vector (that depends only on the channel and the deterministic code used). 
We also note that 
\begin{align}\label{eq:honest_upper_bound}
P_{e,0}\leq P_{e,1}+P_{e,2}+P_{e,3}.
\end{align} This is because
\begin{align*}
P_{e, o}=&\frac{1}{N_1N_2N_3}\sum_{m_1,m_2,m_3}\bbP\Big( (\Msgh_1,\Msgh_2,\Msgh_3) \neq (m_1,m_2,m_3) \Big|\vecX_1=f_1(m_1),  \vecX_2=f_2(m_2), \vecX_3=f_3(m_3)\Big)\\
=&\frac{1}{N_1N_2N_3}\sum_{m_1,m_2,m_3}\bbP\Big( \{(\Msgh_1,\Msgh_2) \neq (m_1,m_2)\}\cup\{(\Msgh_2,\Msgh_3) \neq (m_2,m_3)\}\cup\{(\Msgh_1,\Msgh_3) \neq (m_1,m_3)\} \\
&\qquad\qquad\qquad\qquad\qquad\qquad\Big|\vecX_1=f_1(m_1),  \vecX_2=f_2(m_2), \vecX_3=f_3(m_3)\Big)\\
\leq &\frac{1}{N_1N_2N_3}\sum_{m_1,m_2,m_3}\Bigg\{\bbP\Big( \{(\Msgh_1,\Msgh_2) \neq (m_1,m_2)\} \Big|\vecX_1=f_1(m_1),  \vecX_2=f_2(m_2), \vecX_3=f_3(m_3)\Big)\\
&\qquad\qquad\qquad\qquad+\bbP\Big( \{(\Msgh_2,\Msgh_3) \neq (m_2,m_3)\} \Big|\vecX_1=f_1(m_1),  \vecX_2=f_2(m_2), \vecX_3=f_3(m_3)\Big)\\
&\qquad\qquad\qquad\qquad+\bbP\Big( \{(\Msgh_1,\Msgh_3) \neq (m_1,m_3)\} \Big|\vecX_1=f_1(m_1),  \vecX_2=f_2(m_2), \vecX_3=f_3(m_3)\Big)\Bigg\}\\
&\leq P_{e,1}+P_{e,2}+P_{e,3}.
\end{align*}

\begin{defn}[{Achievable rate triple and the deterministic coding capacity region}]
We say a rate triple $(R_1,R_2,R_3)$ is {\em achievable} if there is a sequence of $(\lfloor2^{nR_1}\rfloor,\lfloor2^{nR_2}\rfloor,\lfloor2^{nR_3}\rfloor,n)$ codes  {$(f_1^{(n)},f_2^{(n)},f_3^{(n)},\allowbreak\phi^{(n)})$ for $n=1, 2, \ldots$} such that\\ $\lim_{n\rightarrow\infty}P_{e}(f_1^{(n)},f_2^{(n)},f_3^{(n)},\phi^{(n)})\rightarrow0$. The {\em deterministic coding capacity region} $\mathcal{R}_{\deterministic}$ is the closure of the set of all achievable rate triples. 
\end{defn}

\longonly{
\begin{defn}[Randomized code]\label{defn:rand_code}
An $(\nummsg_1,\nummsg_2,\nummsg_3,n)$ randomized {\em code} for the \bmac \mach consists of the following: 
\begin{enumerate}[label=(\roman*)]
\item three message sets, $\mathcal{M}_i = \{1,\ldots,\nummsg_i\}$, $i=1,2,3$,
\item three {\em independent} randomized encoders, $F_{i}:\mathcal{M}_i\rightarrow \mathcal{X}_i^n$ where $F_i\sim P_{F_i}$ takes values in {$\mathcal{F}_i\subseteq\{g:\cM_i\rightarrow \cX_i^n\},\, i = 1,2,3$} and
\item a decoder, $\phi:\mathcal{Y}^n\times\mathcal{F}_1\times\mathcal{F}_2\times\mathcal{F}_3\rightarrow\mathcal{M}_1\times\mathcal{M}_2\times\mathcal{M}_3$ where \\$\phi(\vecy,F_1,F_2,F_3) = (\phi_1(\vecy,F_1,F_2,F_3),\phi_2(\vecy,F_1,F_2,F_3),\phi_3(\vecy,F_1,F_2,F_3))$ for \pink{some functions}  $\phi_i:\mathcal{Y}^n\times\mathcal{F}_1\times\mathcal{F}_2\times\mathcal{F}_3\rightarrow\mathcal{M}_i, \, i = 1,2,3$. 
\end{enumerate}
\end{defn}}
\pink{In other words, a randomized code consists of {\em independent} random encoding maps
$F_1,F_2,F_3$ and a decoder $\phi$ (which takes $F_1,F_2,F_3$ also as inputs), {\em i.e.},
the encoders randomize independently of each other and their randomization is
available to the decoder. This is similar to the randomized code of
Jahn~\cite{Jahn81} for 2-user AV-MACs. Notice that the decoder is a randomized decoder since the decoding function $\phi$ takes the random encoding maps $F_1,F_2,F_3$ as inputs\footnote{\pink{Any additional private randomness at the decoder can be subsumed as part of the randomness shared with each encoder in a slightly more general definition of randomized code ({\em i.e.}, a slight generalization of Definition~\ref{defn:rand_code}) for which our converse in Section~\ref{app:A} continues to hold. In this generalization, the users first sample $(F_i, \vecB_{i}); i =1, 2, 3$ where $\vecB_i$ are uniform bit strings, independent of $F_i$. Now, any additional private randomness at the decoder may be thought of as a bit string $\vecD$ which is XOR of $\vecB_1$, $\vecB_2$ and $\vecB_3$. Even, when one of the users, say user $i$, maliciously chooses $\vecB_i$, note that $\vecD$ remains uniform and unknown to user $i$.\label{footnote:remark1}}}.
% \red{Notice that the output of the decoder depends on $y$ as well as the realizations of the encoders $F1, F2, F3,$ which are shared with the decoder. Similar to Jahn~\cite{Jahn81}, $F_1, F_2$ and $F_3$ are independent random encoding maps. The random encoding map $F_1$ (resp. $F_2$ and $F_3$) is shared only between sender 1 (resp. Sender 2 and sender 3) and the decoder. }
We emphasize that each byzantine user is unaware of the encoding maps of the other users. We also assume that the (byzantine) user-$i$ samples its encoder $F_i$ which is then made available to the decoder. Notice that the decoder $\phi$ is a function which maps the channel output as well as the random encoding maps to the decoded messages. This allows the adversarial user to adversarially choose its encoding map (in addition to its channel input) as part of its attack and thus attempt to influence the decoding.  This means that an adversarial user $i$ may choose $\vecx_i\in \cX^n_i$ as input to the channel and any $f_i\in \cF_i$ as the encoding map. This is shown in Fig.~\ref{fig:weak_adv_strong}. We denote the {\em
randomized coding capacity region} by $\mathcal{R}_{\random}$. 
% We show achievability in this setting. 
We also consider another adversarial model, called the {\em weak} adversary. An adversary is a {\em weak} adversary if it
does not have access to its own random encoding map when choosing its input vector, that is, the random encoding map $F_i$ is sampled according to $P_{F_i}$ and the adversarial input to the channel $\vecx_i$ is chosen independent of $F_i$ (see Fig.~\ref{fig:weak_adv_weak})\footnote{\pink{An intermediate model is the one where the adversary knows the random encoding map but does not have control over it. That is, for a malicious user $i$, $F_i\sim P_{F_i}$ and the input to the channel $\vecx_i$ can be chosen as a function of $F_i$. In the proof of Theorem~\ref{thm:random}, the achievability is proved for the default adversary (who is stronger) while the converse is proved for the weak adversary. Hence, the capacity region for this intermediate model is the same as in Theorem~\ref{thm:random}.}}.
We denote the corresponding randomized coding capacity region by $\mathcal{R}_{\random}^{\weak}$. We show converse for the weak adversary.
Clearly,
$\mathcal{R}_{\random} \subseteq \mathcal{R}_{\random}^{\weak}$. Thus, a converse bound on $\mathcal{R}_{\random}^{\weak}$ is also a converse bound on $\mathcal{R}_{\random}$.}

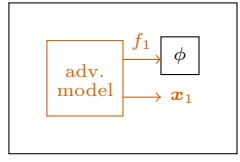
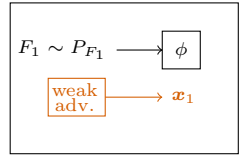
\begin{figure}[h]
\centering

\begin{subfigure}[t]{.25\columnwidth}
\centering
\begin{tikzpicture}[scale=0.5]
	\draw (0,0) rectangle ++(6,4);
	\draw (4,2.1) rectangle ++(1,1) node[pos=.5]{\scriptsize $\phi$};
	\draw[vermillion] (1,1)  rectangle ++(2,2) ;
	
	\node[vermillion] at (2,2.2) {\scriptsize adv.};
	\node[vermillion] at (2,1.7) {\scriptsize model};
	%\draw[red,-] (2.5,1.5)  -- ++ (0.5,0) ;
	%\draw[red,-] (3,1.5) -- ++ (0,0.25) ;
	\draw[vermillion,->] (3,1.5)  --++ (1,0)  node[right]{\scriptsize $\vecx_1$};

	%\draw[-] (2.5,2.75)  -- ++ (0.5,0) ;
	%\draw[-] (3,2.75) -- ++ (0,-0.5) ;
	\draw[vermillion,->] (3,2.5)  -- node[above]{\scriptsize $f_1$} ++ (1,0) ;

\end{tikzpicture}
\caption{An adversarial user-1 chooses an encoder map $f_1\in \cF_1$ and an input vector $\vecx_1\in \cX_1$ together.}
\label{fig:weak_adv_strong}
\end{subfigure}
\hspace{1cm}
\begin{subfigure}[t]{.25\columnwidth}
\centering
\begin{tikzpicture}[scale=0.5]
	\draw (0,0) rectangle ++(6,4);
	\draw (4,2.25) rectangle ++(1,1) node[pos=.5]{\scriptsize $\phi$};
	\draw[vermillion] (1,1)  rectangle ++(1.5,1) ;
	
	\node[vermillion] at (1.7,1.7) {\scriptsize weak};
	\node[vermillion] at (1.7,1.3) {\scriptsize adv.};
	\draw[vermillion,->] (2.5,1.5)  --  ++ (1.5,0) node[right]{\scriptsize $\vecx_1$};
	%\draw[red,-] (3,1.5) -- ++ (0,0.25) ;
	%\draw[red,->] (3,1.75)  -- ++ (1,0) ;

	\draw[->] (2.5+0.3,2.75) node[anchor=east]{\scriptsize $F_1\sim P_{F_1}$} -- ++ (1.2,0) ;
	%\draw[-] (3,2.75) -- ++ (0,-0.5) ;
	%\draw[->] (3,2.25)  -- ++ (1,0) ;

\end{tikzpicture}
\caption{User-1 as a weak adversary sends a malicious input $\vecx_1$ independent and unaware of the choice of random encoder $F_1\sim P_{F_1}$.}
\label{fig:weak_adv_weak}
\end{subfigure}

\caption{A figure depicting various adversary models for randomized coding when user-1 is malicious.}
\label{fig:weak_adv}
\end{figure}

\pink{Analogous to the deterministic case, the average probability of error $P^{\text{rand}}_{e}$ is defined as 
\[ P^{\text{rand}}_{e}(P_{F_1},P_{F_2},P_{F_3},\phi)\defineqq\max \{ P^{\text{rand}}_{e,0}, P^{\text{rand}}_{e,1}, P^{\text{rand}}_{e,2}, P^{\text{rand}}_{e,3} \},\]
where 
\begin{align*}
&P^{\text{rand}}_{e,0}\defineqq \frac{1}{\nummsg_1\nummsg_2\nummsg_3}
\sum_{m_1,m_2,m_3} e^{\text{rand}}_0(m_1,m_2,m_3),\text{ where}\\
&e^{\text{rand}}_0(m_1,m_2,m_3)=
\Prob\Big( \phi(Y^n,F_1,F_2,F_3) \neq (m_1,m_2,m_3) \Big|\vecX_1=F_1(m_1),  \vecX_2=F_2(m_2), \vecX_3=F_3(m_3)\Big).
\end{align*}
The probability is over independent $F_i\sim P_{F_i}, \, i=1, 2, 3$ and the randomness in the channel.
 $P^{\text{rand}}_{e,1}$ is as below, $P^{\text{rand}}_{e,2}, P^{\text{rand}}_{e,3}$ are  defined similarly.
\begin{align}
&P^{\text{rand}}_{e,1}\defineqq \max_{\vecx_1\in\mathcal{X}^n, f_1\in\mathcal{F}_1} \frac{1}{\nummsg_2\nummsg_3} \sum_{m_2,m_3} e_{f_1}(\vecx_1,m_2,m_3),\text{ where}\label{eq:p_rand_e_1}\\
&e_{f_1}(\vecx_1,m_2,m_3)= \Prob\Big( (\phi_2(\vecY,f_1,F_2,F_3),\phi_3(\vecY,f_1,F_2,F_3)) \neq (m_2,m_3) \Big|\vecX_1=\vecx_1, \vecX_2=F_2(m_2), \vecX_3=F_3(m_3) \Big).\nonumber
\end{align} The probability is over independent $F_i\sim P_{F_i}, \, i=2, 3$ and the channel. Restricting the attacks to deterministic attacks is without loss of generality along the lines of \eqref{eq:random_attacks}.
\noindent We define achievable rate triples and and capacity region
for randomized codes in a similar manner as the deterministic case\footnote{\pink{Along the lines of \cite[Problem 12.6 (b)]{CK11}, one can show that }for randomized
codes, the capacity region will remain unchanged for maximum and average
probabilities of error criteria. Hence we only consider the average error
criterion here.\label{footnote:max_avg_Error}}.
\begin{defn}[{Achievable rate triple and randomized coding capacity regions}]
We say a rate triple $(R_1,R_2,R_3)$ is {\em achievable}, if there is a sequence of $(\lfloor2^{nR_1}\rfloor,\lfloor2^{nR_2}\rfloor,\lfloor2^{nR_3}\rfloor,n)$ codes  $\{F_1^{(n)},F_2^{(n)},F_3^{(n)},\phi^{(n)}\}_{n=1}^\infty$ such that \\$\lim_{n\rightarrow\infty}P^{\text{rand}}_{e}(P_{F_1^{(n)}},P_{F_2^{(n)}},P_{F_3^{(n)}},\phi^{(n)})\rightarrow0$. The {\em randomized coding capacity region} $\mathcal{R_{\text{random}}}$ is the closure of the set of all achievable rate triples. 
\end{defn}
% \begin{remark}\label{remark:1}
% % The choice of a deterministic map (with random encoders $F_1, F_2, F_3$ and $\vecy$ as inputs) as the decoder in Definition~\ref{defn:rand_code} is without loss of generality as a privately randomized decoder (independent of the randomness of the encoders) can be derandomized. This is because  $P^{\text{rand}}_{e}(P_{F_1},P_{F_2},P_{F_3},\phi)$ can be upper bounded by $P^{\text{rand}}_{e,0}+ P^{\text{rand}}_{e,1}+ P^{\text{rand}}_{e,2}+ P^{\text{rand}}_{e,3}$. This upper bound can be written as an expectation over the private randomness of the decoder. We may choose the deterministic map which maximizes this upper bound. %The upper bound is maximized for  This is because the probability of error can be upper bounded by the sum of the four probabilities of error corresponding to all users being honest and three terms corresponding to each user being malicious. This sum of these four probabilities of errors is maximized by a f
% \end{remark}
The probability of error $P^{\text{weak}}_{e}$ and the capacity region $\mathcal{R}_{\random}^{\weak}$ for randomized codes with a weak adversary are defined by replacing  $P^{\text{rand}}_{e,i}$ with $P^{\text{weak}}_{e,i},\, i = 1,2,3$ in the definition of $P^{\text{rand}}_e(P_{F_1}, P_{F_2}, P_{F_3}, \phi)$,  where
\begin{align}
&P^{\text{weak}}_{e,1}\defineqq \max_{\vecx_1\in \mathcal{X}^n} \frac{1}{\nummsg_2\nummsg_3} \sum_{m_2,m_3} e^{\text{weak}}_{1}(\vecx_1,m_2,m_3),\label{eq:pe1_weak}\\
&\text{ where}\quad e^{\text{weak}}_{1}(\vecx_1,m_2,m_3)= \Prob\Big( (\phi_2(\vecY,F_1,F_2,F_3),\phi_3(\vecY,F_1,F_2,F_3)) \neq (m_2,m_3) \Big|\nonumber\\
&\qquad\qquad\qquad\qquad \vecX_1=\vecx_1, \vecX_2=F_2(m_2), \vecX_3=F_3(m_3) \Big).\nonumber
\end{align}
The probability is over independent $F_i\sim P_{F_i}, \, i=1, 2, 3$ and the channel. $P^{\text{weak}}_{e,2}$ and $P^{\text{weak}}_{e,3}$ are  defined similarly.
}

\subsection{Main results}
\subsubsection{Deterministic coding capacity region}\label{sec:deterministic_capacity_region}
We first present our results for \pink{the three user \bmac with at most one adversary under  deterministic coding.} Analogous to the notion of symmetrizability {\cite{CsiszarN88,AhlswedeC99}} in the AVC and AV-MAC literature, we give conditions under which at least one user cannot communicate with positive rate.

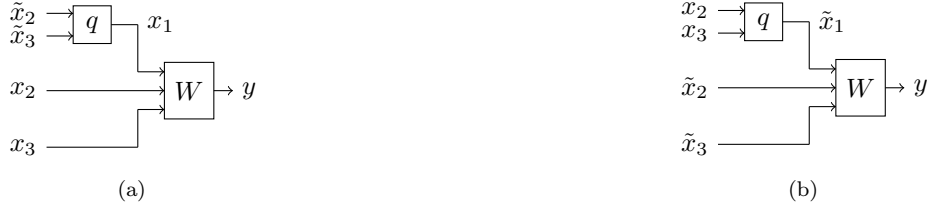
\begin{figure}[h]
\centering
\begin{subfigure}[t]{.49\columnwidth}
\centering
\begin{tikzpicture}[scale=0.5]
	\draw (1.7,6) rectangle ++(1,1) node[pos=.5]{$q$};
	\draw (4.1,4) rectangle ++(1.3,1.5) node[pos=.5]{$W$};
	\draw[->] (1,4.75) node[anchor=east]{$x_2$} -- ++ (3.1,0);

	\draw[->] (1,6.8) node[anchor=east]{$\tilde{x}_2$} -- ++ (0.7,0) ;
	\draw[->] (1,6.2) node[anchor=east]{$\tilde{x}_3$} -- ++ (0.7,0) ;
	\draw[->] (3.4,4.25)  -- ++(0.7,0);
	\draw[->] (3.4,5.25) -- ++(0.7,0);
	\draw[-] (3.4,5.25) -- ++ (0,1.25);
	\draw[-] (3.4,4.25) -- ++ (0,-1);
	\draw[-] (2.7, 6.5) -- ++(0.7,0) node[anchor=west]{{${x}_1$}};
	\draw[-] (1,3.25) node[anchor=east]{$x_3$} -- ++ (2.4,0);

	\draw[->] (5.4,4.75) --  ++ (0.5,0)node[anchor= west] {$y$};
\end{tikzpicture}
\caption{}
\label{fig:symm1a}
\end{subfigure}
\begin{subfigure}[t]{.49\columnwidth}
\centering
\begin{tikzpicture}[scale=0.5]
	\draw (1.7,6) rectangle ++(1,1) node[pos=.5]{$q$};
	\draw (4.1,4) rectangle ++(1.3,1.5) node[pos=.5]{$W$};
	\draw[->] (1,4.75) node[anchor=east]{$\tilde{x}_2$} -- ++ (3.1,0);

	\draw[->] (1,6.8) node[anchor=east]{$x_2$} -- ++ (0.7,0) ;
	\draw[->] (1,6.2) node[anchor=east]{$x_3$} -- ++ (0.7,0) ;
	\draw[->] (3.4,4.25) -- ++(0.7,0);
	\draw[->] (3.4,5.25) -- ++(0.7,0);
	\draw[-] (3.4,5.25) -- ++ (0,1.25);
	\draw[-] (3.4,4.25) -- ++ (0,-1);
	\draw[-] (2.7, 6.5) -- ++(0.7,0)node[anchor=west]{ 	{$\tilde{x}_1$}};
	\draw[-] (1,3.25) node[anchor=east]{$\tilde{x}_3$} -- ++ (2.4,0);

	\draw[->] (5.4,4.75) --  ++ (0.5,0)node[anchor= west] {$y$};
\end{tikzpicture}
\caption{}
\label{fig:symm1bb}
\end{subfigure}
\caption{We say $\mach$ is $\cX_2\times\cX_3$-{symmetrizable by}~$\cX_1$ if, for each $(x_2,\tlx_2,x_3,\tlx_3)$, the conditional output distributions in the two cases above are the same. Thus, the receiver is unable to tell whether users~2 and~3 are sending $(x_2,x_3)$ or $(\tlx_2,\tlx_3)$.}
\label{fig:symm1}
\end{figure}

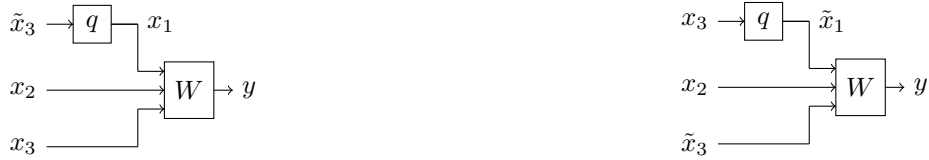
\begin{figure}[h]
\centering
\begin{subfigure}[b]{.49\columnwidth}
\centering
\begin{tikzpicture}[scale=0.5]
	\draw (1.7,6) rectangle ++(1,1) node[pos=.5]{$q$};
	\draw (4.1,4) rectangle ++(1.3,1.5) node[pos=.5]{$W$};
	\draw[->] (1,4.75) node[anchor=east]{$x_2$} -- ++ (3.1,0);
	
	\draw[->] (1,6.5) node[anchor=east]{$\tilde{x}_3$} -- ++ (0.7,0) ;
	\draw[->] (3.4,4.25) -- ++(0.7,0);
	\draw[->] (3.4,5.25) -- ++(0.7,0);
	\draw[-] (3.4,5.25) -- ++ (0,1.25);
	\draw[-] (3.4,4.25) -- ++ (0,-1);
	\draw[-] (2.7, 6.5) -- ++(0.7,0);
	\draw[-] (2.7, 6.5) -- ++(0.7,0) node[anchor=west]{{${x}_1$}};
	\draw[-] (1,3.25) node[anchor=east]{$x_3$} -- ++ (2.4,0);

	\draw[->] (5.4,4.75) --  ++ (0.5,0)node[anchor= west] {$y$};
\end{tikzpicture}
\end{subfigure}
\begin{subfigure}[b]{.49\columnwidth}
\centering
\begin{tikzpicture}[scale=0.5]
	\draw (1.7,6) rectangle ++(1,1) node[pos=.5]{$q$};
	\draw (4.1,4) rectangle ++(1.3,1.5) node[pos=.5]{$W$};
	\draw[->] (1,4.75) node[anchor=east]{$x_2$} -- ++ (3.1,0);

	\draw[->] (1,6.5) node[anchor=east]{$x_3$} -- ++ (0.7,0) ;
	\draw[->] (3.4,4.25) -- ++(0.7,0);
	\draw[->] (3.4,5.25) -- ++(0.7,0);
	\draw[-] (3.4,5.25) -- ++ (0,1.25);
	\draw[-] (3.4,4.25) -- ++ (0,-1);
	\draw[-] (2.7, 6.5) -- ++(0.7,0);
	\draw[-] (2.7, 6.5) -- ++(0.7,0) node[anchor=west]{{$\tilde{x}_1$}};
	\draw[-] (1,3.25) node[anchor=east]{$\tilde{x}_3$} -- ++ (2.4,0);

	\draw[->] (5.4,4.75) --  ++ (0.5,0)node[anchor= west] {$y$};
\end{tikzpicture}
\end{subfigure}
\caption{We say $\mach$ is $\cX_3|\cX_2$-{symmetrizable by}~$\cX_1$ if, for each $(x_2,x_3,\tlx_3)$, the conditional output distributions in the two cases above are the same. The receiver is unable to tell whether user-3 is sending $x_3$ or $\tlx_3$.}
\label{fig:symm2}
\end{figure}
\begin{figure}[h]
\centering
\begin{subfigure}[t]{.49\columnwidth}
\centering
\begin{tikzpicture}[scale=0.5]
	\draw (1.7,6) rectangle ++(1,1) node[pos=.5]{$q$};
	\draw (4.1,4) rectangle ++(1.3,1.5) node[pos=.5]{$W$};
	\draw[->] (1,4.75) node[anchor=east]{$x_2$} -- ++ (3.1,0);
	
	\draw[->] (1,6.8) node[anchor=east]{$\tlx_1$} -- ++ (0.7,0) ;
	\draw[->] (1,6.2) node[anchor=east]{$\tilde{x}_3$} -- ++ (0.7,0) ;
	\draw[->] (3.4,4.25) -- ++(0.7,0);
	\draw[->] (3.4,5.25) -- ++(0.7,0);
	\draw[-] (3.4,5.25) -- ++ (0,1.25);
	\draw[-] (3.4,4.25) -- ++ (0,-1);
	\draw[-] (2.7, 6.5) -- ++(0.7,0);

	\draw[-] (2.7, 6.5) -- ++(0.7,0) node[anchor=west]{{$x_1$}};
	\draw[-] (1,3.25) node[anchor=east]{$x_3$} -- ++ (2.4,0);

	\draw[->] (5.4,4.75) --  ++ (0.5,0)node[anchor= west] {$y$};
\end{tikzpicture}
\end{subfigure}
\begin{subfigure}[t]{.49\columnwidth}
\centering
\begin{tikzpicture}[scale=0.5]
	\draw (1.7,4.25) rectangle ++(1,1) node[pos=.5]{$q'$};
	\draw (4.1,4) rectangle ++(1.3,1.5) node[pos=.5]{$W$};
	\draw[->] (1,4.45) node[anchor=east]{$x_3$} -- ++ (0.7,0);
	\draw[->] (1,5.05) node[anchor=east]{$x_2$} -- ++ (0.7,0);
	\draw[->] (2.7,4.75) -- ++ (1.4,0);

	\node at (3.2, 5.1) {{${\tlx}_2$}};
	\draw[-] (1,6.25) node[anchor=east]{$\tlx_1$} -- ++ (2.4,0) ;
	\draw[->] (3.4,4.25) -- ++(0.7,0);
	\draw[->] (3.4,5.25) -- ++(0.7,0);
	\draw[-] (3.4,5.25) -- ++ (0,1);
	\draw[-] (3.4,4.25) -- ++ (0,-1);
	\draw[-] (1,3.25) node[anchor=east]{$\tilde{x}_3$} -- ++ (2.4,0);

	\draw[->] (5.4,4.75) --  ++ (0.5,0)node[anchor= west] {$y$};
\end{tikzpicture}
\end{subfigure}
\caption{We say $\mach$ is $\cX_3$-{symmetrizable by}~$\cX_1/\cX_2$ if, for each $(\tilde{x}_1,x_2,x_3,\tlx_3)$, the conditional output distributions in the two cases above are the same. The receiver is unable to tell whether user-3 is sending $x_3$ (and  user-1 being malicious) or user-3 is sending $\tlx_3$ (and user-2 being malicious).}
\label{fig:symm3}
\end{figure}
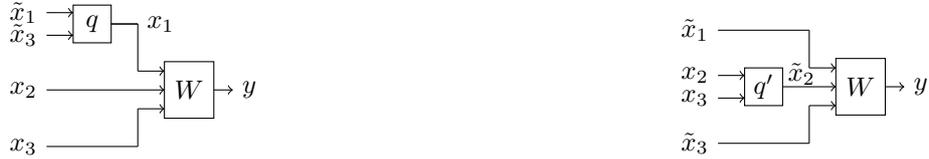

\paragraph{Symmetrizability conditions}\label{sec:sym_conditions_defn}
\begin{defn}\label{defn:symm}
Let $(i,j,k)$ be some permutation of $(1,2,3)$. We define three symmetrizability conditions for $\mach$ (See Fig.~\ref{fig:symm1}-\ref{fig:symm3}).
\begin{enumerate}
\item \label{item:symm1} We say that $\mach$ is $\cX_j\times\cX_k$-{\em symmetrizable by}~$\cX_i$ if for some distribution $q(x_i|x_j,x_k)$
\begin{align}
&\sum_{{x}_i} q({x}_i|\tlx_j,\tlx_k)\machijk(y|x_i,x_j,x_k)\notag\\
&\qquad = \sum_{\tilde{x}_i} q(\tilde{x}_i|x_j,x_k)\machijk(y|\tilde{x}_i,\tlx_j,\tlx_k),\notag\\
&\qquad\qquad\quad \forall\; x_j,\tlx_j\in\mathcal{X}_j,\; x_k,\tlx_k\in\mathcal{X}_k,\; y\in\mathcal{Y}. \label{eq:symm1}
\end{align}
\item \label{item:symm2} We say that $\mach$ is $\cX_k|\cX_j$-{\em symmetrizable by}~$\cX_i$ if for some distribution $q(x_i|x_k)$
\begin{align}
&\sum_{{x}_i} q(x_i|\tlx_k)\machijk(y|x_i,x_j,x_k)\notag\\
&\qquad = \sum_{\tilde{x}_i} q(\tlx_i|x_k)\machijk(y|\tlx_i,x_j,\tlx_k),\notag\\
&\qquad\qquad\quad \forall\; x_j\in\mathcal{X}_j,\; x_k,\tlx_k\in\mathcal{X}_k,\; y\in\mathcal{Y}. \label{eq:symm2}
\end{align}
\item \label{item:symm3} We say that $\mach$ is $\cX_k$-{\em symmetrizable by}~$\cX_i/\cX_j$ if for some pair of distributions $q(x_i|\tlx_i,\tlx_k)$ and $q'(\tlx_j|x_j,x_k)$
\begin{align}
&\sum_{x_i} q(x_i|\tlx_i,\tlx_k)\machijk(y|x_i,x_j,x_k)\notag\\
&\qquad = \sum_{\tlx_j} q'(\tlx_j|x_j,x_k)\machijk(y|\tlx_i,\tlx_j,\tlx_k),\notag\\
&\qquad\quad\;\forall\; \tlx_i\in\mathcal{X}_i,\;x_j\in\mathcal{X}_j,\; x_k,\tlx_k\in\mathcal{X}_k,\; y\in\mathcal{Y}. \label{eq:symm3}
\end{align}
\end{enumerate}
\end{defn}
We say that {\em user-$k$ is symmetrizable} if any of the above three symmetrizability conditions \eqref{eq:symm1}-\eqref{eq:symm3} holds for some distinct $i,j \in \{1,2,3\}\setminus\{k\}$.{We say that the channel is {\em not symmetrizable} if  user-$k$ is not symmetrizable for every $k\in\{1,2,3\}$.} \pink{In Section~\ref{sec:k_user_det}, we generalize the symmetrizability conditions to more than three users and provide a unified way of looking at them.}

\pink{The first two symmetrizability conditions arise from the possibility that the decoder cannot
tell apart different messages of honest user(s) when a particular user behaves adversarially.
These symmetrizability conditions are thus inherited from those for the AV-MAC model. Specifically, symmetrizability conditions for the two-user AV-MAC with ${X}_i$ as the state and ${X}_j,{X}_k$ as the inputs are also symmetrizability conditions for our problem. Thus, the first two conditions \eqref{eq:symm1}-\eqref{eq:symm2} (Figures~\ref{fig:symm1} and~\ref{fig:symm2}) follow from two-user AV-MAC symmetrizability conditions given by Gubner~\cite{Gubner90}. Notice that \eqref{eq:symm1} involves a distribution $q(x_i|x_j, x_k)$ whereas \eqref{eq:symm2} involves $q(x_i|x_k)$. The third condition \eqref{eq:symm3}(Figure~\ref{fig:symm3}) is new (see Section~\ref{para:new_cond}) and arises from the byzantine nature of the users in this problem. In a \bmac, the decoder may not be able to tell apart two messages since while one message is explained by the possibility of another user (say $j$) behaving adversarially, the other message may be explained by the
possibility of a third user (say $k$) behaving adversarially.  We discuss the implications of the third condition in Section~\ref{sec:3.2.1.2} where we argue that a symmetrizable user cannot communicate reliably using deterministic codes. }

\paragraph{Symmetrizability implies non-feasibility of communication.}\label{sec:3.2.1.2}
Suppose that \eqref{eq:symm3} holds for $(i, j, k) = (1, 2, 3)$. 
Thus, user-3 is symmetrizable. 
{In the following, we show that user-3 cannot communicate reliably at positive rates.} 
For fixed vectors ($\tilde{\vecx}_1,  \vecx_2, \vecx_3, \tilde{\vecx}_3$), {Eq.~}\eqref{eq:symm3} with $(i, j, k) = (1, 2, 3)$ implies that the output is same under the following two cases: 
{\begin{enumerate}[label=(\roman*)]
\item User 1 sends $\vecX_1\sim q^n(.|\tilde{\vecx_1}, \tilde{\vecx}_3)$, {\em i.e.}, a vector distributed as the output of the memoryless channel $q$ on input $(\tilde{\vecx}_1, \tilde{\vecx}_3)$ (see Figure~\ref{fig:symm3} which depicts single use of the channel), user-2 and user-3 send $\vecx_2$ and $\vecx_3$ respectively, and 
\item User 2 sends $\vecX_2\sim (q')^n(.|\vecx_2, {\vecx_3})$, and user-1 and user-3 send $\tilde{\vecx}_1$ and $\tilde{\vecx}_3$ respectively. 
\end{enumerate}}
Hence, for a given $(\cM_1, \cM_2, \cM_3, n)$ code $(f_1, f_2, f_3, \phi)$ and independent $\tilde{M}_1\sim \textsf{Unif}(\cM_1)$, $M_2\sim \textsf{Unif}(\cM_2)$, $M_3\sim \textsf{Unif}(\cM_3)$ and $\tilde{M}_3\sim \textsf{Unif}(\cM_3)$, the output {distributions are identical }in the following two cases:
{\begin{enumerate}[label=(\roman*)]
\item  User 1 sends $\vecX_1\sim q^n(.|f_1(\tilde{M}_1), f_3(\tilde{M}_3))$, user-2 and user-3 send $f_2(M_2)$ and $f_3(M_3)$ respectively, and 
\item User 2 sends $\vecX_2\sim (q')^n(.|f_2(M_2), f_3(M_3))$, and user-1 and user-3 send $f_1(\tilde{M}_1)$ and $f_3(\tilde{M}_3)$ respectively. 
\end{enumerate}}
Thus, the receiver is unable to tell apart the two possibilities, {\em i.e.}, whether user-1 is malicious with user-3 sending $M_3$ or user-2 is malicious with user-3 sending $\tilde{M}_3$. We can argue along the similar lines to show that the symmetrized user(s) in \eqref{eq:symm1} or \eqref{eq:symm2} cannot communicate reliably. On the other hand, we can show that when no user is symmetrizable, we can work at positive rates. This brings us to our main result.

\paragraph{Deterministic capacity region.}
Let $\mathcal{R}$ be the set of all rate triples $(R_1,R_2,R_3)$ such that for some $p(u)p(x_1|u)p(x_2|u)p(x_3|u)$, the following conditions hold for all permutations $(i,j,k)$ of $(1,2,3)$:
\begin{align}
R_i &\leq \min_{q(x_k|u)} I(X_i;Y|X_j,U),\quad\text{and} \label{eq:rateconstraint1}\\
R_i+R_j &\leq \min_{q(x_k|u)} I(X_i,X_j;Y|U), \label{eq:rateconstraint2}
\end{align}
where the mutual information terms above are evaluated using the joint distribution
$p(u) p(x_i|u)p(x_j|u)q(x_k|u) \allowbreak \mach(y|x_1,x_2,x_3)$. Here, \pink{$U$  is an auxiliary random variable distributed over some alphabet $\cU$ with $|\mathcal{U}|\leq 6$. The bound on the cardinality of $\cU$ can be shown using the convex cover method~\cite[Appendix~C]{YHKEG}.}
\begin{thm}\label{thm:symmetrizability}
$\mathcal{R}_{\deterministic} = \mathcal{R}$ if $\mach$ is not symmetrizable. Furthermore, $\textup{int}(\mathcal{R}_{\deterministic})\neq\varnothing$ only if $\mach$ is not symmetrizable.
\end{thm}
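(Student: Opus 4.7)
The theorem decomposes into three parts: (a) the outer bound $\mathcal{R}_{\deterministic}\subseteq \mathcal{R}$, which holds regardless of symmetrizability, (b) the feasibility converse that symmetrizability forces an empty interior, and (c) the achievability $\mathcal{R}\subseteq \mathcal{R}_{\deterministic}$ when $\mach$ is not symmetrizable. I will address them in turn; part (c) is the main obstacle.

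For the outer bound I would run a compound AV-MAC argument. Any code of rate $(R_1,R_2,R_3)$ must work against each of the three adversarial scenarios; when user-$k$ is adversarial, users $i,j$ face an AV-MAC whose state is user-$k$'s channel input. If the adversary plays a memoryless $q(x_k|u)$ conditioned on a time-sharing variable $U$, then Fano-type arguments in the style of Gubner~\cite{Gubner90} and Jahn~\cite{Jahn81} yield the per-user and sum-rate bounds \eqref{eq:rateconstraint1}--\eqref{eq:rateconstraint2} for the permutation $(i,j,k)$, with the minimum over $q(x_k|u)$ capturing the adversary's worst-case choice. Because the \emph{same} code must work for all three choices of $k$, a single common distribution $p(u)p(x_1|u)p(x_2|u)p(x_3|u)$ satisfies all nine constraints simultaneously; the cardinality bound $|\mathcal{U}|\leq 3$ follows from a standard support-lemma argument.

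For the feasibility converse I would make rigorous the indistinguishability argument sketched in the paper for~\eqref{eq:symm3} and also treat~\eqref{eq:symm1}--\eqref{eq:symm2}. Under each condition, I would exhibit two adversarial scenarios---differing in which user is the adversary and in which ``true'' value of user-$k$'s message is being sent (through genuine and phantom messages drawn independently and uniformly)---whose induced output distributions on $\vecY$ coincide symbol-by-symbol. Since the decoder's decision $\phi_k(\vecY)$ then has the same distribution in both scenarios while the true message differs, the sum of their error probabilities is at least $\bbP[M_k\neq \tilde M_k] = 1 - 1/N_k$; at positive rates this forces a non-vanishing $P_e$, giving $R_k=0$ and $\textup{int}(\mathcal{R}_{\deterministic})=\varnothing$.

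For the achievability, I would follow the direct method-of-types approach of Section~\ref{intro:second_approach}. Fix a rate triple in $\textup{int}(\mathcal{R})$ with associated $p(u)p(x_1|u)p(x_2|u)p(x_3|u)$; draw $\vecu\sim p(u)^n$ and generate each codebook conditionally i.i.d.\ from $p(x_i|u)$. The central technical tool is Lemma~\ref{lemma:codebook}, which I would invoke to argue that with high probability the three codebooks jointly satisfy a uniform concentration property: for every joint type $Q_{X_1X_2X_3|U}$ of interest, the number of codeword triples realizing that type tracks its expectation, uniformly over types. The decoder would then search for a unique $(\hat m_1,\hat m_2,\hat m_3)$ such that $(\vecu, f_1(\hat m_1), f_2(\hat m_2), f_3(\hat m_3), \vecy)$ is consistent with the output either under the no-attack channel law or under some admissible attack by a single user-$k$. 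Error analysis splits into four cases (no attack and user-$k$ attacking for each $k$); in each, an error event forces the existence of an alternate codeword triple whose induced empirical distribution with $\vecy$ is compatible with the channel, and the non-symmetrizability conditions \eqref{eq:symm1}--\eqref{eq:symm3} across all permutations are exactly what preclude such alternates, with the concentration lemma supplying the per-type counting bounds over the codebook randomness. The hardest part will be designing one decoder whose analysis handles all four scenarios \emph{uniformly} without knowing which is active; the most delicate case is cross-scenario ambiguity (user-$i$ vs.\ user-$j$ attacking, both producing the same $\vecy$ distribution), which is precisely the ambiguity ruled out by the new symmetrizability condition~\eqref{eq:symm3}.
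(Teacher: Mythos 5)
Your decomposition into outer bound, feasibility converse, and achievability matches the paper's structure, and you correctly identify the new condition~\eqref{eq:symm3} and the uniform-decoder design as the crux. However, there are two genuine gaps, both in the parts you treat most briefly.

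\textbf{Decoder design.} Your proposed decoder searches for a \emph{triple} $(\hat m_1,\hat m_2,\hat m_3)$ such that $(\vecu, f_1(\hat m_1), f_2(\hat m_2), f_3(\hat m_3), \vecy)$ is consistent with the channel. This cannot work: when user-$k$ is adversarial, the received $\vecy$ is generated from two honest codewords and an arbitrary $\vecx_k\in\cX^n_k$ which need not be any codeword $f_k(m)$, so no consistent triple exists and the decoder has nothing to output. The paper's decoder is structurally different: it decodes each user \emph{separately}, and a decoding set $\dm{i}{m_i}$ is entered via an ``explanation'' $(\vecx_j,\vecx_k)$ that replaces one coordinate with a free vector $\vecx\in\cX^n$ (not a codeword) to model the possibly-adversarial user. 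The non-symmetrizability conditions \eqref{eq:symm1}--\eqref{eq:symm3} then enter through the three disambiguity checks (a)--(c), whose mutual exclusivity is established via a Pinsker-inequality argument (Lemma~\ref{lemma:dec}). You correctly anticipate that cross-scenario ambiguity is the hard case, but the decoder you sketch neither accommodates non-codeword inputs nor supplies the per-user pairwise checks that Lemma~\ref{lemma:dec} needs. This is not a rigor detail; without these, the error analysis has nothing to bite on.

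\textbf{Feasibility converse.} You write that the sum of the two error probabilities is at least $\bbP[M_k\neq\tilde M_k]=1-1/N_k$ and conclude a non-vanishing $P_e$. The subtlety you elide is that in scenario~(i) the adversary's input $\tilde{\vecX}_1$ is built from $\tilde{\vecx}_3=f_3(\tilde m_3)$, and in scenario~(ii) $\tilde{\vecX}_2$ is built from $\vecx_3=f_3(m_3)$: the attack vector depends on the \emph{other} hypothesized message, which the adversary does not control and must average over. The adversary has to commit to a single attack, so the bound you wrote is not directly $P_{e,1}+P_{e,2}$. The paper resolves this by having each adversary play the \emph{mixture} attack $\frac{1}{N_3}\sum_{\tilde m_3}\tilde{\vecX}_{1,\tilde m_3}$ (respectively $\frac{1}{N_3}\sum_{m_3}\tilde{\vecX}_{2,m_3}$), then averaging over both indices and discarding the diagonal $m_3=\tilde m_3$; this yields $P_e\geq 1/8$. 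Your argument needs this mixture step to be valid, as noted in~\cite[(3.29), p.~187]{CsiszarN88}.

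Minor points: the paper generates constant-composition codebooks uniformly on type classes $T^n_{X_i}$ (the $U$ variable does not appear in the codebook at all), so your $\vecu$-conditioned i.i.d.\ construction is a departure and would require redoing the concentration lemma with $\vecu$-conditional types. And the outer bound is most cleanly obtained by citing Theorem~\ref{thm:random}, whose converse already contains exactly the compound-channel Fano argument you describe; re-deriving it is redundant but not wrong.
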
 
\begin{remark}\label{rem:gap} As argued, we prove the converse part of Theorem~\ref{thm:symmetrizability} by showing that if user-$k$ is symmetrizable, then any achievable rate triple $(R_1,R_2,R_3)$ must be such that $R_k=0$. Our capacity region characterization does not cover the case where some (but not all) users are symmetrizable. In this case, by Theorem~\ref{thm:random} (which shows that $\cR_{\random} = \cR$), $\mathcal{R}$ restricted to rates of non-symmetrizable users is clearly an outer bound on $\mathcal{R}_{\deterministic}$. It is tempting to conjecture that these regions are equal. A similar result for the two-user AV-MAC was recently studied by Pereg and Steinberg~\cite{PeregS19} for the case where the users can privately randomize. \end{remark}

\paragraph{Overview of the proof of Theorem~\ref{thm:symmetrizability}.}\label{para:det_Achiev} 
The {detailed} proof of Theorem~\ref{thm:symmetrizability} is given in Section~\ref{sec:det_proofs}. 
Here, we describe the main ideas behind the achievability. The codebooks used in the achievability are obtained by a random coding argument (see Lemma~\ref{lemma:codebook}). We will
briefly describe the decoder here and also point out  its connection to the non-symmetrizability of the channel. \pink{A high-level proof-idea is also given in the flowchart in Figure~\ref{fig:flowchart1}.}
%For a rate triple $(R_1, R_2, R_3)$ satisfying \eqref{eq:rateconstraint1}-\eqref{eq:rateconstraint2}, %the achievability uses an $R_i$ rate encoder $f_i: \mathcal{M}_i\rightarrow \mathcal{X}_i^n$, $i\in [1:3]$, corresponding to 
%We use the following decoder with these codebooks.

\begin{figure}[h]
\centering
\begin{subfigure}[t]{.49\columnwidth}
\centering
\begin{tikzpicture}[scale=0.5]
	\draw (1.7,6) rectangle ++(1,1) node[pos=.5]{$q$};
	\draw (4.1,4) rectangle ++(1.3,1.5) node[pos=.5]{$W$};
	\draw[->] (1,4.75) node[anchor=east]{{\color{myblue}{$\scriptstyle{f_2(m_2)}$}} $\scriptstyle{X_2}$} -- ++ (3.1,0);

	\draw[->] (1,6.8) node[anchor=east]{{\color{myblue}{$ \scriptstyle{f_2(m'_2)}$}} $\scriptstyle{X'_2}$} -- ++ (0.7,0) ;
	\draw[->] (1,6.2) node[anchor=east]{{\color{myblue}{$ \scriptstyle{f_3(m'_3)}$}} $\scriptstyle{X'_3}$} -- ++ (0.7,0) ;
	\draw[->] (3.4,4.25)  -- ++(0.7,0);
	\draw[->] (3.4,5.25) -- ++(0.7,0);
	\draw[-] (3.4,5.25) -- ++ (0,1.25);
	\draw[-] (3.4,4.25) -- ++ (0,-1);
	\draw[-] (2.7, 6.5) -- ++(0.7,0) node[anchor=west]{{$\scriptstyle{{X}_1}$} {\color{myblue}$\scriptstyle{\vecx_1}$}};
	\draw[-] (1,3.25) node[anchor=east]{{\color{myblue}$\scriptstyle{f_3(m_3)}$} $\scriptstyle{X_3}$} -- ++ (2.4,0);

	\draw[->] (5.4,4.75) --  ++ (0.5,0)node[anchor= west] {$\scriptstyle{Y}$ {\color{myblue}$\scriptstyle{\vecy}$}};
\end{tikzpicture}
\caption{$X'_2 X'_3-X_1-X_2X_3Y$ Markov chain holds approximately.}
\label{fig:symm1aa}
\end{subfigure}
\begin{subfigure}[t]{.49\columnwidth}
\centering
\begin{tikzpicture}[scale=0.5]
	\draw (1.7,6) rectangle ++(1,1) node[pos=.5]{$q$};
	\draw (4.1,4) rectangle ++(1.3,1.5) node[pos=.5]{$W$};
	\draw[->] (1,4.75) node[anchor=east]{{\color{myblue}{$\scriptstyle{f_2(m'_2)}$}} $\scriptstyle{X'_2}$} -- ++ (3.1,0);

	\draw[->] (1,6.8) node[anchor=east]{{\color{myblue}{$ \scriptstyle{f_2({m}_2)}$}} $\scriptstyle{{X}_2}$} -- ++ (0.7,0) ;
	\draw[->] (1,6.2) node[anchor=east]{{\color{myblue}{$ \scriptstyle{f_3({m}_3)}$}} $\scriptstyle{{X}_3}$} -- ++ (0.7,0) ;
	\draw[->] (3.4,4.25)  -- ++(0.7,0);
	\draw[->] (3.4,5.25) -- ++(0.7,0);
	\draw[-] (3.4,5.25) -- ++ (0,1.25);
	\draw[-] (3.4,4.25) -- ++ (0,-1);
	\draw[-] (2.7, 6.5) -- ++(0.7,0) node[anchor=west]{{$\scriptstyle{X'_1}$} {\color{myblue}$\scriptstyle{\vecx'_1}$}};
	\draw[-] (1,3.25) node[anchor=east]{{\color{myblue}$\scriptstyle{f_3(m'_3)}$} $\scriptstyle{X'_3}$} -- ++ (2.4,0);

	\draw[->] (5.4,4.75) --  ++ (0.5,0)node[anchor= west] {$\scriptstyle{Y}$ {\color{myblue}$\scriptstyle{\vecy}$}};
\end{tikzpicture}
\caption{${X}_2 {X}_3-X'_1-X'_2X'_3Y$ Markov chain holds approximately.}
\label{fig:symm1b}
\end{subfigure}
\caption{The subfigure $(a)$  above describes the decoding condition \ref{cond:1}. The quantities in blue describe it operationally while the random variables describe the single-letter joint distribution. Non-symmetrizability implies that, for $(\vecx_1,\vecx'_1, f_2(m_2),f_2(m'_2), f_3(m_3),   f_3(m'_3), \vecy)\in T^n_{X_1X'_1X_2X'_2X_3 X'_3Y}$, both $(a)$ and $(b)$ cannot hold simultaneously (see Fig~\ref{fig:symm1}). }
\label{fig:symm1_dec}
\end{figure}
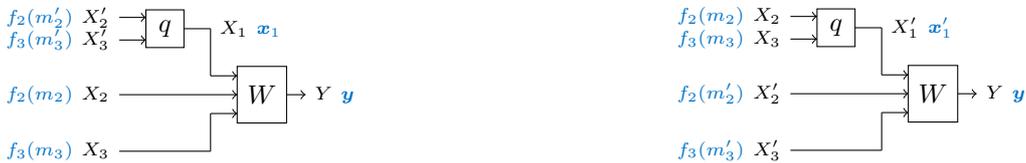

\begin{figure}[h]
\centering
\begin{subfigure}[b]{.49\columnwidth}
\centering
\begin{tikzpicture}[scale=0.5]
	\draw (1.7,6) rectangle ++(1,1) node[pos=.5]{$q$};
	\draw (4.1,4) rectangle ++(1.3,1.5) node[pos=.5]{$W$};
	\draw[->] (1,4.75) node[anchor=east]{{\color{myblue}{$ \scriptstyle{f_2({m}_2)}$}} $\scriptstyle{{X}_2}$} -- ++ (3.1,0);
	
	\draw[->] (1,6.5) node[anchor=east]{{\color{myblue}$\scriptstyle{f_3(m'_3)}$} $\scriptstyle{X'_3}$} -- ++ (0.7,0) ;
	\draw[->] (3.4,4.25) -- ++(0.7,0);
	\draw[->] (3.4,5.25) -- ++(0.7,0);
	\draw[-] (3.4,5.25) -- ++ (0,1.25);
	\draw[-] (3.4,4.25) -- ++ (0,-1);
	\draw[-] (2.7, 6.5) -- ++(0.7,0);
	\draw[-] (2.7, 6.5) -- ++(0.7,0) node[anchor=west]{{$\scriptstyle{{X}_1}$} {\color{myblue}$\scriptstyle{\vecx_1}$}};
	\draw[-] (1,3.25) node[anchor=east]{{\color{myblue}{$ \scriptstyle{f_3({m}_3)}$}} $\scriptstyle{{X}_3}$} -- ++ (2.4,0);

	\draw[->] (5.4,4.75) --  ++ (0.5,0)node[anchor= west] {$\scriptstyle{Y}$ {\color{myblue}$\scriptstyle{\vecy}$}};
\end{tikzpicture}
\caption{$X'_3-X_1-X_2X_3Y$ Markov chain holds approximately.}
\label{fig:symm2aa}
\end{subfigure}
\begin{subfigure}[b]{.49\columnwidth}
\centering
\begin{tikzpicture}[scale=0.5]
	\draw (1.7,6) rectangle ++(1,1) node[pos=.5]{$q$};
	\draw (4.1,4) rectangle ++(1.3,1.5) node[pos=.5]{$W$};
	\draw[->] (1,4.75) node[anchor=east]{{\color{myblue}{$ \scriptstyle{f_2({m}_2)}$}} $\scriptstyle{{X}_2}$} -- ++ (3.1,0);

	\draw[->] (1,6.5) node[anchor=east]{{\color{myblue}{$ \scriptstyle{f_3({m}_3)}$}} $\scriptstyle{{X}_3}$} -- ++ (0.7,0) ;
	\draw[->] (3.4,4.25) -- ++(0.7,0);
	\draw[->] (3.4,5.25) -- ++(0.7,0);
	\draw[-] (3.4,5.25) -- ++ (0,1.25);
	\draw[-] (3.4,4.25) -- ++ (0,-1);
	\draw[-] (2.7, 6.5) -- ++(0.7,0);
	\draw[-] (2.7, 6.5) -- ++(0.7,0) node[anchor=west]{{$\scriptstyle{X'_1}$} {\color{myblue}$\scriptstyle{\vecx'_1}$}};
	\draw[-] (1,3.25) node[anchor=east]{{\color{myblue}$\scriptstyle{f_3(m'_3)}$} $\scriptstyle{X'_3}$} -- ++ (2.4,0);

	\draw[->] (5.4,4.75) --  ++ (0.5,0)node[anchor= west] {$\scriptstyle{Y}$ {\color{myblue}$\scriptstyle{\vecy}$}};
\end{tikzpicture}
\caption{${X}_3-X'_1-X_2X'_3Y$ Markov chain holds approximately.}
\end{subfigure}
\caption{The subfigure $(a)$  above describes the decoding condition \ref{cond:2}. The quantities in blue describe it operationally while the random variables describe the single-letter joint distribution. Non-symmetrizability implies that, for $(\vecx_1,\vecx'_1,  f_2(m_2), f_3(m_3),  f_3(m'_3), \vecy)\in T^n_{X_1X'_1X_2X_3 X'_3Y}$, both $(a)$ and $(b)$ cannot hold simultaneously (see Fig~\ref{fig:symm2}).}
\label{fig:symm2_dec}
\end{figure}
\begin{figure}[h]
\centering
\begin{subfigure}[t]{.49\columnwidth}
\centering
\begin{tikzpicture}[scale=0.5]
	\draw (1.7,6) rectangle ++(1,1) node[pos=.5]{$q$};
	\draw (4.1,4) rectangle ++(1.3,1.5) node[pos=.5]{$W$};
	\draw[->] (1,4.75) node[anchor=east]{{\color{myblue}{$ \scriptstyle{f_2({m}_2)}$}} $\scriptstyle{{X}_2}$} -- ++ (3.1,0);
	
	\draw[->] (1,6.8) node[anchor=east]{{\color{myblue}$\scriptstyle{f_1(m'_1)}$} $\scriptstyle{X'_1}$} -- ++ (0.7,0) ;
	\draw[->] (1,6.2) node[anchor=east]{{\color{myblue}$\scriptstyle{f_3(m'_3)}$} $\scriptstyle{X'_3}$} -- ++ (0.7,0) ;
	\draw[->] (3.4,4.25) -- ++(0.7,0);
	\draw[->] (3.4,5.25) -- ++(0.7,0);
	\draw[-] (3.4,5.25) -- ++ (0,1.25);
	\draw[-] (3.4,4.25) -- ++ (0,-1);
	\draw[-] (2.7, 6.5) -- ++(0.7,0);

	\draw[-] (2.7, 6.5) -- ++(0.7,0) node[anchor=west] {{$\scriptstyle{{X}_1}$} {\color{myblue}$\scriptstyle{{\vecx}_1}$}};
	\draw[-] (1,3.25) node[anchor=east]{{\color{myblue}{$ \scriptstyle{f_3({m}_3)}$}} $\scriptstyle{{X}_3}$} -- ++ (2.4,0);

	\draw[->] (5.4,4.75) --  ++ (0.5,0)node[anchor= west]{$\scriptstyle{Y}$ {\color{myblue}$\scriptstyle{\vecy}$}};
\end{tikzpicture}
\caption{$X'_1X'_3-X_1-X_2X_3Y$ Markov chain holds approximately.}
\label{fig:symm3aa}
\end{subfigure}
\begin{subfigure}[t]{.49\columnwidth}
\centering
\begin{tikzpicture}[scale=0.5]
	\draw (1.4,4.25) rectangle ++(1,1) node[pos=.5]{$q'$};
	\draw (4.1,4) rectangle ++(1.3,1.5) node[pos=.5]{$W$};
	\draw[->] (1,4.45) node[anchor=east]{{\color{myblue}{$ \scriptstyle{f_3({m}_3)}$}} $\scriptstyle{{X}_3}$} -- ++ (0.4,0);
	\draw[->] (1,5.05) node[anchor=east]{{\color{myblue}{$ \scriptstyle{f_2({m}_2)}$}} $\scriptstyle{{X}_2}$} -- ++ (0.4,0);
	\draw[->] (2.4,4.75) -- ++ (1.7,0);

	\node at (2.9, 5.2) {\color{myblue}{$ \scriptstyle{\vecx'_2}$}};
	\node at (2.9, 4.4) {{$ \scriptstyle{X'_2}$}};
	\draw[-] (1,6.25) node[anchor=east]{{\color{myblue}$\scriptstyle{f_1(m'_1)}$} $\scriptstyle{X'_1}$} -- ++ (2.4,0) ;
	\draw[->] (3.4,4.25) -- ++(0.7,0);
	\draw[->] (3.4,5.25) -- ++(0.7,0);
	\draw[-] (3.4,5.25) -- ++ (0,1);
	\draw[-] (3.4,4.25) -- ++ (0,-1);
	\draw[-] (1,3.25) node[anchor=east]{{\color{myblue}$\scriptstyle{f_3(m'_3)}$} $\scriptstyle{X'_3}$} -- ++ (2.4,0);

	\draw[->] (5.4,4.75) --  ++ (0.5,0)node[anchor= west] {$\scriptstyle{Y}$ {\color{myblue}$\scriptstyle{\vecy}$}};
\end{tikzpicture}
\caption{$X_2X_3-X'_2-X'_1X'_3Y$ Markov chain holds approximately.}
\end{subfigure}
\caption{The subfigure $(a)$ above describes the decoding condition \ref{cond:3}. The quantities in blue describe it operationally while the random variables describe the single-letter joint distribution. Non-symmetrizability implies that, for $(\vecx_1, f_1(m'_1),f_2(m_2),\vecx'_2, f_3(m_3),   f_3(m'_3), \vecy)\in T^n_{X_1X'_1X_2X'_2X_3 X'_3Y}$, both $(a)$ and $(b)$ cannot hold simultaneously (see Fig~\ref{fig:symm3}).}
\label{fig:symm3_dec}
\end{figure}
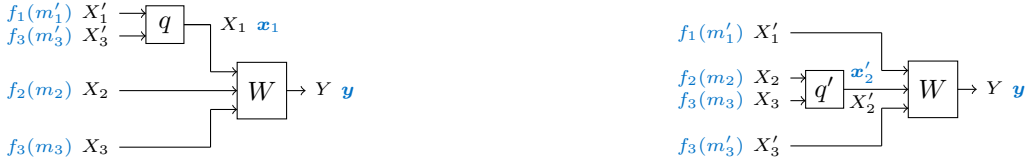

Upon receiving the channel output, the decoder works by separately collecting potential candidates for each user's input message and subjecting them to further checks. Finally, we will show that there will be at most one potential candidate for each user (see Lemma~\ref{lemma:dec}) which the decoder outputs. {In the following, we describe these steps for decoding user-3's message. Similar procedures are also employed for user-1 and user-2's decoding.

Let $\cC_i, \, i=1, 2, 3$ denote the codebook of user-$i$. {Given a received vector $\vecy\in \cY^n$, we say that the message $m_3\in \cM_3$ of user-3 is a ``candidate'' with an ``explanation'' $(\vecx_1,\vecx_2)\in \left(\mathcal{X}^n_1\times \cC_2\right)\cup \left(\cC_1\times\mathcal{X}^n_2\right)$ if the tuple $(\vecx_1,\vecx_2,f_3(m_3),\vecy)$ is jointly typical with respect to a joint distribution that corresponds to independent channel inputs and the channel output following the channel conditional distribution given the inputs. The choice of the set of explanations is motivated by the fact that at most one user can be malicious. Note that, in general, a candidate message may have multiple explanations.}
\begin{enumerate}
\item {The decoder first forms a list of all candidate messages of user-3 along with their explanations.}
\item {The list of such candidate messages is then pruned by only keeping those messages that ``account'' for every other candidate message in the sense described below. Suppose that the candidate message $m_3$ has an explanation of the form $(\vecx_1,f_2(m_2))$ for some $\vecx_1\in\cX_1^n$ and $m_2\in\cM_2$. Similar procedures are followed if the explanation for $m_3$ is of the form $(f_1(m_1),\vecx_2)$ by interchanging the roles of user-1 and user-2 below.
 Let  $m'_3$ be another candidate message. We say that $m_3$ accounts for $m'_3$ if one of the following three conditions is satisfied.}
\begin{enumerate}
\item \label{cond:1} $m'_3$ has an explanation $(\vecx'_1,f_2(m'_2))$  for some $m'_2\neq m_2$, such that the collection \\$(\vecx_1,  f_2(m_2), f_2(m'_2),f_3(m_3), f_3(m'_3),  \vecy)$ may be interpreted as typical instances drawn from a distribution $P_{X_1X_2X'_2X_3 X'_3Y}$ that specifies that $X'_2 X'_3$ and $X_2X_3Y$ are (roughly) conditionally independent given $X_1$. 

The condition \ref{cond:1} may be interpreted as follows: $\vecx_1$, $f_2(m_2)$, and $f_3(m_3)$  as inputs to the channel are a more plausible explanation of the channel output than the alternative input $(f_2(m_2'), f_3(m_3'))$, which is  part of (adversarial) user-$1$'s attack strategy (see Fig~\ref{fig:symm1a} and Fig.~\ref{fig:symm1aa}). It can be shown that for a non-symmetrizable channel, an analogue of Fig.~\ref{fig:symm1bb} (see Fig.~\ref{fig:symm1b}), which (roughly)  corresponds to the Markov chain ${X}_2 {X}_3-X_1-X'_2X'_3Y$,  cannot simultaneously hold (also see proof of Lemma~\ref{lemma:dec}).

\item \label{cond:2} $m'_3$ has an explanation $(\vecx'_1,f_2(m_2))$ such that the collection $(\vecx_1,  f_2(m_2), f_3(m_3),f_3(m'_3), \vecy)$ may be interpreted as typical instances drawn from a distribution $P_{X_1X_2X_3 X'_3Y}$ that specifies that $X'_3$ and $X_2X_3Y$ are (roughly) conditionally independent given $X_1$ (see Fig.~\ref{fig:symm2_dec}).  
\item \label{cond:3} $m'_3$ has an explanation $(f_1(m'_1),\vecx'_2)$ such that the collection $(\vecx_1, f_1(m'_1), f_2(m_2), f_3(m_3), f_3(m'_3),\vecy)$ may be interpreted as typical instances drawn from a distribution $P_{X_1X'_1X_2X_3 X'_3Y}$ that specifies that $X'_1X'_3$ and $X_2X_3Y$ are (roughly) conditionally independent given $X_1$ (see Fig.~\ref{fig:symm3_dec}).
%\item If for some  $m'_3\in \cM_3, \, m'_3\neq m_3$, such that  there exists some $m'_1 \in \mathcal{M}_1$ and $\vecx'_2\in \mathcal{X}^n_2$ such that the triple $(f_1(m'_1), \vecx'_2, f_3(m'_3))$ as input and the received vector $\vecy$ as output are typical according to the channel, we require that the Markov chain depicted in  Fig.~\ref{fig:symm3aa} roughly holds.
\end{enumerate}
\end{enumerate}

%The condition \ref{cond:1} \mb{may be interpreted as} the Markov chain $X'_2X'_3-X_1-X_2X_3Y$ \mb{approximately holding}. This implies that  $f_2(m_2)$ and $f_3(m_3))$ as inputs to the channel, while the alternative inputs $(f_2(m_2'), f_3(m_3'))$ being {\sout{possibly}} part of (adversarial) user-$1$'s attack strategy, are a more plausible explanation for the channel output (see Fig~\ref{fig:symm1a}). This is analogous to Fig.~\ref{fig:symm1aa}. It can be shown that for a non-symmetrizable channel, an analogue of Fig.~\ref{fig:symm1bb} (see Fig.~\ref{fig:symm1b}), which corresponds to $I(X_jX_3;X'_jX'_3Y|X'_i) < \eta$,  cannot simultaneously hold for sufficiently small $\eta>0$ (see proof of Lemma~\ref{lemma:dec}).

 See the decoder definition below for a complete description, which accounts for all candidates. 
 %In particular, item {\bf(a)} describes the case that we just discussed (corresponding to Fig.~\ref{fig:symm1_dec} for $(i, j) = (1,2)$), item {\bf(b)} considers the case when for the same permutation $(i, j)$, and some $\vecx'_i\in \mathcal{X}^n_i$, the triple $(\vecx'_i, f_j({m}_j), f_3(m'_3))$ as input and the received vector $\vecy$ as output are typical according to the channel (see Fig.~\ref{fig:symm2}). Item {\bf(c)} is associated with our new non-symmetrizability criterion (see Fig~\ref{fig:symm3}). 
Items {\bf(a)} and {\bf(b)} in the decoder definition are similar to the decoding conditions in~\cite{AhlswedeC99} where user-$i$ is the adversary and $\vecx_i$ is the state. Item~{\bf(c)} is associated with our new non-symmetrizability criterion (see Fig.~\ref{fig:symm3}) and handles the situation in which an adversarial user tries to make another user appear adversarial while pretending to act honestly.

\paragraph*{Decoder:}{
Let $\eta>0$. For a received vector $\vecy\in \cY^n$, the decoder outputs $\phi(\vecy)=(m_1,m_2,m_3)\in \mathcal{M}_1\times\mathcal{M}_2\times\mathcal{M}_3$ if  $\vecy\in \dm{1}{m_1}\cap \dm{2}{m_2} \cap \dm{3}{m_3}$ where $\dm{i}{m_i}, \, i=1,2,3$ is defined as below.\\
$\vecy \in {\dm{3}{m_3}}$ if there exists  some permutation $(i, j)$ of $(1,2)$, $m_j \in \mathcal{M}_j,\,\vecx_i\in \mathcal{X}^n_i$, and random variables $X_i, X_j, X_3$ with $(\vecx_{i},f_j(m_j), f_3(m_3),  \vecy) \in T^{n}_{X_iX_jX_3Y}$ and $D(P_{X_iX_jX_3 Y}||P_{X_i}\times P_{X_j}\times P_{X_3}\times W)< \eta$ such that the following hold:
	\begin{description}			
			 \item[(a)] \underline{Disambiguating ($m_j,m_3$) from ($m_j',m'_3$):} For every $(m'_j, m'_3) \in \mathcal{M}_j\times\mathcal{M}_3$, $m'_j \neq m_j$, $m'_3\neq m_3$, $\vecx'_i\in \mathcal{X}^n_i$, and random variables $X_i',X_j', X_3'$ such that $(\vecx_i, \vecx'_i, f_j(m_j), f_j(m'_j),f_3(m_3), f_3(m'_3),  \vecy) \in T^{n}_{X_i X_i'X_j X_j' X_3X'_3 Y}$ and  $D(P_{X_i'X'_jX'_k Y}||P_{X'_1}\times P_{X'_j}\times P_{X'_k}\times W)< \eta$, we require that $I(X_jX_3Y;X'_jX'_3|X_i) < \eta$.

			\item[(b)] \underline{Disambiguating $m_3$ from $m_3'$:} For every $m'_3 \in \mathcal{M}_3$, $m'_3 \neq m_3$, $\vecx'_i\in \mathcal{X}^n_i$, and random variables $X_i',X_3'$ such that $(\vecx_i, \vecx'_i, f_j(m_j),f_3(m_3),f_3(m'_3), \allowbreak \vecy) \in T^{n}_{X_i X_i'X_j X_3 X_3' Y}$ and  $D(P_{X_i'X_jX_3' Y}||P_{X_i'}\times P_{X_j}\times P_{X_3'}\times W)< \eta$, we require that $I(X_jX_3Y;X_3'|X_i) < \eta$.

			\item[(c)] \underline{Disambiguating ($m_j,m_3$) from ($m_i,m_3'$):} If there exist $(m_i,m'_3) \in \mathcal{M}_i\times \mathcal{M}_3$, $m'_3 \neq m_3$, $\vecx_j\in \mathcal{X}^n_j$,  and random variables $X_i',X_j', X_3'$ such that $(\vecx_i,f_i(m_i),f_j(m_j), \vecx_j, f_3(m_3), f_3(m'_3), \vecy) \in T^{n}_{X_i X_i' X_j X_j' X_3 X_3'Y}$ and  \\$D(P_{X_i'X_j'X_3' Y}||P_{X_i'}\times P_{X_j}\times P_{X_3'}\times W)< \eta$, we require that $I(X_jX_3Y;X_i'X_3'|X_i) < \eta$.
	\end{description}
The decoding sets $\dm{1}{m_1}$ and $\dm{2}{m_2}$ are defined similarly. 
If $\vecy\notin \dm{1}{m_1}\cap \dm{2}{m_2} \cap \dm{3}{m_3}$ for {all} $(m_1, m_2, m_3)\in \cM_1\times\cM_2\times\cM_3$, the decoder outputs $(1,1,1)$.

} Note that the decoder is not well defined if  $\vecy\in \dm{1}{m_1}\cap \dm{2}{m_2} \cap \dm{3}{m_3}$ and $\vecy\in \dm{1}{m'_1}\cap \dm{2}{m'_2} \cap \dm{3}{m'_3}$ for $(m_1, m_2, m_3)\neq (m'_1, m'_2, m'_2)$. This is ruled out by
the following lemma {(proved in Appendix~\ref{app:disamb})} which guarantees that, for sufficiently small $\eta>0$, there is at most one triple $(m_1, m_2, m_3)$ such that $\vecy\in \dm{1}{m_1}\cap \dm{2}{m_2} \cap \dm{3}{m_3}$. This is analogous to~\cite[Lemma~4]{CsiszarN88}.
\begin{lemma}[Disambiguity of decoding]\label{lemma:dec}
Suppose the channel \mach is not symmetrizable. Let $P_{X_1}\in \mathcal{P}^n_{\cX_1}, P_{X_2} \in \mathcal{P}^n_{\cX_2}$ and $P_{X_3}\in \mathcal{P}^n_{\cX_3}$ be distributions such that for some $\alpha >0,$ $\min_{x_1}P_{X_1}(x_1),\min_{x_2}P_{X_2}(x_2),\allowbreak \min_{x_3}P_{X_3}(x_3)\geq \alpha$. Let $f_1:\mathcal{M}_1\rightarrow T^{n}_{X_1}, \, f_2:\mathcal{M}_2\rightarrow T^{n}_{X_2}$ and $f_3:\mathcal{M}_3\rightarrow T^{n}_{X_3}$ be any encoding maps. There exists a choice of $\eta>0$  such that if $(\tilde{m}_1,\tilde{m}_2,\tilde{m}_3)\neq(m_1,m_2,m_3),$ then  $(\dm{1}{\tilde{m}_1}\cap\dm{2}{\tilde{m}_2}\cap\dm{3}{\tilde{m}_3})\cap (\dm{1}{m_1}\cap\dm{2}{m_2}\cap\dm{3}{m_3}) = \emptyset$.
\end{lemma}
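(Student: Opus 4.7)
The plan is to argue by contradiction, closely following the structure of \cite[Lemma~4]{CsiszarN88} but with the richer case analysis forced by the three symmetrizability clauses \eqref{eq:symm1}--\eqref{eq:symm3}. Suppose some $\vecy$ lies in both $\dm{1}{\tilde m_1}\cap\dm{2}{\tilde m_2}\cap\dm{3}{\tilde m_3}$ and $\dm{1}{m_1}\cap\dm{2}{m_2}\cap\dm{3}{m_3}$ with $(\tilde m_1,\tilde m_2,\tilde m_3)\neq(m_1,m_2,m_3)$. Then $m_k\neq\tilde m_k$ for some $k$, and by the symmetry of the decoder across the three users I may take $k=3$. Each of $\vecy\in\dm{3}{m_3}$ and $\vecy\in\dm{3}{\tilde m_3}$ supplies a witness of the form $(\vecx_i,f_j(m_j))$, respectively $(\tilde\vecx_{\tilde i},f_{\tilde j}(\tilde m_{\tilde j}))$, for some permutations $(i,j),(\tilde i,\tilde j)$ of $(1,2)$. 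This gives two broad cases: (A) the two memberships treat the \emph{same} user as adversarial (further split according to whether $\tilde m_{\tilde j}=m_j$), and (B) they treat \emph{different} users as adversarial.

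In each sub-case I would invoke the appropriate disambiguation clause from both memberships and glue the two witnesses into a single extended joint type on $(\vecx_i,\tilde\vecx_{\tilde i},f_j(m_j),f_{\tilde j}(\tilde m_{\tilde j}),f_3(m_3),f_3(\tilde m_3),\vecy)$ satisfying $O(\eta)$-Markov and $O(\eta)$-divergence bounds. Concretely, in case~(A) with both explanations treating user~1 as adversarial and $\tilde m_2\neq m_2$, clause~(a) from $\dm{3}{m_3}$ yields $I(X_2X_3Y;\tilde X_2\tilde X_3\mid X_1)<\eta$ and clause~(a) from $\dm{3}{\tilde m_3}$ yields $I(\tilde X_2\tilde X_3Y;X_2X_3\mid\tilde X_1)<\eta$; combined with the divergence bound $D(\cdot\,\|\,P_{X_1}\times P_{X_2}\times P_{X_3}\times W)<\eta$ these pin down the output kernel as $W$ and force, in the limit, a conditional $q(x_1\mid\tilde x_1,\tilde x_3)$ realising~\eqref{eq:symm1}. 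When $\tilde m_2=m_2$ clause~(b) plays the analogous role and produces~\eqref{eq:symm2}. In case~(B) only clause~(c) is compatible, because clause~(c) is precisely the one permitting the alternative witness to use a \emph{codeword} of the opposite user; its two invocations produce kernels $q(\cdot\mid\cdot,\cdot)$ and $q'(\cdot\mid\cdot,\cdot)$ matching the two sides of~\eqref{eq:symm3}.

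The final step is the passage from approximate identities to exact ones. I would argue by compactness: assuming the lemma fails, pick a sequence $\eta_n\downarrow 0$ with witnesses that fall (after passing to a subsequence) in a fixed sub-case, extract a convergent subsequence of joint types on the finite alphabet $\cX_1\times\cX_2\times\cX_3\times\cX_1\times\cX_2\times\cX_3\times\cY$ together with the relevant conditional kernels (the hypothesis $\min_x P_{X_\ell}(x)\ge\alpha$ keeps normalizing denominators uniformly bounded away from zero), and read off a kernel $q$ (and in case~(B) also $q'$) that realises~\eqref{eq:symm1}, \eqref{eq:symm2}, or~\eqref{eq:symm3} exactly, contradicting non-symmetrizability of~$\mach$. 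The main obstacle I anticipate is the bookkeeping in case~(B): one has to verify that the single glued joint type supports both clause-(c) witnesses simultaneously and that the two resulting MI and divergence bounds force a \emph{common} output kernel, so that the $q$ extracted from one side and the $q'$ extracted from the other combine into the bilateral identity~\eqref{eq:symm3} rather than into two unrelated one-sided equations.
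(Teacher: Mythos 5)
Your outline matches the paper's proof in its essential structure: the same three-way case split (same adversary with $\tilde m_j\neq m_j$, same adversary with $\tilde m_j=m_j$, or different adversaries), the same gluing of the two witnesses into one joint type on seven sequences, and the same use of the corresponding disambiguation clauses \textbf{(a)}, \textbf{(b)}, \textbf{(c)} from both memberships. The one genuine departure is the final step. The paper does not extract a limit from a sequence of counterexamples with $\eta_m\downarrow 0$. Instead, it shows via the log-sum inequality and the chain rule for relative entropy that both the $q$-side and $q'$-side factorizations are within $O(\eta)$ relative entropy of the glued joint distribution, converts these into $L_1$ bounds by Pinsker, upgrades them to an $L_\infty$ bound $\max|\tilde V - V|\leq 2c\sqrt{3\eta}/\alpha^4$ using $\min_x P_{X_\ell}(x)\geq\alpha$, and then invokes a \emph{uniform} lower bound $\zeta>0$ on $\max|\tilde V - V|$ over all kernel pairs (cited to Ahlswede--Cai (A.15), itself a compactness argument on the space of kernels), yielding the explicit and $n$-uniform threshold $\eta<\zeta^2\alpha^8/(12c^2)$. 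So both routes ultimately lean on compactness; the paper localizes it to the kernel space and gets a quantitative $\eta$, while your plan runs it globally over counterexamples. Your version is workable but should be set up over counterexamples with arbitrary blocklengths $n_m$ (so the extracted joint types sweep the simplex rather than a fixed finite grid), and you need to note that the $\alpha$-bound keeps the marginals, hence the conditional kernels, continuous in the limit. Your flagged worry about Case~(B) is precisely the technical crux the paper spends the most space on: it shows that $\tilde V_3$ (built from $q$) and $V_3$ (built from $q'$) are each $O(\sqrt\eta/\alpha^4)$-close in $L_\infty$ to the \emph{same} kernel $P_{Y|X_1\tilde X_1 X_j\tilde X_i}$ read off the glued type, hence close to each other, which is exactly the bilateral identity \eqref{eq:symm3}.
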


Notice that the decoder definition does not require consistency of the input message for the same user. For example, when $\vecy\in \dm{1}{m_1}\cap \dm{2}{m_2} \cap \dm{3}{m_3}$, in which case the decoder outputs $(m_1, m_2, m_3)$, the message $m_2$ plays no special role in  $\dm{1}{m_1}$ or $\dm{3}{m_3}$. That is, an ``explanation'' for the candidate $m_1$ may be $(f_2(\tilde{m}_2), \vecx_3)\in \cX^n_2\times\cX^n_3$ which passes checks $\mathbf{(a)},\, \mathbf{(b)}$ and $\mathbf{(c)}$ in the definition of $\dm{1}{m_1}$ where $\tilde{m}_2$ need not be same as $m_2$ or even be unique (for instance, there might be another simultaneous ``explanation'' $(f_2({m}'_2), \vecx'_3)$). At the same time, an ``explanation'' for the candidate $m_3$ which passes checks $\mathbf{(a)},\, \mathbf{(b)}$ and $\mathbf{(c)}$ of $\dm{3}{m_3}$ may be $(\vecx_1, f_2(\hat{m}_2))\in \cX^n_1\times\cX^n_2$  where $\tilde{m}_2$ need not be same as $\hat{m}_2$ or $m_2$.

\paragraph{``$\cX_k$-symmetrizable by~$\cX_i/\cX_j$'' is new.}\label{para:new_cond} The following example shows that the third symmetrizability condition \eqref{eq:symm3} does not imply  the others. The channel below is neither $\cX_j\times\cX_k$-{symmetrizable by}~$\cX_i$ nor $\cX_k|\cX_j$-{symmetrizable by}~$\cX_i$ for any permutation $(i,j,k)$ of $(1,2,3)$. However, it is $\cX_3$-symmetrizable by~$\cX_1/\cX_2$.
\begin{example}\label{ex:symmetrizable}
Let $\cX_1=\cX_2=\cY=\{0,1\}^3$ and $\cX_3=\{0,1\}$. %We denote $x_1=(a_1,b_1,c_1)$, $x_2=(a_2,b_2,c_2)$, and $y=(y_1,y_2,y_3)$. 
Consider the channel $\mach$ (where the output is $Y=(Y_1, Y_2, Y_3)$) defined by
\begin{align*}
(Y_1,Y_2)&=(C_1,C_2),\\
Y_3&=\left\{\begin{array}{ll} 
B_1\oplus(A_1\odot X_3) &\text{w.p. } 1/2\\
B_2\oplus(A_2 \odot X_3) &\text{w.p. } 1/2
\end{array}\right.
\end{align*}
where $\odot$ denotes multiplication and $\oplus$ denotes addition modulo~2.

To see that this channel is $\cX_3$-symmetrizable by~$\cX_1/\cX_2$, consider the ``deterministic'' $q((a_1,b_1,c_1)|(\tla_1,\tlb_1,\tlc_1),\tlx_3)$ and $q'((\tla_2,\tlb_2,\tlc_2)|(a_2,b_2,c_2),x_3)$, for $(a_1,b_1,c_1),(\tla_1,\tlb_1,\tlc_1)\in \cX_1, (\tla_2,\tlb_2,\tlc_2),(a_2,b_2,c_2)\in \cX_2$ and $\tlx_3, x_3\in \cX_3$, defined as follows: let $g,g':\{0,1\}^4\rightarrow\{0,1\}^2$ be defined as
\begin{align*}
g((\tla_1,\tlb_1,\tlc_1),\tlx_3) &= (0,\tlb_1 \oplus (\tla_1\odot \tlx_3),\tlc_1),\\
g'((a_2,b_2,c_2),x_3) &= (0,b_2 \oplus (a_2\odot x_3),c_2).
\end{align*}
Then 
\begin{align*}
q((a_1,b_1,c_1)|(\tla_1,\tlb_1,\tlc_1),\tlx_3)
&= 1_{\{(a_1,b_1,c_1)=g((\tla_1,\tlb_1,\tlc_1),\tlx_3)\}},\\
q'((\tla_2,\tlb_2,\tlc_2)|(a_2,b_2,c_2),x_3)
&= 1_{\{(\tla_2,\tlb_2,\tlc_2)=g'((a_2,b_2,c_2),x_3)\}}.
\end{align*}
Consider the two cases shown in Figure~\ref{fig:symm3} with $\tlx_1=(\tla_1,\tlb_1,\tlc_1)$, $x_2=(a_2,b_2,c_2)$, and $q$ and $q'$ defined as above.
It follows that, in both the cases, the channel output $Y$ has the same conditional distribution given each input. In particular,
\begin{align*}
(Y_1,Y_2)&=(c_1,c_2),\\
Y_3&=\left\{\begin{array}{ll} 
\tlb_1\oplus(\tla_1\odot \tlx_3) &\text{w.p. } 1/2\\
b_2\oplus(a_2 \odot x_3) &\text{w.p. } 1/2.
\end{array}\right.
\end{align*}
\noindent This shows that the symmetrizability condition~\eqref{eq:symm3} holds for $(i,j,k)=(1,2,3)$. 

Since $(Y_1,Y_2)=(C_1,C_2)$, it is clear that neither user-1 nor user-2 is symmetrizable. It only remains {to be shown} that the channel is neither $\cX_3|\cX_2$-{symmetrizable by}~$\cX_1$ nor $\cX_3|\cX_1$-{symmetrizable by}~$\cX_2$. Suppose the channel is $\cX_3|\cX_2$-{symmetrizable by}~$\cX_1$. Then, to satisfy \eqref{eq:symm2} for $x_2=(0,0,c_2)$ and $(x_3,\tlx_3)=(0,1)$, it {must hold that}
\begin{align*}
&q(0,0,0|1)+q(0,0,1|1)+q(1,0,0|1)+q(1,0,1|1)\\
&\qquad= q(1,1,0|0)+q(1,1,1|0)+q(0,0,0|0)+q(0,0,1|0).
\end{align*}
However, to satisfy \eqref{eq:symm2} for $x_2=(1,0,c_2)$ and $(x_3,\tlx_3)=(0,1)$, {we must also satisfy
\begin{align*}
&1+q(0,0,0|1)+q(0,0,1|1)+q(1,0,0|1)+q(1,0,1|1)\\
&\qquad= q(1,1,0|0)+q(1,1,1|0)+q(0,0,0|0)+q(0,0,1|0),
\end{align*}which is a contradiction.}
Hence, the channel is not $\cX_3|\cX_2$-{symmetrizable by}~$\cX_1$. By symmetry, it is {also} not  $\cX_3|\cX_1$-{symmetrizable by}~$\cX_2$.
\end{example}
\pink{Next, the following examples also show that none of the three types of symmetrizability conditions given in Definition~\ref{defn:symm} are redundant given the others. Example~\ref{ex:1symmetrizable_main_text1} gives a channel that is $\cX_k|\cX_j$-{\em symmetrizable by}~$\cX_i$ for every permutation $(i,j,k)$ of $(1,2,3)$ but does not satisfy other any other symmetrizability condition from Definition~\ref{defn:symm}. Example~\ref{ex:1symmetrizable_main_text2} gives a channel that is $\cX_1\times\cX_2$-{\em symmetrizable by}~$\cX_3$ but does not satisfy other forms of symmetrizability conditions ({\em i.e.}, conditions of the form 2 and 3 in Definition~\ref{defn:symm}). We skip the detailed proofs here as these properties can be verified following similar arguments as the AVMAC examples from~\cite{Gubner90} and~\cite{AhlswedeC99}.} 
\pink{\begin{example}[{\cite[Example~on~pg.~264]{Gubner90}}]\label{ex:1symmetrizable_main_text1} Let $\cX_1=\cX_2=\cX_3=\{0,1\}$ and $\cY=\{0,1,2,3\}$. Consider the channel $\mach$ defined by \[Y=X_1+X_2+X_3,\] where $+$ denotes addition over integers. 
This channel is $\cX_k|\cX_j$-{\em symmetrizable by}~$\cX_i$ but is neither $\cX_j\times\cX_k$-{\em symmetrizable by}~$\cX_i$ nor $\cX_k$-{\em symmetrizable by}~$\cX_i/\cX_j$ for any permutation $(i,j,k)$ of $(1,2,3)$. 
\end{example}}
\pink{\begin{example}[{\cite[Example~1]{AhlswedeC99}}]~\label{ex:1symmetrizable_main_text2} Let $\cX_1=\cX_2=\cX_3=\cY=\{0,1\}$. Consider the channel $\mach$ defined by 
\begin{align*}
	Y=\begin{cases}
		X_3 &  X_1\oplus X_2\oplus X_3=0,\\
		Z\sim\mbox{Ber}(1/2) &  X_1\oplus X_2\oplus X_3=1.
	\end{cases}
\end{align*}
The channel  is $\cX_1\times\cX_2$-{\em symmetrizable by}~$\cX_3$ but is neither   $\cX_k|\cX_j$-{\em symmetrizable by}~$\cX_i$ nor $\cX_k$-{\em symmetrizable by}~$\cX_i/\cX_j$ for any permutation $(i,j,k)$ of $(1,2,3)$.  \end{example}}

\subsubsection{Randomized coding capacity region}\label{sec:random_proof_sketch}
\begin{thm}\label{thm:random}
\pink{The randomized coding capacity region of the 3-user \bmac with at most one adversarial user is given by} \pink{\[ \mathcal{R}_{\random} = \mathcal{R}_{\random}^{\weak} = \mathcal{R}.\]}
\end{thm} 
\begin{remark}\label{remark:2}
The statement $\mathcal{R}_{\deterministic} = \mathcal{R},\text{ if }\textup{int}(\mathcal{R}_{\deterministic})\neq\varnothing$
 can also be shown directly using  the extension, provided in \cite{Jahn81},  of the elimination 
technique~\cite{Ahlswede78} to first show that
$n^2$-valued randomness at each encoder is sufficient to achieve
any rate-triple in $\mathcal{R}_{\random}^{\weak}$ (see Lemma~\ref{lemma:randomness_reduction}). A deterministic
code of small rate can be used to send $2\log_2 n$ bits out of each
message. These message bits are then used as the encoder randomness
in the next phase to communicate the rest of the message bits using
a randomized code.

\end{remark}

\pink{Below, we sketch the proof of achievability. \longonly{A detailed achievability proof and a converse proof for the weak adversary case are available in Section~\ref{app:A}}. \shortonly{Full details and a converse proof for the weak adversary case are available in the extended draft~\cite{ExtendedDraft}.} }
\vspace{-0.25em}
\begin{proof}[Proof sketch (achievability of Theorem~\ref{thm:random})] 
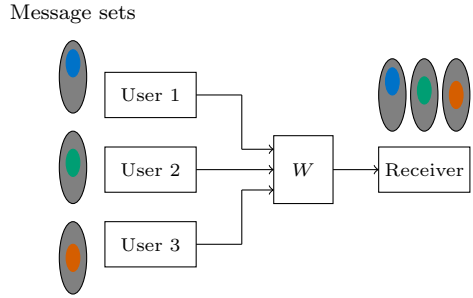
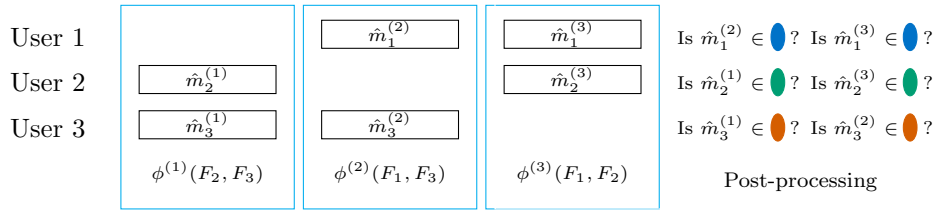
\begin{figure}
\centering
\begin{subfigure}[t]{.8\columnwidth}
\centering
    \begin{tikzpicture}[scale=0.6]

    \node at (0,8.8) {\footnotesize Message sets};
    \draw [rotate around={90:(0,7.4)},line width=0.4pt,color=black,fill=gray,fill opacity=0.2] (0,7.4) ellipse (0.8cm and 0.3cm);
\draw [rotate around={90:(0,7.7)},line width=0.4pt,color=myblue,fill=myblue,fill opacity=0.46] (0,7.7) ellipse (0.3cm and 0.15cm);

\draw [rotate around={90:(0,5.4)},line width=0.4pt,color=black,fill=gray,fill opacity=0.2] (0,5.4) ellipse (0.8cm and 0.3cm);
\draw [rotate around={90:(0,5.5)},line width=0.4pt,color=bluishgreen,fill=bluishgreen,fill opacity=0.46] (0,5.5) ellipse (0.3cm and 0.15cm);

\draw [rotate around={90:(0,3.4)},line width=0.4pt,color=black,fill=gray,fill opacity=0.2] (0,3.4) ellipse (0.8cm and 0.3cm);
\draw [rotate around={90:(0,3.4)},line width=0.4pt,color=vermillion,fill=vermillion,fill opacity=0.46] (0,3.4) ellipse (0.3cm and 0.15cm);

    \draw [rotate around={90:(7, 7)},line width=0.4pt,color=black,fill=gray,fill opacity=0.2] (7, 7) ellipse (0.8cm and 0.3cm);
\draw [rotate around={90:(7, 7.3)},line width=0.4pt,color=myblue,fill=myblue,fill opacity=0.46] (7, 7.3) ellipse (0.3cm and 0.15cm);

\draw [rotate around={90:(8.4, 7)},line width=0.4pt,color=black,fill=gray,fill opacity=0.2] (8.4, 7) ellipse (0.8cm and 0.3cm);
\draw [rotate around={90:(8.4, 7)},line width=0.4pt,color=vermillion,fill=vermillion,fill opacity=0.46] (8.4, 7) ellipse (0.3cm and 0.15cm);

\draw [rotate around={90:(7.7, 7)},line width=0.4pt,color=black,fill=gray,fill opacity=0.2] (7.7, 7) ellipse (0.8cm and 0.3cm);
\draw [rotate around={90:(7.7, 7.1)},line width=0.4pt,color=bluishgreen,fill=bluishgreen,fill opacity=0.46] (7.7, 7.1) ellipse (0.3cm and 0.15cm);

    % \node at (1.7,7.75) {\scriptsize \textcolor{blue}{$\blacksquare$}};
    % \node at (1.7,4.85+1.25) {\scriptsize \textcolor{green}{$\blacksquare$}};
    % \node at (1.7,3.7-0.5+1.255) {\scriptsize \textcolor{red}{$\blacksquare$}};

    % \node at (7.7, 6.1) {\scriptsize \textcolor{blue}{$\blacksquare$} \textcolor{green}{$\blacksquare$} \textcolor{red}{$\blacksquare$}};

    \draw (0.7,6.5) rectangle ++(2,1) node[pos=.5]{\scriptsize User 1};
    \draw (4.4,4.6) rectangle ++(1.3,1.5) node[pos=.5]{\scriptsize $W$};
    %\draw[->] (1-1,5.35) node[anchor=east]{\scriptsize $m_2$} -- ++ (0.7,0) ;
    \draw (1.7-1,4.85) rectangle ++(2,1) node[pos=.5]{\scriptsize User 2};
    \node[align=center] at (3.2,5.7) {};

    \draw[->] (2.7,5.35)  -- ++ (1.7,0) ;
    \draw[->] (3.7,4.9) -- ++ (0.7,0);
    \draw[-] (3.7, 4.9) -- ++(0,-0.7-0.5);
    \draw[-] (3.7, 4.2-0.5) -- node[below]{} ++(-1, 0);
    \draw (1.7-1,3.7-0.5) rectangle ++(2,1) node[pos=.5]{\scriptsize User 3};
    %\draw[->] (1-1,6.5+0.5) node[anchor=east]{\scriptsize $m_1$} -- ++ (0.7,0) ;
    %\draw[->] (1-1,4.2-0.5) node[anchor=east]{\scriptsize $m_3$} -- ++ (0.7,0) ;
    %\draw[->] (3.4,4.25) -- ++(0.7,0);
    \draw[->] (3.7,5.80) -- ++(0.7,0);
    \draw[-] (3.7,5.8) -- ++ (0,0.7+0.5);
    %\draw[-] (3.4,4.25) -- ++ (0,-1);
    \draw[-] (2.7, 6.5+0.5) -- node[above]{} ++(1,0);
    %\draw[-] (1,3.25) node[anchor=east]{$x_3$} -- ++ (2.4,0);

    \draw[->] (5.7,5.35) -- node[above] {} ++ (1,0);
    \draw (6.7, 4.85) rectangle ++(2,1) node[pos=0.5]{\scriptsize Receiver};
    %\draw[->] (7.7+1, 5.35) -- ++ (0.7,0)  node[right] {\scriptsize $(\hat{M}_1, \hat{M}_2, \hat{M}_3)$};
\end{tikzpicture}
\caption{The figure shows the message sets for each user. For each message set, a subset (shown in color) is picked randomly using the independent randomness shared with the decoder. Only the messages in the colored subset are valid messages and used for communication in the \bmac.}
\end{subfigure}

\begin{subfigure}[t]{.8\columnwidth}
\centering
    \begin{tikzpicture}[scale=0.6]
    \node at (0,0) {User 3};
    \node at (0, 1) {User 2};
    \node at (0, 2) {User 1};

    \draw [rotate around={90:(15.5+0.5,0)},line width=0.4pt,color=vermillion,fill=vermillion,fill opacity=0.46] (15.5+0.5,0) ellipse (0.3cm and 0.15cm) ;
    \draw [rotate around={90:(18.1+0.8,0)},line width=0.4pt,color=vermillion,fill=vermillion,fill opacity=0.46] (18.1+0.8,0) ellipse (0.3cm and 0.15cm) ;
    \node at (16.1+0.5,0) {\scriptsize Is $\hat{m}_3^{(1)}\in\phantom{m}$?\,\,\,Is $\hat{m}_3^{(2)}\in\phantom{m}$?};

    \draw [rotate around={90:(15.5+0.5,1)},line width=0.4pt,color=bluishgreen,fill=bluishgreen,fill opacity=0.46] (15.5+0.5,1) ellipse (0.3cm and 0.15cm) ;
    \draw [rotate around={90:(18.1+0.8,1)},line width=0.4pt,color=bluishgreen,fill=bluishgreen,fill opacity=0.46] (18.1+0.8,1) ellipse (0.3cm and 0.15cm) ;
    \node at (16.1+0.5,1) {\scriptsize Is $\hat{m}_2^{(1)}\in\phantom{m}$?\,\,\,Is $\hat{m}_2^{(3)}\in\phantom{m}$?};

    \draw [rotate around={90:(15.5+0.5,2)},line width=0.4pt,color=myblue,fill=myblue,fill opacity=0.46] (15.5+0.5,2) ellipse (0.3cm and 0.15cm) ;
    \draw [rotate around={90:(18.1+0.8,2)},line width=0.4pt,color=myblue,fill=myblue,fill opacity=0.46] (18.1+0.8,2) ellipse (0.3cm and 0.15cm) ;
    \node at (16.1+0.5,2) {\scriptsize Is $\hat{m}_1^{(2)}\in\phantom{m}$?\,\,\,Is $\hat{m}_1^{(3)}\in\phantom{m}$?};

    \draw (2,-0.25) rectangle ++(3,0.62) node[pos=.5]{\scriptsize$\hat{m}_3^{(1)}$};
    \draw (6,-0.25) rectangle ++(3,0.62) node[pos=.5]{\scriptsize$\hat{m}_3^{(2)}$};
    \draw (2,0.75) rectangle ++(3,0.62) node[pos=.5]{\scriptsize$\hat{m}_2^{(1)}$};
    %node at (10, -0.25) {$\Longleftarrow$};
    %\draw (5,0.75) rectangle ++(2,0.5) node[pos=.5]{};
    %\draw (2,1.75) rectangle ++(2,0.5) node[pos=.5]{};
    \draw (6,1.75) rectangle ++(3,0.62) node[pos=.5]{\scriptsize$\hat{m}_1^{(2)}$};
    \draw (10,1.75) rectangle ++(3,0.62) node[pos=.5]{\scriptsize$\hat{m}_1^{(3)}$}; 
    %\draw (8,-0.25) rectangle ++(2,0.5) node[pos=.5]{};
    \draw (10,0.75) rectangle ++(3,0.62) node[pos=.5]{\scriptsize$\hat{m}_2^{(3)}$};
    %\node[Turquoise] (rect) [rectangle, draw, minimum width=3.8, minimum height=5] at (1.6,-2.5) {};
    \draw[cyan] (1.6,-1.8) rectangle ++(3.8,4.5);
    \draw[cyan] (5.6,-1.8) rectangle ++(3.8,4.5);
    \draw[cyan] (9.6,-1.8) rectangle ++(3.8,4.5);

    \draw[white,opacity=0.0] (9.8,-1.8) rectangle ++(3.8,5.5);

    \node at (3.5, -1) {\scriptsize $\phi^{(1)}(F_2, F_3)$};
    %\node at (3.6, -1.2) {\scriptsize decoder };
    \node at (7.5, -1) {\scriptsize $\phi^{(2)}(F_1, F_3)$};
    %\node at (7.6, -1.2) {\scriptsize decoder };
    \node at (11.5, -1) {\scriptsize $\phi^{(3)}(F_1, F_2)$};
    %\node at (11.6, -1.2) {\scriptsize decoder };

    \node at (16.5, -1.2) {\footnotesize Post-processing };
    \end{tikzpicture}
        \caption{The decoder works in two steps. Suppose the encoders are $F_1$, $F_2$ and $F_3$. To decode user-$1$'s message, the decoder uses the decoder $\phi^{(2)}(F_1, F_3)$ of AV-MAC $W^{(2)}$ and $\phi^{(3)}(F_1, F_2)$ of AV-MAC $W^{(3)}$, to get candidates $\hat{m}_1^{(2)}$ and $\hat{m}_1^{(3)}$ respectively. These candidates pass through a further post-processing step where a candidate which does not belong to the set of valid messages is rejected. If user-$1$ is honest, at least one of the the decoders among $\phi^{(2)}(F_1, F_3)$ and $\phi^{(3)}(F_1, F_2)$ will output correctly with high probability. Since a small set of valid messages was chosen using independent shared randomness, a malicious user cannot correlate their attack with messages of honest user(s) with high probability. This ensures that erroneous messages are rejected in the post-processing step. The decoder for other users proceeds similarly.}
    \end{subfigure}
\caption{A figure depicting the decoder for the randomized code of a 3-user \bmac}\label{proofsketch:random}
    \end{figure}

The scheme is depicted in Figure~\ref{proofsketch:random}. The achievability uses the two-user AV-MAC randomized code used in the proof of \cite[Theorem~1]{Jahn81}.
Let $(R_1,R_2,R_3)$ be a rate triple such that, for some $\blue{p(u)p(x_1|u)p(x_2|u)p(x_3|u)}$, the following conditions hold for all permutations $(i,j,k)$ of $(1,2,3)$: \begin{align*}
R_i &< \min_{q(x_k\blue{|u})} I(X_i;Y|\blue{U,}X_j),\quad\text{and}\\
R_i+R_j &< \min_{q(x_k\blue{|u})} I(X_i,X_j;Y\blue{|U}),
\end{align*} 
with the mutual information terms evaluated using the joint distribution
\blue{$p(u)p(x_i|u)p(x_j|u)q(x_k|u)W(y|x_1,x_2,x_3)$.} \pink{We show the achievability of these rate triplets.}
\pink{Note that for the AV-MAC $W^{(k)}$, the rate pair $(R_i,R_j)$ is achievable by the first part of the direct result of ~\cite[Theorem~1]{Jahn81} (see \cite[Section~III-C]{Jahn81})}\footnote{\pink{Note that Jahn's proof does not involve the auxiliary random variable $U$. However, it can be easily incorporated along the lines of \cite{370123}.}\label{footnote:gubner}}. 
Here, $W^{(k)}$ is the two-user AV-MAC formed by the channel inputs from user-$k$ as the state and the remaining channel inputs as the inputs of the legitimate users of the AV-MAC. 

In order to design a code for the \bmac, for each user-$i\in \{1, 2, 3\}$, we consider the randomized encoder $F_i$ which maps each message independently to a codeword generated i.i.d. according to $p_i$. The realization of each user's codebook is shared with the decoder as independent shared randomness, {\em i.e.}, $F_1, F_2$ and $F_3$ are independent. We note that the encoders $F_i$ and $F_j$ are identical to the ones in the proof of \cite[Theorem~1, Section~III-C]{Jahn81} for the two user AV-MAC $W^{(k)}$. The decoder for the \bmac will be implemented using the decoders $\phi^{(1)}(F_2, F_3)$, $\phi^{(2)}(F_1, F_3)$ and $\phi^{(3)}(F_1, F_2)$ (from the proof of \cite[Theorem~1, Section~III-C]{Jahn81}) of AV-MACs $W^{(1)}$, $W^{(2)}$ and $W^{(3)}$ corresponding to the encoder pairs $(F_2, F_3)$, $(F_1, F_3)$ and $(F_1, F_2)$ respectively. 
%For each $i\in \{1, 2, 3\}$, let $G_i:\tilde{\cM}_i\to\cX^n_i$ be a randomized map such that it maps $m_i\in \tilde{\cM}_i$ to an $n$-length i.i.d. sequence $G_i(m_i)$ generated according to the distribution $p_i$. 
%The sequences $G_i(m)$ are independent across $i\in \{1, 2,3\}$ and $m\in \cM_i$. The realization of $G_i(m_i)$ for all $i\in \{1,2, 3\}$ and $m_i\in \cM_i$ is shared with the decoder. 
%For any permutation $(i, j, k)$ of $(1, 2, 3)$, consider the AV-MAC $W^{(k)}$ which corresponds to user-$k$ as the adversary. 
%If we consider $\tilde{\cM_i}$ and $\tilde{\cM_j}$ as the message sets and $G_i$ and $G_j$ as the corresponding encoders, then this construction ensures that the randomness of the encoders $G_i$ and $G_j$ is private from each other and also private from the adversarial user-$k$. 
%The joint distribution of  $G_i$ and $G_j$ (and the corresponding codewords) is the same as that of the encoders of AV-MAC $W^{(k)}$ in the direct part of \cite[Theorem~1, Section~III-C]{Jahn81}. 
%For $G_i$ and $G_j$ as encoders, let $\Gamma^{(k)}$ denote the decoder corresponding to the decoding sets defined in proof of the direct part of \cite[Theorem~1, Section~III-C]{Jahn81} for the AV-MAC $W^{(k)}$. 
%Consider a randomized code where the codebooks of Users-$1, 2$, and $3$ are also valid codebooks for the two-user AV-MACs $W^{(k)}, k\in\{1, 2, 3\}$,\footnote{The code used in ~\cite[Theorem~1]{Jahn81} is one such code. See Section~\ref{app:A} for details.} and the receiver has access to the corresponding decoders of the two-user AV-MACs $W^{(k)}, k\in\{1, 2, 3\}$. 
It is clear that if, say, user-$3$ is malicious, and honest users-1 and 2 send $m_1$ and $m_2$ respectively, then the output $\inp{\hat{m}_1^{(3)}, \hat{m}_2^{(3)}}$ of the decoder $\phi^{(3)}(F_1, F_2)$ will match $(m_1, m_2)$ with high probability. 
However, since the decoder does not know the identity of the malicious user, there is an additional decoded message $\hat{m}_1^{(2)}$ from decoder $\phi^{(2)}(F_1, F_3)$ for user-1 (and similarly there is an additional decoded message for user-2 from $\phi^{(1)}(F_2, F_3)$). The message $\hat{m}_1^{(2)}$ can potentially be different from $\hat{m}_1^{(3)}$. This is because the decoder $\phi^{(2)}(F_1, F_3)$ assumes that user-$3$ is honest and no decoding guarantees are available for its output when user-$3$ is in fact malicious. When $\hat{m}_1^{(3)}\neq \hat{m}_1^{(2)}$, it is not clear what the receiver should output as the decoded message for user-1. This is where we can leverage the independent shared randomness shared by each user with the receiver.

We use a form of random hashing in order to add a further post-processing step which filters the outputs of the decoders of the AV-MACs as follows. Using the randomness they share with the receiver, each user randomly selects a subset of the original message set which is of nearly the same rate but is only a small fraction of original message set in size. These randomly selected subsets will be the valid message sets for communication in the \bmac. If the decoders of AV-MACs decode to messages which are not in these randomly selected subsets, they will be rejected in the post-processing step. Since these subsets are chosen using the independent shared randomness between each user and the receiver, their identity is hidden from the malicious user. For a malicious user-$3$, the output of $\phi^{(3)}(F_1, F_2)$ will be correct with high probability and will be accepted in the post-processing step as honest users-$1$ and $2$ will only send valid messages. On the other hand, the outputs of $\phi^{(1)}(F_2, F_3)$ and $\phi^{(2)}(F_1, F_3)$ will be rejected with high probability if they are different from the output of $\phi^{(3)}(F_1, F_2)$. This is because the size of valid message set is only a very small fraction of the original message set, so an arbitrary decoded message will fall outside the set of valid messages with high probability. This crucially uses the fact that these sets are constructed using independent shared randomness which protects the identity of the set of valid messages (and thus the set of valid codewords) and prevents the malicious user from correlating the attack with those messages. These ideas are formalized in Section~\ref{app:A}.
\end{proof}

\section{Proofs}\label{sec:proof_3_user}
 \pink{In this section, we present proofs of the results presented in the previous section for the three user \bmac with at most one adversary.}
\subsection{Deterministic coding capacity region (Theorem~\ref{thm:symmetrizability})}\label{sec:det_proofs}
\begin{proof}[Proof (Converse of Theorem~\ref{thm:symmetrizability})]
The outer bound on the rate region, when non-empty, follows from Theorem~\ref{thm:random}. Below, we show that a symmetrizable user cannot communicate.
\shortonly{
Clearly, symmetrizability conditions for the two-user AV-MAC with $\mathcal{X}_i$ as the state alphabet and $\mathcal{X}_j,\mathcal{X}_k$ as the input alphabets are also symmetrizability conditions for our problem. Conditions~1 and 2 follow from Gubner~\cite{Gubner90}.

To show condition~3, consider $(i,j,k)=(1,2,3)$, the other cases follow similarly. Suppose $q(\tlx_1|x_1,x_3)$ and $q'(\tlx_j|x_j,x_k)$ satisfy \eqref{eq:symm3}. {\em i.e.},
\begin{align}
&\sum_{\tlx_1} q(\tlx_1|x_1,\tlx_3)\mach(y|\tlx_1,x_2,x_3)\notag\\
&\qquad = \sum_{\tlx_2} q'(\tlx_2|x_2,x_3)\mach(y|x_1,\tlx_2,\tlx_3),\notag\\
&\qquad\quad\;\forall\; x_1\in\mathcal{X}_1,\;x_2\in\mathcal{X}_2,\; x_3,\tlx_3\in\mathcal{X}_3,\; y\in\mathcal{Y}. \label{eq:conversesymm}
\end{align}
Let $m_3,\tilde{m}_3\in\mathcal{M}_3$ be distinct, and let ${\vecx}_3=f_3(m_3)$ and $\tilde{\vecx}_3=f_3(\tilde{m}_3)$. We consider two different settings in which user-3 sends ${\vecx}_3$ and $\tilde{\vecx}_3$ respectively:
\begin{enumerate}
\item[(i)] In the first setting, user-1 is adversarial. It chooses an $M_1\sim\textup{Unif}(\mathcal{M}_1)$. Let ${\vecX}_1=f_1(M_1)$. To produce its input $\tilde{\vecX}_1$ to the channel, it passes $({\vecX}_1,\tilde{\vecx}_3)$ through $q^n$, the $n$-fold product of the channel $q(\tlx_1|x_1,x_3)$. user-2, being non-adversarial, sends as its input to the channel $\vecX_2=f_2(M_2)$, where $M_2\sim\textup{Unif}(\mathcal{M}_2)$. user-3 sends ${\vecx}_3$ corresponding to message ${m}_3$. The distribution of the received vector in this case is
\begin{align*}
\frac{1}{\nummsg_1\nummsg_2}\sum_{m_1,m_2}
\prod_{t=1}^n &\sum_{\tlx_{1,t}}q(\tlx_{1,t}|f_{1,t}(m_1),\tlx_{3,t})\\&  \mach(y_t|\tlx_{1,t},f_{2,t}(m_2),x_{3,t}).
\end{align*}
\item[(ii)] In the second setting, user-2 is adversarial. It chooses an $M_2\sim\textup{Unif}(\mathcal{M}_2)$. Let ${\vecX}_2=f_2(M_2)$. To produce its input $\tilde{\vecX}_2$ to the channel, it passes $({\vecX}_2,{\vecx}_3)$ through $q'^n$, the $n$-fold product of the channel $q'(\tlx_2|x_2,x_3)$. user-1, being non-adversarial now, sends as its input to the channel ${\vecX}_1=f_1(M_1)$, where $M_1\sim\textup{Unif}(\mathcal{M}_1)$. user-3 here sends $\tilde{\vecx}_3$ corresponding to message $\tilde{m}_3$. Here, the distribution of the received vector is
\begin{align*}
\frac{1}{\nummsg_1\nummsg_2}\sum_{m_1,m_2}
\prod_{t=1}^n &\sum_{\tlx_{2,t}}q'(\tlx_{2,t}|f_{2,t}(m_2),x_{3,t})\\&  \mach(y_t|f_{1,t}(m_1),\tlx_{2,t},\tlx_{3,t}).
\end{align*}
\end{enumerate}
By \eqref{eq:conversesymm}, the above two distributions are identical. Hence, for any decoder, the sum of probabilities of decoding error for messages $m_3$ and $\tilde{m}_3$ must be at least 1, {\em i.e.},
if we define $e^{3}_1(m_3,\tilde{\vecx}_1)\defineqq \frac{1}{\nummsg_2}\sum_{m_2'} e_1(\tilde{\vecx}_1,m_2',m_3)$ and similarly $e^{3}_2(\tilde{m}_3,\tilde{\vecx}_2)\defineqq\frac{1}{\nummsg_1}\sum_{m_1'} e_2(m_1',\tilde{\vecx}_2,\tilde{m}_3)$, then
\[ \pink{\bbE}_{\tilde{\vecX}_1}[e^{3}_1(m_3,\tilde{\vecX}_1)] + \pink{\bbE}_{\tilde{\vecX}_2}[e^{3}_2(\tilde{m}_3,\tilde{\vecX}_2)] \geq 1.\]
	Note that the distribution of $\tilde{\vecX}_1$ (resp. $\tilde{\vecX}_2$) does not depend on $m_3$ (resp. $\tilde{m}_3$). Arguing along the lines of~\cite[(3.29) in page~187]{CsiszarN88}, we can show that the average probability of error is at least 1/8. \shortonly{See the extended version~\cite{ExtendedDraft} for more details.}
\end{proof}

}

\longonly{ Clearly, symmetrizability conditions for the two-user AV-MAC with $\mathcal{X}_i$ as the state alphabet and $\mathcal{X}_j,\mathcal{X}_k$ as the input alphabets are also symmetrizability conditions for our problem. Conditions~\ref{item:symm1} and~\ref{item:symm2} follow from Gubner~\cite{Gubner90}.

To analyze the rate region when condition~\ref{item:symm3} holds, consider $(i,j,k)=(1,2,3)$, the other cases follow similarly. Suppose $q(\tlx_1|x_1,x_3)$ and $q'(\tlx_j|x_j,x_k)$ satisfy \eqref{eq:symm3}, {\em i.e.},
\begin{align}
&\sum_{\tlx_1} q(\tlx_1|x_1,\tlx_3)\mach(y|\tlx_1,x_2,x_3)\notag\\
&\qquad = \sum_{\tlx_2} q'(\tlx_2|x_2,x_3)\mach(y|x_1,\tlx_2,\tlx_3),\notag\\
&\qquad\quad\;\forall\; x_1\in\mathcal{X}_1,\;x_2\in\mathcal{X}_2,\; x_3,\tlx_3\in\mathcal{X}_3,\; y\in\mathcal{Y}. \label{eq:conversesymm}
\end{align}
Let $m_3,\tilde{m}_3\in\mathcal{M}_3$ be distinct, and let ${\vecx}_3=f_3(m_3)$ and $\tilde{\vecx}_3=f_3(\tilde{m}_3)$. We consider two different settings in which user-3 sends ${\vecx}_3$ and $\tilde{\vecx}_3$ respectively:
\begin{enumerate}
\item[(i)] In the first setting, user-1 is adversarial. It chooses an $M_1\sim\textup{Unif}(\mathcal{M}_1)$. Let ${\vecX}_1=f_1(M_1)$. To produce its input $\tilde{\vecX}_{1,\tilde{m}_3}$ to the channel, it passes $({\vecX}_1,\tilde{\vecx}_3)$ through $q^n$, the $n$-fold product of the channel $q(\tlx_1|x_1,x_3)$. user-2, being non-adversarial, sends as its input to the channel ${\vecX}_2=f_2(M_2)$, where $M_2\sim\textup{Unif}(\mathcal{M}_2)$. user-3 sends ${\vecx}_3$ corresponding to message ${m}_3$. The distribution of the received vector in this case is
\begin{align*}
\frac{1}{\nummsg_1\nummsg_2}\sum_{m_1,m_2}
\prod_{t=1}^n &\sum_{\tlx_{1,\tilde{m}_3,t}}q(\tlx_{1,\tilde{m}_3,t}|f_{1,t}(m_1),\tlx_{3,t}) \mach(y_t|\tlx_{1,\tilde{m}_3,t},f_{2,t}(m_2),x_{3,t}).
\end{align*}
\item[(ii)] In the second setting, user-2 is adversarial. It chooses an $M_2\sim\textup{Unif}(\mathcal{M}_2)$. Let ${\vecX}_2=f_2(M_2)$. To produce its input $\tilde{\vecX}_{2,m_3}$ to the channel, it passes $({\vecX}_2,{\vecx}_3)$ through $q'^n$, the $n$-fold product of the channel $q'(\tlx_2|x_2,x_3)$. user-1, being non-adversarial now, sends as its input to the channel ${\vecX}_1=f_1(M_1)$, where $M_1\sim\textup{Unif}(\mathcal{M}_1)$. user-3 here sends $\tilde{\vecx}_3$ corresponding to message $\tilde{m}_3$. Here, the distribution of the received vector is
\begin{align*}
\frac{1}{\nummsg_1\nummsg_2}\sum_{m_1,m_2}
\prod_{t=1}^n &\sum_{\tlx_{2,m_3,t}}q'(\tlx_{2,m_3,t}|f_{2,t}(m_2),x_{3,t})\mach(y_t|f_{1,t}(m_1),\tlx_{2,m_3,t},\tlx_{3,t}).
\end{align*}
\end{enumerate}
By \eqref{eq:conversesymm}, the above two distributions are identical. Hence, for any decoder, the sum of probabilities of decoding error for messages $m_3$ and $\tilde{m}_3$ must be at least 1, {\em i.e.},
if we define $e^{3}_1(m_3,\tilde{\vecx}_1)\defineqq \frac{1}{\nummsg_2}\sum_{m_2'} e_1(\tilde{\vecx}_1,m_2',m_3)$ and similarly $e^{3}_2(\tilde{m}_3,\tilde{\vecx}_2)\defineqq\frac{1}{\nummsg_1}\sum_{m_1'} e_2(m_1',\tilde{\vecx}_2,\tilde{m}_3)$, then
\shortonly{
\[ \pink{\bbE}_{\tilde{\vecX}_1}[e^{3}_1(m_3,\tilde{\vecX}_1)] + \pink{\bbE}_{\tilde{\vecX}_2}[e^{3}_2(\tilde{m}_3,\tilde{\vecX}_2)] \geq 1.\]}
\longonly{
\begin{align*}
\pink{\bbE}_{\tilde{\vecX}_{1,\tilde{m}_3}}[e^{3}_1(m_3,\tilde{\vecX}_{1,\tilde{m}_3})]& + \pink{\bbE}_{\tilde{\vecX}_{2,m_3}}[e^{3}_2(\tilde{m}_3,\tilde{\vecX}_{2,m_3})] = \\
&\quad \sum_{{\vecy}:\phi({\vecy})\neq m_3}\left(\frac{1}{\nummsg_1\nummsg_2}\sum_{m_1,m_2}\prod_{t=1}^n \sum_{\tlx_{1,t}}q(\tlx_{1,t}|f_{1,t}(m_1),\tlx_{3,t}) \mach(y_t|\tlx_{1,t},f_{2,t}(m_2),x_{3,t})\right)\\
&\quad+\sum_{{\vecy}:\phi({\vecy})\neq \tilde{m}_3}\left(\frac{1}{\nummsg_1\nummsg_2}\sum_{m_1,m_2}
\prod_{t=1}^n \sum_{\tlx_{2,t}}q'(\tlx_{2,t}|f_{2,t}(m_2),x_{3,t})\mach(y_t|f_{1,t}(m_1),\tlx_{2,t},\tlx_{3,t})\right)\\
&\quad\stackrel{\text{(a)}}{\geq} 1,
\end{align*}
where (a) follows from \eqref{eq:conversesymm}.

}
	Note that the distribution of $\tilde{\vecX}_1$ (resp. $\tilde{\vecX}_2$) does not depend on $m_3$ (resp. $\tilde{m}_3$). Arguing along the lines of~\cite[(3.29) in page~187]{CsiszarN88},\shortonly{ we can show that the average probability of error is at least 1/8. See the extended version~\cite{ExtendedDraft} for more details.}

\longonly{
\begin{align*}
2P_{e}(f_1,f_2,f_3,\phi)&\geq P_{e,1}+P_{e,2}\\
&\geq \frac{1}{N_3}\sum_{m_3}\pink{\bbE}_{\tilde{\vecX}_1}[e^{3}_1(m_3,\tilde{\vecX}_1)] + \frac{1}{N_3}\sum_{m_3}\pink{\bbE}_{{\vecX}_2}[e^{3}_1(m_3,{\vecX}_2)]
\end{align*}
for any attack vectors $\tilde{\vecX}_1$ and $\tilde{\vecX}_2$. In particular, for the attack vectors $\frac{1}{N_3}\sum_{\tilde{m}_3}\tilde{X}_{1,\tilde{m_3}}$ and $\frac{1}{N_3}\sum_{m_3}\tilde{X}_{2,m_3}$,
\begin{align*}
2P_{e}(f_1,f_2,f_3,\phi)\geq \frac{1}{N^2_3}\sum_{\tilde{m}_3}\sum_{m_3}\left(\pink{\bbE}_{\tilde{\vecX}_{1,\tilde{m}_3}}[e^{3}_1(m_3,\tilde{\vecX}_{1,\tilde{m}_3})] + \pink{\bbE}_{\tilde{\vecX}_{2,m_3}}[e^{3}_2(\tilde{m}_3,\tilde{\vecX}_{2,m_3})]\right).
\end{align*}
For $m_3\neq \tilde{m}_3$, the term in brackets on the right is upper bounded by 1, otherwise it is upper bounded by zero. Thus, 
\begin{align*}
P_{e}(f_1,f_2,f_3,\phi)&\geq \frac{N_3(N_3-1)/2}{2N_3^2}\\
&\geq \frac{1}{8}.
\end{align*}}}
\end{proof}
Next, we turn to achievability of Theorem~\ref{thm:symmetrizability}. It uses \cite[Theorem 2.1]{SJ} which provides a concentration result for dependent random variables. We use it to obtain the codebook given below. \pink{This codebook is a generalization of the codebook for the point-to-point AVC (Lemma 3) studied in \cite{CsiszarN88}. In particular, \eqref{lemma_eq7b} is similar to \cite[Lemma 3, (3.1)]{CsiszarN88}. \eqref{lemma_eq7a} and \eqref{lemma_eq7c} are generalizations of \cite[Lemma 3, (3.1)]{CsiszarN88} to a pair of messages. Similarly, \eqref{code_eq2} is a generalization of \cite[Lemma 3, (3.2)]{CsiszarN88}, and \eqref{code_eq3}, \eqref{code_eq5} and \eqref{code_eq4} are generalizations of \cite[Lemma 3, (3.3)]{CsiszarN88}.}
\pink{As we mentioned in Section~\ref{intro:second_approach}, proving these generalizations  requires  establishing an analogue of the concentration result \cite[Lemma A1]{CsiszarN88} for multi-user channels.
We specialize the concentration result in \cite[Theorem 2.1]{SJ} to obtain such an extension. We illustrate the proof idea by proving \eqref{lemma_eq7a} immediately following the lemma statement.} 
For the complete proof, please refer to Appendix~\ref{appendix:codebook}.

% To illustrate the proof idea, we will show the analysis of statement \eqref{lemma_eq7a}, which can be obtained by a direct application of \cite[Theorem 2.1]{SJ}. 
%!TeX root=paper.tex
\begin{lemma}[\pink{Codebook Lemma}]\label{lemma:codebook}
For any  $\epsilon>0,\,  n\geq n_0(\epsilon), \, N_1, N_2, N_3\geq\exp(n\epsilon)$ and types $P_1\in \cP_{\cX_1}^n$, $ P_2\in \cP_{\cX_2}^n$,  $P_3\in \cP_{\cX_3}^n$, there exists codebooks $\vecx_{11}, \ldots,  \vecx_{1N_1}\in \mathcal{X}_1^n, \vecx_{21}, \ldots,  \vecx_{2N_2}\in \mathcal{X}_2^n$, $\vecx_{31}, \ldots,  \vecx_{3N_3}\in \mathcal{X}_3^n$ whose codewords are of type $P_1, P_2$, $P_3$ respectively such that for every permutation $(i,j,k)$ of $(1,2,3)$; for every $(\vecx_i,\vecx_j, \vecx_k)  \in \mathcal{X}^n_i \times \mathcal{X}^n_j \times \mathcal{X}^n_k$; for every joint type $P_{X_iX'_iX_jX'_jX_kX'_k}\in \cP^n_{\cX_{i}\times \cX_{i}\times\cX_{j}\times\cX_j\times\cX_k\times\cX_k}$; and for $R_i \defineqq (1/ n )\log_2{N_i},R_j \defineqq (1/ n )\log_2{N_j}$, and $R_k \defineqq (1/ n )\log_2{N_k}$; the following holds:
\begin{align}
&|\{u\in [1:N_i]:(\vecx_{iu}, \vecx_i, \vecx_j, \vecx_k)\in T^{n}_{X'_i  X_i X_j X_k}\}|\leq\exp{\left(n\left(\left|R_1-I(X_1';X_1X_2X_3)\right|^{+}+\epsilon/2\right)\right)};\label{lemma_eq7b}\\
&|\{(u, v)\in [1:N_i]\times[1:N_j]:(\vecx_{iu},\vecx_{jv}, \vecx_i, \vecx_j, \vecx_k)\in T^{n}_{X'_i X'_j X_i X_j X_k}\}|\nonumber\\
&\qquad\qquad\leq\exp{\left(n\left(\left||R_i-I(X_i';X_iX_jX_k)|^{+}+|R_j-I(X_j';X_iX_jX_k)|^{+}-I(X_i';X_j'|X_iX_jX_k)\right|^{+}+\epsilon/2\right)\right)};\label{lemma_eq7a}\\
&|\{(u, w)\in [1:N_i]\times[1:N_k]:(\vecx_{iu},\vecx_{kw}, \vecx_i, \vecx_j, \vecx_k)\in T^{n}_{X'_iX'_jX_iX_j X_k}\}|\nonumber\\
&\qquad\qquad\leq\exp{\left(n\left(\left||R_i-I(X_i';X_iX_jX_k)|^{+}+|R_k-I(X_k';X_iX_jX_k)|^{+}-I(X_i';X_k'|X_iX_jX_k)\right|^{+} + \epsilon/2\right)\right)};\label{lemma_eq7c}\\
&\frac{1}{N_iN_j}|\{(r,s)\in [1:N_i]\times[1:N_j]: (\vecx_{ir},\vecx_{js}, \vecx_k)\in T^{n}_{X_i X_j X_k} \}| < \exp\left(-\frac{n\epsilon}{2}\right), \text{ if } I(X_i;X_k)+I(X_j;X_iX_k)\geq\epsilon; \label{code_eq2}\\
&\frac{1}{N_iN_j}|\{(r, s)\in [1:N_i]\times[1:N_j]: \exists \,(u, v)\in  [1:N_i]\times[1:N_j], \, u\neq r, v\neq s, \, (\vecx_{ir}, \vecx_{js}, \vecx_{iu}, \vecx_{jv},\vecx_k)\in T^{n}_{X_iX_j X_i^{'} X'_j X_k} \}|\nonumber\\
&< \exp\left(-\frac{n\epsilon}{2}\right),\nonumber\\
&\qquad\text{if }I(X_i;X_jX'_iX'_jX_k)+I(X_j;X_i'X_j'X_k)\geq \left|\left|R_i-I(X'_i;X_k)\right|^+ +\left|R_j-I(X'_j;X_k)\right|^+-I(X'_i;X'_j|X_k)\right|^+ + \epsilon; \label{code_eq3}\\
&\frac{1}{N_iN_j}|\{(r, s)\in [1:N_i]\times[1:N_j]: \exists\, (u, w)\in  [1:N_i]\times[1:N_k], \, u\neq r, \, (\vecx_{ir}, \vecx_{js}, \vecx_{iu}, \vecx_{kw}, \vecx_k)\in T^{n}_{X_iX_jX_i^{'} X'_k X_k} \}|\nonumber\\
&\qquad\qquad<\exp\left(-\frac{n\epsilon}{2}\right),\nonumber\\
&\qquad\text{if }I(X_i;X_jX'_iX'_kX_k)+I(X_j;X_i'X_k'X_k)\geq \left|\left|R_i-I(X'_i;X_k)\right|^+ +\left|R_k-I(X'_k;X_k)\right|^+-I(X'_i;X'_k|X_k)\right|^+ + \epsilon;\label{code_eq5}\\
&\frac{1}{N_iN_j}|\{(r, s)\in [1:N_i]\times[1:N_j]: \exists\, u \in [1:N_i], \, u\neq r,\, (\vecx_{ir}, \vecx_{js}, \vecx_{iu},\vecx_k)\in T^{n}_{X_iX_j X_i^{'} X_k} \}|< \exp\left(-\frac{n\epsilon}{2}\right),\nonumber\\
&\qquad\text{if }I(X_i;X_jX'_iX_k)+I(X_j;X_i'X_k)\geq \left|R_i-I(X'_i;X_k)\right|^+  + \epsilon. \label{code_eq4}
\end{align}
\end{lemma}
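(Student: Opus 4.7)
The plan is to construct the three codebooks by independent random draws and then verify that every one of the bounds \eqref{lemma_eq7b}--\eqref{code_eq4} holds with very high probability, so that a single realization simultaneously satisfies all of them. Specifically, for each user $\ell \in \{1,2,3\}$ and each index $u \in [1:N_\ell]$, I draw $\vecX_{\ell u}$ independently and uniformly from the type class $T^n_{P_\ell}$, and the three codebooks are drawn mutually independently.

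For each inequality I first estimate the mean of the random quantity being bounded. By the standard method-of-types estimate, for $\vecX$ uniform on $T^n_X$ and fixed sequences $\vecx_1, \ldots, \vecx_r$ with the matching marginal, $\bbP[(\vecX, \vecx_1, \ldots, \vecx_r) \in T^n_{X X_1 \ldots X_r}]$ is of order $\exp(-n I(X; X_1 \ldots X_r))$. Hence the expected count in \eqref{lemma_eq7b} is at most $\exp\!\bigl(n(R_i - I(X'_i; X_i X_j X_k))\bigr)$, which matches the stated bound up to the $\epsilon/2$ slack whenever it exceeds $\exp(n\epsilon/2)$; otherwise the threshold $\exp(n\epsilon/2)$ itself is the binding bound. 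Analogous computations handle \eqref{lemma_eq7a}, \eqref{lemma_eq7c} and \eqref{code_eq2}--\eqref{code_eq4}, the $|\cdot|^+$ operators recording exactly this threshold behavior, while the hypotheses of the form $I(X_i;X_k)+I(X_j;X_iX_k) \geq \epsilon$ force the expected fraction of ``bad'' $(r,s)$ in \eqref{code_eq2}--\eqref{code_eq4} to fall below $\exp(-n\epsilon)$.

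Having pinned down the means, the next step is concentration. The count in \eqref{lemma_eq7b} is a sum of $N_i$ \emph{independent} indicators (indexed by $u$), and a standard Chernoff bound already yields a doubly-exponentially small probability of exceeding the target. The pair-counts in \eqref{lemma_eq7a}, \eqref{lemma_eq7c} and the fractional sums in \eqref{code_eq2}--\eqref{code_eq4} involve summands that are pairwise dependent through shared codewords, with each codeword participating in at most $\max(N_i, N_j, N_k)$ indicators. This is precisely the bounded-dependency setting treated by \cite[Theorem 2.1]{SJ}, which delivers the needed doubly-exponentially small deviation tails (essentially of the form $\exp(-\exp(n\epsilon'))$ for some $\epsilon'>0$) in this regime.

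Finally, I take a union bound over the six permutations of $(1,2,3)$, the polynomially many joint types in $\cP^n_{\cX_i \times \cX_i \times \cX_j \times \cX_j \times \cX_k \times \cX_k}$, and the exponentially many sequence triples $(\vecx_i, \vecx_j, \vecx_k) \in \cX_i^n \times \cX_j^n \times \cX_k^n$. The main technical obstacle --- and the reason for invoking \cite[Theorem 2.1]{SJ} rather than an ordinary Chernoff/Hoeffding bound for the dependent sums --- is that the per-event failure probability must decay faster than any constant-base exponential in $n$ in order to survive the exponentially large union bound over the arbitrary sequences $(\vecx_i,\vecx_j,\vecx_k)$. Once this survival is checked, a positive-probability realization of the three codebooks yields the required deterministic construction, completing the proof.
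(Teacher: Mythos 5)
Your high-level plan tracks the paper's proof: draw each codebook independently and uniformly from its type class, compute the expected counts by the method of types, concentrate around the means, and union-bound over permutations, joint types, and fixed sequence triples $(\vecx_i,\vecx_j,\vecx_k)$. The observation that \eqref{lemma_eq7b} is a sum of independent indicators handled by a vanilla Chernoff bound is also correct and matches how the paper treats the $\beta=1$ case.

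There are, however, two places where the proposal glosses over steps that are genuinely nontrivial and that the paper spends real effort on. First, a \emph{single} invocation of \cite[Theorem 2.1]{SJ} does \emph{not} directly deliver doubly-exponential tails. Its bound has the form $e^{-r/3} + \sum_{\alpha}\Prob\bigl(\tilde U_\alpha > t/(2r)\bigr)$, and the second term is a sum of exponentially many tail probabilities of $\tilde U_\alpha$, which are themselves (smaller) dependent sums of indicators. To make the overall bound doubly exponential you must control each $\Prob(\tilde U_\alpha > t/(2r))$ as well, which requires conditioning and a second application of the theorem (for $\beta=2$, as in \eqref{lemma_eq7a}), and more generally an induction on the number of codebook indices involved, up to four for the quadruple-index quantities lurking behind \eqref{code_eq3} and \eqref{code_eq5}. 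This is precisely what the paper's Lemma~\ref{theorem:concentration} packages into a recursive statement; citing \cite[Theorem 2.1]{SJ} as a black box that "delivers the needed doubly-exponentially small deviation tails" is too quick. Second, the quantities in \eqref{code_eq3}--\eqref{code_eq4} count pairs $(r,s)$ for which $\exists(u,v)$ (or $\exists u$) with some joint-type constraint. The indicator of the existence event is \emph{not} of the form treated by the concentration lemma; one must first pass to the counting bound on $(r,s,u)$-triples or $(r,s,u,v)/(r,s,u,w)$-quadruples, then combine those bounds with the $\exp(-nI(\cdot))$ marginal-type bounds and split into cases according to which branch of the outer $|\cdot|^+$ the hypothesis on mutual informations activates. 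This case analysis is what actually converts the packing-lemma-style hypotheses $I(X_i;X_jX'_i\cdots X_k)+I(X_j;X'_i\cdots X_k) \geq |\cdots|^+ + \epsilon$ into the claimed $\exp(-n\epsilon/2)$ bound on the fraction of bad $(r,s)$; your outline does not indicate how to do it, and the conclusion does not follow from the concentration step alone.
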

\begin{proof}[Proof idea]
The existence of a codebook satisfying properties \eqref{lemma_eq7b}-\eqref{code_eq4} is shown by a random coding argument. 
For fixed $(\vecx_i,\vecx_j, \vecx_k)  \in \mathcal{X}^n_i \times \mathcal{X}^n_j \times \mathcal{X}^n_k$and joint type $P_{X_iX'_iX_jX'_jX_kX'_k}\in \cP^n_{\cX_{i}\times \cX_{i}\times\cX_{j}\times\cX_j\times\cX_k\times\cX_k}$, we will show that the probability that each of the  statements~\eqref{lemma_eq7b}-\eqref{code_eq4} does not hold, falls doubly exponentially in $n$. Since $|\cX_i^n|$, $|\cX_j^n|$, $|\cX_k^n|$ and $|\cP^n_{\cX_{i}\times \cX_{i}\times\cX_{j}\times\cX_j\times\cX_k\times\cX_k}|$  grow  only exponentially in $n$, a union bound will imply  the existence of a codebook satisfying~~\eqref{lemma_eq7b}-\eqref{code_eq4}. %The proof will employ Lemma~\ref{theorem:concentration} which is obtained using \cite[Theorem 2.1]{SJ}.
\pink{
We first restate~\cite[Theorem 2.1]{SJ} for ready reference.}
\begin{lemma}\cite[Theorem 2.1]{SJ}\label{theorem:deletion_lemma_main}
Suppose that $V_{\alpha}, \alpha \in \cJ$, is a finite family of non-negative
random variables and that $\sim$ is a symmetric relation on the index set $\cJ$
such that each $V_{\alpha}$ is independent of $\{V_{\beta} : \beta \nsim \alpha\}$; in other words, the pairs
$(\alpha, \beta)$ with $\alpha\sim\beta$ define the edge set of a (weak) dependency graph for the
variables $V_{\alpha}$. Let $U:=\sum_{\alpha}V_{\alpha}$ and $\mu:=\bbE U =\sum_{\alpha} \bbE V_{\alpha}$. Let further, for $\alpha\in \cJ$, $\tilde{U}_{\alpha}:=\sum_{\beta\sim\alpha}V_{\beta}$.
If $t\geq \mu>0$, then for every real $r>0$,
\begin{align}\label{eq:deletion_concentration}
\bbP(U>\mu+t)\leq e^{-r/3}+\sum_{\alpha\in \cJ}\bbP\inp{\tilde{U}_{\alpha}>\frac{t}{2r}}.
\end{align}
\end{lemma}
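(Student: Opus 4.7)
The plan is to prove the stated tail bound by a truncation/splitting argument: separate the contribution from configurations in which the neighborhood sums $\tilde{U}_\alpha$ behave normally, from the rare ``bad'' event that some $\tilde{U}_\alpha$ is atypically large. The free parameter $r$ will arise naturally as the truncation level, chosen so that on the good event the summands are small enough to apply a Chernoff/Bernstein-style argument even in the presence of local dependencies.

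First, I would define the good event $\mathcal{E} \defineqq \bigcap_{\alpha\in\mathcal{J}} \{\tilde{U}_\alpha \le t/(2r)\}$ and write
\begin{equation*}
\bbP(U > \mu + t) \le \bbP\inp{\{U > \mu+t\} \cap \mathcal{E}} + \bbP(\mathcal{E}^c).
\end{equation*}
The second term is at most $\sum_{\alpha}\bbP\inp{\tilde{U}_\alpha > t/(2r)}$ by the union bound, recovering the second term on the right-hand side. Since the dependency relation is reflexive (each $V_\alpha$ is a summand of $\tilde{U}_\alpha$), on $\mathcal{E}$ every individual variable satisfies $V_\alpha \le t/(2r)$, so the truncation simultaneously bounds the local sums and each single summand.

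Next, to bound $\bbP(\{U > \mu+t\}\cap \mathcal{E})$ by $e^{-r/3}$, I would replace $V_\alpha$ with the truncated variable $W_\alpha \defineqq V_\alpha \mathbb{1}[\tilde{U}_\alpha \le t/(2r)]$; this is again nonnegative, inherits the same dependency graph (being a function of $V_\alpha$ and its neighbors' indicators), is bounded by $t/(2r)$, and agrees with $V_\alpha$ on $\mathcal{E}$. Writing $U' \defineqq \sum_\alpha W_\alpha$, it suffices to show $\bbP(U' > \mu + t) \le e^{-r/3}$. The natural tool here is the exponential Markov inequality $\bbP(U' > \mu + t) \le e^{-\lambda(\mu+t)}\Expt[e^{\lambda U'}]$ with the choice $\lambda = 2r/t$, so that $\lambda W_\alpha \le 1$ uniformly. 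To control the moment generating function under partial dependence, I would invoke a partition of $\mathcal{J}$ into (fractional) independent sets of the dependency graph and apply H\"older's inequality across color classes; within each class the summands are independent and bounded, so the inequality $e^x \le 1+x+x^2/2$ valid for $x \le 1$, combined with $\Expt[W_\alpha] \le \Expt[V_\alpha]$ and $\sum_\alpha \Expt[V_\alpha] = \mu$, yields $\Expt[e^{\lambda U'}] \le e^{\lambda \mu + O(\lambda^2 \mu \cdot t/r)}$, and the specific constant $1/3$ falls out of the Chernoff optimization once we use $t \ge \mu$.

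The main obstacle I anticipate is calibrating the Chernoff exponent so that the bound depends only on $r$ (and not on structural parameters of the dependency graph such as its chromatic number, maximum degree, or the number of indices $|\mathcal{J}|$). The partitioning-by-independent-sets step tends to introduce such dependence, so the sharper route is probably to work with the truncated variables directly and exploit the identity $\sum_\alpha V_\alpha \tilde{U}_\alpha = \sum_\alpha V_\alpha \sum_{\beta\sim\alpha} V_\beta$, which, on $\mathcal{E}$, is controlled by $(t/2r)\cdot U$. This permits a self-bounding variance estimate that, plugged into a Bernstein-type moment-generating function bound for partly dependent sums, gives the clean $e^{-r/3}$ with no graph-theoretic prefactors. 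Verifying that the constant $1/3$ emerges (rather than some weaker $1/c$) will require careful tracking of the truncation level against the Chernoff parameter and the hypothesis $t \ge \mu$.
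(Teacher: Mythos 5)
The paper you are working from does not actually prove this lemma -- it is imported verbatim from [SJ] (the Janson--Ruci\'nski ``deletion method'' bound), so your attempt has to be measured against the source's proof, and there it falls short. Your opening split into the good event $\mathcal{E}=\bigcap_\alpha\{\tilde{U}_\alpha\le t/(2r)\}$ and its complement, with a union bound giving the term $\sum_\alpha\bbP(\tilde{U}_\alpha>t/(2r))$, is correct and is indeed the first step of the real proof (and your observation that $V_\alpha\le\tilde U_\alpha$ needs $\alpha\sim\alpha$ is fine, since reflexivity is implicitly forced by the independence hypothesis). The gap is that the $e^{-r/3}$ term is never established. Two concrete problems: (a) the truncated variables $W_\alpha=V_\alpha\mathbf{1}[\tilde U_\alpha\le t/(2r)]$ do \emph{not} inherit the same dependency graph: $W_\alpha$ is a function of the entire closed neighbourhood $\{V_\gamma:\gamma\sim\alpha\}$, so $W_\alpha$ and $W_\beta$ can be dependent whenever those neighbourhoods intersect, even if $\alpha\nsim\beta$; any colouring argument would have to be run on the square of the graph. (b) More fundamentally, the exponential-moment/H\"older-over-colour-classes route deflates the exponent by the (fractional) chromatic number, as you yourself note, while the bound to be proved contains no graph parameter at all; and your fallback -- a ``Bernstein-type MGF bound for partly dependent sums'' with variance proxy $\sum_\alpha\bbE[V_\alpha\tilde U_\alpha]$ and no graph-dependent prefactor -- is not an available tool. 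For upper tails such graph-free exponential bounds are precisely what fails in general (this is the ``infamous upper tail'' obstruction that motivated the deletion method), so appealing to it is essentially assuming the theorem.

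The actual proof avoids moment generating functions entirely. With $s=\lceil r\rceil$ and $b=t/(2r)$, consider $Z_s:=\sum Y_{\alpha_1}\cdots Y_{\alpha_s}$, the sum over $s$-tuples of pairwise non-adjacent indices. Joint independence of each $V_\alpha$ from its non-neighbours gives $\bbE Z_s\le\mu^s$ by peeling off one factor at a time. On the event $\{U\ge\mu+t\}\cap\mathcal{E}$, deleting the closed neighbourhoods of at most $s-1\le r$ chosen indices removes at most $t/2$ of the mass, so each successive inner sum is at least $\mu+t/2$ and hence $Z_s\ge(\mu+t/2)^s$ there. Markov's inequality then bounds $\bbP(\{U\ge\mu+t\}\cap\mathcal{E})$ by $\inp{1+t/(2\mu)}^{-s}\le(3/2)^{-r}\le e^{-r/3}$, using $t\ge\mu$ and $\ln(3/2)>1/3$; this combinatorial moment-and-Markov device is exactly what makes the bound free of graph parameters, and it is the idea missing from your proposal.
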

\pink{We will now show the analysis of \eqref{lemma_eq7a} using Lemma~\ref{theorem:deletion_lemma_main}.}
Let $T^n_{l}, l\in \{1, 2,3\}$ denote the type class of $P_{l}$. We generate independent random codebooks for each user. The codebook for user $l\in \{1,2,3\}$, denoted by $\inp{\vecX_{l1}, \vecX_{l2}, \ldots, \vecX_{lN_{l}}}$, consists of independent random vectors each distributed uniformly on $T^n_{l}$.  
Fix $(\vecx_i,\vecx_j, \vecx_k)  \in \mathcal{X}^n_i \times \mathcal{X}^n_j \times \mathcal{X}^n_k$and a joint type $P_{X_iX'_iX_jX'_jX_kX'_k}\in \cP^n_{\cX_{i}\times \cX_{i}\times\cX_{j}\times\cX_j\times\cX_k\times\cX_k}$ such that for $l\in \{1,2,3\}, P_{X_l} =P_{X'_l}= P_{l}$ and $(\vecx_i,\vecx_j, \vecx_k)\in T^n_{X_iX_jX_k}$.

In order to apply~Lemma~\ref{theorem:deletion_lemma_main}, let $\cJ = \inb{\inp{ir, js}:(r, s)\in [1:N_i]\times[1:N_j]}$. For every $\inp{ir, js}\in \cJ$, we define binary random variable $V_{\inp{ir, js}}$ as
\begin{align*}
V_{\inp{ir, js}}=\begin{cases}&1,\text{ if }(\vecX_{ir}, \vecX_{js})\in T^n_{X'_iX'_j|X_iX_jX_k}(\vecx_k),\\
& 0,\text{ otherwise}
\end{cases}
\end{align*}
and $U = \sum_{\inp{ir, js}\in \cJ}V_{\inp{ir, js}} = \left|\left\{(r, s)\in[1:N_i]\times[1:N_j]: (\vecX_{ir},\vecX_{js},\vecx_i, \vecx_j, \vecx_{k})\in T^{n}_{X'_iX'_jX_i X_j X_k} \right\}\right|$. Note that ${\inp{ir, js}}\sim {\inp{iu, jv}}$ if and only ${\inp{ir, js}}\cap{\inp{iu, jv}}\neq \emptyset$.   
Thus, for $\inp{ir, js}\in \cJ$,  $\tilde{U}_{\inp{ir, js}} =\sum_{\inp{iu, jv}\in \cJ:\inp{iu, jv}\cap\inp{ir, js}\neq\emptyset}V_{\inp{iu, jv}}$.

Next, we will compute $\mu (= \bbE[U])$.
Note that
\begin{align*}
\bbP\inp{V_{\inp{ir, js}}=1}  &= \frac{|T^n_{X'_iX'_j|X_i X_j X_k}(\vecx_i, \vecx_j, \vecx_{k})|}{|T^n_{X'_i}||T^n_{X'_j}|}\\
&\leq \frac{\exp\inb{nH(X'_iX'_j|X_iX_jX_k)}}{(n+1)^{|\cX_i|+|\cX_j|}\exp\inb{n(H(X'_i)+H(X'_j)}}\\
&= (n+1)^{-\inp{|\cX_i|+|\cX_j|}}\exp\inb{-n\inp{H(X'_iX'_j)-H(X'_iX'_j|X_iX_jX_k) - H(X'_iX'_j) + H(X'_i)+H(X'_j)}}\\
&= (n+1)^{-\inp{|\cX_i|+|\cX_j|}}\exp\inb{-n\inp{I(X'_iX'_j;X_iX_jX_k) + I(X'_i;X'_j)}}\\
&{\leq} \exp\inb{-n\inp{I(X'_iX'_j;X_iX_jX_k) + I(X'_i;X'_j)}}\\
&{=} \exp\inb{-n\inp{I(X'_j;X_iX_jX_k) + I(X'_i;X_iX_jX_k|X'_j) + I(X'_i;X'_j)}}\\
&{=} \exp\inb{-n\inp{I(X'_j;X_iX_jX_k) + I(X'_i;X'_jX_iX_kX_k)}}
\end{align*}
Thus,  
\begin{align*}
\mu = \bbE[U] = &\sum_{(r, s)\in [1:N_i]\times[1:N_j]}\bbE\insq{V_{\inp{ir, js}}} = \sum_{(r, s)\in [1:N_i]\times[1:N_j]}\bbP\inp{V_{\inp{ir, js}}=1}\\
& \leq \exp\left\{n\left(R_i+R_j-I(X'_i;X'_jX_iX_jX_k) - I(X'_j;X_iX_jX_k)\right)\right\}\\
&\leq \exp\left\{n\left|\left|R_i-I(X'_i;X_iX_jX_k)\right|^++\left|R_j-I(X'_j;X_iX_jX_k)\right|^+-I(X'_i;X'_j|X_iX_jX_k)\right|^+\right\}:=E.
\end{align*}
Let $\nu = \exp{\inp{n\epsilon/2}}$. We are interested in $\bbP(U\geq \nu E)$.

\begin{align*}
\bbP(U\geq \nu E) &= \bbP(U-\bbE[U]\geq \nu E-\bbE[U])\\
&\leq \bbP(U-\bbE[U]\geq \nu E-E)\\
&=\bbP(U\geq \bbE[U] + (\nu-1)E)\\
&=\bbP(U\geq \mu + (\nu-1)E)
\end{align*}
Let $t = (\nu-1)E$ and $r = \exp(n\epsilon/8)$. We will use \eqref{eq:deletion_concentration} now.

\begin{align}
\bbP(U>\mu+(\nu-1)E)\leq e^{-\frac{1}{3}\exp(n\epsilon/8)}+\sum_{\inp{ir, js}\in \cJ}\bbP\inp{\tilde{U}_{\inp{ir, js}}>\frac{(\nu-1)E}{2\exp(n\epsilon/8)}}.\label{eqref:required_prob}
\end{align}
We need to analyze $\bbP\inp{\tilde{U}_{\inp{ir, js}}>\frac{(\nu-1)E}{2\exp(n\epsilon/8)}}$.
\begin{align*}
&\bbP\inp{\tilde{U}_{\inp{ir, js}}>\frac{(\nu-1)E}{2\exp(n\epsilon/8)}}\\
&= \bbP\inp{\sum_{\inp{iu, jv}\in \cJ:\inp{iu, jv}\cap\inp{ir, js}\neq\emptyset}V_{\inp{iu, jv}}>\frac{(\nu-1)E}{2\exp(n\epsilon/8)}}\\
& = \bbP\inp{V_{\inp{ir, js}} + \sum_{v\neq s}V_{\inp{ir, jv}} + \sum_{u\neq r}V_{\inp{iu, js}}>\frac{(\nu-1)E}{2\exp(n\epsilon/8)}}\\
& = \bbP\inp{\sum_{v\neq s}V_{\inp{ir, jv}} + \sum_{u\neq r}V_{\inp{iu, js}}>\frac{(\nu-1)E}{2\exp(n\epsilon/8)}-V_{\inp{ir, js}}}\\
& \leq  \bbP\inp{\sum_{v\neq s}V_{\inp{ir, jv}} + \sum_{u\neq r}V_{\inp{iu, js}}>\frac{(\nu-1)E}{2\exp(n\epsilon/8)}-1}\\
& \leq  \bbP\inp{\sum_{v\neq s}V_{\inp{ir, jv}} >\frac{1}{2}\inp{\frac{(\nu-1)E}{2\exp(n\epsilon/8)}-1}}+\bbP\inp{\sum_{u\neq r}V_{\inp{iu, jvs}} >\frac{1}{2}\inp{\frac{(\nu-1)E}{2\exp(n\epsilon/8)}-1}}\\
\end{align*} The last inequality uses a union bound. Note that 
\begin{align*}
\frac{1}{2}\inp{\frac{(\nu-1)E}{2\exp(n\epsilon/8)}-1} &=  \frac{1}{2}\inp{\frac{(\exp{\inp{n\epsilon/2}} -1)E}{2\exp(n\epsilon/8)}-1}\\
&\geq \frac{1}{2}\inp{\frac{(\exp{\inp{n\epsilon/2}} -1)E}{2\exp(n\epsilon/8)}-E}\\
&= \frac{1}{2}\inp{\inp{\frac{(\exp{\inp{n\epsilon/2}} -1)}{2\exp(n\epsilon/8)}-1}E}\\
&\geq  \inp{\inp{\frac{\exp{\inp{3n\epsilon/8}}}{\exp(n\epsilon/8)}}E}\text{ for large }n\\
& = \exp{\inp{n\epsilon/4}}.
\end{align*}
Thus,
\begin{align*}
&\bbP\inp{\sum_{v\neq s}V_{\inp{ir, jv}} >\frac{1}{2}\inp{\frac{(\nu-1)E}{2\exp(n\epsilon/4)}-1}}+\bbP\inp{\sum_{u\neq r}V_{\inp{iu, jvs}} >\frac{1}{2}\inp{\frac{(\nu-1)E}{2\exp(n\epsilon/4)}-1}}\\
&\leq \bbP\inp{\sum_{v\neq s}V_{\inp{ir, jv}} >\exp{\inp{n\epsilon/4}}E}+\bbP\inp{\sum_{u\neq r}V_{\inp{iu, js}} >\exp{\inp{n\epsilon/4}}E}.
\end{align*}
Let us first analyze $\bbP\inp{\sum_{v\neq s}V_{\inp{ir, jv}} >\exp{\inp{n\epsilon/4}}E}$. 
\begin{align*}
\bbP\inp{\sum_{v\neq s}V_{\inp{ir, jv}} >\exp{\inp{n\epsilon/4}}E}
&= \sum_{\vecx_{js}\in T^n_{X'_j|X_i X_jX_k}(\vecx_i, \vecx_j, \vecx_k)}\bbP(\vecX_{js} = \vecx_{js})\bbP\inp{\sum_{v\neq s}V_{\inp{ir, jv}} >\exp{\inp{n\epsilon/4}}E\Big|\vecX_{js} = \vecx_{js}}
\end{align*}
We will apply Lemma~\ref{theorem:deletion_lemma_main} on $\bbP\inp{\sum_{v\neq s}V_{\inp{ir, jv}} >\exp{\inp{n\epsilon/8}}E\Big|\vecX_{js} = \vecx_{js}}$ for  $\cJ' = \inb{\inp{ir, jv}:v\in [1:N_j]\setminus\{s\}}$. For every $\inp{ir, jv}\in \cJ'$, we define binary random variable $V'_{\inp{ir, jv}}$ as
\begin{align*}
V'_{\inp{ir, jv}}=\begin{cases}&1,\text{ if }(\vecX_{jv})\in T^n_{X'_j|X'_iX_i X_j X_k}(\vecx_{ir}, \vecx_i, \vecx_j, \vecx_k),\\
& 0,\text{ otherwise}
\end{cases}
\end{align*}
and $U' = \sum_{\inp{ir, jv}\in \cJ'}V'_{\inp{ir, js}}$. Note that ${\inp{ir, jv}}\sim {\inp{iu, jv'}}$ if and only if $v = v'$.   
Next, we will compute $\bbE[U']$.

\begin{align*}
&\bbE[U'] = \bbE\insq{\sum_{\inp{ir, jv}\in \cJ'}V'_{\inp{ir, jv}}}\nonumber\\
&\leq \sum_{v\neq s}\bbP\inp{V'_{(ir, jv)} = 1}\nonumber\\
&= \sum_{v\neq s}\bbP\inp{\vecX_{jv}\in T^n_{X'_j|X'_i X_iX_jX_k}(\vecx_{ir},\vecx_i, \vecx_j, \vecx_k)}\nonumber\\
&= \sum_{v\neq s}\frac{|T^n_{X'_j|X'_iX_iX_jX_k}(\vecx_{ir},\vecx_i, \vecx_j, \vecx_k)|}{|T^n_{X'_j}|}\nonumber\\
&\leq \exp\inb{nR_j}\frac{\exp\inb{nH(X'_j|X'_iX_iX_jX_k)}}{(n+1)^{|\cX_j|}\exp\inb{nH(X'_j)}}\nonumber\\
&\leq \exp\inb{n\inp{|R_j-I(X'_j;X'_iX_iX_jX_k)|^+}}\nonumber\\
&\leq E.\nonumber
\end{align*} 

Now, 
\begin{align*}
&\bbP\inp{\sum_{v\neq s}V_{\inp{ir, jv}} >\exp{\inp{n\epsilon/4}}E\Big|\vecX_{js} = \vecx_{js}} = \bbP\inp{\sum_{\inp{ir, jv}\in \cJ'}V'_{\inp{ir, jv}} >\exp{\inp{n\epsilon/4}}E}\\
& = \bbP\inp{U'  > \bbE(U')+ \exp{\inp{n\epsilon/4}}E-\bbE(U')}\\
&\stackrel{(a)}{\leq} \bbP\inp{U'  > \bbE(U') + (\exp{\inp{n\epsilon/4}}-1)E}\\
&\stackrel{(b)}{\leq} e^{-\frac{1}{3}\exp(n\epsilon/8)}+\sum_{\inp{ir, jv}\in \cJ'}\bbP\inp{V'_{\inp{ir, jv}}>\frac{(\exp(n\epsilon/4)-1)E}{2\exp(n\epsilon/8)}} \\
&\stackrel{(c)}{\leq} e^{-\frac{1}{3}\exp(n\epsilon/8)}+\sum_{\inp{ir, jv}\in \cJ'}\bbP\inp{V'_{\inp{ir, jv}}>E} \text{ for large }n\\
&\stackrel{(d)}{=} e^{-\frac{1}{3}\exp(n\epsilon/8)}
\end{align*}
where $(a)$ holds because $\bbE[U']\leq E$, $(b)$ uses \eqref{eq:deletion_concentration} for $r = \exp(n\epsilon/8)$, $(c)$ is true for large $n$ and $(d)$ holds because $V'_{\inp{ir, jv}}\in \{0, 1\}$  while $E\geq 1$.
Thus, $$\bbP\inp{\sum_{v\neq s}V_{\inp{ir, jv}} >\exp{\inp{n\epsilon/4}}E}\leq e^{-\frac{1}{3}\exp(n\epsilon/8)}.$$ Similarly, $$\bbP\inp{\sum_{u\neq r}V_{\inp{iu, js}} >\exp{\inp{n\epsilon/4}}E}\leq e^{-\frac{1}{3}\exp(n\epsilon/8)}.$$ 
This implies that
\begin{align*}
\bbP\inp{\tilde{U}_{\inp{ir, js}}>\frac{(\nu-1)E}{2\exp(n\epsilon/8)}}\leq 2e^{-\frac{1}{3}\exp(n\epsilon/8)}.
\end{align*}
Thus, from \eqref{eqref:required_prob},
\begin{align*}
\bbP(U>\mu+(\nu-1)E)&\leq e^{-\frac{1}{3}\exp(n\epsilon/8)}+|\cJ|2e^{-\frac{1}{3}\exp(n\epsilon/8)}\\
&=e^{-\frac{1}{3}\exp(n\epsilon/8)}+|N_i||N_j|2e^{-\frac{1}{3}\exp(n\epsilon/8)}
\end{align*}
which falls doubly exponentially.	
\end{proof}

\begin{proof}[{Proof (Achievability of Theorem~\ref{thm:symmetrizability})}]
%Our achievability uses the concentration result from \cite{SJ} and is along the lines of the arguments in \cite{CsiszarN88,AhlswedeC99,PeregS19}.
\pink{For an input distribution $p(x_1)p(x_2)p(x_3)$, we first show the achievability of the set of rate triples $(R_1,R_2,R_3)$ which, for all permutations $(i,j,k)$ of $(1,2,3)$, satisfy the following conditions:  
\begin{align}
R_i &< \min_{q(x_k)} I(X_i;Y|X_j),\quad\text{and}\label{eq:rateconstraint1_wo_time_sharing}\\
R_i+R_j &< \min_{q(x_k)} I(X_i,X_j;Y),\label{eq:rateconstraint2_wo_time_sharing}
\end{align} where the mutual information terms are evaluated using the joint distribution $p(x_i)p(x_j)q(x_k)W(y|x_1,x_2,x_3)$.} 
Fix distributions $P_{1}\in\mathcal{P}^n_{\cX_1},\,P_{2}\in\mathcal{P}^n_{\cX_2}$ and $P_3\in\mathcal{P}^n_{\cX_3}$ (which approach $p(x_1), p(x_2), p(x_3)$ as $n\rightarrow \infty$). 
For these distributions, consider the codebook given by Lemma~\ref{lemma:codebook} and the decoder as given in Definition~\ref{def:decoder} for \pink{some $\eta>0$ satisfying the condition in Lemma \ref{lemma:dec}. Choose $\epsilon>0$ such that $\eta>6\epsilon$.}
Below, we repeat the decoder from section~\ref{para:det_Achiev} for the sake of completeness. 
\begin{defn}[Decoder]\label{def:decoder}
For a received vector $\vecy\in \cY^n$, some $\eta>0$, $\phi(\vecy)=(m_1,m_2,m_3)\in \mathcal{M}_1\times\mathcal{M}_2\times\mathcal{M}_3$, if  $\vecy\in \cD^{(1)}_{m_1}\cap \cD^{(2)}_{m_2} \cap \cD^{(3)}_{m_3}$ where $\cD^{(i)}_{m_i}, \, i=1,2,3$ is defined as below.\\
$\vecy \in \cD^{(1)}_{m_1}$ if there exists  some permutation $(j, k)$ of $(2,3)$, $m_j \in \mathcal{M}_j,\,\vecx_k\in \mathcal{X}^n_k$, and random variables $X_1, X_j, X_k,  Y$ with $(f_1(m_1), f_j(m_j), \vecx_{k}, \vecy) \in T^{n}_{X_1X_jX_kY}$ and $D(P_{X_1X_jX_kY}||P_{X_1}\times P_{X_j}\times P_{X_k}\times W)< \eta$ such that the following hold:
	\begin{description}			
			 \item[(a)] \underline{Disambiguating ($m_1,m_j$) from ($m_1',m'_j$):} For every $(m'_1, m'_j) \in \mathcal{M}_1\times\mathcal{M}_j$ $m'_1 \neq m_1$, $m'_j\neq m_j$, $\vecx'_k\in \mathcal{X}^n_k$, and random variables $X_1',X_j', X_k'$ such that $(f_1(m_1), f_1(m'_1),f_j(m_j), f_j(m'_j), \vecx_k, \vecx'_k, \vecy) \in T^{n}_{X_1 X_1' X_jX'_j X_k X_k'Y}$ and  $D(P_{X_1'X'_jX'_k Y}||P_{X'_1}\times P_{X'_j}\times P_{X'_k}\times W)< \eta$, we require that $I(X_1X_jY;X'_1X'_j|X_k) < \eta$.
			\item[(b)] \underline{Disambiguating $m_1$ from $m_1'$:} For every $m'_1 \in \mathcal{M}_1$, $m'_1 \neq m_1$, $\vecx'_k\in \mathcal{X}^n_k$, and random variables $X_1',X_k'$ such that $(f_1(m_1),f_1(m'_1),f_j(m_j), \allowbreak \vecx_k, \vecx'_k, \vecy) \in T^{n}_{X_1 X_1'X_jX_k X_k' Y}$ and  $D(P_{X_1'X_jX_k' Y}||P_{X_1'}\times P_{X_j}\times P_{X_k'}\times W)< \eta$, we require that $I(X_1X_jY;X_1'|X_k) < \eta$.
			\item[(c)] \underline{Disambiguating ($m_1,m_j$) from ($m_1',m_k$):} If there exist $(m'_1, m_k) \in \mathcal{M}_1\times \mathcal{M}_k$, $m'_1 \neq m_1$, $\vecx_j\in \mathcal{X}^n_j$,  and random variables $X_1',X_j', X_k'$ such that $(f_1(m_1),f_1(m'_1),f_j(m_j), \vecx_j, \vecx_k, f_k(m_k), \vecy) \in T^{n}_{X_1 X_1'X_j X_j' X_kX_k'Y}$ and  $D(P_{X_1'X_j'X_k' Y}||P_{X_1'}\times P_{X_j'}\times P_{X_k'}\times W)< \eta$, we require that $I(X_1X_jY;X_1'X_k'|X_k) < \eta$.
	\end{description}
The decoding sets $\cD^{(2)}_{m_2}$ and $\cD^{(3)}_{m_3}$ are defined similarly. 
If $\vecy\notin \cD^{(1)}_{m_1}\cap \cD^{(2)}_{m_2} \cap \cD^{(3)}_{m_3}$ for any $(m_1, m_2, m_3)\in \cM_1\times\cM_2\times\cM_3$, the decoder outputs $(1,1,1)$.
\end{defn}

Next, we give some standard properties of joint types as given in~\cite{CsiszarN88} (as {\em Fact 1}, {\em Fact 2} 
 and {\em Fact 3}). As mentioned in~\cite{CsiszarN88}, these bounds  can be found in \cite{CK11}. For finite alphabets $\cX$, $\cY$, the type class $\cP^n_{\cX}$, any channel $V$ from $\cX$ to $\cY$ and random variables $X$ and $Y$ on $\cX$ and $\cY$ respectively with joint distribution $P_{XY}$, the following holds:
\begin{align}
|\cP^n_{\cX}|&\leq (n+1)^{|\cX|}\label{eq:poly_size}\\
(n+1)^{-|\cX|}\exp{\inp{nH(X)}}&\leq |T^n_{X}(\vecx)|\leq \exp{\inp{{nH(X)}}}\qquad \text{if }|T^n_{X}(\vecx)|\neq 0\label{eq:type_class}\\
(n+1)^{-|\cX||\cY|}\exp{\inp{nH(Y|X)}}&\leq |T^n_{Y|X}(\vecx)|\leq \exp{\inp{nH(Y|X)}}\qquad \text{if }|T^n_{Y|X}(\vecx)|\neq 0\label{eq:conditional_type_class}\\
\sum_{\vecy\in T^n_{Y|X}(\vecx)}V^n(\vecy|\vecx)&\leq \exp{\inp{-nD(P_{XY}||P_X\times V)}}\label{eq:sanov}
\end{align}
We first analyze the case when user 3 is adversarial. The probability of error when user 3 is adversarial (see \eqref{eq:error_adv}) is given by 
\pink{\begin{align*}
P_{e,3}&\defineqq \max_{\vecx_3}P_{e,3}(\vecx_3),
\end{align*}
where $P_{e,3}(\vecx_3)$ is the average probability of error for users 1 and 2 when a malicious user 3 sends $\vecx_3$ as input. That is, 
\begin{align}\label{eq:main_result_pex3}
P_{e,3}(\vecx_3):=\frac{1}{N_1N_2} \sum_{r\in \mathcal{M}_1,s\in\mathcal{M}_2} \Prob\Big(\inb{\vecy:\phi(\vecy)\neq (r, s, t)\text{ for some } t\in \cM_3} \, \Big| \vecX_1=\vecx_{1r},\, \vecX_2=\vecx_{2s},\, \vecX_3=\vecx_3 \Big).
\end{align}
We will argue that for every $\vecx_3\in \cX^n_3$, $P_{e,3}(\vecx_3)\rightarrow 0$ as $n\rightarrow \infty$.}
\pink{The analysis of $P_{e,3}(\vecx_3)$ follows the flowchart shown in Figure~\ref{fig:flowchart1}.}
\begin{figure}[!h]
\begin{tikzpicture}[scale=1.2]
 
  \node (A1) at (0,2) [rectangle, draw] {$P_{e,3}(\vecx_3)$};
  \node (A2) at (0,0) [rounded corners=3pt, draw] {Union bound} node[xshift = 1.1cm, right] {eq.~\eqref{eq:second_term}};
  \node (C) at (-1.5, -2) [rounded corners=3pt, draw] {Union bound};
  \node[xshift = 0.8cm, right of = C] {eq.~\eqref{eq:three_terms}};
  \node (D) at (1.5, -2) {};
    % \node (E) at (-3, -4) [rectangle, draw] {$\stackrel{\text{\small low}}{\text{\small prob}} $};

  \node (E) at (-3, -4)  {\small $\substack{\text{\small  small}\\\text{ (atypical event)}}$};
\node (F) at (-1.5, -4.5)  {$\substack{ \text{\small small}\\\text{ (atypical event)}}$};
 \node (G) at (0, -4) [rounded corners=3pt, draw] {Union bound};
  \node[xshift = 0.8cm, right of = G] {eq.~\eqref{eq_error}};
  \phantom{\node (G1) at (-1.5, -6) [circle, draw] {}; 
 \node (G2) at (0, -6) [circle, draw] {};
  \node (G3) at (1.5, -6) [circle, draw] {};}

  \draw[gray, thick, -] (-2, -5.7) -- (2, -5.7) node[yshift = -0.3cm, midway, below]{\small \textcolor{black}{continued in Figure~\ref{fig:flowchart2}.}};
 	\draw[->] (G) -- (G1) node[midway, right]{$\scriptstyle P_{\textsf{a}}$};
 	\draw[->] (G) -- (G2) node[midway, right]{$\scriptstyle P_{\textsf{b}}$};
 	\draw[->] (G) -- (G3) node[midway, right]{$\scriptstyle P_{\textsf{c}}$};

 	\draw[->] (C) -- (E) node[midway, right]{$\scriptstyle P_{\cA_{\epsilon}}$};
 	\draw[->] (C) -- (F) node[yshift = -0.2cm, midway, right]{$\scriptstyle P_{\cB_{\eta, \epsilon}}$};
 	\draw[->] (C) -- (G) node[midway, right]{$\scriptstyle \scriptstyle P_{\cD_{\eta}}$};
  \draw[->] (A1) -- (A2) node[midway, right]{};
  \draw[->] (A2) -- (C) node[midway, right] {$\scriptstyle P_1(\vecx_3)$};
  \draw[->] (A2) -- (D) node[midway, right] {$\scriptstyle P_2(\vecx_3)$} node[below] {$\vdots$};
\node at (7, -2)  {
\begin{footnotesize}
\begin{tabular}{p{1cm}|p{6cm}} 
$P_{e,3}(\vecx_3)$& the average probability of error when malicious user 3 sends $\vecx_3$\\
\hline
$P_1(\vecx_3)$ & the average probability of error for user 1\\
\hline	
$P_2(\vecx_3)$ & the average probability of error for user 2 \\
\hline	
$P _{\cA_{\epsilon}}$& the probability that channel inputs are atypical\\
\hline	
$P _{\cB_{\eta, \epsilon}}$& the probability that the channel output is atypical\\
\hline	
$P _{\cD_{\eta}}$ & $\cD_{\eta}$ is such that $\cA_{\epsilon}^c\cap\cB_{\eta,\epsilon}^c \subseteq \cD_{\eta}$ \\
\hline	
$P_{\textsf{a}}$ & condition $\bm{\text{(a)}}$ in Definition~\ref{def:decoder} does not hold\\
\hline	
$P_{\textsf{b}}$ & condition $\bm{\text{(b)}}$ in Definition~\ref{def:decoder} does not hold\\
\hline	
$P_{\textsf{c}}$ & condition $\bm{\text{(c)}}$ in Definition~\ref{def:decoder} does not hold\\
\end{tabular}
\end{footnotesize}};
\end{tikzpicture}
\caption{Flowchart depicting the flow of analysis of $P_{e,3}(\vecx_3)$, the average probability of error when malicious user 3 sends $\vecx_3$. In the flowchart, only $P_1(\vecx_3)$ is further broken down and shown. The flowchart is continued in Figure~\ref{fig:flowchart2}.} \label{fig:flowchart1}
\end{figure}

From the decoder definition, we know that for $(r, s)\in \cM_1\times \cM_2$, if $\phi(\vecy)\neq (r, s, t)$ for some $t\in \cM_3$, then $\vecy\notin \cD^{(1)}_{r}\cap \cD^{(2)}_{s}$. In other words, $\vecy\in \left(\cD^{(1)}_{r}\right)^c\cup \left(\cD^{(2)}_{s}\right)^c$.
Thus, 
% \begin{align*}
% P_{e,3}&\defineqq \max_{\vecx_3} \frac{1}{N_1N_2} \sum_{r\in \mathcal{M}_1,s\in\mathcal{M}_2} \Prob\Big(\inb{\vecy:\vecy\in \left(\cD^{(1)}_{r}\right)^c\cup \left(\cD^{(2)}_{s}\right)^c} \Big| \vecX_1=\vecx_{1r},\, \vecX_2=\vecx_{2s},\, \vecX_3=\vecx_3 \Big)\\
% &= \max_{\vecx_3} \frac{1}{N_1N_2} \sum_{r\in \mathcal{M}_1,s\in\mathcal{M}_2} \Prob\Big(\inb{\vecy:\vecy\in \left(\cD^{(1)}_{r}\right)^c}\cup\inb{\vecy:\vecy\in \left(\cD^{(2)}_{s}\right)^c} \Big| \vecX_1=\vecx_{1r},\, \vecX_2=\vecx_{2s},\, \vecX_3=\vecx_3 \Big)\\
% &\stackrel{(a)}{\leq} \max_{\vecx_3} \Bigg\{\frac{1}{N_1N_2} \sum_{r\in \mathcal{M}_1,s\in\mathcal{M}_2} \Prob\Big(\inb{\vecy:\vecy\in \left(\cD^{(1)}_{r}\right)^c} \Bigg| \vecX_1=\vecx_{1r},\, \vecX_2=\vecx_{2s},\, \vecX_3=\vecx_3 \Big)\\
% &\quad+ \frac{1}{N_1N_2} \sum_{r\in \mathcal{M}_1,s\in\mathcal{M}_2} \Prob\Big(\inb{\vecy:\vecy\in \left(\cD^{(2)}_{s}\right)^c} \Big| \vecX_1=\vecx_{1r},\, \vecX_2=\vecx_{2s},\, \vecX_3=\vecx_3 \Big)\Bigg\}
% \end{align*}
% where $(a)$ uses union bound.\\
% This gives the following upper bound on $P_{e,3}$.
\pink{\begin{align*}
P_{e,3}\inp{\vecx_3}&= \frac{1}{N_1N_2} \sum_{r\in \mathcal{M}_1,s\in\mathcal{M}_2} \Prob\Big(\inb{\vecy:\vecy\in \left(\cD^{(1)}_{r}\right)^c\cup \left(\cD^{(2)}_{s}\right)^c} \Big| \vecX_1=\vecx_{1r},\, \vecX_2=\vecx_{2s},\, \vecX_3=\vecx_3 \Big)\\
&=  \frac{1}{N_1N_2} \sum_{r\in \mathcal{M}_1,s\in\mathcal{M}_2} \Prob\Big(\inb{\vecy:\vecy\in \left(\cD^{(1)}_{r}\right)^c}\cup\inb{\vecy:\vecy\in \left(\cD^{(2)}_{s}\right)^c} \Big| \vecX_1=\vecx_{1r},\, \vecX_2=\vecx_{2s},\, \vecX_3=\vecx_3 \Big)\\
&\stackrel{(a)}{\leq}  \frac{1}{N_1N_2} \sum_{r\in \mathcal{M}_1,s\in\mathcal{M}_2} \Prob\Big(\inb{\vecy:\vecy\in \left(\cD^{(1)}_{r}\right)^c} \Bigg| \vecX_1=\vecx_{1r},\, \vecX_2=\vecx_{2s},\, \vecX_3=\vecx_3 \Big)\\
&\quad+ \frac{1}{N_1N_2} \sum_{r\in \mathcal{M}_1,s\in\mathcal{M}_2} \Prob\Big(\inb{\vecy:\vecy\in \left(\cD^{(2)}_{s}\right)^c} \Big| \vecX_1=\vecx_{1r},\, \vecX_2=\vecx_{2s},\, \vecX_3=\vecx_3 \Big)
\end{align*}
where $(a)$ uses the union bound. Thus, for
\begin{align*}
P_1(\vecx_3):=\frac{1}{N_1N_2} \sum_{r\in \mathcal{M}_1,s\in\mathcal{M}_2} \Prob\Big( \inb{\vecy:\vecy\notin \cD^{(1)}_{r}} \Big| \pink{\vecX_1=\vecx_{1r},\, \vecX_2=\vecx_{2s},\, \vecX_3=\vecx_3} \Big), 
\end{align*} and 
\begin{align*}
P_2(\vecx_3):=\frac{1}{N_1N_2} \sum_{r\in \mathcal{M}_1,s\in\mathcal{M}_2} \Prob\Big( \inb{\vecy:\vecy\notin \cD^{(2)}_{s}}\Big| \pink{\vecX_1=\vecx_{1r},\, \vecX_2=\vecx_{2s},\, \vecX_3=\vecx_3} \Big),
\end{align*}
we have the following upper bound on $P_{e,3}\inp{\vecx_3}$.}
% \begin{align}
% P_{e,3}&\leq \max_{\vecx_3} \left\{\frac{1}{N_1N_2} \sum_{r\in \mathcal{M}_1,s\in\mathcal{M}_2} \Prob\Big( \inb{\vecy:\vecy\notin \cD^{(1)}_{r}} \Big| \pink{\vecX_1=\vecx_{1r},\, \vecX_2=\vecx_{2s},\, \vecX_3=\vecx_3} \Big)\right. \nonumber\\
% &\qquad \qquad \left. + \frac{1}{N_1N_2} \sum_{r\in \mathcal{M}_1,s\in\mathcal{M}_2} \Prob\Big( \inb{\vecy:\vecy\notin \cD^{(2)}_{s}}\Big| \pink{\vecX_1=\vecx_{1r},\, \vecX_2=\vecx_{2s},\, \vecX_3=\vecx_3} \Big)\right\}.\label{eq:second_term}
% \end{align}
\pink{\begin{align}
P_{e,3}\inp{\vecx_3}&\leq P_1(\vecx_3) + P_2(\vecx_3)\label{eq:second_term}
% \frac{1}{N_1N_2} \sum_{r\in \mathcal{M}_1,s\in\mathcal{M}_2} \Prob\Big( \inb{\vecy:\vecy\notin \cD^{(1)}_{r}} \Big| \pink{\vecX_1=\vecx_{1r},\, \vecX_2=\vecx_{2s},\, \vecX_3=\vecx_3} \Big) \nonumber\\
% &\qquad \qquad  + \frac{1}{N_1N_2} \sum_{r\in \mathcal{M}_1,s\in\mathcal{M}_2} \Prob\Big( \inb{\vecy:\vecy\notin \cD^{(2)}_{s}}\Big| \pink{\vecX_1=\vecx_{1r},\, \vecX_2=\vecx_{2s},\, \vecX_3=\vecx_3} \Big)\nonumber\\
\end{align}}
\pink{We will first analyze $P_1(\vecx_3)$. Let 
\begin{align}
\cA_{\epsilon} &\defineqq \{P_{X_1X_2 X_3 Y}\in \cP^{n}_{\cX_1\times\cX_2\times\cX_3\times \cY}:D(P_{X_1 X_2 X_3}||P_{X_1}P_{X_2}P_{X_3})\geq\epsilon\},\label{eq:main_proof_A_epsilon}\\
\cB_{\eta,\epsilon} &\defineqq \{P_{X_1X_2 X_3 Y}\in \cP^{n}_{\cX_1\times\cX_2\times\cX_3\times \cY}: D(P_{X_1X_2 X_3 Y}||P_{X_1X_2X_3}W)\geq \eta-\epsilon\},\label{eq:main_proof_B_epsilon}\\
\text{ and }\cD_{\eta} &\defineqq \{P_{X_1X_2 X_3 Y}\in \cP^{n}_{\cX_1\times\cX_2\times\cX_3\times \cY}: D(P_{X_1X_2 X_3 Y}||P_{X_1}P_{X_2}P_{X_3}W)< \eta\}.\label{eq:main_proof_D_eta}
\end{align} 
In defining $\cB_{\eta, \epsilon}$, recall that $\eta>6\epsilon$. We will use $\cA^c_{\epsilon}$, $\cB^c_{\eta,\epsilon}$ and $\cD^c_{\eta}$ to denote $\cP^{n}_{\cX_1\times\cX_2\times\cX_3\times \cY}\setminus \cA_{\epsilon}$, $\cP^{n}_{\cX_1\times\cX_2\times\cX_3\times \cY}\setminus \cB_{\eta,\epsilon}$ and $\cP^{n}_{\cX_1\times\cX_2\times\cX_3\times \cY}\setminus \cD_{\eta}$ respectively. }

\pink{We first note that $\cA_{\epsilon}^c\cap\cB_{\eta,\epsilon}^c \subseteq \cD_{\eta}$. This is because $D(P_{X_1X_2 X_3 Y}||P_{X_1}P_{X_2}P_{X_3}W) = D(P_{X_1X_2 X_3 Y}||P_{X_1X_2X_3}W) + D(P_{X_1 X_2 X_3}||P_{X_1}P_{X_2}P_{X_3})$ and for $P_{X_1X_2 X_3 Y}\in \cA_{\epsilon}^c\cap\cB_{\eta,\epsilon}^c$, $D(P_{X_1X_2 X_3 Y}||P_{X_1X_2X_3}W) + D(P_{X_1 X_2 X_3}||P_{X_1}P_{X_2}P_{X_3})<\epsilon + \eta-\epsilon = \eta$. Thus, $\cP^{n}_{\cX_1\times\cX_2\times\cX_3\times \cY} = \cA_{\epsilon}\cup\cB_{\eta,\epsilon}\cup\cD_{\eta}$. We  focus on the first term on the RHS of  \eqref{eq:second_term} and split the set of joint types $\cP^{n}_{\cX_1\times\cX_2\times\cX_3\times \cY}$ into $\cA_{\epsilon}$, $\cB_{\eta,\epsilon}$ and $\cD_{\eta}$. \pink{Further, we use loose upper bounds on each of these terms, for example, in the first terms in \eqref{eq:three_terms} below, we upper bound the summand by 1.}
\begin{align}
P_1(\vecx_3)&=\frac{1}{N_1N_2} \sum_{(r,s)} \Prob\Big(  \inb{\vecy:\vecy\notin \cD^{(1)}_{r}} \big| X_1^n=\vecx_{1r},\, X_2^n=\vecx_{2s},\, X_3^n=\vecx_3\Big)\nonumber\\
&=\frac{1}{N_1N_2} \sum_{P_{X_1X_2X_3Y}\in \cP^{n}_{\cX_1\times\cX_2\times\cX_3\times \cY}}\sum_{\substack{(r,s):\\(\vecx_{1r},\vecx_{2s},\vecx_3)\in T^n_{X_1X_2X_3}}}\sum_{\substack{\vecy\in T^n_{Y|X_1X_2X_3}(\vecx_{1r},\vecx_{2s},\vecx_3),\\\vecy\notin \cD^{(1)}_{r}}}W^n(\vecy|\vecx_{1r},\vecx_{2s},\vecx_3)\nonumber\\
&\leq \frac{1}{N_1N_2}\sum_{P_{X_1X_2X_3Y}\in \cA_{\epsilon}}\sum_{\substack{(r, s):\\(\vecx_{1r}, \vecx_{2s},\vecx_3)\in T^n_{X_1X_2X_3}}}1\nonumber\\
& \quad+\frac{1}{N_1N_2}\sum_{P_{X_1X_2X_3Y}\in \cB_{\eta,\epsilon}}\sum_{\substack{(r,s):\\(\vecx_{1r},\vecx_{2s}\vecx_3)\in T^n_{X_1X_2X_3}}}\sum_{\vecy\in T^n_{Y|X_1X_2X_3}(\vecx_{1r},\vecx_{2s},\vecx_3)}W^n(\vecy|\vecx_{1r}, \vecx_{2s}, \vecx_3)\nonumber\\
&\quad+\frac{1}{N_1N_2}\sum_{P_{X_1X_2X_3Y}\in \cD_{\eta}}\sum_{\substack{(r,s):\\(\vecx_{1r},\vecx_{2s},\vecx_3)\in T^n_{X_1X_2X_3}}}\sum_{\substack{\vecy\in T^n_{Y|X_1X_2X_3}(\vecx_{1r},\vecx_{2s},\vecx_3),\\\vecy\notin \cD^{(1)}_{r}}}W^n(\vecy|\vecx_{1r}, \vecx_{2s}, \vecx_3)\nonumber\\
&=:\pink{P_{\cA_{\epsilon}} + P_{\cB_{\eta,\epsilon}} + P_{\cD_{\eta}}},\label{eq:three_terms}
\end{align}
\pink{where we define the $P_{\cA_{\epsilon}}, P_{\cB_{\eta,\epsilon}}$ and $P_{\cD_{\eta}}$ as the three summation terms.}
We will analyze each term on the RHS of \eqref{eq:three_terms} separately. We start with the first term.
\begin{align*}
\pink{P_{\cA_{\epsilon}}}&=\frac{1}{N_1N_2}\sum_{P_{X_1X_2X_3Y}\in \cA_{\epsilon}}\sum_{\substack{(r,s):\\(\vecx_{1r}, \vecx_{2s,\vecx_3})\in T^n_{X_1X_2X_3}}}1 \\
&= \frac{1}{N_1N_2}\sum_{P_{X_1X_2X_3Y}\in \cA_{\epsilon}}|\{(r,s): (\vecx_{1r},\vecx_{2s},\vecx_{3})\in T^{n}_{X_1X_2 X_3}\}|\\
& = \sum_{P_{X_1X_2X_3Y}\in \cA_{\epsilon}}\frac{|\{(r,s): (\vecx_{1r},\vecx_{2s},\vecx_{3})\in T^{n}_{X_1X_2 X_3}\}|}{N_1N_2} \\
&\stackrel{\text{(a)}}{\leq} \sum_{P_{X_1X_2X_3Y}\in \cA_{\epsilon}}\exp\inp{-n\epsilon/2}\\
&\leq |\mathcal{P}^n_{\cX_1\times\cX_2\times\cX_3\times\cY}|\exp\inp{-n\epsilon/2}\\
&\stackrel{\text{(b)}}{\leq} (n+1)^{|\cX_1\times\cX_2\times\cX_3\times\cY|}\exp\inp{-n\epsilon/2}\\
& \rightarrow 0 \text{ as }n\rightarrow\infty. 
\end{align*}
Here, (a) follows from \eqref{code_eq2} (as $I(X_1;X_3)+I(X_2;X_1X_3) = D(P_{X_1X_2X_3}||P_{X_1}P_{X_2}P_{X_3})>\epsilon$ for every $P_{X_1X_2X_3Y}\in \cA_{\epsilon}$ \pink{as defined in \eqref{eq:main_proof_A_epsilon}}). The inequality (b) uses \eqref{eq:poly_size}.
We now analyze the second term. For fixed $r\in [1:N_1]$ and $s\in [1:N_2]$
\begin{align*}
\pink{P_{\cB_{\eta,\epsilon}}}&=\sum_{\substack{ P_{X_1X_2X_3Y}\in \cB_{\eta,\epsilon}:\\(\vecx_{1r},\vecx_{2s},\vecx_3)\in T^n_{X_1X_2X_3}}}\sum_{\vecy\in T^n_{Y|X_1X_2X_3}(\vecx_{1r},\vecx_{2s},\vecx_3)}W^n(\vecy|\vecx_{1r}, \vecx_{2s}, \vecx_3)\\
&\stackrel{\text{(a)}}{\leq} \sum_{\substack{ P_{X_1X_2X_3Y}\in \cB_{\eta,\epsilon}:\\(\vecx_{1r},\vecx_{2s},\vecx_3)\in T^n_{X_1X_2X_3}}}\exp{(-nD(P_{X_1X_2X_3Y}||P_{X_1X_2X_3}W))}\\
% &= \sum_{\substack{ P_{X_1X_2X_3Y}\in \cA^c_{\epsilon}\cap\cD_{\eta}^c:\\(\vecx_{1r},\vecx_{2s},\vecx_3)\in T^n_{X_1X_2X_3}}}\exp{(-n(D(P_{X_1X_2X_3Y}||P_{X_1}P_{X_2}P_{X_3}W)-I(X_1X_2;X_3)-I(X_1;X_2)))}\\
&\stackrel{\text{(b)}}{\leq} \sum_{\substack{ P_{X_1X_2X_3Y}\in \cB_{\eta,\epsilon}:\\(\vecx_{1r},\vecx_{2s},\vecx_3)\in T^n_{X_1X_2X_3}}}\exp{(-n(\eta-\epsilon))}\\
&\leq |\mathcal{P}^n_{\cX_1\times\cX_2\times\cX_3\times\cY}|\exp{(-n(\eta-\epsilon))}\\
& \stackrel{\text{(c)}}{\leq} (n+1)^{|\cX_1\times\cX_2\times\cX_3\times\cY|}\exp{(-n(\eta-\epsilon))}\\
& \rightarrow 0 \text{ as } n\rightarrow\infty \text{ as }\pink{\eta>6\epsilon}.
\end{align*}
Here, the inequality (a) uses \eqref{eq:sanov}, (b) follows by noting that $P_{X_1X_2X_3Y}\in \cB_{\eta,\epsilon}$ \pink{(see \eqref{eq:main_proof_B_epsilon})} and thus, $D(P_{X_1X_2X_3Y}||P_{X_1X_2X_3}W) >\eta-\epsilon$. The inequality (c) follows because $\mathcal{P}^n_{\cX_1\times\cX_2\times\cX_3\times\cY}{\leq} (n+1)^{|\cX_1\times\cX_2\times\cX_3\times\cY|}$ by using \eqref{eq:poly_size}.  This shows that the second term on the RHS of \eqref{eq:three_terms} also goes to zero as $n$ goes to infinity.
}

It remains to analyze the third term of \eqref{eq:three_terms}\pink{, that is, $P_{\cD_{\eta}}$}. \pink{This only involves joint distributions $P_{X_1X_2X_3Y}$ which satisfy $D(P_{X_1X_2X_3 Y}||P_{X_1}\times P_{X_2}\times P_{X_3}\times W)\leq \eta$, {\em i.e.}, $P_{X_1X_2X_3Y}\in \cD_{\eta}$ (see \eqref{eq:main_proof_D_eta}).}
\pink{When $P_{X_1X_2X_3Y}\in \cD_{\eta}$, we notice from Definition \ref{def:decoder} that $\vecy\notin \cD_r^{(1)}$ only if for each of $(j,k)=(2,3)$ and $(j,k)=(3,2)$, at least one of the conditions among (a), (b) and (c) in Definition~\ref{def:decoder} fails. Thus, to upper bound $P_{\cD_{\eta}}$, it is sufficient to analyze the probability that at least one of (a)-(c) in Definition~\ref{def:decoder} fails under $(j,k) = (2,3)$. 
% This corresponds to the case when user 3 is malicious.
}
% Note that in Definition \ref{def:decoder}, for $(j, k) = (2, 3)$, \pink{if }$(\vecx_{1r},\vecx_{2s},\vecx_3,\vecy)\in T^n_{X_1X_2X_3Y}$ \pink{ where } $P_{X_1X_2X_3Y}$ satisfies $D(P_{X_1X_2X_3 Y}||P_{X_1}\times P_{X_2}\times P_{X_3}\times W)< \eta$, \pink{ $\vecy\neq \cD^{(1)}_{r}$ only if } one of the conditions $\bm{\text{(a)-(c)}}$ \pink{in Definition~\ref{def:decoder}} is violated. 
This implies that at least one of the following holds:
	\begin{description}			
			 \item[(a)] There exists $u \in \mathcal{M}_1$, $u \neq r$, $v \in \mathcal{M}_2$, $v\neq s$, $\vecx'_3\in \mathcal{X}^n_3$, and random variables $X_1',X_2', X_3'$ such that $(f_1(r), f_2(s),  \vecx_3,f_1(u), f_2(v), \vecx'_3, \vecy) \in T^{n}_{X_1X_2X_3X_1'X'_2X_3'Y}$,  $D(P_{X_1'X'_2X'_3 Y}||P_{X'_1} P_{X'_2} P_{X'_3} W)< \eta$ and \\$I(X_1X_2Y;X'_1X'_2|X_3) \geq \eta$.

			\item[(b)] There exists $u \in \mathcal{M}_1$, $u \neq r$, $\vecx'_3\in \mathcal{X}^n_3$, and random variables $X_1',X_3'$ such that $(f_1(r),f_2(s),\vecx_3,\allowbreak  f_1(u), \vecx'_3, \vecy) \in T^{n}_{X_1 X_2 X_3 X_1'X_3' Y}$,  $D(P_{X_1'X_2X_3' Y}||P_{X_1'} P_{X_2} P_{X_3'}W)< \eta$ and $I(X_1X_2Y;X_1'|X_3) \geq \eta$.

			\item[(c)] There exists $u \in \mathcal{M}_1$, $u \neq r$, $\vecx_2\in \mathcal{X}^n_2$, $w\in \mathcal{M}_3$, and random variables $X_1',X_2', X_3'$ such that $(f_1(r),f_2(s),  \vecx_3,\allowbreak f_1(u),\vecx_2,f_3(w), \vecy) \in T^{n}_{X_1  X_2 X_3 X_1'X_2' X_3'Y}$, $D(P_{X_1'X_2'X_3' Y}||P_{X_1'} P_{X_2'} P_{X_3'} W)< \eta$ and $I(X_1X_2Y;X_1'X_3'|X_3) \geq \eta$.
	\end{description}

To analyze these, we define the following sets of distributions:
\begin{align}
&\cQ_1 = \{P_{X_1X_2X_3X'_1X'_2Y}\in \mathcal{P}^n_{\cX_1\times\cX_2\times\cX_3\times\cX_1\times\cX_2\times\cY}:\,P_{X_1X_2X_3Y}\in \cD_{\eta}\cap\cA^c_{\epsilon}, P_{X_1'X'_2X'_3 Y}\in \cD_{\eta}\nonumber\\
&\qquad \qquad \text{ for some }X'_3, \, P_{X_1}=P_{X'_1} = P_1,\, P_{X_2}=P_{X'_2}=P_2\text{ and } I(X_1X_2Y;X'_1X'_2|X_3) \geq \eta\}\label{eq:main_proof_q_1}\\
&\cQ_2 = \{P_{X_1X_2X_3X'_1Y}\in \mathcal{P}^n_{\cX_1\times\cX_2\times\cX_3\times\cX_1\times\cX_3\times\cY}:\,P_{X_1X_2X_3Y}\in \cD_{\eta}\cap\cA^c_{\epsilon}, P_{X_1'X_2X'_3 Y}\in \cD_{\eta} \nonumber\\
 &\qquad\qquad \text{ for some }X'_3, \, P_{X_1}=P_{X'_1} = P_1, \, P_{X_2} = P_2\text{ and } I(X_1X_2Y;X'_1|X_3) \geq \eta\}\label{eq:main_proof_q_2}\\
&\cQ_3 = \{P_{X_1X_2X_3X'_1X'_3Y}\in \mathcal{P}^n_{\cX_1\times\cX_2\times\cX_3\times\cX_1\times\cX_3\times\cY}:\,P_{X_1X_2X_3Y}\in \cD_{\eta}\cap\cA^c_{\epsilon}, P_{X_1'X'_2X'_3 Y}\in \cD_{\eta}\nonumber\\
&\qquad\qquad \text{ for some }X'_2, \, P_{X_1}=P_{X'_1} = P_1, \, P_{X_2} =P_2,\, P_{X'_3}=P_3\text{ and }  I(X_1X_2Y;X'_1X'_3|X_3) \geq \eta\}\label{eq:main_proof_q_3}
\end{align}
For $r,s \in \cM_1\times\cM_2$, $P_{X_1X_2X_3X'_1X'_2Y}\in \cQ_1$, $P_{X_1X_2X_3X'_1Y}\in \cQ_2$ and $P_{X_1X_2X_3X'_1X'_3Y}\in \cQ_3$, define the following sets:
\begin{align*}
&\cE_{r,s,1}(P_{X_1X_2X_3X'_1X'_2Y})= \{\vecy: \exists (u, v) \in \mathcal{M}_1\times \cM_2, u \neq r,v\neq s,\, (\vecx_{1r}, \vecx_{2s}, \vecx_3, \vecx_{1u},\vecx_{2v},  \vecy) \in T^{n}_{X_1  X_2 X_3X_1'X'_2Y}\}\\
&\cE_{r,s,2}(P_{X_1X_2X_3X'_1Y})=  \{\vecy: \exists u \in \mathcal{M}_1, u \neq r,\, (\vecx_{1r},\vecx_{2s}, \vecx_3,\vecx_{1u}, \vecy) \in T^{n}_{X_1 X_2 X_3 X_1' Y}\}\\
&\cE_{r,s,3}(P_{X_1X_2X_3X'_1X'_3Y}) =\{\vecy: \exists (u, t) \in \mathcal{M}_1\times\cM_3, u \neq r, \, (\vecx_{1r},\vecx_{2s}, \vecx_3, \vecx_{1u},\vecx_{3t}, \vecy) \in T^{n}_{X_1 X_2 X_3 X_1' X_3'Y}\}
\end{align*}
Thus, 
% \begin{align*}
% \{\vecy: (\vecx_{1r},\vecx_{2,s},\vecx_3, \vecy)\in T^n_{X_1X_2X_3Y}, P_{X_1X_2X_3Y}\in \cD_{\eta}, \phi_1(\vecy)\neq r\}\subseteq 
% \end{align*}
\begin{align}
\pink{P_{\cD_{\eta}}}&=\frac{1}{N_1N_2}\sum_{r,s}\sum_{\substack{ P_{X_1X_2X_3Y}\in \cD_{\eta}:\\(\vecx_{1r},\vecx_{2s},\vecx_3)\in T^n_{X_1X_2X_3}}}\sum_{\substack{\vecy\in T^n_{Y|X_1X_2X_3}(\vecx_{1r},\vecx_{2s},\vecx_3)\\\vecy\notin \cD^{(1)}_{r}}}W^n(\vecy|\vecx_{1r}, \vecx_{2s}, \vecx_3)\nonumber\\
& \leq \frac{1}{N_1N_2}\sum_{r,s}\Bigg\{\sum_{P_{X_1X_2X_3X'_1X'_2Y}\in \cQ_1}W^n(\cE_{r,s,1}(P_{X_1X_2X_3X'_1X'_2Y})|\vecx_{1r}, \vecx_{2s}, \vecx_3)\nonumber\\
&\qquad\qquad\qquad+\sum_{P_{X_1X_2X_3X'_1Y}\in \cQ_2}W^n(\cE_{r,s,2}(P_{X_1X_2X_3X'_1Y})|\vecx_{1r}, \vecx_{2s}, \vecx_3)\nonumber\\
&\qquad\qquad\qquad+\sum_{P_{X_1X_2X_3X'_1X'_3Y}\in \cQ_3}W^n(\cE_{r,s,3}(P_{X_1X_2X_3X'_1X'_3Y})|\vecx_{1r}, \vecx_{2s}, \vecx_3)\Bigg\}\nonumber\\
& \leq \frac{1}{N_1N_2}\sum_{r,s}\sum_{P_{X_1X_2X_3X'_1X'_2Y}\in \cQ_1}W^n(\cE_{r,s,1}(P_{X_1X_2X_3X'_1X'_2Y})|\vecx_{1r}, \vecx_{2s}, \vecx_3)\nonumber\\
&\qquad\qquad\qquad+\frac{1}{N_1N_2}\sum_{r,s}\sum_{P_{X_1X_2X_3X'_1Y}\in \cQ_2}W^n(\cE_{r,s,2}(P_{X_1X_2X_3X'_1Y})|\vecx_{1r}, \vecx_{2s}, \vecx_3)\nonumber\\
&\qquad\qquad\qquad+\frac{1}{N_1N_2}\sum_{r,s}\sum_{P_{X_1X_2X_3X'_1X'_3Y}\in \cQ_3}W^n(\cE_{r,s,3}(P_{X_1X_2X_3X'_1X'_3Y})|\vecx_{1r}, \vecx_{2s}, \vecx_3)\nonumber\\
&:=\pink{P_{\textsf{a}}+P_{\textsf{b}}+P_{\textsf{c}}}\label{eq_error}
\end{align}
\pink{where 
\begin{align}
P_{\textsf{a}}&:=\frac{1}{N_1N_2}\sum_{r,s}\sum_{P_{X_1X_2X_3X'_1X'_2Y}\in \cQ_1}W^n(\cE_{r,s,1}(P_{X_1X_2X_3X'_1X'_2Y})|\vecx_{1r}, \vecx_{2s}, \vecx_3),\label{eq:main_proof_error_a}\\
P_{\textsf{b}}&:=\frac{1}{N_1N_2}\sum_{r,s}\sum_{P_{X_1X_2X_3X'_1Y}\in \cQ_2}W^n(\cE_{r,s,2}(P_{X_1X_2X_3X'_1Y})|\vecx_{1r}, \vecx_{2s}, \vecx_3),\label{eq:main_proof_error_b}\\
P_{\textsf{c}} &:= \frac{1}{N_1N_2}\sum_{r,s}\sum_{P_{X_1X_2X_3X'_1X'_3Y}\in \cQ_3}W^n(\cE_{r,s,3}(P_{X_1X_2X_3X'_1X'_3Y})|\vecx_{1r}, \vecx_{2s}, \vecx_3).\label{eq:main_proof_error_c}
\end{align}}
% $P^1_{r,s}(P_{X_1X_2X_3X'_1X'_2Y}) := W^n(\cE_{r,s,1}(P_{X_1X_2X_3X'_1X'_2Y})|\vecx_{1r}, \vecx_{2s}, \vecx_3)$, $P^2_{r,s}(P_{X_1X_2X_3X'_1Y}) := W^n(\cE_{r,s,2}(P_{X_1X_2X_3X'_1Y})|\vecx_{1r}, \vecx_{2s}, \vecx_3)$ and $P^3_{r,s}(P_{X_1X_2X_3X'_1X'_3Y}) := W^n(\cE_{r,s,3}(P_{X_1X_2X_3X'_1X'_3Y})|\vecx_{1r}, \vecx_{2s}, \vecx_3)$.}
We have three terms in the summation on the RHS of \eqref{eq_error}. We will analyze them one after the other. We will start with the first term. \pink{The analysis follows the flowchart given in Figure~\ref{fig:flowchart2}.}\\

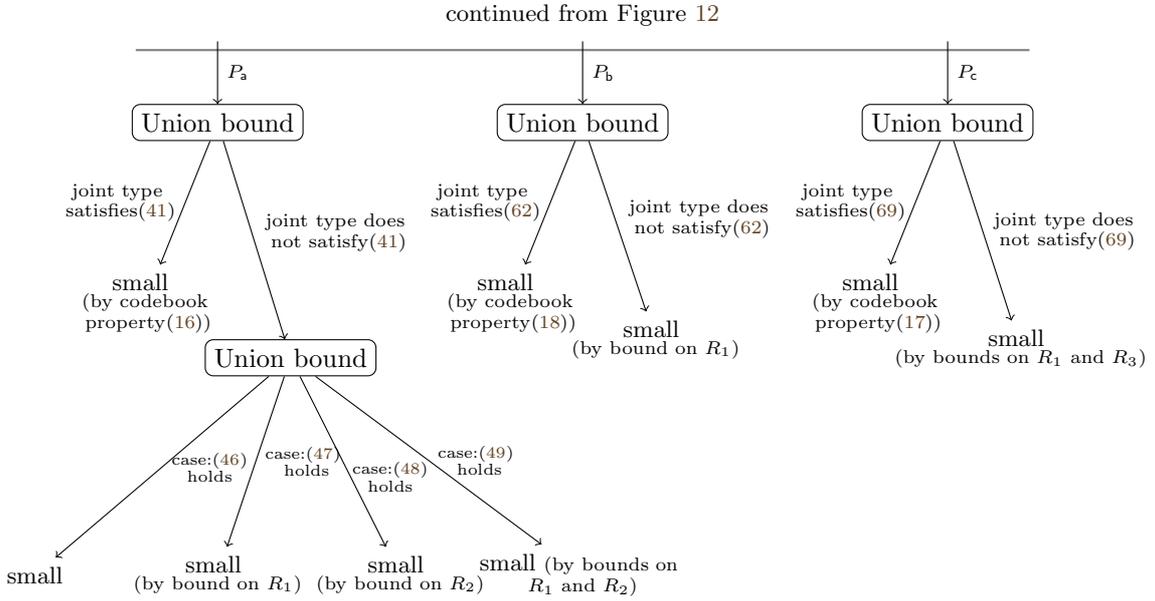
\begin{figure}[!h]
\centering
\begin{tikzpicture}[scale=1.2]
 \node (A) at (-5, 0.8)  {};
 \node (B) at (5, 0.8)  {};
 \node (G10) at (-4, 1)  {};
 \node (G20) at (0, 1) {};
 \node (G30) at (4, 1)  {};

 \node (G1) at (-4, 0) [rounded corners=3pt, draw] {Union bound};
 \node (G11) at (-4.8, -2)  {$\substack{\text{\small  small } \\ \text{(by codebook}\\\text{ property\eqref{code_eq3})}}$};
  \node (G12) at (-3.2, -2.6) [rounded corners=3pt, draw] {Union bound};

  \node (G122) at (-4, -5) {$\substack{\text{\small small }\\ \text{(by bound on $R_1$)}}$};
  \node (G123) at (-2, -5)  {$\substack{\text{\small small }\\ \text{(by bound on $R_2$)}}$};
  \node (G121) at (-6, -5)  {\small small};
  \node (G124) at (0, -5)  {$\substack{\text{{\small small} (by bounds on }\\ \text{$R_1$ and $R_2$)}}$};

 \node (G2) at (0, 0) [rounded corners=3pt, draw] {Union bound};
 \node (G21) at (-0.8, -2)  {$\substack{\text{\small small }\\ \text{(by codebook}\\ \text{ property\eqref{code_eq4})}}$};
  \node (G22) at (0.8, -2.4)  {$\substack{\text{\small small }\\ \text{(by bound on $R_1$)}}$};
 \node (G3) at (4, 0) [rounded corners=3pt, draw] {Union bound};
 \node (G31) at (3.2, -2)  {$\substack{\text{\small small }\\\text{(by codebook}\\ \text{ property\eqref{code_eq5})}}$};
  \node (G32) at (4.8, -2.5) {$\substack{\text{\small small }\\ \text{(by bounds on  $R_1$ and $R_3$)}}$};

  \draw[gray, thick, -] (A) -- (B) node[yshift = 0.2cm, midway, above] {\small\textcolor{black}{continued from Figure~\ref{fig:flowchart1}}};

 	\draw[->] (G10) -- (G1) node[midway, right]{$\scriptstyle P_{\textsf{a}}$};
 	\draw[->] (G20) -- (G2) node[midway, right]{$\scriptstyle P_{\textsf{b}}$};
 	\draw[->] (G30) -- (G3) node[midway, right]{$\scriptstyle P_{\textsf{c}}$};

 	\draw[->] (G12) -- (G121) node[midway, right]{\small$\substack{\text{case:\eqref{code_eq3_codn2_a}}\\ \text{holds}}$};
 	\draw[->] (G12) -- (G122) node[midway, right]{\small$\substack{\text{case:\eqref{code_eq3_codn2_b}}\\\text{ holds}}$};
 	\draw[->] (G12) -- (G123) node[midway, right, yshift = -0.2cm]{\small$\substack{\text{case:\eqref{code_eq3_codn2_c}}\\ \text{holds}}$};
 	\draw[->] (G12) -- (G124) node[midway, right]{\small$\substack{\text{case:\eqref{code_eq3_codn2_d}}\\\text{ holds}}$};

 	\draw[->] (G1) -- (G11) node[midway, left]{$\substack{\text{joint type }\\\text{satisfies} \eqref{eq:p_a_atypical}}$};
 	\draw[->] (G1) -- (G12) node[midway, right, yshift= 0.1cm]{$\substack{\text{joint type does}\\\text{ not satisfy} \eqref{eq:p_a_atypical}}$};
 	\draw[->] (G2) -- (G21) node[midway, left]{$\substack{\text{joint type }\\\text{satisfies} \eqref{eq:p_b_atypical}}$};
 	\draw[->] (G2) -- (G22) node[midway, right, yshift = 0.1cm]{$\substack{\text{joint type does}\\\text{ not satisfy} \eqref{eq:p_b_atypical}}$};
 	\draw[->] (G3) -- (G31) node[midway, left]{$\substack{\text{joint type }\\\text{satisfies} \eqref{eq:p_c_atypical}}$};
 	\draw[->] (G3) -- (G32) node[midway, right]{$\substack{\text{joint type does}\\\text{ not satisfy} \eqref{eq:p_c_atypical}}$};
\end{tikzpicture}
\caption{Flowchart, continued from Figure~\ref{fig:flowchart1}, depicting the flow of analysis of $P_{\textsf{a}}$, $P_{\textsf{b}}$ and $P_{\textsf{c}}$.}\label{fig:flowchart2}
\end{figure}
\noindent\pink{{\em \underline{\textbf{Analysis of $P_{\textsf{a}}$}}}}\\
We will follow the flowchart given in Figure~\ref{fig:flowchart2}. 
\pink{From \eqref{eq:main_proof_error_a}, }
\begin{align}\label{eq:error1}
\pink{P_{\textsf{a}} = }\sum_{P_{X_1X_2X_3X'_1X'_2Y}\in \cQ_1}\frac{1}{N_1N_2}\sum_{r,s}W^n(\cE_{r,s,1}(P_{X_1X_2X_3X'_1X'_2Y})|\vecx_{1r}, \vecx_{2s}, \vecx_3).
\end{align}
Let 
\begin{align*}
P^1_{r,s}(P_{X_1X_2X_3X'_1X'_2Y}) := W^n(\cE_{r,s,1}(P_{X_1X_2X_3X'_1X'_2Y})|\vecx_{1r}, \vecx_{2s}, \vecx_3).
\end{align*}
% \pink{\begin{align}\label{eq:error1}
% \sum_{P_{X_1X_2X_3X'_1X'_2Y}\in \cQ_1}\frac{1}{N_1N_2}\sum_{r,s}P^1_{r,s}(P_{X_1X_2X_3X'_1X'_2Y})
% \end{align}
% for
% \begin{align*}
% P^1_{r,s}(P_{X_1X_2X_3X'_1X'_2Y}) = W^n(\cE_{r,s,1}(P_{X_1X_2X_3X'_1X'_2Y})|\vecx_{1r}, \vecx_{2s}, \vecx_3).
% \end{align*}}

Note that $P^1_{r,s}(P_{X_1X_2X_3X'_1X'_2Y})$ is upper bounded by the probability of error when $r$ and $s$ are sent by user 1 and user 2 respectively. So, $P^1_{r,s}(P_{X_1X_2X_3X'_1X'_2Y})\leq 1$. Thus, from \eqref{eq:error1}, we see that it is sufficient to show that $P^1_{r,s}(P_{X_1X_2X_3X'_1X'_2Y})$ falls exponentially. 
Let $P_{\text{a, atypical}}$ be the set of joint types satisfying 
\begin{align}\label{eq:p_a_atypical}
I(X_1;X_2X'_1X'_2X_3)+I(X_2;X_1'X_2'X_3)\geq \left|\left|R_1-I(X'_1;X_3)\right|^+ +\left|R_2-I(X'_2;X_3)\right|^+-I(X'_1;X'_2|X_3)\right|^+ + \epsilon, 
\end{align}
From \eqref{code_eq3}, note that when $P_{X_1X_2X_3X'_1X'_2Y}\in \cQ_1$ satisfies \eqref{eq:p_a_atypical}
\begin{align}
&\frac{1}{N_1N_2}\sum_{r,s}P^1_{r,s}(P_{X_1X_2X_3X'_1X'_2Y})\\
&= \frac{1}{N_1N_2}\sum_{\substack{(r,s):\exists (u, v) \text{ satisfying }\\(\vecx_{1r},\vecx_{2s}, \vecx_3, \vecx_{1u},\vecx_{2v}) \in T^{n}_{X_1  X_2 X_3X_1'X'_2} }}W^n\inp{\inb{\vecy:\vecy\in T^{n}_{Y|X_1X_2X_3X_1'X'_2}(\vecx_{1r},\vecx_{2s}, \vecx_3, \vecx_{1u},\vecx_{2v})}\Big|\vecx_{1r}, \vecx_{2s}, \vecx_3}\\
&\leq \frac{1}{N_1N_2}|\{(r, s)\in [1:N_1]\times[1:N_2]: \exists u, v\in  [1:N_1]\times[1:N_2] \, u\neq r, v\neq r, \, (\vecx_{1r}, \vecx_{2s}, \vecx_{1u}, \vecx_{2v},\vecx_3)\in T^{n}_{X_1X_2X_1^{'}X'_2X_3} \}|\nonumber\\
& < \exp\left(-\frac{n\epsilon}{2}\right).\label{eq:small_fraction}
\end{align}
Otherwise, when
\begin{align}
I(X_1;X_2X'_1X'_2X_3)+I(X_2;X_1'X_2'X_3)< \left|\left|R_1-I(X'_1;X_3)\right|^+ +\left|R_2-I(X'_2;X_3)\right|^+-I(X'_1;X'_2|X_3)\right|^+ + \epsilon, \label{code_eq3_codn2}
\end{align}
\pink{depending on the evaluation of $\left|\left|R_1-I(X'_1;X_3)\right|^+ +\left|R_2-I(X'_2;X_3)\right|^+-I(X'_1;X'_2|X_3)\right|^+$, we consider four cases:
\begin{align}
&\left|\left|R_1-I(X'_1;X_3)\right|^+ +\left|R_2-I(X'_2;X_3)\right|^+-I(X'_1;X'_2|X_3)\right|^+ = 0, \label{code_eq3_codn2_a}\\
&\left|\left|R_1-I(X'_1;X_3)\right|^+ +\left|R_2-I(X'_2;X_3)\right|^+-I(X'_1;X'_2|X_3)\right|^+ =  R_1-I(X'_1;X_3)-I(X'_1;X'_2|X_3) , \label{code_eq3_codn2_b}\\
&\left|\left|R_1-I(X'_1;X_3)\right|^+ +\left|R_2-I(X'_2;X_3)\right|^+-I(X'_1;X'_2|X_3)\right|^+= R_2-I(X'_2;X_3)-I(X'_1;X'_2|X_3), \label{code_eq3_codn2_c}\\
&\left|\left|R_1-I(X'_1;X_3)\right|^+ +\left|R_2-I(X'_2;X_3)\right|^+-I(X'_1;X'_2|X_3)\right|^+ = R_1-I(X'_1;X_3) + R_2-I(X'_2;X_3)-I(X'_1;X'_2|X_3), \label{code_eq3_codn2_d}
\end{align}
Before proceeding further, we first argue that \eqref{code_eq3_codn2_a}-\eqref{code_eq3_codn2_d} are the only possible evaluations of \\$\left|\left|R_1-I(X'_1;X_3)\right|^+ +\left|R_2-I(X'_2;X_3)\right|^+-I(X'_1;X'_2|X_3)\right|^+$. To see this, first suppose $R_1\leq I(X'_1;X_3)$. If $R_2\leq I(X'_2;X_3)$, we get \eqref{code_eq3_codn2_a} as mutual information is always non-negative. When $R_2 > I(X'_2;X_3)$, if $R_2 > I(X'_2;X_3) + I(X'_1;X'_2|X_3)$, we get \eqref{code_eq3_codn2_c}. Otherwise, we get \eqref{code_eq3_codn2_a}. Next, suppose $R_1> I(X'_1;X_3)$. In this case, if $R_2\leq I(X'_2;X_3)$, depending on whether $R_1 > I(X'_1;X_3) + I(X'_1;X'_2|X_3)$ or not, we get \eqref{code_eq3_codn2_b} or \eqref{code_eq3_codn2_a} respectively. When $R_2> I(X'_2;X_3)$, we get \eqref{code_eq3_codn2_d} if $R_1 + R_2>I(X'_1;X_3) + I(X'_2;X_3)+I(X'_1;X'_2|X_3)$. Otherwise, we get \eqref{code_eq3_codn2_a}.
}

{Analysing each of the cases \eqref{code_eq3_codn2_a}-\eqref{code_eq3_codn2_d} separately, }we will show that $P^1_{r,s}(P_{X_1X_2X_3X'_1X'_2Y})\rightarrow 0$ exponentially for each $P_{X_1X_2X_3X'_1X'_2Y}\in \cQ_1$. We will show this by using the following upper bound.
\begin{align}
&P^1_{r,s}(P_{X_1X_2X_3X'_1X'_2Y}) = W^n(\cE_{r,s,1}(P_{X_1X_2X_3X'_1X'_2Y})|\vecx_{1r}, \vecx_{2s}, \vecx_3)\nonumber\\
&\quad\leq \sum_{\substack{(u, v):(\vecx_{1r},\vecx_{2s}, \vecx_{3}, \vecx_{1u}, \vecx_{2v})\\\in T^{n}_{X_1 X_2 X_3 X_1' X_2'}}}\quad\sum_{\vecy\in T^n_{Y|X_1X_2X_3X'_1X'_2}(\vecx_{1r},\vecx_{2s}, \vecx_{3}, \vecx_{1u}, \vecx_{2v})}W^n(\vecy|\vecx_{1r},\vecx_{2s}, \vecx_{3})\nonumber\\
&\quad\leq \sum_{\substack{(u, v):(\vecx_{1r},\vecx_{2s}, \vecx_{3}, \vecx_{1u}, \vecx_{2v})\\\in T^{n}_{X_1 X_2 X_3 X_1' X_2'}}}\exp{\left(-n(I(Y;X_1'X_2'|X_1X_2X_3)-\epsilon)\right)}\nonumber\\
&\stackrel{(a)}{\leq}\exp{\left(n\left(\left||R_1-I(X_1';X_1X_2X_3)|^{+}+|R_2-I(X_2';X_1X_2X_3)|^{+}-I(X_1';X_2'|X_1X_2X_3)\right|^{+}-I(Y;X_1'X_2'|X_1X_2X_3)+3\epsilon/2\right)\right)}\label{eq:upper_bound1}
\end{align}
where $(a)$ follows from \eqref{lemma_eq7a}.\\
\pink{{\em \underline{\textbf{Case 1: \eqref{code_eq3_codn2_a} holds.}}}}\\
% Note that $\left|\left|R_1-I(X'_1;X_3)\right|^+ +\left|R_2-I(X'_2;X_3)\right|^+-I(X'_1;X'_2|X_3)\right|^+ = 0$ if at least one of the following three conditions hold.
% \begin{align}
% &R_1\leq I(X'_1;X_3) \text{ and } R_2\leq I(X'_2;X'_1X_3)\label{eq:rate_cases1}\\
% &R_1\leq I(X'_1;X'_2X_3) \text{ and } R_2\leq I(X'_2;X_3)\label{eq:rate_cases2}\\
% &\pink{R_1> I(X'_1;X_3), R_2> I(X'_2;X_3)}\text{ and }R_1+R_2\leq I(X'_1;X_3)+I(X'_2;X_3)+I(X'_1;X'_2|X_3)\label{eq:rate_cases3}
% \end{align}
\pink{We first note that when 
\begin{align}\label{eq:rate_cases1aa}
\left|\left|R_1-I(X'_1;X_3)\right|^+ +\left|R_2-I(X'_2;X_3)\right|^+-I(X'_1;X'_2|X_3)\right|^+ = 0
\end{align} holds, }\eqref{code_eq3_codn2} implies that $I(X_1;X_2X'_1X'_2X_3)+I(X_2;X_1'X_2'X_3)<\epsilon$. This further implies the following:
\begin{align}
\epsilon &> I(X_1;X_2X'_1X'_2X_3)+I(X_2;X_1'X_2'X_3)\nonumber\\
&\stackrel{\pink{(a)}}{\geq} I(X_1;X'_1X'_2|X_2X_3)+I(X_2;X_1'X_2'|X_3)\nonumber\\
&= I(X_1X_2;X'_1X'_2|X_3)\label{eq:rate_cases1bb}
\end{align}
\pink{where $(a)$ holds because $I(X_1;X_2X'_1X'_2X_3) = I(X_1;X'_1X'_2|X_2X_3) + I(X_1;X_2X_3)$ and $I(X_1;X_2X_3)\geq 0$ as mutual information is always non-negative. Next, we will argue that when \eqref{eq:rate_cases1aa} holds, the condition $$\left||R_1-I(X_1';X_1X_2X_3)|^{+}+|R_2-I(X_2';X_1X_2X_3)|^{+}-I(X_1';X_2'|X_1X_2X_3)\right|^{+} = 0$$ also holds and thus \eqref{eq:upper_bound1} evaluates to $\exp{\left(n\left(0-I(Y;X_1'X_2'|X_1X_2X_3)+3\epsilon/2\right)\right)}$.}
\pink{We show this by contradiction. Suppose $\left||R_1-I(X_1';X_1X_2X_3)|^{+}+|R_2-I(X_2';X_1X_2X_3)|^{+}-I(X_1';X_2'|X_1X_2X_3)\right|^{+} > 0$. This implies that at least one of the following three conditions hold.
\begin{align}
&R_1 > I(X_1';X_1X_2X_3) + I(X_1';X_2'|X_1X_2X_3),\label{eq:rate_cases1a}\\
&R_2 > I(X_2';X_1X_2X_3) + I(X_1';X_2'|X_1X_2X_3),\label{eq:rate_cases2b}\\
\text{ \em or }&R_1> I(X_1';X_1X_2X_3), R_2> I(X_2';X_1X_2X_3)\nonumber\\&\text{ and }R_1+R_2 > I(X_1';X_1X_2X_3)+I(X_2';X_1X_2X_3)+I(X_1';X_2'|X_1X_2X_3).\label{eq:rate_cases3c}
\end{align}
If \eqref{eq:rate_cases1a} holds, then 
\begin{align*}
R_1 &> I(X_1';X_1X_2X_3) + I(X_1';X_2'|X_1X_2X_3)\\
&= I(X_1';X_2'X_1X_2X_3)\\
& = I(X_1';X_3) + I(X_1';X_2'|X_3) + I(X_1';X_1X_2|X_2'X_3)\\
& \stackrel{(a)}{\geq}I(X_1';X_3) + I(X_1';X_2'|X_3)
\end{align*} where $(a)$ follows from non-negativity of mutual information. Note that the inequality $(a)$ contradicts \eqref{eq:rate_cases1aa}. The condition \eqref{eq:rate_cases2b} is symmetric and hence leads to a contradiction again. If \eqref{eq:rate_cases3c} holds, then $R_1>I(X_1';X_1X_2X_3) \geq I(X_1';X_3)$ and $R_2> I(X_2';X_1X_2X_3)\geq I(X_2';X_3)$. Furthermore, we have
\begin{align*}
R_1+R_2 &> I(X_1';X_1X_2X_3)+I(X_2';X_1X_2X_3)+I(X_1';X_2'|X_1X_2X_3)\\
& = I(X_1';X_2'X_1X_2X_3) + I(X_2';X_1X_2X_3)\\
&\geq I(X_1';X_3) + I(X_1';X_2'|X_3) + I(X_2';X_1X_2X_3)\\
&\geq I(X_1';X_3) + I(X_1';X_2'|X_3) + I(X_2';X_3).
\end{align*}
These conditions together contradict \eqref{eq:rate_cases1aa}.} 
Thus,  
\begin{align*}
P^1_{r,s}(P_{X_1X_2X_3X'_1X_2'Y})&\leq\exp{\left(n\left(-I(Y;X_1'X_2'|X_1X_2X_3)+3\epsilon/2\right)\right)}\\
&\le \exp{\left(n\left(I(X_1X_2;X_1'X_2'|X_3)-I(X_1X_2Y;X_1'X_2'|X_3)+3\epsilon/2\right)\right)}\\
&\stackrel{(a)}{\leq} \exp{\left(n\left(\epsilon-\eta+3\epsilon/2\right)\right)}\\
&= \exp{\left(n\left(5\epsilon/2-\eta\right)\right)}\\
&\rightarrow 0 \text{ as }\pink{\eta > 6\epsilon.}
\end{align*}
where $(a)$ uses the fact that $I(X_1X_2;X_1'X_2'|X_3)<\epsilon$ \pink{(see \eqref{eq:rate_cases1bb})} and $I(X_1X_2Y;X_1'X_2'|X_3)\geq \eta$ \pink{(follows from the definition of $\cQ_1$. See \eqref{eq:main_proof_q_1}.)}.

\noindent \pink{{\em \underline{\textbf{Case 2: \eqref{code_eq3_codn2_b} holds.}}}}\\
We consider the case when $\left|\left|R_1-I(X'_1;X_3)\right|^+ +\left|R_2-I(X'_2;X_3)\right|^+-I(X'_1;X'_2|X_3)\right|^+ = R_1-I(X'_1;X'_2X_3)$.  In this case \eqref{code_eq3_codn2} evaluates to $I(X_1;X_2X'_1X'_2X_3)+I(X_2;X_1'X_2'X_3)< R_1-I(X'_1;X'_2X_3) + \epsilon$. This implies the following: 
\begin{align*}
\epsilon&> I(X_1;X_2X'_1X'_2X_3)+I(X_2;X_1'X_2'X_3)+I(X'_1;X'_2X_3)-R_1\\
&\geq  I(X_1;X_1'|X'_2X_2X_3)+I(X_2;X_1'|X_2'X_3)+I(X'_1;X'_2X_3)-R_1\\
&=I(X'_1; X_1X_2X_3X'_2) - R_1.
\end{align*}
Thus,
\begin{align*}
R_1-I(X'_1;X_1X_2X_3X'_2)\geq -\epsilon.
\end{align*}
This implies that 
\begin{align*}
|R_1-I(X_1';X_1X_2X_3)|^{+} \leq R_1-I(X'_1;X_1X_2X_3) + \epsilon
\end{align*}
and  we get the following upper bound on \eqref{eq:upper_bound1}:
\begin{align}
&\exp{\left(n\left(\left||R_1-I(X_1';X_1X_2X_3)|^{+}+|R_2-I(X_2';X_1X_2X_3)|^{+}-I(X_1';X_2'|X_1X_2X_3)\right|^{+}-I(Y;X_1'X_2'|X_1X_2X_3)+3\epsilon/2\right)\right)}\nonumber\\
&\leq \exp{\left(n\left(R_1-I(X_1';X_1X_2X_3X_2') +\epsilon+0-I(Y;X_1'X_2'|X_1X_2X_3)+3\epsilon/2\right)\right)}\label{eq:local_upper_bd}\\
&\leq\exp{\left(n\left(R_1-I(X_1';X_1X_2X_3X_2'Y)+5\epsilon/2\right)\right)}\nonumber\\
&\leq\exp{\left(n\left(R_1-I(X_1';X_2'Y)+5\epsilon/2\right)\right)}\nonumber\\
&{\leq}\exp{\left(n\left(R_1-I(\tilde{X}_1;\tilde{Y}|\tilde{X}_2) + \gamma+5\epsilon/2\right)\right)}\nonumber
\end{align}
where $P_{\tilde{X}_1\tilde{X}_2\tilde{X}_3\tilde{Y}} \defineqq P_{X'_1}\times P_{X_2}\times P_{X'_3}\times W$ and $\gamma$ is chosen to satisfy $I(X_1';X_2'Y) \geq I(\tilde{X}_1;\tilde{Y}|\tilde{X}_2)-\gamma$. Note that $P_{X'_1X_2'X'_3Y}$ is such that $D(P_{X_1'X_2X'_3 Y}||P_{X'_1}\times P_{X_2}\times P_{X'_3}\times W)< \eta$ where $\eta$ can be chosen arbitrarily small. Thus, $P_{X_1'X_2X'_3 Y}$ is arbitrarily close to $P_{\tilde{X}_1\tilde{X}_2\tilde{X}_3\tilde{Y}}$ and $\gamma$ can be chosen arbitrarily small. Thus, $P^1_{r,s}(P_{X_1X_2X_3X'_1X_2'Y})\rightarrow 0$ exponentially, if
\begin{align*}
R_1&< I(\tilde{X}_1;\tilde{Y}|\tilde{X}_2)- \gamma-5\epsilon/2.
\end{align*}
Minimizing this in the limit of $n\rightarrow \infty$ and $\epsilon, \eta \rightarrow 0$ over all $P_{X_1X_2X_3X'_1X_2'Y}\in \cQ_1$ is same as minimizing $I(\tilde{X}_1;\tilde{Y}|\tilde{X}_2)$ over $P_{\tilde{X}_1\tilde{X}_2\tilde{X}_3\tilde{Y}}\in \cP_3$ where $\cP_3$ is defined as 
\begin{align*}
\cP_3\defineqq \{P_{X_1X_2X_3Y}: P_{X_1X_2X_3Y} = P_{X_1}\times P_{X_2}\times Q_{X_3}\times W \text{ for some }Q_{X_3}\}.
\end{align*}
Using definition of $\cP_3$, we obtain the following bound on $R_1$
\begin{align}\label{eq:R_1_bound}
R_1&< \min_{P_{X_1X_2X_3Y}\in \cP_3}I(X_1;Y|X_2).
\end{align}

\noindent \pink{{\em \underline{\textbf{Case 3: \eqref{code_eq3_codn2_c} holds.}}}}\\
Suppose $\left|\left|R_1-I(X'_1;X_3)\right|^+ +\left|R_2-I(X'_2;X_3)\right|^+-I(X'_1;X'_2|X_3)\right|^+ = R_2-I(X'_2;X'_1X_3)$. In this case \eqref{code_eq3_codn2} evaluates to $I(X_1;X_2X'_1X'_2X_3)+I(X_2;X_1'X_2'X_3)< R_2-I(X'_2;X'_1X_3) + \epsilon$. Thus, 
\begin{align*}
\epsilon&> I(X_1;X_2X'_1X'_2X_3)+I(X_2;X_1'X_2'X_3)+I(X'_2;X'_1X_3)-R_2\\
&\geq  I(X_1;X_2'|X'_1X_2X_3)+I(X_2;X_2'|X_1'X_3)+I(X'_2;X'_1X_3)-R_2\\
&=I(X'_2; X_1X_2X_3X'_1) - R_2.
\end{align*}
This implies that
\begin{align*}
R_2-I(X'_2;X_1X_2X_3X'_1)\geq -\epsilon.
\end{align*}
Thus, $|R_2-I(X'_2;X_1X_2X_3X'_1)|^+ \leq R_2-I(X'_2;X_1X_2X_3X'_1) +\epsilon$. Substituting this in ~\eqref{eq:upper_bound1}, we get the following upper bound:
\begin{align*}
\exp{\left(n\left(0 + R_2-I(X_2';X_1X_2X_3X_1') +\epsilon-I(Y;X_1'X_2'|X_1X_2X_3)+3\epsilon/2\right)\right)}
\end{align*}
This is same as the upper bound in \eqref{eq:local_upper_bd} with $X'_1$ and $X'_2$ interchanged, and $R_1$ replaced by $R_2$. Thus, we can do a symmetric analysis as in the previous case to obtain the following bound on $R_2$:
\begin{align}\label{eq:R_2_bound}
R_2&< \min_{P_{X_1X_2X_3Y}\in \cP_3}I(X_2;Y|X_1)
\end{align}

\noindent\pink{{\em \underline{\textbf{Case 4: \eqref{code_eq3_codn2_d} holds.}}}}\\
Suppose $\left|\left|R_1-I(X'_1;X_3)\right|^+ +\left|R_2-I(X'_2;X_3)\right|^+-I(X'_1;X'_2|X_3)\right|^+ = R_1-I(X'_1;X_3) + R_2-I(X'_2;X'_1X_3)$.   In this case \eqref{code_eq3_codn2} evaluates to $I(X_1;X_2X'_1X'_2X_3)+I(X_2;X_1'X_2'X_3)< R_1 + R_2-I(X'_1;X_3) + I(X'_2;X'_1X_3) + \epsilon$. Thus, 
\begin{align*}
\epsilon&> I(X_1;X_2X'_1X'_2X_3)+I(X_2;X_1'X_2'X_3)+I(X'_1;X_3) + I(X'_2;X'_1X_3)-R_1-R_2\\
&\geq  I(X_1;X'_1|X_2X_3) + I(X_1;X'_2|X'_1X_2X_3) +I(X_2;X_1'|X_3) +I(X_2;X_2'|X_1'X_3)  +I(X'_1;X_3)+I(X'_2;X'_1X_3)-R_1-R_2\\
&=I(X'_1;X_1X_2X_3) + I(X'_2; X_1X_2X_3X'_1) - R_1- R_2.
\end{align*}
This implies that
\begin{align*}
R_1 - I(X'_1;X_1X_2X_3) + R_2 - I(X'_2; X_1X_2X_3) - I(X'_1; X_2'|X_1X_2X_3) \geq -\epsilon.
\end{align*}
Note that 
\begin{align*}
&\left||R_1 - I(X'_1;X_1X_2X_3)|^+ + |R_2 - I(X'_2; X_1X_2X_3)|^+ - I(X'_1; X_2'|X_1X_2X_3)\right|^+\\
&\qquad\geq R_1 - I(X'_1;X_1X_2X_3) + R_2 - I(X'_2; X_1X_2X_3) - I(X'_1; X_2'|X_1X_2X_3)\\
&\qquad\geq -\epsilon.
\end{align*}
So, 
\begin{align*}
\left||R_1 - I(X'_1;X_1X_2X_3)|^+ + |R_2 - I(X'_2; X_1X_2X_3)|^+ - I(X'_1; X_2'|X_1X_2X_3)\right|^+\\
\leq R_1 - I(X'_1;X_1X_2X_3) + R_2 - I(X'_2; X_1X_2X_3) - I(X'_1; X_2'|X_1X_2X_3) +\epsilon.
\end{align*}
Thus,
\begin{align}
P^1_{r,s}(P_{X_1X_2X_3X'_1X_2'Y}) &\leq \exp{\left(n\left(R_1-I(X_1';X_1X_2X_3)+R_2-I(X_2';X_1X_2X_3X'_1)+ \epsilon-I(Y;X_1'X_2'|X_1X_2X_3)+3\epsilon/2\right)\right)}\label{eq:r_1_r_2_large}\\
&=\exp{\left(n\left(R_1-I(X_1';X_1X_2X_3)+R_2-I(X_2';X_1X_2X_3X'_1)-I(Y;X_1'X_2'|X_1X_2X_3)+5\epsilon/2\right)\right)}\nonumber\\
&\leq\exp{\left(n\left(R_1 +R_2-I(X_1'X_2';X_1X_2X_3)-I(Y;X_1'X_2'|X_1X_2X_3)+5\epsilon/2\right)\right)}\nonumber\\
&\leq\exp{\left(n\left(R_1 +R_2-I(X_1'X_2';X_1X_2X_3Y)+5\epsilon/2\right)\right)}\nonumber\\
&\leq\exp{\left(n\left(R_1 +R_2-I(X_1'X_2';Y)+5\epsilon/2\right)\right)}\nonumber
\end{align}
Following similar steps as earlier, we obtain the following sum rate bound
\begin{align}\label{eq:R_1_R_2_bound}
R_1 + R_2 &< \min_{P_{X_1X_2X_3Y}\in \cP_3}I(X_1X_2;Y).
\end{align}
\pink{{\em \underline{\textbf{Analysis of $P_{\textsf{b}}$}}}}\\
Now, we will look at the second term in \eqref{eq_error}, which is \pink{(see \eqref{eq:main_proof_error_b})}, 
\begin{align}\label{eq:error12}
\pink{P_{\textsf{b}}:=}\sum_{P_{X_1X_2X_3X'_1Y}\in \cQ_2}\frac{1}{N_1N_2}\sum_{r,s}W^n(\cE_{r,s,2}(P_{X_1X_2X_3X'_1Y})|\vecx_{1r}, \vecx_{2s}, \vecx_3).
\end{align}
Let 
\begin{align*}
P^2_{r,s}(P_{X_1X_2X_3X'_1Y}) := W^n(\cE_{r,s,2}(P_{X_1X_2X_3X'_1Y})|\vecx_{1r}, \vecx_{2s}, \vecx_3).
\end{align*}
% \pink{\begin{align}\label{eq:error12}
% \sum_{P_{X_1X_2X_3X'_1Y}\in \cQ_2}\frac{1}{N_1N_2}\sum_{r,s}P^2_{r,s}(P_{X_1X_2X_3X'_1Y})
% \end{align}
% for
% \begin{align*}
% P^2_{r,s}(P_{X_1X_2X_3X'_1Y}) = W^n(\cE_{r,s,2}(P_{X_1X_2X_3X'_1Y})|\vecx_{1r}, \vecx_{2s}, \vecx_3).
% \end{align*}}
From \eqref{code_eq4}, we see that when $P_{X_1X_2X_3X'_1Y}$ satisfies 
\begin{align}\label{eq:p_b_atypical}
I(X_i;X_jX'_iX_k)+I(X_j;X_i'X_k)\geq \left|R_i-I(X'_i;X_k)\right|^+  + \epsilon,
\end{align}
\begin{align}
&\frac{1}{N_1N_2}\sum_{r,s}P^2_{r,s}(P_{X_1X_2X_3X'_1Y})\nonumber\\
&= \frac{1}{N_1N_2}\sum_{\substack{(r,s):\exists u \text{ satisfying }\\(\vecx_{1r},\vecx_{2s}, \vecx_3, \vecx_{1u}) \in T^{n}_{X_1  X_2 X_3X_1'} }}W^n\inp{\inb{\vecy:\vecy\in T^{n}_{Y|X_1X_2X_3X_1'}(\vecx_{1r},\vecx_{2s}, \vecx_3, \vecx_{1u})}\Big|\vecx_{1r}, \vecx_{2s}, \vecx_3}\\
&\leq \frac{1}{N_1N_2}|\{(r, s)\in [1:N_1]\times[1:N_2]: \exists u\in [1:N_1] \, u\neq r,\, (\vecx_{1r}, \vecx_{2s},\vecx_3, \vecx_{1u})\in T^{n}_{X_1X_2 X_3 X_1^{'}} \}|\nonumber\\
& < \exp\left(-\frac{n\epsilon}{2}\right).\label{eq:small_fraction2}
\end{align}
Otherwise, when
\begin{align}
I(X_i;X_jX'_iX_k)+I(X_j;X_i'X_k)<\left|R_i-I(X'_i;X_k)\right|^+  + \epsilon,\label{eq:cond12}
\end{align}
we will show that $P^2_{r,s}(P_{X_1X_2X_3X'_1Y})$ falls doubly exponentially for each $P_{X_1X_2X_3X'_1Y}\in \cQ_2$. We will show this by using the following upper bound.

\begin{align}
&P^2_{r,s}(P_{X_1X_2X_3X'_1Y}) = W^n(\cE_{r,s,2}(P_{X_1X_2X_3X'_1Y})|\vecx_{1r}, \vecx_{2s}, \vecx_3)\nonumber\\
&\qquad\leq \sum_{\substack{u:(\vecx_{1r},\vecx_{2s}, \vecx_{3}, \vecx_{1u})\\\in T^{n}_{X_1 X_2 X_3 X_1'}}}\quad\sum_{\vecy\in T^n_{Y|X_1X_2X_3X'_1}(\vecx_{1r},\vecx_{2s}, \vecx_{3}, \vecx_{1u})}W^n(\vecy|\vecx_{1r},\vecx_{2s}, \vecx_{3})\nonumber\\
&\qquad\leq \sum_{\substack{u:(\vecx_{1r},\vecx_{2s}, \vecx_{3}, \vecx_{1u})\\\in T^{n}_{X_1 X_2 X_3 X_1'}}}\exp{\left(-n(I(Y;X_1'|X_1X_2X_3)-\epsilon)\right)}\nonumber\\
&\qquad\stackrel{(a)}{=}\exp{\left(n\left(\left|R_1-I(X_1';X_1X_2X_3)\right|^{+}-I(Y;X_1'|X_1X_2X_3)+3\epsilon/2\right)\right)}\label{eq:upper_bound12}
\end{align}
where $(a)$ follows from \eqref{lemma_eq7b}.

Suppose $R_1\leq I(X'_1;X_2)$, then \eqref{eq:cond12} evaluates to $I(X_1;X_2X'_1X_3)+I(X_2;X_1'X_3)< \epsilon$. Thus, $I(X_1X_2;X_1'|X_3)<\epsilon$. We analyze \eqref{eq:upper_bound12} for this case.
\begin{align*}
P^2_{r,s}(P_{X_1X_2X_3X'_1Y}) &\leq \exp{\left(n\left(\left|R_1-I(X_1';X_1X_2X_3)\right|^{+}-I(Y;X_1'|X_1X_2X_3)+3\epsilon/2\right)\right)}\\
&=\exp{\left(n\left(0-I(Y;X_1'|X_1X_2X_3)+3\epsilon/2\right)\right)}\\
&=\exp{\left(n\left(I(X_1X_2;X_1'|X_3)-I(X_1X_2;X_1'|X_3)-I(Y;X_1'|X_1X_2X_3)+3\epsilon/2\right)\right)}\\
&=\exp{\left(n\left(I(X_1X_2;X_1'|X_3)-I(X_1X_2Y;X_1'|X_3)+3\epsilon/2\right)\right)}\\
&\stackrel{(a)}{\leq}\exp{\left(n\left(\epsilon-\eta+3\epsilon/2\right)\right)}\\
&\rightarrow 0 \pink{\text{ as }\eta> 6\epsilon}.
\end{align*}where $(a)$ follows by using $I(X_1X_2;X_1'|X_3)<\epsilon$ and $I(X_1X_2Y;X_1'|X_3)\geq\eta$ (see the definition of $\cQ_2$).

Now, we consider the case when $R_1> I(X'_1;X_2)$. In this case, \eqref{eq:cond12} evaluates to 
\begin{align*}
I(X_1;X_2X'_1X_3)+I(X_2;X_1'X_3)<R_1-I(X'_1;X_3) + \epsilon
\end{align*}
This implies that $-\epsilon<R_1- I(X_1';X_1X_2X_3)\leq \left|R_1-I(X_1';X_1X_2X_3)\right|^{+}$. Thus,  
\begin{align*}
\left|R_1-I(X_1';X_1X_2X_3)\right|^{+}-I(Y;X_1'|X_1X_2X_3) \leq R_1-I(X_1';X_1X_2X_3)-I(Y;X_1'|X_1X_2X_3) +\epsilon
\end{align*}
Plugging it into the upper bound on $P^2_{r,s}(P_{X_1X_2X_3X'_1Y})$, we obtain
\begin{align*}
P^2_{r,s}(P_{X_1X_2X_3X'_1Y}) &\leq \exp{\left(n\left(R_1-I(X_1';X_1X_2X_3)-I(Y;X_1'|X_1X_2X_3)+5\epsilon/2\right)\right)}\\
&=\exp{\left(n\left(R_1-I(X_1';X_1X_2X_3Y)+5\epsilon/2\right)\right)}\\
&=\exp{\left(n\left(R_1-I(X_1';X_2Y)+5\epsilon/2\right)\right)}
\end{align*}

Since $P_{X'_1X_2X'_3Y}$ is such that $D(P_{X_1'X_2X'_3 Y}||P_{X'_1}\times P_{X_2}\times P_{X'_3}\times W)< \eta$ where $\eta$ can be chosen arbitrarily small, $P_{X_1'X_2X'_3 Y}$ is arbitrarily close to $P_{\tilde{X}_1\tilde{X}_2\tilde{X}_3\tilde{Y}} \defineqq P_{X'_1}\times P_{X_2}\times P_{X'_3}\times W$. So, for small positive number $\gamma_2$,  $I(X_1';X_2Y) \geq I(\tilde{X}_1;\tilde{Y}|\tilde{X}_2)-\gamma_2 \geq \min_{P_{X'_3}} I(\tilde{X}_1;\tilde{Y}|\tilde{X}_2)-\gamma_2$. Thus, if 
\begin{align*}
R_1&<\min_{P_{\tilde{X_3}}}I(\tilde{X}_1;\tilde{Y}|\tilde{X}_2)-5\epsilon/2-\gamma_2, \\
\text{then, }R_1 &\leq \min_{P_{X'_3}}I(X'_1;Y|X_2)-5\epsilon/2,
\end{align*}
and therefore, $P_{e,_{X_1,X'_1,X_2X_3Y}} \rightarrow 0$ as $n\rightarrow 0$. In the limit of $\epsilon\rightarrow 0$, we get
\begin{align}\label{eq:rate_R1}
R_1 &\leq \min_{\substack{P_{X_3'}: P_{\tilde{X}_1\tilde{X}_2\tilde{X}_3\tilde{Y}} \\= P_{X'_1}\times P_{X_2}\times P_{X'_3}\times W}}I(\tilde{X}_1;\tilde{Y}|\tilde{X}_2)
\end{align}
This is same as the upper bound on $R_1$ given in \eqref{eq:R_1_bound}.\\
\pink{{\em \underline{\textbf{Analysis of $P_{\textsf{c}}$}}}}\\
We are left with the analysis of the third term in \eqref{eq_error}, which is given by \pink{(see \eqref{eq:main_proof_error_c})}
% \\\red{*************************}\\
\begin{align}\label{eq:error13}
\pink{P_{\textsf{c}}}:=\sum_{P_{X_1X_2X_3X'_1X'_3Y}\in \cQ_3}\frac{1}{N_1N_2}\sum_{r,s}W^n(\cE_{r,s,3}(P_{X_1X_2X_3X'_1X'_3Y})|\vecx_{1r}, \vecx_{2s}, \vecx_3).
\end{align}
Let 
\begin{align*}
P^3_{r,s}(P_{X_1X_2X_3X'_1X'_3Y}) := W^n(\cE_{r,s,3}(P_{X_1X_2X_3X'_1X'_3Y})|\vecx_{1r}, \vecx_{2s}, \vecx_3).
\end{align*}
% \\\red{*************************}\\
% \pink{\begin{align}\label{eq:error13}
% \sum_{P_{X_1X_2X_3X'_1X'_3Y}\in \cQ_3}\frac{1}{N_1N_2}\sum_{r,s}P^3_{r,s}(P_{X_1X_2X_3X'_1X'_3Y})
% \end{align}
% for
% \begin{align*}
% P^3_{r,s}(P_{X_1X_2X_3X'_1X'_3Y}) = W^n(\cE_{r,s,3}(P_{X_1X_2X_3X'_1X'_3Y})|\vecx_{1r}, \vecx_{2s}, \vecx_3).
% \end{align*}}
When $P_{X_1X_2X_3X'_1X'_3Y}$ satisfies the condition (see \eqref{code_eq5}),
\begin{align}\label{eq:p_c_atypical}
I(X_i;X_jX'_iX'_kX_k)+I(X_j;X_i'X_k'X_k)\geq \left|\left|R_i-I(X'_i;X_k)\right|^+ +\left|R_k-I(X'_k;X_k)\right|^+-I(X'_i;X'_k|X_k)\right|^+ + \epsilon,
\end{align}
\begin{align}
&\frac{1}{N_1N_2}\sum_{r,s}P^3_{r,s}(P_{X_1X_2X_3X'_1X'_3Y})\\
&= \frac{1}{N_1N_2}\sum_{\substack{(r,s):\exists (u, w) \text{ satisfying }\\(\vecx_{1r},\vecx_{2s}, \vecx_3, \vecx_{1u},\vecx_{3w}) \in T^{n}_{X_1  X_2 X_3X_1'X'_3} }}W^n\inp{\inb{\vecy:\vecy\in T^{n}_{Y|X_1X_2X_3X_1'X'_3}(\vecx_{1r},\vecx_{2s}, \vecx_3, \vecx_{1u},\vecx_{3w})}\Big|\vecx_{1r}, \vecx_{2s}, \vecx_3}\\
&\leq \frac{1}{N_1N_2}|\{(r, s)\in [1:N_1]\times[1:N_2]: \exists u, w\in  [1:N_1]\times[1:N_3] \, u\neq r \, (\vecx_{1r}, \vecx_{2s}, \vecx_3,\vecx_{1u}, \vecx_{3w})\in T^{n}_{X_1X_2 X_3 X_1^{'} X'_3} \}|\nonumber\\
& < \exp\left(-\frac{n\epsilon}{2}\right).\label{eq:small_fraction3}
\end{align}
Otherwise, when
\begin{align}
I(X_i;X_jX'_iX'_kX_k)+I(X_j;X_i'X_k'X_k)< \left|\left|R_i-I(X'_i;X_k)\right|^+ +\left|R_k-I(X'_k;X_k)\right|^+-I(X'_i;X'_k|X_k)\right|^+ + \epsilon,\label{eq:cond13}
\end{align}
we will show that $P^3_{r,s}(P_{X_1X_2X_3X'_1X'_3Y})$ falls doubly exponentially for each $P_{X_1X_2X_3X'_1X'_3Y}\in \cQ_3$. We upper bound $P^3_{r,s}(P_{X_1X_2X_3X'_1X'_3Y})$ by the following set of equations.

\begin{align}
&P^3_{r,s}(P_{X_1X_2X_3X'_1X'_3Y}) = W^n(\cE_{r,s,3}(P_{X_1X_2X_3X'_1X'_3Y})|\vecx_{1r}, \vecx_{2s}, \vecx_3)\nonumber\\
&\qquad\leq \sum_{\substack{(u, w):(\vecx_{1r},\vecx_{2s}, \vecx_{3}, \vecx_{1u}, \vecx_{3w})\\\in T^{n}_{X_1 X_2 X_3 X_1' X_3'}}}\quad\sum_{\vecy\in T^n_{Y|X_1X_2X_3X'_1X'_3}(\vecx_{1r},\vecx_{2s}, \vecx_{3}, \vecx_{1u}, \vecx_{3w})}W^n(\vecy|\vecx_{1r},\vecx_{2s}, \vecx_{3})\nonumber\\
&\qquad\leq \sum_{\substack{(u, w):(\vecx_{1r},\vecx_{2s}, \vecx_{3}, \vecx_{1u}, \vecx_{3w})\\\in T^{n}_{X_1 X_2 X_3 X_1' X_3'}}}\exp{\left(-n(I(Y;X_1'X_3'|X_1X_2X_3)-\epsilon)\right)}\nonumber\\
&\stackrel{(a)}{=}\exp{\left(n\left(\left||R_1-I(X_1';X_1X_2X_3)|^{+}+|R_3-I(X_3';X_1X_2X_3)|^{+}-I(X_1';X_3'|X_1X_2X_3)\right|^{+}-I(Y;X_1'X_3'|X_1X_2X_3)+3\epsilon/2\right)\right)}.\label{eq:upper_bound13}
\end{align}
where $(a)$ follows from \eqref{lemma_eq7c}. Now, we need to show that \eqref{eq:upper_bound13} goes to zero under the condition given in \eqref{eq:cond13}. This is same as the previous analysis of \eqref{code_eq3_codn2} under the condition \eqref{eq:upper_bound1} with $R_2$ and $X_2'$ replaced by $R_3$ and $X_3'$. Note that with these replacements, the entire analysis follows through and we obtain the analogues of \eqref{eq:R_1_bound}, \eqref{eq:R_2_bound} and \eqref{eq:R_1_R_2_bound} as given in \eqref{eq:R_1_bound3}, \eqref{eq:R_2_bound3} and \eqref{eq:R_1_R_2_bound3} respectively. For
\begin{align*}
\cP_2\defineqq \{P_{X_1X_2X_3Y}: P_{X_1X_2X_3Y} = P_{X_1}\times Q_{X_2}\times P_{X_3}\times W \text{ for some }Q_{X_2}\}
\end{align*}
\begin{align}\label{eq:R_1_bound3}
R_1&< \min_{P_{X_1X_2X_3Y}\in \cP_2}I(X_1;Y|X_3);
\end{align}
\begin{align}\label{eq:R_2_bound3}
R_3&< \min_{P_{X_1X_2X_3Y}\in \cP_2}I(X_3;Y|X_1);
\end{align}
\begin{align}\label{eq:R_1_R_2_bound3}
R_1 + R_3 &< \min_{P_{X_1X_2X_3Y}\in \cP_2}I(X_1X_3;Y).
\end{align}
Similarly, we will obtain rate bounds while analyzing the cases when user 1 and 2 are adversarial.

\pink{
Thus, for any input distribution $p(x_1)p(x_2)p(x_3)$, we have shown the achievability of the set of rate triples $(R_1,R_2,R_3)$ which, for all permutations $(i,j,k)$ of $(1,2,3)$, satisfy the following conditions:  
\begin{align}
R_i &< \min_{q(x_k)} I(X_i;Y|X_j),\quad\text{and}\label{eq:rateconstraint1_no_time_sharing}\\
R_i+R_j &< \min_{q(x_k)} I(X_i,X_j;Y),\label{eq:rateconstraint2_no_time_sharing}
\end{align} where the mutual information terms are evaluated using the joint distribution $p(x_i)p(x_j)q(x_k)W(y|x_1,x_2,x_3)$.}

\pink{It remains to argue that the rate region $\cR$ given by \eqref{eq:rateconstraint1} and \eqref{eq:rateconstraint2} is achievable. To this end, consider a distribution\footnote{For clarity, in the rest of this proof we introduce subscripts to denote the p.m.f.s involved in \eqref{eq:rateconstraint1} and \eqref{eq:rateconstraint2}.} $p_Up_{X_1|U}p_{X_2|U}p_{X_3|U}$. Without loss of generality, take $\cU=\{1,2,\ldots,|\cU|\}$. It suffices to show the achievability for $p_U(u)$ whose elements are rational numbers. Let $l$ be such that $lp_U(u)$ are integers for all $u\in\cU$. For $u\in \cU$, let $m_u=lp_U(u)$ and $n_u=\sum_{j\leq u} m_j$, and let $n_0=0$.}

\pink{Consider the $l$-fold product $W^{\otimes l}$ of the channel $W$. For this product channel, consider the input distribution $p(\vecx_1)p(\vecx_2)p(\vecx_3)$ defined by
\begin{align*}
p(\vecx_i)=p((x_{i1},\ldots,x_{il}))=\prod_{u\in\cU} \prod_{t=n_{u-1}+1}^{n_u} p_{X_i|U}(x_{it}|u).
\end{align*}
% for some $P_{X_i|U}, \, i=1,2,3$.
By \eqref{eq:rateconstraint1_no_time_sharing} and \eqref{eq:rateconstraint2_no_time_sharing} applied to the product channel $W^{\otimes l}$, we may conclude that the rate triple $(R_1,R_2,R_3)$ {\em is achievable for} $W$ if, for all permutations $(i,j,k)$ of $(1,2,3)$,
\begin{align}
lR_i &\leq \min_{q(\vecx_k)} I(\vecX_i;\vecY|\vecX_j),\text{ and}\label{eq:rate_new_eqs1}\\
l(R_i+R_j) &\leq \min_{q(\vecx_k)} I(\vecX_i,\vecX_j;\vecY).\label{eq:rate_new_eqs2}
\end{align}}

\pink{The achievability of the theorem follows from the following observation (for concreteness we take $(i,j,k)=(1,2,3)$ below):
\begin{align}
\min_{q(\vecx_3)} I(\vecX_1;\vecY|\vecX_2)
 &= \min_{q(\vecx_3)} \sum_{t=1}^l I(X_{1t};\vecY|\vecX_2,X_1^{t-1}) \notag\\
 &\stackrel{(a)}{=} \min_{q(\vecx_3)} \sum_{t=1}^l I(X_{1t};\vecY,X_1^{t-1}|\vecX_2) \notag\\
 &\geq \min_{q(\vecx_3)} \sum_{t=1}^l I(X_{1t};Y_t|\vecX_2) \notag\\
 %&\red{\stackrel{(a)}{\geq}} \min_{q(\vecx_3)} \sum_{t=1}^l I(X_{1t};Y_t|\vecX_2) \notag\\
 &\geq \sum_{t=1}^l \min_{q(\vecx_3)} I(X_{1t};Y_t|\vecX_2) \notag\\
 &\stackrel{(b)}{=} \sum_{t=1}^l \min_{q(x_{3t})} I(X_{1t};Y_t|X_{2t}) \notag\\
 &= \sum_{u\in \cU} \sum_{t=n_{u-1}+1}^{n_u}  \min_{q(x_{3t})} I(X_{1t};Y_t|X_{2t}),\label{eq:time-sharing-midstep}	
\end{align}
where (a) follows from the independence of $X_{11},X_{12},\ldots,X_{1l},\vecX_2$, (b) follows from the memorylessness of the product channel across its components and the independence of $X_{21},X_{22},\ldots,X_{2l}$. Notice that in \eqref{eq:time-sharing-midstep}, the $n_u-n_{u-1}=lp_U(u)$ terms in the inner sum corresponding to each $u\in\cU$ are identical. For $u\in \cU$, let $(X_{1,u}, X_{2,u}, X_{3,u},Y_u)\sim p_{X_1|U}(\cdot|u)p_{X_2|U}(\cdot|u)q_{X_3|U}(\cdot|u)W(\cdot|\cdot,\cdot,\cdot)$. Then, rewriting~\eqref{eq:time-sharing-midstep},
\begin{align*}
\min_{q(\vecx_3)} I(\vecX_1;\vecY|\vecX_2)
% &\geq \sum_{t=1}^l \min_{q(x_{3t})} I(X_{1t};Y_t|X_{2t})
  &\geq \sum_{u\in \cU} \left(lp_{U}(u)\right) \min_{q_{X_{3}|U}(.|u)} I(X_{1,u};Y_u|X_{2,u})\\
%  &\stackrel{(c)}{=} l \min_{q_{X_3|U}} \sum_{u\in \cU}p_{U}(u)I(X_{1,u};Y_u|X_{2,u})\\
  &= l \min_{q_{X_3|U}} \sum_{u\in \cU}p_{U}(u)I(X_{1,u};Y_u|X_{2,u})\\
  %&\stackrel{(c)}{=} l \min_{q_{X_3|U}} I(X_{1};Y|X_{2}U)\\
  &= l \min_{q_{X_3|U}} I(X_{1};Y|X_{2}U).
\end{align*} 
%\red{$(a)$ follows from  ..... }, $(b)$ follows by noting that $q_{X_3|U}$ allows separate minimization over each $u$ and $(c)$ follows by defining the random variables $(U, X_1, X_2, X_3, Y)\sim p_Up_{X_1|U}p_{X_2|U}p_{X_3|U}W$. 
Similarly, 
\begin{align*}
\min_{q(\vecx_3)} I(\vecX_1\vecX_2;\vecY)
\geq l  \min_{q_{X_3|U}} I(X_{1}X_{2};Y|U).
\end{align*}}
\pink{Thus, any rate triple satisfying the conditions in \eqref{eq:rateconstraint1}-\eqref{eq:rateconstraint2} also satisfies \eqref{eq:rate_new_eqs1}-\eqref{eq:rate_new_eqs2} and hence is achievable. }
%we can show that
%\red{\begin{align*}
%\min_{q(\vecx_3)} I(\vecX_1\vecX_2;\vecY)
% &\geq \min_{q(\vecx_3)} \sum_{t=1}^l I(X_{1t}X_{2t};Y_t)\\
% &\geq \sum_{t=1}^l \min_{q(\vecx_3)} I(X_{1t}X_{2t};Y_t)\\
% &= \sum_{t=1}^l \min_{q(x_{3t})} I(X_{1t}X_{2t};Y_t)\\
% &= \sum_{u\in \cU}lp_{U}(u) \min_{q(x_{3})} I(X_{1,u}X_{2,u};Y_u)\\
% &= l \min_{q_{X_3|U}} \sum_{u\in \cU}p_{U}(u)I(X_{1,u}X_{2,u};Y_u)\\
% &= l \min_{q_{X_3|U}} I(X_{1}X_{2};Y|U)
%\end{align*}}
\end{proof}

%!TeX root=paper.tex
\subsection{Randomized coding capacity region}\label{app:A}
%\section{Proof of Theorem~\ref{thm:random}}

\newcommand{\Peach}[1]{\textcolor{Peach}{#1}}

\usetikzlibrary{arrows, shapes,positioning,
                chains,% <--- new
                decorations.markings,
                shadows, shapes.arrows}
  \definecolor{lightgreen}{rgb}{0.4,0.4,0.1}
\definecolor{lightblue}{rgb}{0.7372549019607844,0.8313725490196079,0.9019607843137255}
\definecolor{darkblue}{rgb}{0.08235294117647059,0.396078431372549,0.7529411764705882}
\definecolor{orangered}{rgb}{0.6,0.3,0.1}
\definecolor{gray}{rgb}{0.7529411764705882,0.7529411764705882,0.7529411764705882}

\begin{proof}[Proof (Achievability of Theorem~\ref{thm:random})] For each $k=1,2,3$, let $W^{(k)}$ be the $2$-user AV-MAC formed by channel inputs from node $k$ as the state and the remaining channel inputs as legitimate inputs.  Let $(R_1,R_2,R_3)$ be a rate triple such that, for some \blue{$p(u)p(x_1|u)p(x_2|u)p(x_3|u)$}, the following conditions hold for all permutations $(i,j,k)$ of $(1,2,3)$: \begin{align}
R_i &< \min_{q(x_k\blue{|u})} I(X_i;Y|\blue{U}X_j),\quad\text{and}\label{eq:rateconstraint1c}\\
R_i+R_j &< \min_{q(x_k\blue{|u})} I(X_i,X_j;Y\blue{|U}),\label{eq:rateconstraint2c}
\end{align} 
with the mutual information terms evaluated using the joint distribution
\blue{$p(u)p(x_i|u)p(x_j|u)q(x_k|u)W(y|x_1,x_2,x_3)$}. Note that, by the first part of the direct result of ~\cite[Theorem~1]{Jahn81} (see \cite[Section~III-C]{Jahn81}), the rate pair $(R_i,R_j)$ is achievable for the AV-MAC $W^{(k)}$ \pink{(see the footnote on page~\pageref{footnote:gubner})}. Let $\epsilon>0$. For each $i\in\{1,2,3\}$, let $\tilde{\cM}_i=[1:2^{nR_i}]$ and $\cM_i=[1:2^{nR_i}/v]$ for the largest integer $v\leq 3/\epsilon$. In the following, we show the existence of a randomized $(2^{nR_1}/v,2^{nR_2}/v,2^{nR_3}/v,n)$ code  $(F_1,F_2,F_3,\phi)$ with $\pink{P^{\text{rand}}_{e}}$ no larger than $\epsilon$, for sufficiently large $n$.

\begin{figure}[h]\longonly{\centering} \scalebox{0.85}{\begin{tikzpicture}[line cap=round,line join=round,>=triangle 45,x=1.1cm,y=1.1cm]
\tikzstyle{myarrows}=[line width=0.5mm,draw = black!20!green!40!red,-triangle 45,postaction={draw, line width=2mm, shorten >=4mm, -}]

\fill[line width=0.4pt,color=lightgreen,fill=lightgreen,fill opacity=0.07] (-9.2,2.7) -- (-9.2,-.7) -- (-7.8,-.7) -- (-7.8,2.7) -- cycle;
\draw [rotate around={90:(-11.5,1)},line width=0.4pt,dotted,color=lightblue,fill=lightblue,fill opacity=0.46] (-11.5,1) ellipse (0.73cm and 0.367cm);
%%%%%%%%%%%%%%%%%%%%%%%%%%%%%%%%%%%%%%%%%%%%%%%%%%%%
\draw [rotate around={90:(-10.25,1.8)},line width=0.4pt,dotted,color=lightblue,fill=lightblue,fill opacity=0.46] (-10.25,1.8) ellipse (0.40cm and 0.300cm);
\draw [rotate around={90:(-10.25,0.75)},line width=0.4pt,dotted,color=lightblue,fill=lightblue,fill opacity=0.46] (-10.25,0.75) ellipse (0.45cm and 0.320cm);
\draw [rotate around={90:(-10.25,-0.2)},line width=0.4pt,dotted,color=lightblue,fill=lightblue,fill opacity=0.46] (-10.25,-0.2) ellipse (0.40cm and 0.300cm);
%%%%%%%%%%%%%%%%%%%%%%%%%%%%%%%%%%%%%%%%%%%%%%%%%%%%
\draw [rotate around={90:(-10.25,1)},line width=0.4pt,color=gray,fill=gray,fill opacity=0.2] (-10.25,1) ellipse (1.83cm and 0.55cm);
\draw [line width=0.4pt,color=lightgreen] (-9.2,2.7)-- (-9.2,-.7);
\draw [line width=0.4pt,color=lightgreen] (-9.2,-.7)-- (-7.8,-.7);
\draw [line width=0.4pt,color=lightgreen] (-7.8,-.7)-- (-7.8,2.7);
\draw [line width=0.4pt,color=lightgreen] (-7.8,2.7)-- (-9.2,2.7);

\draw [color = blue!60!red,->,line width=0.1pt] (-11.5,0.5) -- (-10.25,1.7);
\draw [color = blue!60!red,->,line width=0.1pt] (-11.5,1.5) -- (-10.25,1);
\draw [color = blue!60!red,->,line width=0.1pt] (-11.5,1.2) -- (-10.25,0);

\draw [color = black!20!green!40!red,->,line width=0.1pt] (-10.25,2) -- (-8.3,1.1);
\draw [color = black!20!green!40!red,->,line width=0.1pt] (-10.25,1) -- (-8.6,2);
\draw [color = black!20!green!40!red,->,line width=0.1pt] (-10.25,1.7) -- (-8.8,0.7);
\draw [color = black!20!green!40!red,->,line width=0.1pt] (-10.25,0.5) -- (-8.3,1.6);
\draw [color = black!20!green!40!red,->,line width=0.1pt] (-10.25,0.0) -- (-8.6,0.3);
\draw [color = black!20!green!40!red,->,line width=0.1pt] (-10.25,-0.41) -- (-8.3,-0.1);
%\draw [color = black!20!green!40!red,->,line width=0.1pt] (-10.25,2.5) -- (-8.4,2.4);

\draw [rotate around={-90:(-8.5,1)},line width=0.4pt,color=gray,fill=gray,fill opacity=0.26] (-8.5,1) ellipse (1.83cm and 0.55cm);

\draw [fill=darkblue] (-11.5,1.2) circle (2.5pt);
\draw [fill=darkblue,opacity=0.8] (-11.5,1.5) circle (2.5pt);
\draw [fill=darkblue,opacity=0.8] (-11.5,0.5) circle (2.5pt);
\draw[color=darkblue] (-11.5,0) node {\small $\cM_i$};
\draw[color=black] (-11.8,1.15) node {{\small  $m_i$}};
\draw [fill=white] (-10.25,0.5) circle (2.5pt);
%%%%%%%%%%%%%%%%%%%%%%%%%%%%%%%%%%%%%%%%%%
\draw[fill=darkblue,opacity=0.8] (-10.25,0.0) circle (2.5pt);
\draw [fill=white] (-10.25,-0.4) circle (2.5pt);
%%%%%%%%%%%%%%%%%%%%%%%%%%%%%%%%%%%%%%%%%%
\draw [fill=darkblue,opacity=0.8] (-10.25,1.7) circle (2.5pt);%
\draw [fill=white] (-10.25,2) circle (2.5pt);%
\draw [fill=darkblue,opacity=0.8] (-10.25,1) circle (2.5pt);%
%\draw [fill=white] (-10.25,2.5) circle (2.5pt);
\draw (-10.25,-1) node {\small $\tilde{\cM}_i=\left[2^{nR_i}\right]$};
\draw[color = lightgreen] (-8.5,-1) node {\small $\cX_i^n$};
\draw[color = blue!60!red] (-11,1.75) node {\small $L_i$};
\draw[color = blue!60!red] (-9.8,2.2) node {{\small  ${\color{black}\tilde{m}_i=}{\color{blue!60!red}L_i}{\color{black}(m_i)}$}};
\draw[color = black!20!green!40!red] (-9.6,2.7) node {\small $G_i$};
\draw[color = black] (-7.8,0.95) node {{\small  $\vecx_i={\color{black!20!green!40!red}G_i}({\color{blue!60!red}L_i}{\color{black}(m_i)})$}};

\draw [fill=darkblue,opacity=0.8] (-8.6,2) circle (2.5pt);%
\draw [fill=white] (-8.3,1.1) circle (2.5pt);%
\draw [fill=darkblue,opacity=0.8] (-8.8,0.7) circle (2.5pt);%
\draw [fill=white] (-8.3,1.6) circle (2.5pt);
\draw [fill=darkblue,opacity=0.8] (-8.6,0.3) circle (2.5pt);
\draw [fill=white] (-8.3,-0.1) circle (2.5pt);
%\draw [fill=white] (-8.4,2.4) circle (2.5pt);

%Y^n block
\draw [line width=0.4pt,color=orangered] (-7.2,2.7)-- (-7.2,-.7);
\draw [line width=0.4pt,color=orangered] (-7.2,-.7)-- (-5.8,-.7);
\draw [line width=0.4pt,color=orangered] (-5.8,-.7)-- (-5.8,2.7);
\draw [line width=0.4pt,color=orangered] (-5.8,2.7)-- (-7.2,2.7);
\fill[line width=0.4pt,color=orangered,fill=orangered,fill opacity=0.07] (-7.2,2.7) -- (-7.2,-.7) -- (-5.8,-.7) -- (-5.8,2.7) -- cycle;
\draw[color = orangered] (-6.5,-1) node {\small $\cY^n$};
\draw [fill=black] (-6.25,1.4) circle (2.5pt);
\draw[color = black] (-6.25,1.65) node {\small $\vecy$};

% decoder step 1
\draw [rotate around={90:(-4.75,1)},line width=0.4pt,color=gray,fill=gray,fill opacity=0.2] (-4.75,1) ellipse (1.83cm and 0.55cm);
%\draw [rotate around={90:(-4.75,1.5)},line width=0.4pt,dotted,color=lightblue,fill=lightblue,fill opacity=0.46] (-4.75,1.5) ellipse (0.73cm and 0.367cm);
\draw [fill=white] (-4.75,0.5) circle (2.5pt);
%\draw [fill=white] (-4.75,0.2) circle (2.5pt);
%\draw [fill=white] (-4.75,-0.5) circle (2.5pt);
%\draw [fill=darkblue] (-4.75,1.7) circle (2.5pt);%
%\draw [fill=darkblue] (-4.75,2) circle (2.5pt);%
%\draw [fill=darkblue] (-4.75,1) circle (2.5pt);%
%\draw [fill=white] (-4.75,2.5) circle (2.5pt);
\draw (-4.75,-1) node {\small $\tilde{\cM}_i$};
\draw[color = black!20!green!40!red] (-5.45,1.5) node {\small $\Gamma^{(k)}_i$};
\draw[color = blue!60!red] (-4,2) node {\small $\Lambda_i$};

\draw[color = black] (-5.3,3.2) node {\small valid inner messages for $L_i$};

\draw [myarrows](-5.9,1)--(-5.1,1);
%%%%%%%%%%%%%%%%%%%%%%%%%%%%%%%%
\draw [rotate around={90:(-4.75,1.8)},line width=0.4pt,dotted,color=lightblue,fill=lightblue,fill opacity=0.46] (-4.75,1.8) ellipse (0.40cm and 0.300cm);
\draw [rotate around={90:(-4.75,0.75)},line width=0.4pt,dotted,color=lightblue,fill=lightblue,fill opacity=0.46] (-4.75,0.75) ellipse (0.45cm and 0.320cm);
\draw [rotate around={90:(-4.75,-0.2)},line width=0.4pt,dotted,color=lightblue,fill=lightblue,fill opacity=0.46] (-4.75,-0.2) ellipse (0.40cm and 0.300cm);
%%%%%%%%%%%%%%%%%%%%%%%%%%%%%%%%%%%%%
\draw [color = blue!60!red,->,line width=0.2pt] (-4.75,1.7) -- (-3.5,1);
\draw [color = blue!60!red,->,line width=0.2pt] (-4.75,1) -- (-3.5,2);
\draw [color = blue!60!red,->,line width=0.2pt] (-4.75,2) -- (-3.55,0);

\draw [color = blue!60!red,->,line width=0.2pt] (-4.75,0.5) -- (-3.55,0);
\draw [color = blue!60!red,->,line width=0.2pt] (-4.75,0.0) -- (-3.5,1.7);;
\draw [color = blue!60!red,->,line width=0.2pt] (-4.75,-0.4) -- (-3.55,0);
%\draw [color = blue!60!red,->,line width=0.2pt] (-4.75,2.5) -- (-3.55,0);   
%%%%%%%%%%%%%%%%%%%%%%%%%%%%%%%%%%%%%%%%%%
\draw[fill=darkblue,opacity=0.8] (-4.75,0.0) circle (2.5pt);
\draw [fill=white] (-4.75,-0.4) circle (2.5pt);
%%%%%%%%%%%%%%%%%%%%%%%%%%%%%%%%%%%%%%%%%%

\draw [->,color=darkblue] (-5.4,3) to [out=250,in=150] (-4.9,1.7);

\draw [fill=white] (-4.75,0.5) circle (2.5pt);
%\draw [fill=white] (-4.75,0.2) circle (2.5pt);
%\draw [fill=white] (-4.75,-0.5) circle (2.5pt);
\draw [fill=darkblue] (-4.75,1.7) circle (2.5pt);%
\draw [fill=white] (-4.75,2) circle (2.5pt);%
\draw [fill=darkblue] (-4.75,1) circle (2.5pt);%
%\draw [fill=white] (-4.75,2.5) circle (2.5pt);

%decoder step 2
\draw [rotate around={90:(-3.5,1.5)},line width=0.4pt,dotted,color=lightblue,fill=lightblue,fill opacity=0.46] (-3.5,1.5) ellipse (0.73cm and 0.367cm);
\draw [fill=darkblue] (-3.5,1.7) circle (2.5pt);
\draw [fill=darkblue] (-3.5,2) circle (2.5pt);
\draw [fill=darkblue] (-3.5,1) circle (2.5pt);
\draw [color=darkblue] (-3.3,.6) node {\small $\cM_i$};
\draw [color=red!50!blue,fill=red!70!blue!20!white] (-3.5,0) circle (2.5pt);
\draw [color=red!50!blue] (-3.3,-0.1) node {\small $\bot$};

\draw (-9.75,-1.7) node {\small {(a)} Encoder $F_i:L_i\circ G_i$};
\draw (-5.25,-1.7) node {\small {(b)} Pre-decoder $\phi^{(k)}_i:\Gamma_i^{(k)}\circ\Lambda_i$};

\end{tikzpicture}}
\caption{The encoders and pre-decoders for Theorem~\ref{thm:random}.}\label{fig:random_proof}

\end{figure}
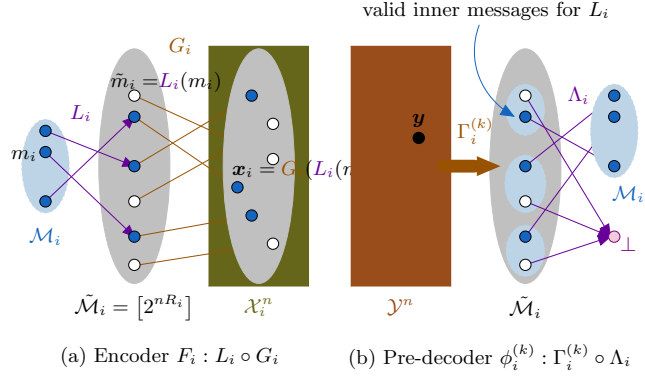

\paragraph*{Code design.} We will first describe some randomized maps which will be used in the code design (see Figure~\ref{fig:random_proof}).  
For each $i\in \{1, 2, 3\}$, let $G_i:\tilde{\cM}_i\to\cX^n_i$ be a randomized map such that it maps $m_i\in \tilde{\cM}_i$ to an $n$-length i.i.d. sequence $G_i(m_i)$ generated according to the distribution $p_i$. 
The sequences $G_i(m)$ are independent across $i\in \{1, 2,3\}$ and $m\in \cM_i$. The realization of $G_i(m_i)$ for all $i\in \{1,2, 3\}$ and $m_i\in \cM_i$ is shared with the decoder. 
For any permutation $(i, j, k)$ of $(1, 2, 3)$, consider the AV-MAC $W^{(k)}$ which corresponds to user-$k$ as the adversary. 
If we consider $\tilde{\cM_i}$ and $\tilde{\cM_j}$ as the message sets and $G_i$ and $G_j$ as the corresponding encoders, then this construction ensures that the randomness of the encoders $G_i$ and $G_j$ is private from each other and also private from the adversarial user-$k$. 
This joint distribution of  $G_i$ and $G_j$ (and the corresponding codewords) is the same as that of the encoders of AV-MAC $W^{(k)}$ in the direct part of \cite[Theorem~1, Section~III-C]{Jahn81}. 
For $G_i$ and $G_j$ as encoders, let $\Gamma^{(k)}$ denote the decoder corresponding to the decoding sets defined in proof of the direct part of \cite[Theorem~1, Section~III-C]{Jahn81} for the AV-MAC $W^{(k)}$. 
Suppose $(\Gamma^{(k)}_i, \Gamma^{(k)}_j):=\Gamma^{(k)}$ where $\Gamma^{(k)}_i:\cY^n\to\tilde{\cM}_i$. 
For all $\epsilon>0$, by \cite[Theorem~1]{Jahn81}, there exists a large enough $n$ such that for all permutations $(i, j, k)$ of $(1, 2, 3)$, the code $(G_i, G_j, \Gamma^{(k)})$ has error probability no larger than $\epsilon/3$.
We consider that $n$.

For each $i\in\{1, 2, 3\}$, the message set $\cM_i$ is randomly embedded into the set $\tilde{\cM_i}$ as follows: We choose an arbitrary partition of $\tilde{\cM_i}$ into $|\cM_i|$ many disjoint equal-sized subsets (each subset size is $v$). 
Let us denote the partition by $\cS_{m_i}, \, m_i\in \cM_i$ where $\cup_{m_i\in \cM_i}\cS_{m_i}=     \tilde{\cM_i}$ and $\cS_{m_i}\cap\cS_{m'_i}=\emptyset$ for all $m_i, m'_i\in \cM_i$ where $m_i\neq m_i'$.
The size of each $\cS_{m_i}, \, m_i\in \cM_i$ is $v$ ($\leq 3/\epsilon$). 
%This implies that for all $m_i\in \cM_i$, $|\cS_{m_i}|\geq \frac{|\tilde{\cM_i}|}{|\cM_i|+1}$. 
The maps $L_i:\cM_i\to\tilde{\cM}_i$ and $\Lambda_i:\tilde{\cM}_i\to\cM_i$ are the forward and reverse maps for an injection from $\cM_i$ to $\tilde{\cM}_i$ where, independently for each $m_i\in \cM_i$, $L_i(m_i)$ is chosen uniformly at random from $\cS_{m_i}$. 
%Next, let $G_i:\tilde{\cM}_i\to\c\vecX_i$ be the encoder map for a randomly chosen code whose each letter is drawn independently from the probability distribution $p_{i}$. 
Both the encoder maps $G_i$ and $L_i$ are independent for $i=1,2,3$ and are made available to the decoder as the shared secret between user-$i$ and the decoder, unknown to other users. 
%Further, for each $k$ and $i\neq k$, let $\Gamma^{(k)}_i:\cY^n\to\tilde{\cM}_i$ be the reconstruction map for the message from user $i$ (as described in~\cite{Jahn81}) when the (randomly drawn) encoder maps $(G_i:i\in\set{1,2,3}\setminus\set{k})$ are used to generate the codewords sent over the AV-MAC $W^{(k)}$. {Note that the code $(G_i, G_j, \Gamma_i^{(k)})$ is a valid 2-user AV-MAC `correlated random code' as used in \cite{Jahn81}. } 
%Let $n$ be large enough such that the code $((G_i:i\in\set{1,2,3}\setminus\set{k}),(\Gamma^{(k)}_i:i\in\set{1,2,3}\setminus\set{k}))$ has error probability no larger than $\epsilon/3$.
 
For each $i\in\set{1,2,3}$, the encoder map $F_i:\cM_i\to\cX^n_i$ is defined as $F_i(m_i)= G_i(L_i(m_i))$ for every $m_i\in\cM_i$. For $i\in\set{1,2,3}$ and $k\in\set{1,2,3}\setminus\set{i}$, we define pre-decoder\footnote{In this notation $\phi_i^{(k)}(\vecy)$, we are suppressing the dependence of the pre-decoder (and later the decoder) on the randomness of the encoders.} 
$$\phi_i^{(k)}(\vecy):=\begin{cases}\Lambda_i(\Gamma^{(k)}_i(\vecy)) &\mbox{if }\Gamma^{(k)}_i(\vecy)\in L_i(\cM_i),\\ \bot &\mbox{otherwise}.\end{cases}$$ 
The decoder $\phi:\cY^n\to\cM_1\times\cM_2\times\cM_3$ outputs $\phi(\vecy)=(\hat{m}_1,\hat{m}_2,\hat{m}_3)$, where, for each $i\in\set{1,2,3}$ and $(j,k)$ a permutation of $\set{1,2,3}\setminus\set{i}$, 
$$\hat{m}_i=\begin{cases}\phi_i^{(j)}(\vecy)& \mbox{ if }  \phi_i^{(j)}(\vecy)=\phi_i^{(k)}(\vecy)\neq\bot \\
\phi_i^{(j)}(\vecy)& \mbox{ if } \phi_i^{(j)}(\vecy)\neq\bot\mbox{ and } \phi_i^{(k)}(\vecy)=\bot\\
\phi_i^{(k)}(\vecy)& \mbox{ if } \phi_i^{(k)}(\vecy)\neq\bot\mbox{ and } \phi_i^{(j)}(\vecy)=\bot\\
1&\mbox{otherwise.}
\end{cases}$$
\paragraph*{Error Analysis.}
We first show that as long as the rate triple $(R_1,R_2,R_3)$ satisfy the constraints~\eqref{eq:rateconstraint1c} and~\eqref{eq:rateconstraint2c}, {\em i.e.}, each pair of rates lie in the corresponding AV-MAC randomized coding capacity region, the following hold simultaneously for every honest user $i$ which sends message $m_i\in \cM_i$ and potentially adversarial user $k\neq i$: \emph{(i)} $\phi_i^{(k)}(\vecY)$ is $m_i$ w.h.p. (with probability at least $1-\epsilon/3$) if user-$k$ is indeed adversarial and \emph{(ii)} $\phi_i^{(k)}(\vecY)$ is, w.h.p. (with probability at least $1-\epsilon/3$), either $\bot$ or $m_i$ if user-$k$ is not adversarial. To this end, consider any permutation $(i,j,k)$ of $(1,2,3)$ and assume that the  adversarial user (if any) is user-$k$ which sends $\vecX_k$ as its  potentially adversarial  input to the channel. Suppose, for $(m_i, m_j)\in \cM_i\times\cM_j$, user-$i$ and user-$j$ send $F_i(m_i)$ and $F_j(m_j)$ respectively. Let $\vecY$ denote the channel output.
\begin{description}
\item[$(i)$] First, consider the AV-MAC $W^{(k)}$.  Recall that $\Gamma_i^{(k)}(\vecY)=L_i(m_i)$ with probability at least $1-\epsilon/3$. Thus, with probability at least $1-\epsilon/3$, $\phi_i^{(k)}(\vecY)$ equals $m_i$. %This holds true both when $k$ is an adversarial node or a non-adversarial node. 
\item[$(ii)$] Next, consider the AV-MAC $W^{(j)}$.  In this case, $\Gamma_i^{(j)}(\vecY)$ may not equal $L_i(m_i)$  as $\vecX_k$ may not be a valid codeword. 
We would like to compute $\bbP\inp{\phi_i^{(j)}(\vecY)\notin \set{m_i,\bot}}$ where the probability is over $G_i(L_i(m_i))$, $G_j(L_j(m_j))$, $\vecX_k$ and the channel. 
%Note that $\phi_i^{(j)}$ depends on $G_i, G_k$ and $L_i$ which are independent of (potentially jointly distributed and adversarially chosen) $G_k$, $L_k$ and $\vecX_k$. 
Note that $G_i$ and $L_i$ are independent of (potentially jointly distributed and adversarially chosen) $G_k$, $L_k$ and $\vecX_k$. Thus,  
\begin{align*}
&\bbP\inp{\phi_i^{(j)}(\vecY)\notin \set{m_i,\bot}}\\
& = \bbP\inp{\Gamma_i^{(j)}(\vecY)\in L_i(\cM_i\setminus\{m_i\})}\\
& = \sum_{\tilde{m}_i\in\tilde{\cM}\setminus\cS_{m_i}}\bbP\inp{\Gamma_i^{(j)}(\vecY)=\tilde{m}_i, \tilde{m}_i\in L_i(\cM_i\setminus\{m_i\})}\\
& = \sum_{\tilde{m}_i\in\tilde{\cM}\setminus\cS_{m_i}}\bbP\inp{\Gamma_i^{(j)}(\vecY)=\tilde{m}_i}\bbP\inp{ \tilde{m}_i\in L_i(\cM_i\setminus\{m_i\})\big|\Gamma_i^{(j)}(\vecY)=\tilde{m}_i}\\
& \stackrel{(a)}{=} \sum_{\tilde{m}_i\in\tilde{\cM}\setminus\cS_{m_i}}\bbP\inp{\Gamma_i^{(j)}(\vecY)=\tilde{m}_i}\bbP\inp{ \tilde{m}_i\in L_i(\cM_i\setminus\{m_i\})}\\
& \stackrel{(b)}{=}  \sum_{\tilde{m}_i\in\tilde{\cM}\setminus\cS_{m_i}}\bbP\inp{\Gamma_i^{(j)}(\vecY)=\tilde{m}_i}\cdot\frac{1}{v}\\
& \leq 1/v \leq \epsilon/3.
\end{align*} Here, $(a)$ holds as $\Gamma_i^{(j)}(\vecY)\,\indep \,L_i(\cM_i\setminus\{m_i\})$. This is because $L_i(m_i)\,\indep \,L_i(\cM_i\setminus\{m_i\})$ and  $\Gamma_i^{(j)}\,\indep \,L_i$ as $\Gamma_i^{(j)}$ is a function of AV-MAC encoders $G_i$ and $G_k$ which are independent of $L_i$. 
The equality $(b)$ holds because for $\tilde{m}_i\in\tilde{\cM}\setminus\cS_{m_i}$, 
\begin{align*}
&\bbP\inp{ \tilde{m}_i\in L_i(\cM_i\setminus\{m_i\})}\\
&= \sum_{m_i'\in \cM_i\setminus\{m_i\}}\bbP\inp{ L_i(m_i')=\tilde{m}_i }\\
 &= \sum_{m_i'\in \cM_i\setminus\{m_i\}}1_{\{\tilde{m}_i\in \cS_{m_i'}\}}\cdot\frac{1}{v}\\
 &=1/v.
 \end{align*}

Thus, with probability $1-\epsilon$, for each non-adversarial user $i$, at least one of the decoders $\phi_i^{(j)}$ or $\phi_i^{(k)}$ outputs the true message while the other decoder outputs either the true message or $\bot$.
\end{description}
\end{proof}

\begin{proof}[{Proof (Converse of Theorem~\ref{thm:random})}]
We show the converse for the weak adversary. \pink{Since, $\mathcal{R}_{\random} \subseteq \mathcal{R}_{\random}^{\weak}$, a converse bound on $\mathcal{R}_{\random}^{\weak}$ is also a converse bound on $\mathcal{R}_{\random}$.} Suppose $(F_1,F_2,F_3,\phi)$ is a $(2^{nR_1},2^{nR_2},2^{nR_3},n)$ randomized code such that $\pink{P^{\text{weak}}_e}\leq \epsilon$ for some $\epsilon>0$. Recall that $F_1,F_2,F_3$ are independent. 
Let $M_i\sim\textup{Unif}(\mathcal{M}_i)$, $i=1,2,3$ be independent. Let $\hat{M}_i= \phi_i(\vecY,F_1,F_2,F_3), \, i = 1,2,3$. Then, $\epsilon$ is an upper bound on~\eqref{eq:pe1_weak} which is given by
\begin{align*}
 P_{e,1}^{\text{weak}}&= \max_{\vecx_1} \;\Prob_{F_2,F_3}\Big( (\Msgh_2,\Msgh_3)\neq(\Msg_2,\Msg_3)\Big|\vecX_1=\vecx_1,\vecX_2=F_2(\Msg_2), \vecX_3=F_3(\Msg_3)\Big)\\
 &= \max_{p_{\vecX_1}} \Prob_{F_2,F_3}\Big( (\Msgh_2,\Msgh_3)\neq(\Msg_2,\Msg_3)\Big| \vecX_2=F_2(\Msg_2), \vecX_3=F_3(\Msg_3)\Big).
\end{align*}
For a vector $\vecx_j\in \cX^n_j$, $j=1, 2, 3$,  we use $x_{j, i}$ to denote its $i^{\text{th}}$ index. That is $\vecx_j = \inp{x_{j, 1}, x_{j, 2}, \ldots, x_{j, n}}$. Similarly, a random vector $\vecX_j$ distributed on $\cX^n_j$ can be written as $\vecX_j = \inp{X_{j, 1}, X_{j, 2}, \ldots, X_{j, n}}$. For $i\in [1:n]$, let $q_{X_{1, i}}$ be some distribution on $\cX_1$. We consider the following $p_{\vecX_1}$.
\[ p_{\vecX_1}(\vecx_1) = \prod_{i=1}^n q_{X_{1,i}} (x_{1,i}).\]
By Fano's inequality, under this $p_{\vecX_1}$ and when $\vecX_i=F_i(M_i)$, $i=2,3$, 
\begin{align*}
H(\Msg_2,\Msg_3|\vecY, F_2, F_3) \leq 1+n\epsilon(R_2+R_3).
\end{align*}
Ignoring small terms, we have
\begin{align*}
n(R_2+R_3) &\leq H(M_2,M_3)\\
 &\leq H(M_2,M_3|F_2,F_3)\\
 &\stackrel{\text{(a)}}{\approx} I(M_2,M_3;\vecY|F_2,F_3)\\
 &= \sum_{i=1}^n I(M_2,M_3;Y_i|Y^{i-1},F_2,F_3)\\
 &\leq \sum_{i=1}^n I(M_2,M_3,F_2,F_3,Y^{i-1};Y_i)\\
 &= \sum_{i=1}^n I(M_2,M_3,F_2,F_3,Y^{i-1},X_{2,i},X_{3,i};Y_i)\\
 &\stackrel{\text{(b)}}{=} \sum_{i=1}^n I(X_{2,i},X_{3,i};Y_i),
\end{align*}
where (a) follows from Fano's inequality (ignoring an $O(n\epsilon)$ term), (b) follows from the memorylessness of the channel and the independence of $X_{1,i}$ over $i=1,\ldots,n$ for the particular $p_{\vecX_1}$ under consideration.

Let $U\sim\textup{Unif}\{1,2,\ldots,n\}$ independent of $(M_1,M_2,M_3,F_1,F_2,F_3,\vecY)$. We have (where we ignore an additive $O(\epsilon)$ term) 
\begin{align*}
R_2+R_3\leq I(X_{2,U},X_{3,U};Y_U|U).
\end{align*}
Since, the above bound holds for all $p_{\vecX_1}(\vecx_1)=\prod_{i=1}^n q_{X_{i,i}}(x_{1,i})$, and noticing that conditioned on $X_{1,U},X_{2,U},X_{3,U}$ the channel law $\mach$ gives the conditional probability of $Y_U$, we may write
\begin{align}
R_2+R_3 \leq  \min_{q(x_1|u)} I(X_2,X_3;Y|U) \label{eq:converseR2+R3}
\end{align}
for some $q(x_1|u)$. We note that the distribution of $U,X_1,X_2,X_3,Y$ is $p(u)q(x_1|u)p(x_2|u)p(x_3|u)\mach(y|x_1,x_2,x_3)$ where $p(x_2|u)$ is determined by the distribution of $F_2$ and $p(x_3|u)$ is determined by the distribution of $F_3$.
 
Proceeding similarly, for $p_{\vecX_1}(\vecx_1)=\prod_{i=1}^n q_{X_{1,i}}(x_{1,i})$,
\begin{align*}
nR_2 &\leq H(M_2)\\
 &\leq H(M_2|M_3,F_2,F_3)\\
 &\approx I(M_2;\vecY|M_3,F_2,F_3)\\
 &= \sum_{i=1}^n I(M_2;Y_i|Y^{i-1},M_3,F_2,F_3)\\
 &= \sum_{i=1}^n I(M_2,X_{2,i};Y_i|X_{3,i},Y^{i-1},M_3,F_2,F_3)\\
 &\leq \sum_{i=1}^n I(X_{2,i}, Y^{i-1},M_2,M_3,F_2,F_3; Y_i|X_{3,i})\\
 &= \sum_{i=1}^n I(X_{2,i};Y_i|X_{3,i}).
\end{align*}
Hence, we have
\begin{align}
R_2 \leq \min_{q_{(x_1|u)}} I(X_2;Y|X_3,U), \label{eq:converseR2}
\end{align}
where the joint distribution  of the random variables is $p(u)q(x_1|u)p(x_2|u)p(x_3|u)\mach(y|x_1,x_2,x_3)$. We note that $p(u)p(x_2|u)p(x_3|u)$ are the same as in \eqref{eq:converseR2+R3}. Similarly,
\begin{align}
R_3\leq \min_{q(x_1|u)} I(X_3;Y|X_2,U). \label{eq:converseR3}
\end{align}

Similarly, considering $P_{e,2}^{\text{weak}}$ with $p_{\vecX_2}(x_2^n)=\prod_{i=1}^n q_{X_{2,i}}(x_{2,i})$ (and $\vecX_i=F_i(W_i)$, $i=1,3$), we get
\begin{align}
R_3 &\leq \min_{q(x_2|u)} I(X_3;Y|X_1,U), \label{eq:converseR3B}\\
R_1 &\leq \min_{q(x_2|u)} I(X_1;Y|X_3,U), \label{eq:converseR1B}\\
R_3+R_1 &\leq \min_{q(x_2|u)} I(X_3,X_1;Y|U), \label{eq:converseR3+R1}
\end{align}
where the joint distribution of the random variables is $p(u)p(x_1|u)q(x_2|u)p(x_3|u)\mach(y|x_1,x_2,x_3)$ for some $q_{x_2|u}$. We note that $p(u)$ and $p(x_3|u)$ here are the same as in \eqref{eq:converseR2+R3}-\eqref{eq:converseR3}. Considering $P_{e,3}^{\text{weak}}$ with $p_{\vecX_3}(x_3^n)=\prod_{i=1}^n q_{X_{3,i}}(x_{3,i})$ (and $\vecX_i=F_i(W_i)$, $i=1,2$), we similarly arrive at
\begin{align}
R_1 &\leq \min_{q(x_3|u)} \pink{I(X_1;Y|X_2,U)}, \label{eq:converseR1C}\\
R_2 &\leq \min_{q(x_3|u)} \pink{I(X_2;Y|X_1,U)}, \label{eq:converseR2C}\\
R_1+R_2 &\leq \min_{q(x_3|u)} I(X_1,X_2;Y|U), \label{eq:converseR1+R2}
\end{align}
where the joint distribution of the random variables is $p(u)p(x_1|u)p(x_2|u)q(x_3|u)\mach(y|x_1,x_2,x_3)$. The $p(u)$, $p(x_1|u)$, and $p(x_2|u)$ are the same as in \eqref{eq:converseR2+R3}-\eqref{eq:converseR1+R2}. This completes the proof of converse.
\end{proof}

%\input{achievability_new} % This is the new proof with uses codebook and its corollary
%\input{pt2achievability_proof} %this is the old proof which used the old codebook
%!TeX root=paper.tex
\section{The $k$-user \bmac}\label{sec:k_user}
In this section, we generalize our model to a $k$-user \bmac for any positive integer $k$. We allow for a set of users to  be controlled by an adversary simultaneously. 

We study the problem under both randomized and deterministic codes. The techniques for the $3$-user \bmac are extended to show the characterization of the randomized capacity region. For the deterministic part, we take the first approach (Section~\ref{intro:first_approach}) as mentioned in the introduction. %a different approach than the one used for $3$-user \bmac. 
We first show that for any randomized code with vanishing probability of error, there exists another randomized code, also with a vanishing probability of error, which requires only $n^2$-valued randomness at each encoder for a code of blocklength $n$. This argument is along the lines of the extension of the elimination technique~\cite{Ahlswede78} provided in \cite{Jahn81}. Next, we generalize the symmetrizability conditions to show that the deterministic coding capacity region has non-empty interior if and only if the \bmac is not symmetrizable. This allows us to use a small rate positive code to share the random bits with the decoder  whenever the channel is not symmetrizable and then use the randomized scheme to achieve the entire randomized capacity region under deterministic codes (also see Remark~\ref{remark:2}). 

We give the system model in Section~\ref{sec:k_user_model} and discuss the randomized and deterministic coding capacity regions in Sections~\ref{sec:k_user_random} and \ref{sec:k_user_det} respectively. We only give proof sketches in these sections and defer the complete proofs to the appendices.

\subsection{System model}\label{sec:k_user_model}

A  memoryless $k$-user \bmac $(W, \cA)$ consists of a $k$-user memoryless MAC $W$ with input alphabets $\mathcal{X}_1,\mathcal{X}_2,\ldots, \mathcal{X}_k$, and output alphabet ${\mathcal Y}$ along with an adversary who can control a set of users simultaneously. The set of users the adversary controls may be any one of the sets in  $\cA\subseteq 2^{[1:k]}$, where $2^{[1:k]}$ denotes the power set of $[1:k]$. The other users and the decoder are unaware of the identity of the set $\cQ$ of users, $\cQ\in \cA$, that the adversary controls. In the sequel, we refer to the users in this set $\cQ\in \cA$ which the adversary controls as the malicious users and the other users as honest.   
If $\emptyset\in \cA$, then it corresponds to the case when all users are honest. 
For the $3$-user \bmac (Section~\ref{section:MAC2_model_main}) which considers the case when at most one user is malicious, the adversary structure is given by $\cA = \inb{\emptyset, \{1\},\{2\}, \{3\}}$. \pink{Along the lines of Definition~\ref{defn:rand_code} for the three user \bmac, we define randomized codes for $k$-user \bmac $(W, \cA)$ below.}
\begin{defn}[Randomized code]\label{defn:random_code_k}
An $(\nummsg_1,\nummsg_2,\ldots, \nummsg_k,n)$ randomized {\em code} for the \bmac $(W,\cA)$ consists of the following: 
\begin{enumerate}[label=(\roman*)]
\item $k$ message sets, $\mathcal{M}_i = \{1,\ldots,\nummsg_i\}$, $i=1,2,\ldots,k$,
\item $k$ independent randomized encoders, $F_{i}:\mathcal{M}_i\rightarrow \mathcal{X}_i^n$,  where $F_i\sim P_{F_i}$ takes values in {$\mathcal{F}_i\subseteq\{g:\cM_i\rightarrow \cX_i^n\},\, i =1,2,\ldots,k$} and
\item a decoder, $\phi:\mathcal{Y}^n\times\mathcal{F}_1\times\mathcal{F}_2\times\ldots\times\mathcal{F}_k\rightarrow\mathcal{M}_1\times\mathcal{M}_2\times\ldots\times\mathcal{M}_k$ where \\$\phi(\vecy,F_1,F_2,\ldots,F_k) = (\phi_1(\vecy,F_1,F_2,\ldots,F_k),\phi_2(\vecy,F_1,F_2,\ldots,F_k),\ldots,\phi_k(\vecy,F_1,F_2,\ldots,F_k))$ for some deterministic functions $\phi_i:\mathcal{Y}^n\times\mathcal{F}_1\times\mathcal{F}_2\times\ldots\times\mathcal{F}_k\rightarrow\mathcal{M}_1\times\mathcal{M}_2\times\ldots\times\mathcal{M}_k, \, i = 1,2,\ldots,k$.
\end{enumerate}
\end{defn}

\pink{Next, we define the probability of error, achievable rate region and the capacity region.
As mentioned in Section~\ref{section:MAC2_model}, the decoder is a function which maps the channel output as well as the random encoding maps to decoded messages. \blue{Hence, the adversary can mount an attack by selecting the random encoding maps of the users it controls.} Note that while doing this, the adversary does not have access to the random encoding maps of the other (honest) users. Similar to the $3$-user case, the adversary selects the encoding maps and chooses the inputs of all malicious users jointly. Note that while doing this, the adversary is unaware of the realizations of the other users' encoding maps. If the adversary controls the users in $\cQ\in\cA$, then it may  choose the encoding maps $f_{\cQ}$ ({\em i.e.}, $(f_i)_{i\in \cQ}$) in addition to the 
input vectors $\vecx_{\cQ}$. }

\pink{
Let $P^{\text{rand}}_{e, \cQ}$ denote the average probability of error when the  adversary controls the set $\cQ$ of users.
\begin{align}\label{eq:avg_strong}
P^{\text{rand}}_{e, \cQ} &= \max_{\vecx_{\cQ}, f_{\cQ}\in \cF_{\cQ}}\frac{1}{(\prod_{i\in \cQ^c}N_i)}\sum_{m_{\cQ^c}\in \cM_{\cQ^c}}\bbP\inp{\inb{\phi(\vecY, f_{\cQ}, F_{\cQ^c})_{\cQ^c}\neq m_{\cQ^c}}|\vecX_{\cQ^c} = F_{\cQ^c}(m_{\cQ^c}), \vecX_{\cQ} = \vecx_{\cQ}},
%\red{&:= \max_{\max_{\vecx_{\cQ}, f_{\cQ}\in \cF_{\cQ}}} P^{\strong}_{e, \cQ}(, \vecx_{\cQ}, f_{\cQ}, F_{\cQ^c}, \phi)}
\end{align}
where $\phi(\vecy, f_{\cQ}, f_{\cQ^c})_{\cQ^c}$ denotes $\hat{m}_{\cQ^c}$ for $\hat{m}_{[1:k]}\in \cM_{[1:k]}$ such that $\phi(\vecy, f_{\cQ}, f_{\cQ^c}) = \hat{m}_{[1:k]}$. The probability is over independent $F_i\sim P_{F_i}, \, i\in \cQ^c$ and the channel.}

\pink{
The average probability of error $P^{\text{rand}}_{e}$ is given by  
\begin{align*}
P^{\text{rand}}_{e} =\max_{\cQ\in \cA}P^{\text{rand}}_{e, \cQ}.
\end{align*}
Note that though the users controlled by the adversary do not use $f_{\cQ}$ for encoding, the decoder uses it and hence its choice gives the adversary additional power. 
We also emphasize that the decoder is unaware of the identity of the set $\cQ\in\cA$ of users controlled by the adversary ({\em i.e.}, in \eqref{eq:avg_strong}, the decoding map $\phi$ may not depend of $\cQ$.). }

\pink{
We say a rate tuple $(R_1,R_2,\ldots,R_k)$ is {\em achievable}, if there is a sequence of $(\lfloor2^{nR_1}\rfloor,\lfloor2^{nR_2}\rfloor,\ldots,\lfloor2^{nR_k}\rfloor,n)$ codes  $(F_1^{(n)},F_2^{(n)},\ldots,F_k^{(n)},\phi^{(n)})_{n=1}^\infty$ such that $\lim_{n\rightarrow\infty}P^{\text{rand}}_{e}(P_{F_1^{(n)}},P_{F_2^{(n)}},\ldots,P_{F_k^{(n)}},\phi^{(n)})\rightarrow0$. The {\em randomized coding capacity region} $\mathcal{R_{\text{random}}}$ is the closure of the set of all achievable rate triples. }

\pink{We also study the  {\em weak} adversary model for the converse where the adversary does not have any knowledge of the any of the random encoding maps while choosing the inputs of the malicious users.
Probability of error and capacity region for randomized codes with weak adversary can be defined by replacing  $P^{\text{rand}}_{e,\cQ}$ with $P^{\text{weak}}_{e,\cQ}$ for $\cQ\in\cA$ in the above definition,  where
\begin{align}\label{eq:avg_weak}
P^{\weak}_{e, \cQ} = \max_{\vecx_{\cQ}}\frac{1}{(\prod_{i\in \cQ^c}N_i)}\sum_{m_{\cQ^c}\in \cM_{\cQ^c}}\bbP\inp{\inb{\phi(\vecY, F_{\cQ}, F_{\cQ^c})_{\cQ^c} \neq m_{\cQ^c}}|\vecX_{\cQ^c} = F_{\cQ^c}(m_{\cQ^c}), \vecX_{\cQ} = \vecx_{\cQ}},
\end{align}
\noindent The probability is over independent $F_i\sim P_{F_i}, \, i\in [1:k]$ and the channel.}

\pink{
We denote the {\em
randomized coding capacity region} for the weak adversary by  $\mathcal{R}_{\random}^{\weak}$. As was the case in $3$-user \bmac, 
$\mathcal{R}_{\random} \subseteq \mathcal{R}_{\random}^{\weak}$. }
\pink{We define determinsitic codes for $k$-user \bmac $(W, \cA)$ along the lines of Definition~\ref{code:det_3user}}.
\begin{defn}[Deterministic code]\label{code:det_kuser}
An $(\nummsg_1,\nummsg_2,\ldots, \nummsg_k,n)$ deterministic {\em code} for the \bmac $(W, \cA)$ consists of: \vspace{-0.25em}
\begin{enumerate}[label=(\roman*)]
\item $k$ message sets, $\mathcal{M}_i = \{1,\ldots,\nummsg_i\}$, $i\in \{1,2,\ldots, k\}$,
\item $k$ encoders, $f_{i}:\mathcal{M}_i\rightarrow \mathcal{X}_i^n$, $i\in \{1,2,\ldots, k\}$, and
\item a decoder, $\phi:\mathcal{Y}^n\rightarrow\mathcal{M}_1\times\mathcal{M}_2\times\ldots\times\mathcal{M}_k.$ 
\end{enumerate}
\end{defn}
Let $P_{e, \cQ}$ denote the average probability of error when the adversary controls the set $\cQ\in \cA$  of users.
\begin{align}\label{eq:avg_error_k_users}
P_{e, \cQ} = \max_{\vecx_{\cQ}}\frac{1}{(\prod_{i\in \cQ^c}N_i)}\sum_{m_{\cQ^c}\in \cM_{\cQ^c}}\bbP\inp{\inb{\phi(\vecY)_{\cQ^c} \neq m_{\cQ^c}}|\vecX_{\cQ^c} = f_{\cQ^c}(m_{\cQ^c}), \vecX_{\cQ} = \vecx_{\cQ}}.
\end{align}
The average probability of error $P_{e}$ is given by  
\begin{align*}
P_{e} =\max_{\cQ\in \cA}P_{e, \cQ}.
\end{align*}
Similar to the randomized coding case, the decoder is unaware of which set of users from $\cA$ are controlled by the adversary.

We say a rate tuple $(R_1,R_2,\ldots, R_k)$ is {\em achievable} if there is a sequence of $(\lfloor2^{nR_1}\rfloor,\lfloor2^{nR_2}\rfloor,\ldots, \lfloor2^{nR_k}\rfloor,n)$ codes  $(f_1^{(n)},f_2^{(n)},\ldots, f_k^{(n)},\allowbreak\phi^{(n)})_{n=1}^\infty$ such that $\lim_{n\rightarrow\infty}P_{e}(f_1^{(n)},f_2^{(n)},\ldots, f_k^{(n)},\phi^{(n)})\rightarrow0$. The {\em deterministic coding capacity region} $\mathcal{R}_{\deterministic}$ is the set of all achievable rate tuples. 

Recall that for the three user \bmac (Section~\ref{section:MAC2_model_main}, where the adversary structure is $\cA = \inb{\emptyset,\{1\},\{2\}, \{3\}}$, we could show that $P_{e,0}\leq P_{e,1}+P_{e,2}+P_{e,3}$ (see \eqref{eq:honest_upper_bound}). Generalizing this to the $k$-user \bmac $(W, \cA)$, we can show the following lemma whose proof is in Appendix~\ref{appendix:proofs}.
\begin{lemma}\label{lemma:maximal sets}
For any $\cQ_1, \ldots, \cQ_t\in \cA$, $t\in \bbN$ and $\cQ\subseteq[1:k]$ such that $\cQ= \cap_{i=1}^{t}\cQ_i$, $P_{e, \cQ}\leq \sum_{i=1}^{t}P_{e, \cQ_i}$. 
\end{lemma}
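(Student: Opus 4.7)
The plan rests on two elementary observations. First, De Morgan's law gives $\cQ^c = \bigcup_{i=1}^t \cQ_i^c$, so a decoding failure for the honest users in the $\cQ$-scenario is the union (over $i$) of decoding failures for the honest users in the $\cQ_i$-scenario. Second, any attack by the adversary set $\cQ$ can be lifted to a legitimate attack by each larger set $\cQ_i \supseteq \cQ$ simply by having the ``extra'' users in $\cQ_i \setminus \cQ$ behave as honest encoders of a uniformly chosen message; this lifted attack is permitted by the model because $\cQ_i \in \cA$. Combining these two ingredients via a union bound should yield the claim.

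More concretely, first I would fix an attack $\vecx_\cQ$ attaining the maximum in \eqref{eq:avg_error_k_users} so that
\[
P_{e,\cQ} = \frac{1}{\prod_{j\in\cQ^c}N_j}\sum_{m_{\cQ^c}} \bbP\bigl(\phi(\vecY)_{\cQ^c}\neq m_{\cQ^c} \,\big|\, \vecX_{\cQ^c}=f_{\cQ^c}(m_{\cQ^c}),\ \vecX_\cQ=\vecx_\cQ\bigr).
\]
Since $\{\phi(\vecY)_{\cQ^c}\neq m_{\cQ^c}\} = \bigcup_{i=1}^t \{\phi(\vecY)_{\cQ_i^c}\neq m_{\cQ_i^c}\}$, a union bound produces $t$ terms, and I would analyse the $i$-th term by splitting the average over $m_{\cQ^c}$ into an average over the ``$\cQ_i$-honest'' coordinates $m_{\cQ_i^c}$ and an average over the ``lifted-attacker'' coordinates $m_{\cQ_i \setminus \cQ}$. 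For each fixed value of the latter, I would set the $\cQ_i$-attack to $\vecx_{\cQ_i} := (\vecx_\cQ,\, f_{\cQ_i\setminus\cQ}(m_{\cQ_i\setminus\cQ}))$; under this choice, the inner average over $m_{\cQ_i^c}$ equals the average error probability for the adversary set $\cQ_i$ using this particular (hence suboptimal) attack, and is therefore at most $P_{e,\cQ_i}$. Averaging over $m_{\cQ_i \setminus \cQ}$ preserves the bound, and summing over $i$ gives the desired inequality.

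There is no substantive obstacle here; the argument is a union bound wrapped in careful bookkeeping of which users play which role. The one place where care is needed is in verifying that the maximum inside the definition of $P_{e,\cQ_i}$ absorbs both the original choice of $\vecx_\cQ$ and the subsequent average over $m_{\cQ_i\setminus \cQ}$---but this is immediate since a maximum over attacks always dominates any average over attacks. The same template extends verbatim to the randomized settings $P_{e,\cQ}^{\weak}$, $P_{e,\cQ}^{\rand}$, and $P_{e,\cQ}^{\strong}$: one lifts the attack by having the users in $\cQ_i\setminus\cQ$ honestly run their (randomized) encoders on fresh uniform messages, using whichever portion of the shared randomness the relevant adversary model grants access to.
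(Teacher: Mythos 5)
Your proof is correct and follows essentially the same route as the paper: write the failure event for $\cQ$ as the union (via De Morgan) of the failure events for the $\cQ_i$, apply a union bound, and then for each $i$ absorb the honest transmissions of the users in $\cQ_i\setminus\cQ$ into the attack vector $\vecx_{\cQ_i}$, upper-bounding the resulting average over this ``lifted'' attack by the maximum that defines $P_{e,\cQ_i}$. The paper carries out the same bookkeeping inside a single chain of inequalities rather than fixing the maximizing $\vecx_\cQ$ up front, but the argument is identical.
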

This lemma implies that even if a set $\cQ= \cap_{i=1}^{t}\cQ_i$ as in Lemma~\ref{lemma:maximal sets} is removed from $\cA$, the capacity region of a \bmac remains unchanged.

\subsection{Randomized coding capacity region}\label{sec:k_user_random}
Let $\mathcal{R}$ be the closure of the set of all rate tuples $(R_1,R_2,\ldots,R_k)$ such that for some $p(u)p(x_1|u)p(x_2|u)\ldots p(x_k|u)$, the following conditions hold for all $\cQ\in\cA$ and $\cJ\subseteq \cQ^c$,
\begin{align}\label{rate_region_k_user}
\sum_{j\in \cJ}R_j\leq \min_{q(\vecx_{\cQ}|u)}I\inp{X_{\cJ};Y|X_{(\cQ\cup\cJ)^c}, \pink{U}}
\end{align}
where the mutual information above is evaluated using the joint distribution
$p(u) q(\vecx_{\cQ}|u)\prod_{j\in\cQ^c}p(x_j|u)W(y|\vecx_{\cQ},\vecx_{\cQ^c})$. \pink{Here, an upper bound of $2^k$ on $|\mathcal{U}|$ can be shown using the convex cover method~\cite[Appendix~C]{YHKEG}.}
\pink{\begin{remark}\label{remark:sum_Rate}
As discussed after Lemma~\ref{lemma:maximal sets}, the capacity region of a \bmac remains unchanged even if a set $\cQ= \cap_{i=1}^{t}\cQ_i$  is removed from $\cA$. It is easy to verify that the rate region $\cR$ shares this property. For instance, for the three user case, let $\cA = \inb{\emptyset,\{1\},\{2\}, \{3\}}$. Consider the constraint corresponding to $\cQ = \emptyset$ and $\cJ = \inb{1,2,3}$ in \eqref{rate_region_k_user}
\begin{align*}
R_1+R_2 + R_3 \leq I(X_1X_2X_3;Y|U).
\end{align*}This is implied by the following constraints which correspond to $\cQ = \inb{3}$, $\cJ = \inb{1,2}$ and $\cQ = \inb{1}$, $\cJ = \inb{3}$ respectively
\begin{align*}
R_1+R_2&\leq \min_{q(x_3|u)}I(X_1X_2;Y|U)\leq I(X_1X_2;Y|U)\Big|_{p(x_3|u)}, \text{ and}\\
R_3&\leq \min_{q(x_1|u)}I(X_3;Y|X_2U)\leq I(X_3;Y|X_2U)\Big|_{p(x_1|u)}.
\end{align*}Now, the implication follows from 
\begin{align*}
I(X_1X_2X_3;Y|U)&= I(X_1X_2;Y|U) + I(X_3;Y|X_1X_2U)\\
&\stackrel{(a)}{=} I(X_1X_2;Y|U) + I(X_3;YX_1|X_2U)\\
&\geq I(X_1X_2;Y|U) + I(X_3;Y|X_2U),
\end{align*}where $(a)$ follows from the conditional independence of $X_1, X_2, X_3$ given $U$.
Hence, the sum rate constraint (corresponding to $\emptyset$) is redundant in the three user case.
\end{remark}}
\begin{thm}\label{thm:random_k}
For a $k$-user \bmac, 
\pink{\[ \mathcal{R}_{\random} = \mathcal{R}_{\random}^{\weak} = \mathcal{R}.\]}
\end{thm}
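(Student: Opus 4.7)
Since $\mathcal{R}_{\random}^{\strong} \subseteq \mathcal{R}_{\random} \subseteq \mathcal{R}_{\random}^{\weak}$ holds by definition, it suffices to show the two end-to-end inclusions: $\mathcal{R}\subseteq \mathcal{R}_{\random}^{\strong}$ (achievability against the strong adversary) and $\mathcal{R}_{\random}^{\weak}\subseteq \mathcal{R}$ (converse for the weak adversary). Both arguments are natural generalizations of the three-user proof in Section~\ref{app:A}, with the new ingredient being that the ``attack structure'' is indexed by $\cQ\in\cA$ instead of a singleton, so we must simultaneously handle one induced AV-MAC per element of $\cA$.

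\textbf{Converse (weak adversary).} Fix a sequence of weak-adversary codes with $P_e^{\weak}\to 0$. For each $\cQ\in\cA$ and each $\cJ\subseteq \cQ^c$, the attack where user-set $\cQ$ transmits $\vecX_\cQ$ drawn i.i.d. from an arbitrary product distribution $\prod_{t=1}^n q(x_{\cQ,t})$ (independent of $F_\cQ$) is feasible for the weak adversary. Taking $M_i\sim\mathrm{Unif}(\cM_i)$ independent across $i$ and conditioning Fano's inequality on $(\vecY, F_{\cQ^c}, M_{\cQ^c\setminus \cJ})$, one obtains, after single-letterization via a time-sharing index $U\sim\mathrm{Unif}[1{:}n]$ independent of everything else,
\[
\sum_{j\in \cJ} R_j \;\leq\; I\bigl(X_{\cJ,U};Y_U\,\big|\,X_{(\cQ\cup\cJ)^c,U},U\bigr)+o(1),
\]
with joint law $p(u)\,q(x_\cQ|u)\prod_{i\in \cQ^c}p(x_i|u)\,W(y|x_{[1:k]})$. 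Crucially, $p(x_i|u)$ is determined by $F_i$ alone and is independent across $i\in[1{:}k]$ because the encoders $F_1,\ldots,F_k$ are independent; consequently the same product single-letter law $p(u)\prod_i p(x_i|u)$ can be used uniformly across all pairs $(\cQ,\cJ)$. Minimizing over the adversary's choice $q$ and invoking Caratheodory to bound $|\cU|\leq k$ completes the converse.

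\textbf{Achievability (strong adversary).} Fix a rate tuple in the interior of $\mathcal{R}$ and pick $\epsilon>0$. The code extends the construction used in the proof of Theorem~\ref{thm:random}. Each user~$i$ draws two independent pieces of shared randomness: (a) an i.i.d.\ codebook $G_i:\tilde{\cM}_i\to\cX_i^n$ of size $|\tilde\cM_i|=2^{nR_i}$ with entries generated from $p(x_i|u)$ along a fixed typical $\vec u$, and (b) a random injection $L_i:\cM_i\hookrightarrow\tilde\cM_i$, where $|\cM_i|=2^{nR_i}/v$ for a constant $v\sim 1/\epsilon$, obtained by choosing $L_i(m_i)$ uniformly from a preassigned partition cell $\cS_{m_i}\subseteq \tilde\cM_i$ of size $v$. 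User~$i$ transmits $F_i(m_i)=G_i(L_i(m_i))$. For each $\cQ\in \cA$, view the channel as a $|\cQ^c|$-user AV-MAC with state alphabet $\cX_\cQ$ and legitimate inputs $\vecX_{\cQ^c}$; a multi-user extension of Jahn's randomized coding theorem (whose proof follows the same template as in \cite{Jahn81}, and whose capacity region is exactly $\{R_\cJ : \sum_{j\in\cJ}R_j\leq \min_{q_\cQ} I(X_\cJ;Y|X_{(\cQ\cup\cJ)^c})\,\forall\,\cJ\subseteq\cQ^c\}$) produces a decoder $\Gamma^{(\cQ)}:\cY^n\to \tilde\cM_{\cQ^c}$ with error at most $\epsilon/|\cA|$. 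Define pre-decoders $\phi_i^{(\cQ)}(\vecy)=\Lambda_i(\Gamma_i^{(\cQ)}(\vecy))$ if $\Gamma_i^{(\cQ)}(\vecy)\in L_i(\cM_i)$ and $\bot$ otherwise, and let the final decoder output, for each $i$, the (almost-surely unique) non-$\bot$ value among $\{\phi_i^{(\cQ)}(\vecy):\cQ\in\cA,\;i\in\cQ^c\}$.

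\textbf{Error analysis and obstacles.} When the true adversarial set is $\cQ^\star\in\cA$, the pre-decoder $\phi_i^{(\cQ^\star)}$ outputs the true $m_i$ with probability at least $1-\epsilon/|\cA|$ for every honest $i\in (\cQ^\star)^c$. For any other $\cQ\in\cA$ with $i\in\cQ^c$, the AV-MAC decoder $\Gamma^{(\cQ)}$ has no guarantee since $\vecX_{\cQ^\star\setminus\cQ}$ is adversarial with respect to $W^{(\cQ)}$; however, by the independence of $L_i$ from the adversary's information (including $F_{\cQ^\star}$ and $\vecx_{\cQ^\star}$, even under the strong adversary model), the probability that $\Gamma_i^{(\cQ)}(\vecY)\in L_i(\cM_i\setminus\{m_i\})$ is at most $1/v\leq \epsilon$ by the same conditioning argument as in the three-user case. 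A union bound over $\cQ\in\cA$ and $i\in[1{:}k]$ then gives $P_e^{\strong}\to 0$. The main obstacle to a rigorous write-up is the multi-user AV-MAC randomized coding theorem invoked above; this does not appear in \cite{Jahn81} in the required generality, but I expect it to follow by a routine extension of Jahn's two-user argument (random i.i.d.\ codebooks, joint-typicality-plus-robust decoding against arbitrary state sequences), possibly aided by the randomness-reduction machinery used elsewhere in the paper. Handling the time-sharing variable $U$ is a minor technicality dealt with by concatenating codes for the extreme points of $\mathcal{R}$, as is standard.
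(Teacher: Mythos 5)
Your proposal follows essentially the same route as the paper's proof in Appendix~\ref{section:k_user_rand}. The converse (weak adversary, i.i.d. attack by $\cQ$, Fano plus single-letterization and Caratheodory, with the key observation that the same product single-letter law $p(u)\prod_i p(x_i|u)$ works uniformly across all $(\cQ,\cJ)$ because the encoders are independent) matches what the paper sketches by ``simple extension'' of the three-user converse; and the achievability (i.i.d. codebooks $G_i$, random injections $L_i$, pre-decoders $\phi_i^{(\cQ)}$ for each $\cQ\in\cA$, and the post-processing/majority step, with the $1/v\leq\epsilon$ hashing bound exploiting that $L_i$ is hidden from the adversary even under the strong model) is precisely the paper's construction. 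The multi-user AV-MAC randomized-coding result you flag as the main obstacle is exactly what the paper states as Theorem~\ref{thm:AV_MAC_t} and likewise disposes of by pointing to the remark in Jahn's paper that the two-user randomized-coding argument extends verbatim; so your honest acknowledgment of that dependency is in line with the paper's own treatment. The only cosmetic mismatch is that you invoke both conditioning on a fixed typical $\vec u$ and a separate concatenation step for time-sharing, which is somewhat redundant, but this does not affect correctness.
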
 
\pink{Similar to the three user case (Section~\ref{sec:random_proof_sketch}), we prove Theorem~\ref{thm:random_k} by showing an achievability in the standard model and a converse for the weak adversary. The converse can be proved by a simple extension of the proof of the converse of Theorem~\ref{thm:random} (three-user randomized coding capacity region) and is skipped}. The achievability uses arguments similar to the proof of achievability of Theorem~\ref{thm:random}. It is shown in Appendix~\ref{section:k_user_rand}. }

\subsection{Deterministic coding capacity region}\label{sec:k_user_det}
Similar to the $3$-user case (Section~\ref{sec:deterministic_capacity_region}), we first give a general symmetrizability condition which characterizes the class of channels under which all users cannot communicate reliably in a \bmac $(W, \cA)$ using deterministic codes. %We give a general symmetrizability condition for the $k$-user \bmac. 
For the $3$-user \bmac case, this condition (given below) specializes to the three conditions \eqref{eq:symm1}-\eqref{eq:symm3}.  
%We will show that the deterministic coding capacity region of a \bmac $(W, \cA)$ has empty interior if and only if the channel satisfies this generalized condition of {\em symmetrizability}. 
\begin{defn}[Symmetrizability and symmetrizable \bmac]\label{defn:symm_k}
For a non-empty set $\cT\subseteq [1:k]$, we say that a \bmac $(W, \cA)$  is {\em\textbf{$\cT$-symmetrizable}}  if there exist sets $\cQ, \cQ'\in  \cA$, not necessarily distinct, satisfying $\cQ\cap \cT = \cQ'\cap\cT = \emptyset$,  and  a pair of conditional distributions $P_{X_{\cQ}|X_{\cT\cup (\cQ\setminus \cQ')}}$ and $P'_{X_{\cQ'}|X_{\cT\cup (\cQ'\setminus \cQ)}}$ such that
\begin{align}
&\sum_{x'_{\cQ}\in \cX_{\cQ}}P_{X_{\cQ}|X_{\cT\cup (\cQ\setminus \cQ')}}(x'_{\cQ}|x_{\cT}, x_{\cQ\setminus \cQ'})W(y|x'_{\cQ}, \tilde{x}_{\cT}, x_{\cQ'\setminus \cQ}, x_{(\cT\cup \cQ \cup \cQ')^c})\nonumber\\
 &= \sum_{\tilde{x}_{\cQ'}\in \cX_{\cQ'}}P'_{X_{\cQ'}|X_{\cT\cup (\cQ'\setminus \cQ)}}(\tilde{x}_{\cQ'}|\tilde{x}_{\cT}, x_{\cQ'\setminus \cQ})W(y|\tilde{x}_{\cQ'}, {x}_{\cT}, x_{\cQ\setminus \cQ'}, x_{(\cT\cup \cQ \cup \cQ')^c})\label{eq:thirty}
\end{align}
for all $x_{\cT}, \tilde{x}_{\cT}\in \cX_{\cT}$, $x_{\cQ\setminus \cQ'}\in \cX_{\cQ\setminus \cQ'}$, $x_{\cQ'\setminus \cQ}\in \cX_{\cQ'\setminus \cQ}$, $x_{(\cT\cup \cQ \cup \cQ')^c}\in \cX_{(\cT\cup \cQ \cup \cQ')^c}$ and $y\in \cY$.
We say that a \bmac $(W, \cA)$ is {\em \textbf{symmetrizable}} if it is $\cT$-symmetrizable for any $\cT\neq\emptyset$.
\end{defn}

\begin{figure}[h]
\centering
\begin{tikzpicture}[scale=0.6]
    \usetikzlibrary {arrows.meta}
    \draw[line width = 1pt] (0,0) rectangle ++(2.5, 6.5) node[pos=.5]{$W$};
    \draw[line width = 1pt,->] (2.5, 3.25) -- ++ (1.7,0) node[right]{$Y$};
    \draw[line width = 1pt,->] (7, 3.25) -- ++ (-1.7,0);
    \draw[line width = 1pt,vermillion] (-5.8,4.5) rectangle ++(4.2, 2) node[pos=.5]{\scriptsize ${ P_{X_{\cQ}|X_{\cT\cup(\cQ\setminus\cQ')}}}$};
    \draw[vermillion,-, line width = 2.5pt, draw opacity=0.7] (0,4.5) -- ++ (0,2);
    \draw[-, line width = 2.5pt, draw opacity=0.6] (0,3) -- ++ (0,1.5);
    \draw[bluishgreen,-, line width = 2.5pt, draw opacity=0.7] (0,2) -- ++ (0,1);
    \draw[myblue,-, line width = 2.5pt, draw opacity=0.7] (0,0) -- ++ (0,2);
    \draw[vermillion, line width=1pt,double distance=1pt,arrows = {-Latex[length=0pt 2.5 0]}] (-1.6,5.5) -- (0,5.5);
    \draw[draw opacity=0.6, line width=1pt,double distance=1pt,arrows = {-Latex[length=0pt 2.5 0]}] (-1.6,3.75) node[left]{\scriptsize $x_{\cQ'\setminus\cQ}$} -- ++ (1.6,0); 
    \draw[bluishgreen, line width=1pt,double distance=1pt,arrows = {-Latex[length=0pt 2.5 0]}] (-1.6,2.5) node[left]{\scriptsize $x_{\inp{\cT\cup\cQ\cup\cQ'}^c}$} -- ++ (1.6,0);
    \draw[myblue, line width=1pt,double distance=1pt,arrows = {-Latex[length=0pt 2.5 0]}] (-1.6,1) node[left]{\scriptsize $\tilde{x}_{\cT}$} -- ++ (1.6,0);
    \draw[vermillion, line width=1pt,double distance=1pt,arrows = {-Latex[length=0pt 2.5 0]}] (-6.8,5) node[left]{\scriptsize $x_{\cQ\setminus\cQ'}$} -- ++ (1,0);
    \draw[vermillion, line width=1pt,double distance=1pt,arrows = {-Latex[length=0pt 2.5 0]}] (-6.8,6) node[left]{\scriptsize $x_{\cT}$} -- ++ (1,0);

    \draw[line width = 1pt] (7,0) rectangle ++(2.5, 6.5) node[pos=.5]{$W$};
    \draw[line width = 1pt,vermillion] (7+2.5+1.6,3) rectangle ++(4.2, 2.5) node[pos=.5]{ \scriptsize ${ P_{X_{\cQ'}|X_{\cT\cup(\cQ'\setminus\cQ)}}}$};
    \draw[vermillion,-, line width = 2.5pt, draw opacity=0.7] (7+2.5,3) -- ++ (0,2.5);
    \draw[-, line width = 2.5pt, draw opacity=0.6] (7+2.5,5.5) -- ++ (0,1);
    \draw[bluishgreen,-, line width = 2.5pt, draw opacity=0.7] (7+2.5,2) -- ++ (0,1);
    \draw[myblue,-, line width = 2.5pt, draw opacity=0.7] (7+2.5,0) -- ++ (0,2);
    \draw[vermillion, line width=1pt,double distance=1pt,arrows = {-Latex[length=0pt 2.5 0]}] (7+2.5+1.6,4.25) -- ++ (-1.6, 0);
    \draw[bluishgreen, line width=1pt,double distance=1pt,arrows = {-Latex[length=0pt 2.5 0]}] (7+2.5+1.6,2.5) node[right]{ \scriptsize $x_{\inp{\cT\cup\cQ\cup\cQ'}^c}$} -- ++ (-1.6, 0); 
    \draw[myblue, line width=1pt,double distance=1pt,arrows = {-Latex[length=0pt 2.5 0]}] (7+2.5+1.6,1) node[right]{ \scriptsize $x_{\cT}$} -- ++ (-1.6, 0);
    \draw[draw opacity=0.7, line width=1pt,double distance=1pt,arrows = {-Latex[length=0pt 2.5 0]}] (7+2.5+1.6,6) node[right]{ \scriptsize $x_{\cQ\setminus\cQ'}$} -- ++ (-1.6, 0);
    \draw[vermillion, line width=1pt,double distance=1pt,arrows = {-Latex[length=0pt 2.5 0]}] (7+2.5+1+4.2+1.6,3.7) node[right]{ \scriptsize $x_{\cQ'\setminus\cQ}$} -- ++ (-1, 0);
    \draw[vermillion, line width=1pt,double distance=1pt,arrows = {-Latex[length=0pt 2.5 0]}] (7+2.5+1+4.2+1.6,4.8) node[right]{ \scriptsize $\tilde{x}_{\cT}$} -- ++ (-1, 0); 
\end{tikzpicture}

\caption{For each $(\tilde{x}_{\cT},{x_{\cT}}, x_{\cQ\setminus \cQ'}, x_{\cQ'\setminus\cQ}, x_{{\cT\cup\cQ\cup\cQ'}^c})$, the conditional output distributions in the two cases above are the same. Thus, the receiver is unable to tell whether the users in set $\cT$ are sending $\tilde{x}_{\cT}$ or $x_{\cT}$.}
\label{fig:symm_k}
\end{figure}
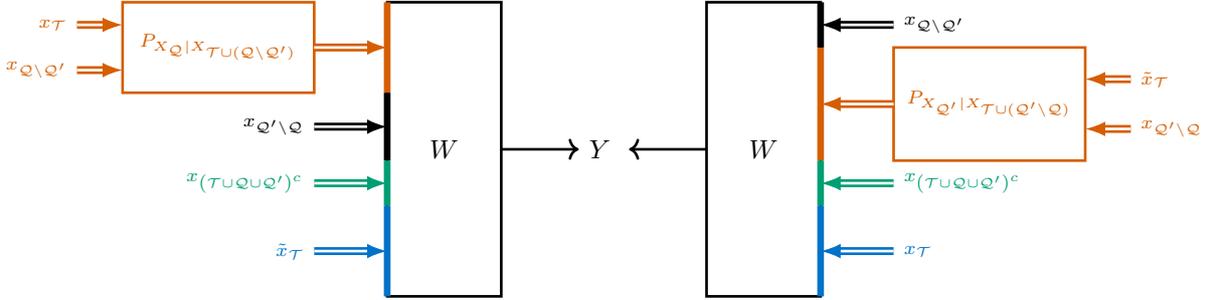

Fig.~\ref{fig:symm_k} illustrates the symmetrizability condition in \eqref{eq:thirty}. The set $\cT$ of users are being symmetrized by the users in sets $\cQ,\cQ'\in \cA$. The users not in ${\cT\cup\cQ\cup\cQ'}$ are not symmetrized. Definition~\ref{defn:symm_k} extends the notion of symmetrizability for the three users case (Definition~\ref{defn:symm}) to the $k$-user \bmac with adversary structure $\cA$. It generalizes the three conditions \eqref{eq:symm1}-\eqref{eq:symm3} to a single condition given by \eqref{eq:thirty}. In particular,  $\cT = \{j, k\}$ and $\cQ=\cQ'=\{i\}$ recovers \eqref{eq:symm1}, $\cT = \{k\}$ and $\cQ=\cQ'=\{i\}$ recovers \eqref{eq:symm2}, and $\cT = \{k\}$, $\cQ=\{i\}$ and $\cQ'=\{j\}$ recovers \eqref{eq:symm3}. 
Now, we are ready to state our main result. 

\begin{thm}\label{achievability_k_users_final}
For a $k$ user \bmac $(W, \cA)$, the interior of the deterministic coding capacity region, $\textsf{int}(\mathcal{R}_{\deterministic})$ is empty if and only if it is symmetrizable. Furthermore, when $(W, \cA)$ is not symmetrizable,  $\mathcal{R}_{\deterministic} = \mathcal{R}_{\random}$.
%\blue{say $k$ user \bmac in theorem statements.}
\end{thm}
\begin{proof}[Proof sketch]
For the {\em converse}, similar to the $3$-user case, we show in Appendix~\ref{sec:empty_interior} that if the channel is symmetrizable then $\textsf{int}(\mathcal{R}_{\deterministic})=\emptyset$.  When the channel is not symmetrizable, the outer bound on the rate region follows from Theorem~\ref{thm:random_k}. 
To show the {\em achievability} direction of the theorem, {\em i.e.}, $\mathcal{R}_{\deterministic} \supseteq \mathcal{R}_{\random}$ if $(W, \cA)$ is not symmetrizable, we take the first approach  discussed in the introduction (Section~\ref{intro:first_approach}). We first show the following lemma (proved in Appendix~\ref{app:k_users_positivity}) which states that all users can communicate at positive rates if a \bmac is not symmetrizable. 
\begin{lemma}\label{achievability_k_users}
If a $k$-user \bmac $(W, \cA)$ is not symmetrizable, then there exists $(R_1, R_2, \ldots, R_k)\in \cR_{\deterministic}$ where $R_i>0$ for all $i\in [1:k]$. 
\end{lemma}
\pink{Next, we show that for every randomized code achieving a small probability of error, there exists another randomized code which also achieves a small probability of error, but requires only $n^2$-valued randomness at each encoder for a code of blocklength $n$.} This {\em randomness reduction} argument is along the lines of the extension of the elimination technique~\cite{Ahlswede78} given in Jahn~\cite[Theorem~1]{Jahn81}. The formal statement and its proof is given in Appendix~\ref{app:randomness_reduction}.

The achievability of Theorem~\ref{achievability_k_users_final} is done in two phases. In the first phase, each user communicates a small number of their uniformly distributed message bits using the positive rate deterministic codes given by Lemma~\ref{achievability_k_users}. These will serve as the shared random bits between the user and the receiver in the second phase. The first phase is short compared to the second phase and only needs to communicate $\log{{n^2}}$ bits for a second phase of blocklength $n$. In the second phase, this small amount of randomness will be used by the new code obtained from the randomness reduction argument to communicate the remaining message bits. Note that the first phase allows the adversary to maliciously choose inputs of the users they control and thus the shared randomness between the malicious users and the decoder.
\pink{This is why in our model, we allow the adversary to select the encoding maps for all users in $\cQ$.}

\iffalse
Next, we show that for every randomized code achieving a small probability of error under the strong adversary model, there exists another randomized code which also achieves a small probability of error under the strong adversary model, but requires only $n^2$-valued randomness at each encoder for a code of blocklength $n$. This {\em randomness reduction} argument is along the lines of the extension of the elimination technique~\cite{Ahlswede78} given in Jahn~\cite[Theorem~1]{Jahn81}. The formal statement and its proof is given in Appendix~\ref{app:randomness_reduction}.

The achievability of Theorem~\ref{achievability_k_users_final} is done in two phases. In the first phase, each user communicates a small number of their uniformly distributed message bits using the positive rate deterministic codes given by Lemma~\ref{achievability_k_users}. These will serve as the shared random bits between the user and the receiver in the second phase. The first phase is short compared to the second phase and only needs to communicate $\log{{n^2}}$ bits for a second phase of blocklength $n$. In the second phase, this small amount of randomness will be used by the new code obtained from the randomness reduction argument to communicate the remaining message bits. Note that the first phase allows the adversary to maliciously choose inputs of the users they control and thus the shared randomness between the malicious users and the decoder.
This is why we establish the randomness reduction argument under the strong adversary setting.
\fi

The above argument is formalized in Lemma~\ref{lem:positivity_implies_capacity} below and is proved in Appendix~\ref{sec:positivity_implies_capacity}. Its proof is along the lines of the proof of \cite[Theorem~12.11]{CK11}. 
\begin{lemma}\label{lem:positivity_implies_capacity}
For a \bmac, if there exists $(R_1, R_2, \ldots, R_k)\in \mathcal{R}_{\deterministic}$ where $R_i>0$ for all $i\in [1:k]$, then $\mathcal{R}_{\deterministic} \supseteq \mathcal{R}_{\random}$.
\end{lemma}

\end{proof}

Similar to the $3$-user case, the proof of Lemma~\ref{achievability_k_users} (formally given in Appendix~\ref{app:k_users_positivity}) 
employs a codebook generated using a random coding argument 
(see Lemma~\ref{codebook_general} in Appendix~\ref{app:k_users_positivity}) and is  along the lines of \cite[Lemma 2]{AhlswedeC99} and \cite[Lemma 3]{CsiszarN88}.  The decoder is a generalization of $3$-user decoder given in Definition~\ref{def:decoder} and is defined below.
\begin{defn}[Decoder]\label{defn:dec_general}
For $\eta>0$,  and encoding maps, $f_i:\mathcal{M}_i\rightarrow \mathcal{X}_i^n$ for $i\in [1:k]$, the decoding set $\cD_{m_1,m_2,\ldots, m_k}  \subseteq \mathcal{Y}^n$ of the message tuple $\inp{m_1,m_2,\ldots, m_k}\in \cM_1\times \cM_2\times\ldots\times \cM_k$ is defined as the intersection of the sets $\dm{i}{m_i}, \, i\in [1:k]$, {\em i.e.}, $\cD_{m_1,m_2,\ldots, m_k}\defineqq \cap_{i=1}^{k}\dm{i}{m_i}$, where the sets $\dm{i}{m_i},\, i\in [1:k]$ are defined as follows:
%the intersection of the  sets $\cD_{m_1},\cD_{m_2},\ldots, \cD_{m_k}\subseteq \mathcal{Y}^n$ as defined below.\\

\noindent A sequence $\vecy \in \dm{i}{m_i},\,i\in [1:k]$, if there exists   
$\cQ\in \cA$, $i\notin\cQ$, $\vecx_{\cQ}\in \cX^n_{\cQ}$, $\tilde{m}_{\cQ^c}\in \cM_{\cQ^c}$ where $\tilde{m}_i=m_i$ and  random variables $X_{\cQ^c}$, $X_{\cQ}$ and $Y$ with  $(f_{\cQ^c}(\tilde{m}_{\cQ^c}), \vecx_{\cQ}, \vecy) \in T^{n}_{X_{\cQ^c} X_{\cQ}Y}$,   satisfying the following:  

\begin{enumerate} 
\item \label{dec:1} $D(P_{X_{\cQ^c}X_{\cQ}Y}||(\prod_{i\in \cQ^c}P_{X_{i}})P_{X_{\cQ}}W)< \eta$.  
\item \label{dec:2}Suppose there exist $\cQ'\in \cA$, not necessarily distinct from $\cQ$, a non-empty set $\cT\subseteq (\cQ\cup\cQ')^c$ with $i\in \cT$,  $\vecx'_{\cQ'}\in \mathcal{X}^n_{\cQ'}$, $m'_{\cQ\setminus \cQ'}\in \cM_{\cQ\setminus \cQ'}$, ${m'}_{\cT}\in \cM_{\cT}$  such that $m'_{t}\neq \tilde{m}_t$ for all $t \in \cT$ such that for the joint distribution $P_{X_{\cQ^c} X_{\cQ} X'_{\cT} X'_{\cQ\setminus \cQ'}X'_{\cQ'}Y}$ defined by $(f_{\cQ^c}(\tilde{m}_{\cQ^c}), \vecx_{\cQ},f_{\cT}(m'_{\cT}),f_{\cQ\setminus \cQ'}(m'_{\cQ\setminus \cQ'}),\vecx'_{\cQ'},  \vecy) \in \allowbreak T^{n}_{X_{\cQ^c} X_{\cQ} X'_{\cT} X'_{\cQ\setminus \cQ'}X'_{\cQ'} Y}$, \\
\begin{align}
     &D(P_{X'_{\cT}X'_{\cQ\setminus \cQ'}X_{\cQ^c\setminus{(\cT\cup\cQ')}} X'_{\cQ'} Y}||(\prod_{t\in \cT}P_{X'_{t}})(\prod_{j\in \cQ\setminus \cQ'}P_{X'_{j}})(\prod_{l\in {\cQ^c\setminus{(\cT\cup\cQ')}}}P_{X_{l}}) P_{X'_{\cQ'}} W)< \eta. \label{dec:2.1}\\
    &\text{Then, }\nonumber\\
    &\hspace{2cm}I(X_{\cQ^c}Y;X'_{\cT} X'_{\cQ\setminus \cQ'}|X_{\cQ}) < \eta. \label{dec:2.2}
\end{align}
\end{enumerate}
\end{defn}
In the definition of $\dm{i}{m_i}$ above, condition \ref{dec:1} checks for typicality with respect to channel inputs $(f_{j}(\tilde{m}_{j}),\, j\in \cQ^c)$ and $\vecx_{\cQ}$. Under condition \eqref{dec:2.1}, where an alternative input to the channel $(f_{t}(m'_{t}),\, t \in \cT),\,\allowbreak (f_{j}(m'_{j}),\,j \in \cQ\setminus \cQ'),\, (f_{l}(m_{l}),\, l\in \cQ^c\setminus(\cT\cup\cQ'))$ and $\vecx'_{\cQ'}$ looks typical, condition \eqref{dec:2.2} implies that the input $(f_{j}(\tilde{m}_{j}),\, j\in \cQ^c)$ and $\vecx_{\cQ}$ is a more plausible explanation for the channel output than the alternative input (see Fig.~\ref{fig:symm_k}). 

As mentioned earlier, Definition~\ref{defn:dec_general} is a generalization of $3$-user decoder in Definition~\ref{def:decoder}. In particular, check \textbf{(a)} can be obtained by setting $\cQ=\cQ'=\{3\}$ and $\cT=\{1,2\}$, \textbf{(b)} by setting $\cQ=\cQ'=\{3\}$ and $\cT=\{1\}$ and \textbf{(c)} by setting $\cQ=\{3\}$, $\cQ'=\{2\}$ and $\cT=\{1\}$.

%!TeX root=paper.tex
Similar to Lemma~\ref{lemma:dec}, we can show that for small enough $\eta>0$, $\cD_{m_1,m_2,\ldots,m_k}\cap\cD_{\tilde{m}_1,\tilde{m}_2,\ldots,\tilde{m}_k} = \emptyset$ for every $(m_1,m_2,\ldots,m_k)\neq (\tilde{m}_1,\tilde{m}_2,\ldots, \tilde{m}_k)$. See Lemma~\ref{lemma:dec_disambiguity} in Appendix~\ref{app:k_users_positivity} for the formal statement and proof.

\bibliography{refs} 

\begin{thebibliography}{10}

\bibitem{8989065}
N.~Sangwan, M.~Bakshi, B.~K. Dey, and V.~M. Prabhakaran, ``{Multiple} {Access}
  {Channels} with {Byzantine} {Users},'' in {\em 2019 IEEE Information Theory
  Workshop (ITW)}, pp.~1--5, 2019.

\bibitem{8232533}
A.~Burg, A.~Chattopadhyay, and K.-Y. Lam, ``{Wireless} {Communication} and
  {Security Issues for Cyber–Physical Systems and the Internet-of-Things},''
  {\em Proceedings of the IEEE}, vol.~106, no.~1, pp.~38--60, 2018.

\bibitem{ammar2018internet}
M.~Ammar, G.~Russello, and B.~Crispo, ``{Internet} of {Things}: {A} survey on
  the security of {IoT} frameworks,'' {\em Journal of Information Security and
  Applications}, vol.~38, pp.~8--27, 2018.

\bibitem{Wyner75}
A.~D. Wyner, ``{The} wire-tap channel,'' {\em Bell System Technical Journal},
  vol.~54, no.~8, pp.~1355--1387, 1975.

\bibitem{BlackwellBTAMS60}
D.~Blackwell, L.~Breiman, and A.~Thomasian, ``{The} capacities of certain
  channel classes under random coding,'' {\em The Annals of Mathematical
  Statistics}, vol.~31, no.~3, pp.~558--567, 1960.

\bibitem{survey}
A.~Lapidoth and P.~Narayan, ``{Reliable} communication under channel
  uncertainty,'' {\em IEEE Transactions on Information Theory}, vol.~44, no.~6,
  pp.~2148--2177, 1998.

\bibitem{BeemerCNS20}
A.~Beemer, E.~Graves, J.~Kliewer, O.~Kosut, and P.~Yu, ``{Authentication} and
  partial message correction over adversarial multiple-access channels,'' in
  {\em 2020 IEEE Conference on Communications and Network Security (CNS)},
  pp.~1--6, IEEE, 2020.

\bibitem{Jahn81}
J.-H. Jahn, ``{Coding} of arbitrarily varying multiuser channels,'' {\em IEEE
  Transactions on Information Theory}, vol.~27, no.~2, pp.~212--226, 1981.

\bibitem{Gubner90}
J.~A. Gubner, ``{On} the deterministic-code capacity of the multiple-access
  arbitrarily varying channel,'' {\em IEEE Transactions on Information Theory},
  vol.~36, no.~2, pp.~262--275, 1990.

\bibitem{AhlswedeC99}
R.~Ahlswede and N.~Cai, ``{Arbitrarily varying multiple-access channels. I.
  Ericson's symmetrizability is adequate, Gubner's conjecture is true},'' {\em
  IEEE Transactions on Information Theory}, vol.~45, no.~2, pp.~742--749, 1999.

\bibitem{PeregS19}
U.~Pereg and Y.~Steinberg, ``{The} capacity region of the arbitrarily varying
  {MAC}: with and without constraints,'' in {\em 2019 IEEE International
  Symposium on Information Theory (ISIT)}, pp.~445--449, IEEE, 2019.

\bibitem{WieseB13}
M.~Wiese and H.~Boche, ``{The} arbitrarily varying multiple-access channel with
  conferencing encoders,'' {\em IEEE Transactions on Information Theory},
  vol.~59, no.~3, pp.~1405--1416, 2012.

\bibitem{Ahlswede78}
R.~Ahlswede, ``{Elimination} of correlation in random codes for arbitrarily
  varying channels,'' {\em Zeitschrift f{\"u}r Wahrscheinlichkeitstheorie und
  verwandte Gebiete}, vol.~44, no.~2, 1978.

\bibitem{KTong}
O.~Kosut, L.~Tong, and N.~David, ``{Polytope} codes against adversaries in
  networks,'' {\em IEEE Transactions on Information Theory}, vol.~60, no.~6,
  pp.~3308--3344, 2014.

\bibitem{Jaggi7}
S.~Jaggi, M.~Langberg, S.~Katti, T.~Ho, D.~Katabi, and M.~M{\'e}dard,
  ``{Resilient} network coding in the presence of byzantine adversaries,'' in
  {\em IEEE INFOCOM 2007-26th IEEE International Conference on Computer
  Communications}, pp.~616--624, IEEE, 2007.

\bibitem{WangSK10}
D.~Wang, D.~Silva, and F.~R. Kschischang, ``{Robust} network coding in the
  presence of untrusted nodes,'' {\em IEEE Transactions on Information Theory},
  vol.~56, no.~9, pp.~4532--4538, 2010.

\bibitem{Yener}
X.~He and A.~Yener, ``{Strong} secrecy and reliable byzantine detection in the
  presence of an untrusted relay,'' {\em IEEE Transactions on Information
  Theory}, vol.~59, no.~1, pp.~177--192, 2012.

\bibitem{LaADIMACS04}
R.~J. La and V.~Anantharam, ``{A game-theoretic look at the Gaussian
  multiaccess channel},'' {\em DIMACS series in Discrete Mathematics and
  Theoretical Computer Science}, vol.~66, pp.~87--106, 2004.

\bibitem{NehaBDP19}
N.~Sangwan, M.~Bakshi, B.~K. Dey, and V.~M. Prabhakaran, ``{Multiple} access
  channels with adversarial users,'' in {\em 2019 IEEE International Symposium
  on Information Theory (ISIT)}, pp.~435--439, IEEE, 2019.

\bibitem{NehaBDP21}
N.~Sangwan, M.~Bakshi, B.~K. Dey, and V.~M. Prabhakaran, ``{Communication} with
  adversary identification in byzantine multiple access channels,'' in {\em
  2021 IEEE International Symposium on Information Theory (ISIT)},
  pp.~688--693, IEEE, 2021.

\bibitem{Ericson}
T.~Ericson, ``{Exponential} error bounds for random codes in the arbitrarily
  varying channel,'' {\em IEEE Transactions on Information Theory}, vol.~31,
  no.~1, pp.~42--48, 1985.

\bibitem{CsiszarN88}
I.~Csisz{\'a}r and P.~Narayan, ``{The capacity of the arbitrarily varying
  channel revisited: Positivity, constraints},'' {\em IEEE Transactions on
  Information Theory}, vol.~34, no.~2, pp.~181--193, 1988.

\bibitem{dueck1978maximal}
G.~Dueck, ``{Maximal error capacity regions are smaller than average error
  capacity regions for multi-user channels},'' {\em Problems of Control and
  Information Theory}, vol.~7, pp.~11--19, 1978.

\bibitem{SJ}
S.~Janson and A.~Ruci{\'n}ski, ``{The deletion method for upper tail
  estimates},'' {\em Combinatorica}, vol.~24, no.~4, pp.~615--640, 2004.

\bibitem{fitzi1999general}
M.~Fitzi, M.~Hirt, and U.~Maurer, ``{General} adversaries in unconditional
  multi-party computation,'' in {\em International Conference on the Theory and
  Application of Cryptology and Information Security}, pp.~232--246, Springer,
  1999.

\bibitem{hirt2000player}
M.~Hirt and U.~Maurer, ``{Player} simulation and general adversary structures
  in perfect multiparty computation,'' {\em Journal of Cryptology}, vol.~13,
  no.~1, pp.~31--60, 2000.

\bibitem{wu2022two}
T.~Wu, S.~Zhu, F.~Li, and L.~Liu, ``{Two Quantum Secret Sharing Schemes with
  Adversary Structure},'' {\em International Journal of Theoretical Physics},
  vol.~61, no.~7, pp.~1--21, 2022.

\bibitem{CK11}
I.~Csisz{\'a}r and J.~K{\"o}rner, {\em {Information theory: coding theorems for
  discrete memoryless systems}}.
\newblock Cambridge University Press, 2011.

\bibitem{YHKEG}
A.~El~Gamal and Y.-H. Kim, {\em {Network} information theory}.
\newblock Cambridge University Press, 2011.

\bibitem{370123}
J.~Gubner and B.~Hughes, ``Nonconvexity of the capacity region of the
  multiple-access arbitrarily varying channel subject to constraints,'' {\em
  IEEE Transactions on Information Theory}, vol.~41, no.~1, pp.~3--13, 1995.

\end{thebibliography}
\bibliographystyle{ieeetr}
\appendix
 %!TeX root=paper.tex
\section{Proof of Lemma~\ref{lemma:dec}}\label{app:disamb}
\begin{proof}
Suppose $\vecy\in\mathcal{Y}^n$ is such that $\vecy\in\cD^{(1)}_{m_1}\cap\cD^{(1)}_{\tilde{m}_1}$ for $m_1,\tilde{m}_1\in \mathcal{M}_1$ where $\tilde{m}_1\neq m_1$. Then there exist permutations $(i,j)$ and $(\tilde{i},\tilde{j})$ of $(2,3)$ such that one of the following cases holds.\\
{\bf Case 1}: $(\tilde{i}, \tilde{j}) = (i,j)$\\
There exist $m_j,\tilde{m}_j\in \mathcal{M}_j$, sequences $\vecx_i,\tilde{\vecx}_i\in \mathcal{X}_i^n$, and random variables $X_1,\tilde{X}_1,X_j,\tilde{X}_j,X_i,\tilde{X}_i$ with $(f_1(m_1),f_1(\tilde{m}_1),\allowbreak f_j(m_j),f_j(\tilde{m}_j),\vecx_i,\tilde{\vecx}_i)\in T^{n}_{X_1\tilde{X}_1X_j\tilde{X}_jX_i\tilde{X}_i}$ such that $D(P_{X_1X_jX_i Y}||P_{X_1}\times P_{X_j}\times P_{X_i}\times W),\, D(P_{\tilde{X}_1\tilde{X}_j\tilde{X}_i Y}||P_{\tilde{X}_1}\times P_{\tilde{X}_j}\times P_{\tilde{X}_i}\times W)< \eta$ and
\begin{description}
	\item [Case 1(a)] if $\tilde{m}_j\neq m_j$, then $I(X_1X_jY;\tilde{X}_1\tilde{X}_j|X_i), I(\tilde{X}_1\tilde{X}_jY;X_1X_j|\tilde{X}_i) < \eta $.
	\item [Case 1(b)] if $\tilde{m}_j =  m_j$, then $\tilde{X}_j = X_j$ and $I(X_1X_jY;\tilde{X}_1|X_i), I(\tilde{X}_1X_jY;X_1|\tilde{X}_i) < \eta$. 
\end{description}
{\bf Case 2}: $(\tilde{i}, \tilde{j}) = (j,i)$\\
There exist $m_j\in \mathcal{M}_j$, $\tilde{m}_i\in \mathcal{M}_i$, sequences $\tilde{\vecx}_{j}\in \mathcal{X}_{j}^n$, $\vecx_i\in \mathcal{X}_i^n$ and random variables $X_1$, $\tilde{X}_1$, $X_j$, $\tilde{X}_{j}$, $X_i$, $\tilde{X}_i$ with $(f_1(m_1)$, $f_1(\tilde{m}_1)$, $f_j(m_j)$, $\tilde{\vecx}_{j}$, $\vecx_i$, $f_i(\tilde{m}_i))\in T^{n}_{X_1\tilde{X}_1X_j\tilde{X}_{j}X_i\tilde{X}_i}$ such that $D(P_{X_1X_jX_i Y}||P_{X_1}\times P_{X_j}\times P_{X_i}\times W)$, $ D(P_{\tilde{X}_1\tilde{X}_{j}\tilde{X}_i Y}||P_{\tilde{X}_1}\times P_{\tilde{X}_{j}}\times P_{\tilde{X}_i}\times W)< \eta$ and
 $I(X_1  X_j Y;\tilde{X}_1 \tilde{X}_i|X_i)$, $ I(\tilde{X}_1\tilde{X}_iY;X_1X_j|\tilde{X}_{j})<\eta$.\\

\noindent We first analyze {\bf Case 1(a)}. Let $W_{Y|X_1X_jX_i}$ be denoted by $W$.
\begin{align*}
&D(P_{X_1X_jX_i Y}||P_{X_1}\times P_{X_j}\times P_{X_i}\times W) + D(P_{\tilde{X}_1,\tilde{X}_j}||P_{\tilde{X}_1}\times P_{\tilde{X}_j}) + I(X_1X_jY;\tilde{X}_1\tilde{X}_j|X_i) \stackrel{(a)}{=} \\
&\quad\sum_{x_1,x_j,x_i,y}P_{X_1X_jX_iY}(x_1,x_j,x_i,y)\log{\frac{P_{X_1X_jX_iY}(x_1,x_j,x_i,y)}{P_{X_1}(x_1)P_{X_j}(x_j)P_{X_i}(x_i)W(y|x_1,x_j,x_i)}} + \sum_{\tilde{x}_1,\tilde{x}_j}P_{\tilde{X}_1\tilde{X}_j}(\tilde{x}_1,\tilde{x}_j)\log{\frac{P_{\tilde{X}_1\tilde{X}_j}(\tilde{x}_1,\tilde{x}_j)}{P_{\tilde{X}_1}(\tilde{x}_1)P_{\tilde{X}_j}(\tilde{x}_j)}}\\
&\quad + \sum_{x_1,\tilde{x}_1,x_j,\tilde{x}_j,x_i,y}P_{X_1\tilde{X}_1X_j\tilde{X}_jX_iY}(x_1,\tilde{x}_1,x_j,\tilde{x}_j,x_i,y)\log{\frac{P_{X_1\tilde{X}_1X_j\tilde{X}_jY|X_i}(x_1,\tilde{x}_1,x_j,\tilde{x}_j,y|x_i)}{P_{X_1X_jY|X_i}(x_1,x_j,y|x_i)P_{\tilde{X}_1\tilde{X}_j|X_i}(\tilde{x}_1,\tilde{x}_j|x_i)}}\\
&\quad =\sum_{x_1,\tilde{x}_1,x_j,\tilde{x}_j,x_i,y}P_{X_1\tilde{X}_1X_j\tilde{X}_jX_iY}(x_1,\tilde{x}_1,x_j,\tilde{x}_j,x_i,y)\log{\frac{P_{X_1\tilde{X}_1X_j\tilde{X}_jX_iY}(x_1,\tilde{x}_1,x_j,\tilde{x}_j,x_i,y)}{P_{X_1}(x_1)P_{\tilde{X}_1}(\tilde{x}_1)P_{X_j}(x_j)P_{\tilde{X}_j}(\tilde{x}_j)P_{X_i|\tilde{X}_1\tilde{X}_j}(x_i|\tilde{x}_1,\tilde{x}_j)W(y|x_1,x_j,x_i)}}\\
&\quad = D(P_{X_1\tilde{X}_1X_j\tilde{X}_jX_iY}||P_{X_1}P_{\tilde{X}_1}P_{X_j}P_{\tilde{X}_j}P_{X_i|\tilde{X}_1\tilde{X}_j}W)\\
&\quad \stackrel{\text{(b)}}{\geq} D(P_{X_1\tilde{X}_1X_j\tilde{X}_jY}||P_{X_1}P_{\tilde{X}_1}P_{X_j}P_{\tilde{X}_j}\tilde{V}_1)\text{ where }\tilde{V}_1(y|x_1,\tilde{x}_1,x_j,\tilde{x}_j) = \sum_{x_i}P_{X_i|\tilde{X}_1\tilde{X}_j}(x_i|\tilde{x}_1,\tilde{x}_j)W(y|x_1,x_j,x_i),
\end{align*}
where (b) follows from the log sum inequality. From the given conditions, we know that the term on the LHS of (a) is no greater than $3\eta$. Thus, $D(P_{X_1\tilde{X}_1X_j\tilde{X}_jY}||P_{X_1}P_{\tilde{X}_1}P_{X_j}P_{\tilde{X}_j}\tilde{V}_1) \leq 3\eta$. Using Pinsker's inequality, it follows that 
\begin{equation}\label{eq:a5}
\sum_{x_1,\tilde{x}_1,x_j,\tilde{x}_j,y}\Big|P_{X_1\tilde{X}_1X_j\tilde{X}_jY}(x_1,\tilde{x}_1,x_j,\tilde{x}_j,y)-P_{X_1}(x_1)P_{\tilde{X}_1}(\tilde{x}_1)P_{X_j}(x_j)P_{\tilde{X}_j}(\tilde{x}_j)\tilde{V}_1(y|x_1,\tilde{x}_1,x_j,\tilde{x}_j)\Big| \leq c\sqrt{3\eta},
\end{equation}
where $c$ is some positive constant. Following a similar line of argument, we can show that 
\begin{align*}
3\eta &\geq D(P_{\tilde{X}_1\tilde{X}_j\tilde{X}_i Y}||P_{\tilde{X}_1}\times P_{\tilde{X}_j}\times P_{\tilde{X}_i}\times W) + D(P_{X_1X_j}||P_{X_1}\times P_{X_j}) + I(\tilde{X}_1\tilde{X}_jY;X_1 X_j|\tilde{X}_i)\\
& \geq D(P_{X_1\tilde{X}_1X_j\tilde{X}_jY}||P_{X_1}P_{\tilde{X}_1}P_{X_j}P_{\tilde{X}_j}V_1)\text{ where }V_1(y|x_1,\tilde{x}_1,x_j,\tilde{x}_j) = \sum_{\tilde{x}_i}P_{\tilde{X}_i|X_1X_j}(\tilde{x}_i|x_1,x_j)W(y|\tilde{x}_1,\tilde{x}_j,\tilde{x}_i)
\end{align*}
\noindent Using Pinsker's inequality, it follows that 
\begin{equation}\label{eq:a6}
\sum_{x_1,\tilde{x}_1,x_j,\tilde{x}_j,y}\Big|P_{X_1\tilde{X}_1X_j\tilde{X}_jY}(x_1,\tilde{x}_1,x_j,\tilde{x}_j,y)-P_{X_1}(x_1)P_{\tilde{X}_1}(\tilde{x}_1)P_{X_j}(x_j)P_{\tilde{X}_j}(\tilde{x}_j)V_1(y|x_1,\tilde{x}_1,x_j,\tilde{x}_j)\Big| \leq c\sqrt{3\eta}.
\end{equation}
From \eqref{eq:a5} and \eqref{eq:a6}, 
\begin{equation*}
\sum_{x_1,\tilde{x}_1,x_j,\tilde{x}_j,y}P_{X_1}(x_1)P_{\tilde{X}_1}(\tilde{x}_1)P_{X_j}(x_j)P_{\tilde{X}_j}(\tilde{x}_j)\Big|\tilde{V}_1(y|x_1,\tilde{x}_1,x_j,\tilde{x}_j)-V_1(y|x_1,\tilde{x}_1,x_j,\tilde{x}_j)\Big| \leq 2c\sqrt{3\eta}.
\end{equation*}
This implies that 
\begin{equation}\label{eq:1a}
\max_{x_1,\tilde{x}_1,x_j,\tilde{x}_j,y}\Big|\tilde{V}_1(y|x_1,\tilde{x}_1,x_j,\tilde{x}_j)-V_1(y|x_1,\tilde{x}_1,x_j,\tilde{x}_j)\Big| \leq \frac{2c\sqrt{3\eta}}{\alpha^4}.
\end{equation}
Similar to~\cite[(A.15) on page~748]{AhlswedeC99}, since $\mach$ is not $\cX_1\times\cX_j$-{\em symmetrizable by}~$\cX_i$ ({\em i.e.},~\eqref{eq:symm1} does not hold for $(i,j,k) = (i,j,1)$), we can show that for any pair of channels $P_{\tilde{X}_i|X_1X_j}$ and $P_{X_i|\tilde{X}_1\tilde{X}_j}$, there exists $\zeta_1>0$ such that
\begin{equation*}
\max_{x_1,\tilde{x}_1,x_j,\tilde{x}_j,y}\Big|\tilde{V}_1(y|x_1,\tilde{x}_1,x_j,\tilde{x}_j)-V_1(y|x_1,\tilde{x}_1,x_j,\tilde{x}_j)\Big| \geq \zeta_1.
\end{equation*}
This contradicts \eqref{eq:1a} if $\eta < \frac{\zeta_1^2\alpha^8}{12c^2}$.\\

\noindent We now analyze {\bf Case 1(b)}. 
\begin{align*}
&D(P_{X_1X_jX_i Y}||P_{X_1}\times P_{X_j}\times P_{X_i}\times W)  + I(X_1X_jY;\tilde{X}_1|X_i) \stackrel{(a)}{=} \\
&\quad\sum_{x_1,x_j,x_i,y}P_{X_1X_jX_iY}(x_1,x_j,x_i,y)\log{\frac{P_{X_1X_jX_iY}(x_1,x_j,x_i,y)}{P_{X_1}(x_1)P_{X_j}(x_j)P_{X_i}(x_i)W(y|x_1,x_j,x_i)}} \\
&\quad\quad + \sum_{x_1,\tilde{x}_1,x_j,x_i,y}P_{X_1\tilde{X}_1X_jX_iY}(x_1,\tilde{x}_1,x_j,x_i,y)\log{\frac{P_{X_1\tilde{X}_1X_jY|X_i}(x_1,\tilde{x}_1,x_j,y|x_i)}{P_{X_1X_jY|X_i}(x_1,x_j,y|x_i)P_{\tilde{X}_1|X_i}(\tilde{x}_1|x_i)}}\\
&\quad\quad =\sum_{x_1,\tilde{x}_1,x_j,x_i,y}P_{X_1\tilde{X}_1X_jX_iY}(x_1,\tilde{x}_1,x_j,x_i,y)\log{\frac{P_{X_1\tilde{X}_1X_jX_iY}(x_1,\tilde{x}_1,x_j,x_i,y)}{P_{X_1}(x_1)P_{\tilde{X}_1}(\tilde{x}_1)P_{X_j}(x_j)P_{X_i|\tilde{X}_1}(x_i|\tilde{x}_1)W(y|x_1,x_j,x_i)}}\\
&\quad\quad = D(P_{X_1\tilde{X}_1X_jX_iY}||P_{X_1}P_{\tilde{X}_1}P_{X_j}P_{X_i|\tilde{X}_1}W)\\
&\qquad \stackrel{\text{(b)}}{\geq} D(P_{X_1\tilde{X}_1X_jY}||P_{X_1}P_{\tilde{X}_1}P_{X_j}\tilde{V}_2)\text{ where }\tilde{V}_2(y|x_1,\tilde{x}_1,x_j) = \sum_{x_i}P_{X_i|\tilde{X}_1}(x_i|\tilde{x}_1)W(y|x_1,x_j,x_i),
\end{align*}
where (b) follows from the log sum inequality. From the given conditions, we know that the term on the LHS of (a) is no greater than $2\eta$. Thus, $D(P_{X_1\tilde{X}_1X_jY}||P_{X_1}P_{\tilde{X}_1}P_{X_j}\tilde{V}_2) \leq 2\eta$. Using Pinsker's inequality, it follows that 
\begin{equation}\label{eq:a52}
\sum_{x_1,\tilde{x}_1,x_j,y}\Big|P_{X_1\tilde{X}_1X_jY}(x_1,\tilde{x}_1,x_j,y)-P_{X_1}(x_1)P_{\tilde{X}_1}(\tilde{x}_1)P_{X_j}(x_j)\tilde{V}_2(y|x_1,\tilde{x}_1,x_j)\Big| \leq c\sqrt{2\eta},
\end{equation}
where $c$ is some positive constant. Following a similar line of argument, we can show that 
\begin{align*}
2\eta& \geq D(P_{\tilde{X}_1 X_j \tilde{X}_i Y}||P_{\tilde{X}_1}\times P_{X_j}\times P_{\tilde{X}_i}\times W)  + I(\tilde{X}_1X_jY;X_1|\tilde{X}_i)\\
\qquad & \geq D(P_{X_1\tilde{X}_1X_jY}||P_{X_1}P_{\tilde{X}_1}P_{X_j}V_2)\text{ where }V_2(y|x_1,\tilde{x}_1,x_j) = \sum_{\tilde{x}_i}P_{\tilde{X}_i|X_1}(\tilde{x}_i|x_1)W(y|\tilde{x}_1,x_j,\tilde{x}_i). 
\end{align*}
Using Pinsker's inequality, it follows that 
\begin{equation}\label{eq:a62}
\sum_{x_1,\tilde{x}_1,x_j,y}\Big|P_{X_1\tilde{X}_1X_jY}(x_1,\tilde{x}_1,x_j,y)-P_{X_1}(x_1)P_{\tilde{X}_1}(\tilde{x}_1)P_{X_j}(x_j)V_2(y|x_1,\tilde{x}_1,x_j)\Big| \leq c\sqrt{3\eta}.
\end{equation}
From \eqref{eq:a52} and \eqref{eq:a62}, 
\begin{equation*}
\sum_{x_1,\tilde{x}_1,x_j,y}P_{X_1}(x_1)P_{\tilde{X}_1}(\tilde{x}_1)P_{X_j}(x_j)\Big|\tilde{V}_2(y|x_1,\tilde{x}_1,x_j)-V_2(y|x_1,\tilde{x}_1,x_j)\Big| \leq 2c\sqrt{3\eta}.
\end{equation*}
This implies that 
\begin{equation}\label{eq:1b}
\max_{x_1,\tilde{x}_1,x_j,y}\Big|\tilde{V}_2(y|x_1,\tilde{x}_1,x_j)-V_2(y|x_1,\tilde{x}_1,x_j)\Big| \leq \frac{2c\sqrt{2\eta}}{\alpha^4}.
\end{equation}
Similar to~\cite[(A.5) on page~747]{AhlswedeC99}, since $\mach$ is not $\cX_1|\cX_j$-{\em symmetrizable by}~$\cX_i$ ({\em i.e.},~\eqref{eq:symm2} does not hold for $(i,j,k) = (i,j,1)$), we can show that for any pair for channels $P_{\tilde{X}_i|X_1}$ and $P_{X_i|\tilde{X}_1}$, there exists $\zeta_2>0$ such that
\begin{equation*}
\max_{x_1,\tilde{x}_1,x_j,y}\Big|\tilde{V}_2(y|x_1,\tilde{x}_1,x_j)-V_2(y|x_1,\tilde{x}_1,x_j)\Big| \geq \zeta_2.
\end{equation*}
This contradicts \eqref{eq:1b} if $\eta < \frac{\zeta_2^2\alpha^8}{8c^2}$.\\

 \noindent We now analyse {\bf Case 2}. 
\begin{align*}
&D(P_{X_1X_jX_i Y}||P_{X_1}\times P_{X_j}\times P_{X_i}\times W) + D(P_{\tilde{X}_1,\tilde{X}_i}||P_{\tilde{X}_1}\times P_{\tilde{X}_i}) + I(X_1  X_j Y;\tilde{X}_1 \tilde{X}_i|X_i) \stackrel{(a)}{=} \\
&\quad\sum_{x_1,x_j,x_i,y}P_{X_1X_jX_iY}(x_1,x_j,x_i,y)\log{\frac{P_{X_1X_jX_iY}(x_1,x_j,x_i,y)}{P_{X_1}(x_1)P_{X_j}(x_j)P_{X_i}(x_i)W(y|x_1,x_j,x_i)}} + \sum_{\tilde{x}_1,\tilde{x}_i}P_{\tilde{X}_1\tilde{X}_i}(\tilde{x}_1,\tilde{x}_i)\log{\frac{P_{\tilde{X}_1\tilde{X}_i}(\tilde{x}_1,\tilde{x}_i)}{P_{\tilde{X}_1}(\tilde{x}_1)P_{\tilde{X}_i}(\tilde{x}_i)}}\\
&\quad + \sum_{x_1,\tilde{x}_1,x_j,x_i,\tilde{x}_i,y}P_{X_1\tilde{X}_1X_jX_i\tilde{X}_iY}(x_1,\tilde{x}_1,x_j,x_i,\tilde{x}_i,y)\log{\frac{P_{X_1\tilde{X}_1X_j\tilde{X}_iY|X_i}(x_1,\tilde{x}_1,x_j,\tilde{x}_i,y|x_i)}{P_{\tilde{X}_1\tilde{X}_i|X_i}(\tilde{x}_1,\tilde{x}_i|x_i)P_{X_1X_jY|X_i}(x_1,x_j,y|x_i)}}\\
&\quad =\sum_{x_1,\tilde{x}_1,x_j,x_i,\tilde{x}_i,y}P_{X_1\tilde{X}_1X_jX_i\tilde{X}_iY}(x_1,\tilde{x}_1,x_j,x_i,\tilde{x}_i,y)\log{\frac{P_{X_1\tilde{X}_1X_jX_i\tilde{X}_iY}(x_1,\tilde{x}_1,x_j,x_i,\tilde{x}_i,y)}{P_{X_1}(x_1)P_{\tilde{X}_1}(\tilde{x}_1)P_{X_j}(x_j)P_{\tilde{X}_i}(\tilde{x}_i)P_{X_i|\tilde{X}_1\tilde{X}_i}(x_i|\tilde{x}_1,\tilde{x}_i)W(y|x_1,x_j,x_i)}}\\
&\quad = D(P_{X_1\tilde{X}_1X_jX_i\tilde{X}_iY}||P_{X_1}P_{\tilde{X}_1}P_{X_j}P_{\tilde{X}_i}P_{X_i|\tilde{X}_1\tilde{X}_i}W)\\
&\quad \stackrel{\text{(b)}}{\geq} D(P_{X_1\tilde{X}_1X_j\tilde{X}_iY}||P_{X_1}P_{\tilde{X}_1}P_{X_j}P_{\tilde{X}_i}\tilde{V}_3)\text{ where }\tilde{V}_3(y|x_1,\tilde{x}_1,x_j,\tilde{x}_i) = \sum_{x_i}P_{X_i|\tilde{X}_1\tilde{X}_i}(x_i|\tilde{x}_1,\tilde{x}_i)W(y|x_1,x_j,x_i),
\end{align*}
where (b) follows from the log sum inequality. From the given conditions, we know that the term on the LHS of (a) is no greater than $3\eta$. Thus, $D(P_{X_1\tilde{X}_1X_j\tilde{X}_iY}||P_{X_1}P_{\tilde{X}_1}P_{X_j}P_{\tilde{X}_i}\tilde{V}_3) \leq 3\eta$. Using Pinsker's inequality, it follows that 
\begin{equation}\label{eq:a1}
\sum_{x_1,\tilde{x}_1,x_j,\tilde{x}_i,y}\Big|P_{X_1\tilde{X}_1X_j\tilde{X}_iY}(x_1,\tilde{x}_1,x_j,\tilde{x}_i,y)-P_{X_1}(x_1)P_{\tilde{X}_1}(\tilde{x}_1)P_{X_j}(x_j)P_{\tilde{X}_i}(\tilde{x}_i)\tilde{V}_3(y|x_1,\tilde{x}_1,x_j,\tilde{x}_i)\Big| \leq c\sqrt{3\eta}
\end{equation}
for some constant $c>0$.
\noindent Following a similar line of argument, 
\begin{align*}
&D(P_{\tilde{X}_1\tilde{X}_{j}\tilde{X}_i Y}||P_{\tilde{X}_1}\times P_{\tilde{X}_{j}}\times P_{\tilde{X}_i}\times W) + D(P_{X_1X_j}||P_{X_1}\times P_{X_{j}}) + I(\tilde{X}_1 \tilde{X}_iY ;X_1  X_j|\tilde{X}_{j}) \stackrel{(a)}{=} \\
&\quad\sum_{\tilde{x}_1,\tilde{x}_{j},\tilde{x}_i,y}P_{\tilde{X}_1\tilde{X}_{j}\tilde{X}_iY}(\tilde{x}_1,\tilde{x}_{j},\tilde{x}_i,y)\log{\frac{P_{\tilde{X}_1\tilde{X}_{j}\tilde{X}_iY}(\tilde{x}_1,\tilde{x}_{j},\tilde{x}_i,y)}{P_{\tilde{X}_1}(\tilde{x}_1)P_{\tilde{X}_{j}}(\tilde{x}_{j})P_{\tilde{X}_i}(\tilde{x}_i)W(y|\tilde{x}_1,\tilde{x}_{j},\tilde{x}_i)}} + \sum_{x_1,x_j}P_{X_1X_{j}}(x_1,x_{j})\log{\frac{P_{X_1X_{j}}(x_1,x_{j})}{P_{X_1}(x_1)P_{X_{j}}(x_{j})}}\\
&\quad + \sum_{x_1,\tilde{x}_1,x_j,\tilde{x}_{j},\tilde{x}_i,y}P_{X_1\tilde{X}_1X_j\tilde{X}_{j}\tilde{X}_iY}(x_1,\tilde{x}_1,x_j,\tilde{x}_{j},\tilde{x}_i,y)\log{\frac{P_{X_1\tilde{X}_1X_j\tilde{X}_iY|\tilde{X}_{j}}(x_1,\tilde{x}_1,x_j,\tilde{x}_i,y|\tilde{x}_{j})}{P_{\tilde{X}_1\tilde{X}_iY|\tilde{X}_{j}}(\tilde{x}_1,\tilde{x}_i,y|\tilde{x}_{j})P_{X_1X_j|\tilde{X}_{j}}(x_1,x_j|\tilde{x}_{j})}}\\
&\quad =\sum_{x_1,\tilde{x}_1,x_j,\tilde{x}_{j},\tilde{x}_i,y}P_{X_1\tilde{X}_1X_j\tilde{X}_{j}\tilde{X}_iY}(x_1,\tilde{x}_1,x_j,\tilde{x}_{j},\tilde{x}_i,y)\log{\frac{P_{X_1\tilde{X}_1X_j\tilde{X}_{j}\tilde{X}_iY}(x_1,\tilde{x}_1,x_j,\tilde{x}_{j},\tilde{x}_i,y)}{P_{X_1}(x_1)P_{\tilde{X}_1}(\tilde{x}_1)P_{X_j}(x_j)P_{\tilde{X}_i}(\tilde{x}_i)P_{\tilde{X}_{j}|X_1X_j}(\tilde{x}_{j}|x_1,x_j)W(y|\tilde{x}_1,\tilde{x}_{j},\tilde{x}_i)}}\\
&\quad = D(P_{X_1\tilde{X}_1X_j\tilde{X}_{j}\tilde{X}_iY}||P_{X_1}P_{\tilde{X}_1}P_{X_j}P_{\tilde{X}_{j}P_{\tilde{X}_{j}}|X_1X_{j}}W)\\
&\quad \geq D(P_{X_1\tilde{X}_1X_j\tilde{X}_iY}||P_{X_1}P_{\tilde{X}_1}P_{X_j}P_{\tilde{X}_i}V_3)\text{ where }V_3(y|x_1,\tilde{x}_1,x_j,\tilde{x}_i) = \sum_{\tilde{x}_{j}}P_{\tilde{X}_{j}|X_1X_j}(\tilde{x}_{j}|x_1,x_j)W(y|\tilde{x}_1,\tilde{x}_{j},\tilde{x}_i).
\end{align*}

\noindent From the given conditions, the term on the left of (a) is no larger than $3\eta$. Thus, $D(P_{X_1\tilde{X}_1X_j\tilde{X}_iY}||P_{X_1}P_{\tilde{X}_1}P_{X_j}P_{\tilde{X}_i}V_3) \leq 3\eta$.

\noindent 
Using Pinsker's inequality, it follows that 
\begin{equation}\label{eq:a2}
\sum_{x_1,\tilde{x}_1,x_j,\tilde{x}_i,y}\Big|P_{X_1\tilde{X}_1X_j\tilde{X}_iY}(x_1,\tilde{x}_1,x_j,\tilde{x}_i,y)-P_{X_1}(x_1)P_{\tilde{X}_1}(\tilde{x}_1)P_{X_j}(x_j)P_{\tilde{X}_i}(\tilde{x}_i)V_3(y|x_1,\tilde{x}_1,x_j,\tilde{x}_i)\Big| \leq c\sqrt{3\eta}.
\end{equation}
From \eqref{eq:a1} and \eqref{eq:a2}, 
\begin{equation}
\sum_{x_1,\tilde{x}_1,x_j,\tilde{x}_j,y}P_{X_1}(x_1)P_{\tilde{X}_1}(\tilde{x}_1)P_{X_j}(x_j)P_{\tilde{X}_i}(\tilde{x}_i)\Big|\tilde{V}_3(y|x_1,\tilde{x}_1,x_j,\tilde{x}_i)-V_3(y|x_1,\tilde{x}_1,x_j,\tilde{x}_i)\Big| \leq 2c\sqrt{3\eta}.
\end{equation}

\noindent This implies that 
\begin{equation}\label{eq:1c}
\max_{x_1,\tilde{x}_1,x_j,\tilde{x}_j,y}\Big|\tilde{V}_3(y|x_1,\tilde{x}_1,x_j,\tilde{x}_i)-V_3(y|x_1,\tilde{x}_1,x_j,\tilde{x}_i)\Big| \leq \frac{2c\sqrt{3\eta}}{\alpha^4}.
\end{equation}
Since $\mach$ is not $\cX_1$-{\em symmetrizable by}~$\cX_j/\cX_i$ ({\em i.e.},~\eqref{eq:symm3} does not hold for $(i,j,k) = (i,j,1)$), for any pair of channels $P_{X_i|\tilde{X}_1\tilde{X}_i}$ and $P_{\tilde{X}_{\tilde{j}}|X_1X_j}$, there exists $\zeta_3>0$, such that
\begin{equation*}
\max_{x_1,\tilde{x}_1,x_j,\tilde{x}_j,y}\Big|\tilde{V}_3(y|x_1,\tilde{x}_1,x_j,\tilde{x}_i)-V_3(y|x_1,\tilde{x}_1,x_j,\tilde{x}_i)\Big| \geq \zeta_3.
\end{equation*}
This contradicts \eqref{eq:1c} if $\eta < \frac{\zeta_3^2\alpha^8}{12c^2}$.
Let $\zeta\defineqq \min{\{\zeta_1,\zeta_2,\zeta_3\}}$, any $\eta$ satisfying $0<\eta<\frac{\zeta^2\alpha^8}{12c^2}$ ensures disjoint decoding regions.

\end{proof}

  %!TeX root=paper.tex

\section{Proof of Lemma~\ref{lemma:codebook} \pink{(Codebook Lemma)}}\label{appendix:codebook}
\pink{To prove Lemma~\ref{lemma:codebook}, we will first define some terminology and prove a concentration result in Lemma~\ref{theorem:concentration}. This will be used to prove Lemma~\ref{lemma:codebook} (Codebook Lemma) as a corollary. }
\pink{\subsection{A concentration result}\label{subsec:concentration}
In this subsection, we restate \cite[Theorem 2.1]{SJ} in a form that can be directly used for proving the properties of the codebook.}

For a positive integer $b$, let $\cS_b$ denote the symmetric group of degree $b$, {\em i.e.}, it contains the permutations of $\{1, 2, \ldots, b\}$. 
For a permutation $\sigma\in \cS_b$, let $\sigma (i), \, i\in \{1, 2, \ldots, b\}$ denote the image of $i$ under $\sigma$. Let $\cA$ be a set. 
For a $b-$length tuple $\inp{\alpha_1, \ldots, \alpha_b}$ consisting of distinct elements of $\cA$, let 
\begin{align*}
\cH_{\inp{\alpha_1, \ldots, \alpha_{b}}} = \inb{a\in\inp{\inb{\alpha_1, \ldots, \alpha_b}\cup\{*\}}^b:\exists \,\sigma\in \cS_b \text{ such that for all $j\in [1:b]$ if } a_j \neq *\text{ then }a_j = \alpha_{\sigma (j)}}, 
\end{align*} 
where $a_j$ represents the $j^\text{th}$ element of the tuple $a$. For $a\in \cH_{\inp{\alpha_1, \ldots, \alpha_{b}}}$, let $|a| = |\inb{i:a_i\neq *}|$. 
For a tuple $\inp{\gamma_1, \ldots, \gamma_{b}}$ consisting of distinct elements of $\cA$, we say that $a\in\cH_{\inp{\alpha_1,  \ldots, \alpha_{b}}}$ and $\inp{\gamma_1, \ldots, \gamma_{b}}$ are $\inp{\alpha_1, \ldots, \alpha_{b}}-$compatible (denoted by $\inp{\gamma_1,  \ldots, \gamma_{b}}{\sim}\insq{a, \inp{\alpha_1, \ldots, \alpha_{b}}}$\pink{)}, if for all $l\in \{1, \ldots, b\}$,
\begin{align*}
\gamma_l &=a_l, \hspace{2.25cm}\text{ if }a_l\neq *, \\
		\gamma_l&\in \pink{\cA}\setminus\inb{\alpha_1, \ldots, \alpha_{b}}, \text{ otherwise.}
\end{align*}
% For example, for $\cA = \{1, 2, \ldots, 9\}$ and a $5-$length tuple $\inp{1, 2, 3, 4, 5}$, $(1, 2, *, *, 4)\in \cH_{\inp{1, 2, 3, 4, 5}}$, $|(1, 2, *, *, 4)| = 3$ and the tuple $(1, 2, 6,8, 4)\sim\insq{(1, 2, *, *, 4),\inp{1, 2, 3, 4, 5}}$.

\pink{For example, let $\cA = \{1, 2, \ldots, 9\}$, $b =5$, $(\alpha_1, \ldots, \alpha_b) = \inp{1, 2, 3, 4, 5}$ and $a = (1, 2, *, *, 4)$. Then, $a\in \cH_{(\alpha_1, \ldots, \alpha_b)}$ with $|a| = 3$. Suppose $\inp{\gamma_1,\ldots, \gamma_b}=(1, 2, 6,8, 4)$. Then, $\inp{\gamma_1,\ldots, \gamma_b}\sim\insq{a,(\alpha_1, \ldots, \alpha_b)}$.}
\begin{lemma}\label{theorem:concentration}
For an index set $\cI$, let $\{Y_i: i\in \cI\}$ be a set of independent random variables. Let $\beta$ be a positive integer. Let $\cJ\subseteq \cI^{\beta}$ be a set of $\beta$ length tuples consisting of distinct elements from $\cI$. 
For $\inp{i_1, \ldots, i_{\beta}}\in \cJ$, let $V_{\inp{i_1, \ldots, i_{\beta}}}$ be a binary random variable which is a function of $Y_{i_1}, \ldots, Y_{i_{\beta}}$. 
Suppose $U = \sum_{\inp{i_1, \ldots, i_{\beta}}\in \cJ}V_{\inp{i_1, \ldots, i_{\beta}}}$.
Let 
\begin{align*}
E\geq \max \left\{\max_{\inp{i_1,\ldots,i_{\beta}}\in\cJ}\pink{\left(\max_{\substack{\inp{y_{i_1},\ldots, y_{i_{\beta}}},\\a\in \cH_{\inp{i_1,\ldots,i_{\beta}}}\\1\leq|a|\leq \beta-1}}\bbE\left[ \sum_{\stackrel{{\inp{j_1, \ldots, j_{\beta}}\in \cJ}:}{{\inp{j_1, \ldots, j_{\beta}}\sim \insq{a, \inp{i_1,\ldots,i_{\beta}}}}}}V_{\inp{j_1, \ldots, j_{\beta}}}\Bigg| \inp{Y_{i_1}, \ldots, Y_{i_{\beta}}} = \inp{y_{i_1}, \ldots, y_{i_{\beta}}}\right]\right)}, \bbE[U]\right\},
\end{align*}
For $\gamma>0$, $\nu >1$  %\textcolor{blue}{satisfying $\inp{\frac{1}{3(2\beta)^{\beta}}\inp{\frac{\nu-1}{2\gamma}-{\beta!}}}^{\beta}>1$}, 
if there exists $\delta_1, \delta_2, \ldots, \delta_{\beta}>1$ such that for all $i\in [1:\beta]$, $\frac{1}{(2\beta)^{\beta}}\inp{\frac{\delta_{i-1}-1}{2\gamma}E - \beta!}>\delta_{i} E$ for $\delta_0 := \nu$, then
\begin{align}\label{eq:concentration_final}
\bbP(U\geq \nu E)\leq {\inp{|\cI|\beta+1}^{{\beta}^2}}e^{-\gamma/3}.
\end{align}
\end{lemma}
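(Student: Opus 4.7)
\medskip

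\noindent\textbf{Proof plan.} The plan is to prove the bound by applying Lemma~\ref{theorem:deletion_lemma_main} (the Suen--Janson inequality of \cite{SJ}) recursively, where the recursion is indexed by the ``number of free coordinates'' in the tuple sums that arise. The key structural observation is that the dependency graph on $\{V_{\inp{i_1,\ldots,i_\beta}}:(i_1,\ldots,i_\beta)\in\cJ\}$ places an edge exactly when the two tuples share at least one coordinate, and the overlap patterns are precisely described by elements $a\in\cH_{\inp{i_1,\ldots,i_\beta}}$ with $1\le|a|\le\beta-1$ (patterns with $|a|=\beta$ are the tuple itself, and $|a|=0$ yields independence). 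The hypothesis of the lemma is stated exactly so that each level of the recursion produces a usable Suen--Janson bound.

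\medskip

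\noindent The first step is to apply Lemma~\ref{theorem:deletion_lemma_main} with $t=(\nu-1)E$ and $r=\gamma$, which gives
\begin{align*}
\bbP(U>\nu E) \leq e^{-\gamma/3}+\sum_{\inp{i_1,\ldots,i_\beta}\in\cJ}\bbP\!\left(\widetilde U_{\inp{i_1,\ldots,i_\beta}}>\frac{(\nu-1)E}{2\gamma}\right),
\end{align*}
where $\widetilde U_{\inp{i_1,\ldots,i_\beta}}$ sums $V_{\inp{j_1,\ldots,j_\beta}}$ over tuples sharing at least one coordinate with $\inp{i_1,\ldots,i_\beta}$. I would then partition this sum according to the overlap pattern $a\in\cH_{\inp{i_1,\ldots,i_\beta}}$, writing
\begin{align*}
\widetilde U_{\inp{i_1,\ldots,i_\beta}} \ =\ \sum_{\substack{a\in\cH_{\inp{i_1,\ldots,i_\beta}}\\ 1\le|a|\le\beta}} S_a(\inp{i_1,\ldots,i_\beta}),\qquad S_a := \sum_{\inp{j_1,\ldots,j_\beta}\sim[a,\inp{i_1,\ldots,i_\beta}]} V_{\inp{j_1,\ldots,j_\beta}}.
\end{align*}
The terms with $|a|=\beta$ contribute at most $\beta!$ (the number of permutations of the fixed tuple), and $|\cH_{\inp{i_1,\ldots,i_\beta}}|\leq (\beta+1)^{\beta}\leq (2\beta)^\beta$. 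After this book-keeping and a union bound (splitting the target threshold equally among the at most $(2\beta)^\beta$ partial sums with $1\le|a|\le\beta-1$ and subtracting the $\beta!$ deterministic mass), the problem reduces to bounding, for every pattern $a$ and every conditioning $\inp{Y_{i_1},\ldots,Y_{i_\beta}}=\inp{y_{i_1},\ldots,y_{i_\beta}}$,
\begin{align*}
\bbP\!\left(S_a > \frac{1}{(2\beta)^\beta}\!\left(\frac{(\nu-1)E}{2\gamma}-\beta!\right)\,\Big|\,\inp{Y_{i_1},\ldots,Y_{i_\beta}}\!=\!\inp{y_{i_1},\ldots,y_{i_\beta}}\right).
\end{align*}
By the hypothesis, the conditional expectation of $S_a$ is at most $E$, and the threshold is at least $\delta_1 E$. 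Crucially, conditioned on the $\beta$ fixed $Y$-values, $S_a$ is a sum of binary variables indexed by tuples with only $\beta-|a|\le\beta-1$ free coordinates, drawn from an index set of size at most $|\cI|$, and with an analogous dependency structure. Hence the whole setup reappears at one lower dimension.

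\medskip

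\noindent The iteration then proceeds: at level $d\in\{1,\ldots,\beta\}$ I would apply Lemma~\ref{theorem:deletion_lemma_main} again with the same $r=\gamma$ to the conditional sums with $d$ free coordinates, using $\delta_{d-1}E$ as the upper threshold and producing (after the same pattern decomposition and union bound) threshold $\delta_d E$ for the next level. The hypothesis $\frac{1}{(2\beta)^\beta}\!\left(\frac{\delta_{d-1}-1}{2\gamma}E-\beta!\right)>\delta_d E$ is exactly what is needed for this reduction to succeed at each step. After $\beta$ rounds every remaining conditional sum has zero free coordinates and is thus a deterministic quantity bounded by $\beta!\le E$, so the conditional probability at the last level is zero. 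Each level contributes an additive term of the form $e^{-\gamma/3}$, multiplied by the accumulated union bound count; since every level multiplies the count by at most $(2\beta)^\beta \cdot |\cI|^\beta$ (the pattern count times the choice of the free indices being conditioned on), after $\beta$ levels the multiplicative factor is bounded by $(|\cI|\beta+1)^{\beta^2}$. Combining gives the claimed bound \eqref{eq:concentration_final}.

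\medskip

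\noindent\textbf{Main obstacle.} The routine parts are the repeated Suen--Janson calls and the polynomial counting. The delicate point is the careful conditioning argument: at each level, I must verify that after conditioning on the fixed $Y$-coordinates, the remaining $V$'s still satisfy the hypothesis of Lemma~\ref{theorem:deletion_lemma_main} (the dependency-graph condition), and that the ``maximum over conditionings'' bound $E$ supplied by the hypothesis is still an upper bound on the conditional expectation of every sub-sum that arises at deeper levels. This requires tracking that the pattern $a$ at level $d$ composes correctly with a pattern $a'$ at level $d+1$ to yield a pattern in the original $\cH_{\inp{i_1,\ldots,i_\beta}}$ class of size $|a|+|a'|$, so that the level-$d$ sub-sum's conditional expectation inherits the bound $E$ directly from the hypothesis. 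Once this compatibility and the counting of patterns are handled, the rest is arithmetic bookkeeping.
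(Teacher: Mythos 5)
Your plan is the paper's argument: both iterate the Suen--Janson inequality (Lemma~\ref{theorem:deletion_lemma_main}), decompose the neighborhood sum $\tilde U$ by overlap pattern $a\in\cH$, subtract the $\beta!$ contribution from $|a|=\beta$, condition on the fixed $Y$-coordinates so that the residual sum has $\beta-|a|<\beta$ free coordinates, and use the hypothesis on $\delta_0,\ldots,\delta_\beta$ to carry the threshold through each level; the paper packages this iteration as a strong induction on $\beta$ (base case $\beta=1$, inductive step via the pattern composition $a''$, and a verification that deeper-level conditional expectations are still bounded by $E$ — the exact ``compatibility'' point you flag as the main obstacle). The one spot your outline does not close is the prefactor: compounding a uniform per-level factor $(2\beta)^\beta|\cI|^\beta$ over $\beta$ levels gives $(2\beta|\cI|)^{\beta^2}$, which is not dominated by $(|\cI|\beta+1)^{\beta^2}$. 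The paper's induction is what makes the stated prefactor come out: the induction hypothesis supplies $(|\cI|\beta'+1)^{\beta'^2}$ with $\beta'\le\beta-1$, and the identity $(k+1)^2=k(k+1)+(k+1)$ combined with $(a^{m}+1)\le(a+1)^{m}$ closes the recursion exactly. So while this really is ``arithmetic bookkeeping'' as you say, it is bookkeeping that exploits the strictly decreasing free-coordinate count $\beta'$, which your uniform per-level estimate discards; if you only want some polynomial prefactor times $e^{-\gamma/3}$ the point is moot, but to match the literal claim $(|\cI|\beta+1)^{\beta^2}$ you would need to carry the $\beta'$-dependent count through each level as the paper does.
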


\begin{proof}
For $\inp{i_1, \ldots, i_{\beta}}\in \cJ$, let $\tilde{U}_{\inp{i_1, \ldots, i_{\beta}}} =\sum_{\inp{j_1, \ldots, j_{\beta}}\in \cJ:\inb{j_1, \ldots, j_{\beta}}\cap\inb{i_1, \ldots, i_{\beta}}\neq\emptyset}V_{\inp{j_1, \ldots, j_{\beta}}}$. To show \eqref{eq:concentration_final}, we will first show that
\begin{align}\label{eq:concentration}
\bbP(U\geq \nu E)\leq e^{-\gamma/3}+\sum_{\inp{i_1, \ldots, i_{\beta}}\in\cJ}\bbP\inp{\tilde{U}_{\inp{i_1, \ldots, i_{\beta}}}> \frac{(\nu-1)E}{2\gamma}}
\end{align}
using \cite[Theorem 2.1]{SJ}\pink{, which is restated below.}

\begin{duplicatelemma}\cite[Theorem 2.1]{SJ}
Suppose that $Y_{\alpha}, \alpha \in \cA$, is a finite family of non-negative
random variables and that $\sim$ is a symmetric relation on the index set $\cA$
such that each $Y_{\alpha}$ is independent of $\{Y_{\beta} : \beta \nsim \alpha\}$; in other words, the pairs
$(\alpha, \beta)$ with $\alpha\sim\beta$ define the edge set of a (weak) dependency graph for the
variables $Y_{\alpha}$. Let $X:=\sum_{\alpha}Y_{\alpha}$ and $\mu:=\bbE X =\sum_{\alpha} \bbE Y$. Let further, for $\alpha\in \cA$, $\tilde{X}_{\alpha}:=\sum_{\beta\sim\alpha}Y_{\beta}$.
If $t\geq \mu>0$, then for every real $r>0$,
\begin{align*}
\bbP(X>\mu+t)\leq e^{-r/3}+\sum_{\alpha\in \cA}\bbP\inp{\tilde{X}_{\alpha}>\frac{t}{2r}}.
\end{align*}
\end{duplicatelemma}
In order to obtain \eqref{eq:concentration} from \cite[Theorem 2.1]{SJ}, we use $\cJ$ in place of $\cA$ and $V_{\inp{i_1, \ldots, i_{\beta}}}$ in place of $Y_{\alpha}$. 
We note that every $\inp{i_1, \ldots, i_{\beta}},\inp{j_1, \ldots, j_{\beta}}\in \cJ$ such that $\inb{i_1, \ldots, i_{\beta}}\cap\inb{j_1, \ldots, j_{\beta}}\neq\emptyset$, $\inp{i_1, \ldots, i_{\beta}}\sim\inp{j_1, \ldots, j_{\beta}}$ as per the symmetric relation given in \cite[Theorem 2.1]{SJ}. 
Thus, the definitions of $\tilde{U}_{\inp{i_1, \ldots, i_{\beta}}}$ and $\tilde{X}_{\alpha}$ are consistent. 
We upper bound the LHS of \eqref{eq:concentration} as follows:
\begin{align*}
\bbP(X\geq \nu E) &= \bbP(X-\bbE[X]\geq \nu E-\bbE[X])\\
&\leq \bbP(X-\bbE[X]\geq \nu E-E)\\
&=\bbP(X-\bbE[X]\geq (\nu-1)E)
\end{align*}
Theorem~\cite[Theorem 2.1]{SJ} is applied on $\bbP(X\geq \bbE [X]+ (\nu-1)E)$ with $(\nu-1)E$ as $t$ and $\gamma$ as $r$.

Now, we will show \eqref{eq:concentration_final}. We will use \textbf{strong induction on $\beta$}. 
{When $\beta = 1$, \eqref{eq:concentration} implies $\bbP(U\geq \nu E)\leq e^{-\gamma/3}$. This is because for any $1-$length tuple $(i)\in \cJ$, $\tilde{U}_{(i)} = V_{(i)}$, which, being a binary random variable, is at most $1$. However, $\frac{1}{(2\beta)^{\beta}}\inp{\frac{\nu-1}{2\gamma}E - \beta!}>\delta_{1} E\geq0$  implies that for $\beta = 1$, $\frac{1}{2}\inp{\frac{\nu-1}{2\gamma}E - 1}>0$. Therefore, ${\frac{\nu-1}{2\gamma}E}>1$ and the second term on the RHS of \eqref{eq:concentration} is zero.
Thus, \eqref{eq:concentration_final} holds for $\beta = 1$.}

Now, for the \textbf{induction hypothesis}, consider any $\beta'\leq k$ for some positive integer $k$. For an index set $\cI'$, let $\{Y'_i: i\in \cI'\}$ be a set of independent random variables. Let $\cJ'\subseteq \cI'^{\beta'}$ be a set of $\beta'-$length tuples consisting of distinct elements from $\cI'$. 
For $\inp{i_1, \ldots, i_{\beta'}}\in \cJ'$, let $V'_{\inp{i_1, \ldots, i_{\beta'}}}$ be a binary random variable which is a function of $Y'_{i_1}, Y'_{i_2}, \ldots, Y'_{i_{\beta'}}$. 
Suppose $U' = \sum_{\inp{i_1, \ldots, i_{\beta'}}\in \cJ'}V'_{\inp{i_1, \ldots, i_{\beta}}}$. 

Let 
\begin{align*}
E'\geq \max \left\{\max_{\inp{i_1,\ldots,i_{\beta'}}\in\cJ'}\max_{\substack{\inp{y'_{i_1},\ldots, y'_{i_{\beta'}}},\\a\in \cH_{\inp{i_1,\ldots,i_{\beta'}}}\\1\leq|a|\leq \beta'-1}}\bbE\left[ \sum_{\stackrel{{\inp{j_1, \ldots, j_{\beta'}}\in \cJ'}:}{{\inp{j_1, \ldots, j_{\beta'}}  \sim \insq{a, \inp{i_1, \ldots, i_{\beta'}}}}}}V'_{\inp{j_1, \ldots, j_{\beta'}}}\Bigg| \inp{Y'_{i_1}, \ldots, Y'_{i_{\beta'}}} = \inp{y'_{i_1}, \ldots, y'_{i_{\beta'}}}\right], \bbE[U']\right\},
\end{align*}
For $\gamma'>0$, $\nu' >1$, if there exists $\delta'_1, \delta'_2, \ldots, \delta'_{\beta'}>1$ such that for all $i\in [1:\beta']$, $\frac{1}{(2\beta')^{\beta'}}\inp{\frac{\delta'_{i-1}-1}{2\gamma'}E' - \beta'!}>\delta'_{i} E'$ for $\delta'_0 := \nu'$, then

\begin{align*}
\bbP(U'\geq \nu' E')\leq {\inp{|\cI'|\beta' + 1}^{{\beta'}^2}}e^{-\gamma'/3}.
\end{align*}

Now, for $\beta = k+1$ and any $\gamma, \, \nu,$ and $E$ and random variables satisfying the conditions in Lemma~\ref{theorem:concentration},  \eqref{eq:concentration} gives
\begin{align*}
\bbP(U\geq \nu E)\leq e^{-\gamma/3}+\sum_{\inp{i_1, \ldots, i_{k+1}}\in\cJ}\bbP\inp{\tilde{U}_{\inp{i_1, \ldots, i_{k+1}}}> \frac{(\nu-1)E}{2\gamma}}.
\end{align*}

For $\inp{i_1, \ldots, i_{k+1}}\in\cJ$, and any realization  $\inp{y_{i_1}, \ldots, y_{i_{k+1}}}$ of $\inp{Y_{i_1}, \ldots, Y_{i_{k+1}}}$,
\begin{align}
&\bbP\inp{\tilde{U}_{\inp{i_1, \ldots, i_{k+1}}}> \frac{(\nu-1)E}{2\gamma}\,\Bigg|\inp{Y_{i_1}, \ldots, Y_{i_{k+1}}} = \inp{y_{i_1}, \ldots, y_{i_{k+1}}}}\nonumber\\
&\leq\bbP\Bigg( \sum_{\sigma\in \cS_{k+1}}V_{\sigma\inp{i_1, \ldots, i_{k+1}}} + \sum_{\substack{a\in \cH_{\inp{i_1, \ldots, i_{k+1}}}\\1\leq|a|\leq k}}\sum_{\stackrel{{\inp{j_1, \ldots, j_{k+1}}\in\cJ}:}{{\inp{j_1, \ldots, j_{k+1}}\sim \insq{a, \inp{i_1, \ldots, i_{k+1}}}}}}V_{\inp{j_1, \ldots, j_{k+1}}}> \frac{(\nu-1)E}{2\gamma}\,\Bigg|\nonumber\\
&\qquad \qquad\qquad \qquad\qquad \qquad\qquad \qquad\qquad \qquad\qquad \qquad\qquad \qquad\qquad \qquad\inp{Y_{i_1}, \ldots, Y_{i_{k+1}}} = \inp{y_{i_1}, \ldots, y_{i_{k+1}}}\Bigg)\nonumber\\
&=\bbP\Bigg( \sum_{\stackrel{a\in \cH_{\inp{i_1, \ldots, i_{k+1}}}} {1\leq|a|\leq k}}\sum_{\stackrel{{\inp{j_1, \ldots, j_{k+1}}\in\cJ}:}{{\inp{j_1, \ldots, j_{k+1}}\sim \insq{a, \inp{i_1, \ldots, i_{k+1}}}}}}V_{\inp{j_1, \ldots, j_{k+1}}}> \frac{(\nu-1)E}{2\gamma}-\sum_{\sigma\in \cS_{k+1}}V_{\sigma\inp{i_1, \ldots, i_{k+1}}}\,\Bigg|\nonumber\\
&\qquad \qquad\qquad \qquad\qquad \qquad\qquad \qquad\qquad \qquad\qquad \qquad\qquad \qquad\qquad \qquad\inp{Y_{i_1}, \ldots, Y_{i_{k+1}}} = \inp{y_{i_1}, \ldots, y_{i_{k+1}}}\Bigg)\nonumber\\
&\stackrel{(a)}{\leq}\bbP\inp{ \sum_{\stackrel{a\in \cH_{\inp{i_1, \ldots, i_{k+1}}}} {1\leq|a|\leq k}}\sum_{\stackrel{{\inp{j_1, \ldots, j_{k+1}}\in\cJ}:}{{\inp{j_1, \ldots, j_{k+1}}\sim \insq{a, \inp{i_1, \ldots, i_{k+1}}}}}} V_{\inp{j_1, \ldots, j_{k+1}}}> \frac{(\nu-1)E}{2\gamma}-|\cS_{k+1}| \Bigg|\inp{Y_{i_1}, \ldots, Y_{i_{k+1}}} = \inp{y_{i_1}, \ldots, y_{i_{k+1}}}}\nonumber\\
&\stackrel{(b)}{\leq}\bbP\left(\bigcup_{\stackrel{a\in \cH_{\inp{i_1, \ldots, i_{k+1}}}}{1\leq|a|\leq k}}\inp{\sum_{\stackrel{{\inp{j_1, \ldots, j_{k+1}}\in\cJ}:}{{\inp{j_1, \ldots, j_{k+1}}\sim \insq{a, \inp{i_1, \ldots, i_{k+1}}}}}}V_{\inp{j_1, \ldots, j_{k+1}}}> \frac{1}{|\cH_{\inp{i_1, \ldots, i_{k+1}}}|-1-(k+1)!}\inp{\frac{(\nu-1)E}{2\gamma}-(k+1)!} }\Bigg|\right.\nonumber\\
&\qquad \qquad\qquad \qquad\qquad \qquad\qquad \qquad\qquad \qquad\qquad \qquad\qquad \qquad\qquad \qquad\inp{Y_{i_1}, \ldots, Y_{i_{k+1}}} = \inp{y_{i_1}, \ldots, y_{i_{k+1}}}\Bigg)\nonumber\\	
&{\leq}\bbP\left(\bigcup_{\stackrel{a\in \cH_{\inp{i_1, \ldots, i_{k+1}}}}{1\leq|a|\leq k}}\inp{\sum_{\stackrel{{\inp{j_1, \ldots, j_{k+1}}\in\cJ}:}{{\inp{j_1, \ldots, j_{k+1}}\sim \insq{a, \inp{i_1, \ldots, i_{k+1}}}}}}V_{\inp{j_1, \ldots, j_{k+1}}}> \frac{1}{|\cH_{\inp{i_1, \ldots, i_{k+1}}}|}\inp{\frac{(\nu-1)E}{2\gamma}-(k+1)!} }\Bigg|\right.\nonumber\\
&\qquad \qquad\qquad \qquad\qquad \qquad\qquad \qquad\qquad \qquad\qquad \qquad\qquad \qquad\qquad \qquad\inp{Y_{i_1}, \ldots, Y_{i_{k+1}}} = \inp{y_{i_1}, \ldots, y_{i_{k+1}}}\Bigg)\nonumber\\
&\stackrel{(c)}{\leq}\sum_{\stackrel{a\in \cH_{\inp{i_1, \ldots, i_{k+1}}}} {1\leq|a|\leq k}}\bbP\left( \sum_{\stackrel{{\inp{j_1, \ldots, j_{k+1}}\in\cJ}:}{{\inp{j_1, \ldots, j_{k+1}}\sim \insq{a, \inp{i_1, \ldots, i_{k+1}}}}}}V_{\inp{j_1, \ldots, j_{k+1}}}> \frac{1}{(2(k+1))^{k+1}}\inp{\frac{(\nu-1)E}{2\gamma}-(k+1)!} \,\Bigg|\right.\nonumber\\
&\qquad \qquad\qquad \qquad\qquad \qquad\qquad \qquad\qquad \qquad\qquad \qquad\qquad \qquad\qquad \qquad\inp{Y_{i_1}, \ldots, Y_{i_{k+1}}} = \inp{y_{i_1}, \ldots, y_{i_{k+1}}}\Bigg)\nonumber\\
&\stackrel{(d)}{\leq}\sum_{\stackrel{a\in \cH_{\inp{i_1, \ldots, i_{k+1}}}} {1\leq|a|\leq k}}\bbP\left( \sum_{\stackrel{{\inp{j_1, \ldots, j_{k+1}}\in\cJ}:}{{\inp{j_1, \ldots, j_{k+1}}\sim \insq{a, \inp{i_1, \ldots, i_{k+1}}}}}}V_{\inp{j_1, \ldots, j_{k+1}}}> \frac{1}{(2(k+1))^{k+1}}\inp{\frac{(\nu-1)E}{2\gamma}-(k+1)!E} \,\Bigg|\right.\nonumber\\
&\qquad \qquad\qquad \qquad\qquad \qquad\qquad \qquad\qquad \qquad\qquad \qquad\qquad \qquad\qquad \qquad\inp{Y_{i_1}, \ldots, Y_{i_{k+1}}} = \inp{y_{i_1}, \ldots, y_{i_{k+1}}}\Bigg)\nonumber\\
&\stackrel{(e)}{\leq}\sum_{\stackrel{a\in \cH_{\inp{i_1, \ldots, i_{k+1}}}} {1\leq|a|\leq k}}\bbP\left( \sum_{\stackrel{{\inp{j_1, \ldots, j_{k+1}}\in\cJ}:}{{\inp{j_1, \ldots, j_{k+1}}\sim \insq{a, \inp{i_1, \ldots, i_{k+1}}}}}}V_{\inp{j_1, \ldots, j_{k+1}}}> \delta_1 E\,\Bigg|\inp{Y_{i_1}, \ldots, Y_{i_{k+1}}} = \inp{y_{i_1}, \ldots, y_{i_{k+1}}}\right) .\label{eq:previous}
\end{align}
Here, $(a)$ holds because $\sum_{\sigma\in \cS}V_{\sigma\inp{i_1, \ldots, i_{k+1}}}$, being a sum of binary random variables, takes the maximum value $|\cS|$ which is $(k+1)!$. 
The equality $(b)$ holds because $\bbP\inp{\sum_{1\leq i\leq t}A_i>c} \leq \bbP\inp{\cup_{1\leq i\leq t}\inb{A_i>c/t}}$ for any integer $t$, real number $c$  and random variables $A_1, \ldots, A_t$. 
The inequality $(c)$ uses union bound and the fact that $|\cH_{\inp{i_1, \ldots, i_{k+1}}}|\leq (\#\text{ of subsets of }\inb{i_1, \ldots, i_{k+1}}) \times |\cS_{k+1}|$. Thus, $|\cH_{\inp{i_1, \ldots, i_{k+1}}}| \leq 2^{k+1}(k+1)!\leq (2(k+1))^{k+1}$. Inequality $(d)$ holds because $E\geq 1$ and $(e)$ follows from the conditions on $\nu,\gamma$ and $\beta=k+1$ in the statement of Lemma~\ref{theorem:concentration}.

%For $\inp{j_1, \ldots, j_{k+1}}\in \cJ$, let us use $f\inp{Y_{j_1}, \ldots, Y_{j_{k+1}}}$ to denote the binary random variable $V_{\inp{j_1, \ldots, j_{k+1}}}$.

Fix $\inp{i_1, \ldots, i_{k+1}}\in \cJ$, $\inp{Y_{i_1}, \ldots, Y_{i_{k+1}}} = \inp{y_{i_1}, \ldots, y_{i_{k+1}}}$ and $a \in \cH_{\inp{i_1, \ldots, i_{k+1}}}$ such that $|a| = l$ where $l\in [1:k]$. We will use induction hypothesis at this stage. 
To use induction hypothesis, choose $\nu' = \delta_1$, $\gamma' = \gamma$, $E' = E$, $\beta' = k+1-l$ and $\cI' = \cI\setminus\inb{i_1, \ldots, i_{k+1}}$. The set of random variables $\{Y'_i:i\in \cI'\}$ is given by $Y'_{i}=Y_i,\, \forall\,i\in \cI'$. For $i\in [1:k+1]$, let $|a_{1}^{i-1}|= |\inb{j\in [1:i-1]:a_j\neq *}|$.  The set $\cJ'$ consists of $(k+1-l)-$length tuples of distinct elements from $\cI'$ such that for every $(j_1, \ldots, j_{k+1-l})\in \cJ'$, there exists $(m_1, \ldots, m_{k+1})\in \cJ$ such that $m_l = a_l$ if $a_l\neq *$. 
Else, $m_l = j_{l-|a_1^{l-1}|}$. For such $(m_1, \ldots, m_{k+1})\in \cJ$ and $(j_1, \ldots, j_{k+1-l})$, we will say that $(j_1, \ldots, j_{k+1-l}) + a = (m_1, \ldots, m_{k+1})$. Thus, for every $(m_1, \ldots, m_{k+1})\in \cJ$ such that $(m_1, \ldots, m_{k+1})\sim \insq{a, \inp{i_1, \ldots, i_{k+1}}}$, there exists a unique $(j_1, \ldots, j_{k+1-l})\in \cJ'$ with $(j_1, \ldots, j_{k+1-l}) + a = (m_1, \ldots, m_{k+1})$.
%The set $\cJ'$ consists of tuples from $\cJ$ such that for $\inp{j_1, \ldots, j_{k+1}}\in \cJ$, $\inp{j_1, \ldots, j_{k+1}}\in \cJ'$ if $a\sim \inp{j_1, \ldots, j_{k+1}}$. Note that even though $\cJ'$ has tuples of length $k+1$, only $k+1-l$ positions can be chosen arbitrarily, rest $l$ positions are fixed by elements of $a$. Thus, effective length of each element of $\cJ'$ is $\beta' = k+1-l$. 

For $\inp{j_1, \ldots, j_{k+1-l}}\in \cJ'$, the binary random variable $V'_{\inp{j_1, \ldots, j_{k+1-l}}}$ is the random variable $V_{\inp{m_1,  \ldots, m_{k+1}}}$ where $(m_1, \ldots, m_{k+1})=(j_1, \ldots, j_{k+1-l}) + a$ and the random variables $\inp{Y_{i_1}, \ldots, Y_{i_{k+1}}}$ are fixed to  $\inp{y_{i_1}, \ldots, y_{i_{k+1}}}$. For $U' = \sum_{\inp{j_1, \ldots, j_{k+1-l}}\in \cJ'}V'_{\inp{j_1, \ldots, j_{k+1-l}}}$, we will use induction hypothesis on $\bbP\inp{U'\geq \delta_1E}$.
\begin{align}
&\bbP\inp{U'\geq \delta_1E}\nonumber\\
&=\bbP\inp{ \sum_{\inp{j_1, \ldots, j_{k+1-l}}\in \cJ'}V'_{\inp{j_1, \ldots, j_{k+1-l}}}> \delta_1 E }\nonumber\\
&=\bbP\inp{  \sum_{\stackrel{{\inp{m_1, \ldots, m_{k+1}}\in\cJ}:}{{\inp{m_1, \ldots, m_{k+1}}\sim \insq{a, \inp{i_1, \ldots, i_{k+1}}}}}}V_{\inp{m_1, \ldots, m_{k+1}}}> \delta_1 E \Bigg|\inp{Y_{i_1}, \ldots, Y_{i_{k+1}}} = \inp{y_{i_1}, \ldots, y_{i_{k+1}}}}.\label{eq:star}
\end{align}

We know that 
\begin{align*}
E\geq \max \left\{\max_{\substack{\inp{y_{i_1},\ldots, y_{i_{k+1}}}\\\inp{i_1,\ldots,i_{k+1}}\in\cJ}}\max_{\substack{a\in \cH_{\inp{i_1,\ldots,i_{k+1}}}\\1\leq|a|\leq k}}\bbE\left[ \sum_{\stackrel{{\inp{j_1, \ldots, j_{k+1}}\in \cJ}:}{{\inp{j_1, \ldots, j_{k+1}}\sim \insq{a, \inp{i_1, \ldots, i_{k+1}}}}}}\hspace{-0.5cm}V_{\inp{j_1, \ldots, j_{k+1}}}\Bigg| \inp{Y_{i_1}, \ldots, Y_{i_{k+1}}} = \inp{y_{i_1}, \ldots, y_{i_{k+1}}}\right], \bbE[U]\right\},
\end{align*}
and for $\gamma>0$, $\nu >1$  there exists $\delta_1, \delta_2, \ldots, \delta_{k+1}$ such that for all $i\in [1:k+1]$, $\delta_i>1$ and for $\delta_0 = \nu$, $\frac{1}{(2(k+1))^{k+1}}\inp{\frac{\delta_{i-1}-1}{2\gamma}E - (k+1)!}>\delta_{i} E$. \\
We will use this to show that the choices of $\gamma'$, $\nu',\, E',\, \beta'$ satisfy the conditions in the induction hypothesis.  
\begin{align*}
\bbE[U'] &= \bbE\insq{\sum_{\inp{j_1, \ldots, j_{k+1-l}}\in \cJ'}V'_{\inp{j_1, \ldots, j_{k+1-l}}}}\\
&= \bbE\left[ \sum_{\stackrel{{\inp{m_1, \ldots, m_{k+1}}\in \cJ}:}{{\inp{m_1, \ldots, m_{k+1}}\sim \insq{a, \inp{i_1, \ldots, i_{k+1}}}}}}V_{\inp{m_1, \ldots, m_{k+1}}}\Bigg| \inp{Y_{i_1}, \ldots, Y_{i_{k+1}}} = \inp{y_{i_1}, \ldots, y_{i_{k+1}}}\right]\\
 &\leq E = E'
\end{align*} 
For $\inp{j_1,\ldots,j_{k+1-l}}\in\cJ'$, $a'\in \cH_{\inp{j_1,\ldots,j_{k+1-l}}}$ with $|a'| = l'$ for $1\leq l'\leq k-l$, define a $(k+1)-$length tuple $a''$ as
\begin{align*}
a''_l=\begin{cases}
a_l, &\text{ if }a_l\neq *,\\
a'_{l-|a_1^{l-1}|}, &\text{ if }a_l= *.
\end{cases}
\end{align*} for all $l\in[1:k+1]$.
Let $(m_1, \ldots, m_{k+1})\in \cJ$ be such that $\inp{j_1,\ldots,j_{k+1-l}}+a = (m_1, \ldots, m_{k+1})$. Then, for fixed $\inp{y_{j_1},\ldots, y_{j_{k+1-l}}}$,
\begin{align*}
&\bbE\left[ \sum_{\stackrel{{\inp{g_1, \ldots,g_{k+1-l}}\in \cJ'}:}{{\inp{g_1, \ldots, g_{k+1-l}}\sim \insq{a', \inp{j_1, \ldots, j_{k+1-l}}}}}}V'_{\inp{g_1, \ldots, g_{k+1-l}}}\Bigg| \inp{Y_{j_1}, \ldots, Y_{j_{k+1-l}}} = \inp{y_{j_1}, \ldots, y_{j_{k+1-l}}}\right]\\
&=\bbE\left[ \sum_{\stackrel{{\inp{h_1, \ldots,h_{k+1}}\in \cJ}:}{{\inp{h_1, \ldots, h_{k+1}}\sim \insq{a'', \inp{m_1, \ldots, m_{k+1}}}}}}V_{\inp{h_1, \ldots, h_{k+1}}}\Bigg| \substack{\inp{Y_{j_1}, \ldots, Y_{j_{k+1-l}}} = \inp{y_{j_1}, \ldots, y_{j_{k+1-l}}},\\\inp{Y_{i_1}, \ldots, Y_{i_{k+1}}} = \inp{y_{i_1}, \ldots, y_{i_{k+1}}}\\}\right]\\
&\stackrel{(a)}{=}\bbE\left[ \sum_{\stackrel{{\inp{h_1, \ldots,h_{k+1}}\in \cJ}:}{{\inp{h_1, \ldots, h_{k+1}}\sim \insq{a'', \inp{m_1, \ldots, m_{k+1}}}}}}V_{\inp{h_1, \ldots, h_{k+1}}}\Bigg| \inp{Y_{m_1}, \ldots, Y_{m_{k+1}}} = \inp{y_{m_1}, \ldots, y_{m_{k+1}}}\right]\\
&\leq E=E'.
\end{align*}
In the above, $(a)$ follows from definition of $\cI'$ and $\cJ'$.

Now, we need to show that for $\nu'=\delta'_0=\delta_1,\,\gamma'=\gamma,$ there exists $\delta'_1, \delta'_2, \ldots, \delta'_{\beta'}$ such that for all $i\in [1:\beta']$, $\delta'_i>1$ and 
 $\frac{1}{(2\beta')^{\beta'}}\inp{\frac{\delta'_{i-1}-1}{2\gamma'}E' - \beta'!}>\delta'_{i} E'$.
   First note that as $l\in [1:k]$, $\beta'\in [1:k]$. We know that there exists $\delta_1, \delta_2, \ldots, \delta_{k+1}$ such that for all $i\in [1:k+1]$, $\delta_i>1$ and for $\delta_0 = \nu$, $\frac{1}{(2(k+1))^{k+1}}\inp{\frac{\delta_{i-1}-1}{2\gamma}E - (k+1)!}>\delta_{i} E$. 
Let $\delta_i' = \delta_{i+1}$ for all $i\in[1:\beta']$. Then, for all $i\in [1:\beta']$, $\delta_i' > 1$ and 
\begin{align*}
&\frac{1}{(2\beta')^{\beta'}}\inp{\frac{\delta'_{i-1}-1}{2\gamma'}E' - (\beta')!}\\
&=\frac{1}{(2\beta')^{\beta'}}\inp{\frac{\delta_{i}-1}{2\gamma}E - (\beta')!}\\
&\geq\frac{1}{(2(k+1))^{k+1}}\inp{\frac{\delta_{i}-1}{2\gamma}E - (k+1)!}\\
&\geq \delta_{i+1}E\\
&=\delta'_i E'.
\end{align*}

With this, all the conditions in the induction hypothesis are satisfied and we are ready to apply induction hypothesis.  
Thus,
\begin{align}
&=\bbP\inp{  \sum_{\stackrel{{\inp{m_1, \ldots, m_{k+1}}\in\cJ}:}{{\inp{m_1, \ldots, m_{k+1}}\sim \insq{a, \inp{i_1, \ldots, i_{k+1}}}}}}V_{\inp{m_1, \ldots, m_{k+1}}}> \delta_1 E \Bigg|\inp{Y_{i_1}, \ldots, Y_{i_{k+1}}} = \inp{y_{i_1}, \ldots, y_{i_{k+1}}}}\nonumber\\
&\stackrel{(a)}{=}\bbP\inp{U'\geq \delta_1E}\nonumber\\
&={\inp{|\cI|(k+1-l)+1}^{(k+1-l)^2}} e^{-\gamma/3},\label{eq:plus}
\end{align}
where $(a)$ uses \eqref{eq:star}. Continuing the analysis of \eqref{eq:previous},
\begin{align*}
&\sum_{\stackrel{a\in \cH_{\inp{i_1, \ldots, i_{k+1}}}} {1\leq|a|\leq k}}\bbP\left( \sum_{\stackrel{{\inp{j_1, \ldots, j_{k+1}}\in\cJ}:}{{\inp{j_1, \ldots, j_{k+1}}\sim \insq{a, \inp{i_1, \ldots, i_{k+1}}}}}}V_{\inp{j_1, \ldots, j_{k+1}}}> \delta_1 E\,\Bigg|\inp{Y_{i_1}, \ldots, Y_{i_{k+1}}} = \inp{y_{i_1}, \ldots, y_{i_{k+1}}}\right)\\
&=\sum_{l=1}^k\sum_{\stackrel{a\in \cH_{\inp{i_1, \ldots, i_{k+1}}}} {|a|=l}}\bbP\left( \sum_{\stackrel{{\inp{j_1, \ldots, j_{k+1}}\in\cJ}:}{{\inp{j_1, \ldots, j_{k+1}}\sim \insq{a, \inp{i_1, \ldots, i_{k+1}}}}}}V_{\inp{j_1, \ldots, j_{k+1}}}> \delta_1 E\,\Bigg|\inp{Y_{i_1}, \ldots, Y_{i_{k+1}}} = \inp{y_{i_1}, \ldots, y_{i_{k+1}}}\right)\\
&\stackrel{(b)}{\leq}\sum_{l = 1}^{k}\sum_{\stackrel{a\in \cH_{\inp{i_1, \ldots, i_{k+1}}}} {|a|=l}}{\inp{|\cI|(k+1-l)+1}^{(k+1-l)^2}} e^{-\gamma/3}\\
&{=}\sum_{l = 1}^{k}\binom{k+1}{l}\binom{k+1}{l} l!{\inp{|\cI|(k+1-l)+1}^{(k+1-l)^2}} e^{-\gamma/3}\\
&\stackrel{(c)}{=}\sum_{m = 1}^{k}\binom{k+1}{k+1-m}\binom{k+1}{k+1-m}(k+1-m)!{\inp{|\cI|m+1}^{m^2}} e^{-\gamma/3}\\
&{=}\sum_{m = 1}^{k}\binom{k+1}{k+1-m}\frac{(k+1)!}{m!}{\inp{|\cI|m+1}^{m^2}} e^{-\gamma/3}\\
&{\leq}\sum_{m = 1}^{k}\binom{k+1}{m}((k+1)!)\inp{|\cI|k+1}^{km} e^{-\gamma/3}\\
&{\leq}\sum_{m = 0}^{k+1}\binom{k+1}{m}((k+1)!)\inp{|\cI|k+1}^{km} e^{-\gamma/3}\\
&{=}(k+1)!e^{-\gamma/3}\sum_{m = 0}^{k+1}\binom{k+1}{m}\inp{\inp{|\cI|k+1}^k}^m \\
&{\leq}(k+1)!e^{-\gamma/3}\inp{\inp{|\cI|k+1}^k+1}^{k+1}\\
&{\leq}(k+1)^{(k+1)}e^{-\gamma/3}\inp{\inp{|\cI|k+1}+1}^{k(k+1)}
\end{align*}

Inequality $(b)$ uses  \eqref{eq:plus}.  The equality $(c)$ is obtained by substituting $l$ with $k+1-m$.

Thus, using this analysis, we see that for $\beta= k+1$,
\begin{align*}
\bbP(U\geq \nu E)&\leq e^{-\gamma/3}+\sum_{\inp{i_1, \ldots, i_{k+1}}\in\cJ}\bbP\inp{\tilde{U}_{\inp{i_1, \ldots, i_{k+1}}}> \frac{(\nu-1)E}{2\gamma}}\\
&\leq e^{-\gamma/3}+\sum_{\inp{i_1, \ldots, i_{k+1}}\in\cJ}(k+1)^{(k+1)}e^{-\gamma/3}\inp{|\cI|\inp{k+1}+1}^{k(k+1)}\\
&= e^{-\gamma/3}+|\cJ|(k+1)^{(k+1)}e^{-\gamma/3}\inp{|\cI|\inp{k+1}+1}^{k(k+1)}\\
&\leq e^{-\gamma/3}+|\cI|^{k+1} (k+1)^{(k+1)}e^{-\gamma/3}\inp{|\cI|\inp{k+1}+1}^{k(k+1)}\\
&\leq e^{-\gamma/3}+(|\cI|(k+1))^{k+1}e^{-\gamma/3}\inp{|\cI|\inp{k+1}+1}^{k(k+1)}\\
&\leq \inp{|\cI|\inp{k+1}+1}^{k(k+1)}e^{-\gamma/3}+(|\cI|(k+1))^{k+1}e^{-\gamma/3}\inp{|\cI|\inp{k+1}+1}^{k(k+1)}\\
&\leq ((|\cI|(k+1))^{k+1}+1)e^{-\gamma/3}\inp{|\cI|(k+1)+1}^{k(k+1)}\\
&\leq (|\cI|(k+1)+1)^{k+1}e^{-\gamma/3}\inp{|\cI|(k+1)+1}^{k(k+1)}\\
&\leq e^{-\gamma/3}\inp{|\cI|(k+1)+1}^{(k+1)(k+1)}.
\end{align*}
\end{proof}
\subsection{Preliminary codebook lemma}
\pink{We use the concentration result from Section~\ref{subsec:concentration} (Lemma~\ref{theorem:concentration}) to prove the existence of a codebook with properties as given in Lemma~\ref{codebook}. Roughly speaking, each property counts the number of codewords which are typical with fixed vectors. The codebook lemma (Lemma~\ref{lemma:codebook}) follows from Lemma~\ref{codebook} as a corollary. We first state and prove Lemma~\ref{codebook} and give a proof of Lemma~\ref{lemma:codebook} in the next subsection. }

\pink{We need to define the terminology of {\em Total Correlation} to state the properties of the codebook in Lemma~\ref{codebook}.}
\pink{For random variables $Z_1, Z_2, \ldots, Z_m$, let $C(Z_1;Z_2;\dots; Z_m)$ denote the {\em total correlation} of the random variables $Z_1, Z_2, \ldots, Z_m$ which is given by
\begin{align}\label{defn:tot_corr}
C(Z_1;Z_2;\ldots; Z_m) := \sum_{i=1}^mH(Z_i)-H(Z_1,Z_2,\ldots, Z_m).
\end{align}
Note that $C(Z_1;Z_2; \ldots; Z_m)$ can also be written as 
\begin{align*}
\sum_{i=2}^{m}I(Z_i;Z^{i-1}). 
\end{align*}
Suppose $\bbR_{+}$ denotes the set of positive real numbers. Let $k\in \{1, 2 \ldots\}$. Consider random variables $U_1, U_2, \ldots, U_k,V$ and a set $\cS\subseteq [1:k]$ given by $\cS = \set{\alpha_1, \alpha_2, \ldots, \alpha_{|\cS|}}$. Let $\cS^c = [1:k]\setminus \cS$ be denoted by $\cS^c = \set{\beta_1, \beta_2, \ldots, \beta_{k-|\cS|}}$. We define $g_{U_1, U_2, \ldots, U_k, V}^{\cS}:\bbR_{+}^k\rightarrow \bbR_{+}$ as
\begin{align}\label{eq: notation}
g^{\cS}_{U_1, U_2, \ldots, U_k, V}(R_1, R_2, \ldots, R_k)=\inp{\sum_{i\in \cS} R_i}-C(U_{\alpha_1};U_{\alpha_2}; \ldots; U_{\alpha_{|\cS|}};(U_{\beta_1},U_{\beta_2}, \ldots, U_{\beta_{k-|\cS|}}, V))
\end{align} 
Note that the tuple $(U_{\beta_1},U_{\beta_2}, \ldots, U_{\beta_{k-|\cS|}}, V)$ is treated as a single random variable. Thus, when $|\cS| = 0$,\\ $g^{\cS}_{U_1, U_2, \ldots, U_k, V}(R_1, R_2, \ldots, R_k) = 0$.}
\begin{lemma}\label{codebook}
For any  $\epsilon>0,\,  n\geq n_0(\epsilon), \, N_1, N_2, N_3\geq\exp(n\epsilon)$ and types $P_1\in \cP_{\cX_1}^n$, $ P_2\in \cP_{\cX_2}^n$, $P_3\in \cP_{\cX_3}^n$; there exists codebooks $\left\{\vecx_{11}, \ldots,  \vecx_{1N_1}\in \mathcal{X}_1^n\right\}, \left\{\vecx_{21}, \ldots,  \vecx_{2N_2}\in \mathcal{X}_2^n\right\}$, $ \left\{\vecx_{31}, \ldots,  \vecx_{3N_3}\in \mathcal{X}_3^n\right\}$ whose codewords are of type $P_1, P_2$, $P_3$ respectively such that for every permutation $(i,j,k)$ of $(1,2,3)$; for every $(\vecx_i,\vecx_j,\vecx_k)  \in \mathcal{X}^n_i\times\mathcal{X}^n_j\times\mathcal{X}^n_k$; for every joint type $P_{X_iX'_iX_jX'_jX_kX'_k}\in \cP^n_{\cX_{i}\times \cX_{i}\times\cX_{j}\times\cX_j\times\cX_k\times\cX_k}$; and for $R_i \defineqq (1/ n )\log_2{N_i},R_j \defineqq (1/ n )\log_2{N_j}$, and $R_k \defineqq (1/ n )\log_2{N_k}$; the following holds:
\begin{align}
&\text{\pink{\em (i) Joint typicality of a codeword}}\nonumber\\
&|\{r\in [1:N_i]: (\vecx_{ir},\vecx_{k})\in T^{n}_{X_i X_k} \}| <\exp\left\{n\left(|R_{i}-I(X_i;X_k)|^{+}+\epsilon/2\right)\right\}; \label{lemma_eq1a}\\
&|\{s\in [1:N_j]: (\vecx_{i},\vecx_{js}, \vecx_k)\in T^{n}_{X_i X_j X_k} \}| < \exp\left\{n\left(|R_{j}-I(X_j;X_i,X_k)|^{+}+\epsilon/2\right)\right\};  \label{lemma_eq1b}\\
&|\{u\in [1:N_i]:(\vecx_{iu}, \vecx_i, \vecx_j, \vecx_k)\in T^{n}_{X'_i  X_i X_j X_k}\}|\leq\exp{\left(n\left(\left|R_1-I(X_i';X_iX_jX_k)\right|^{+}+\epsilon/2\right)\right)};\label{lemma_eq7b_}\\
&\text{\pink{\em (ii) Joint typicality of a pair of codewords}}\nonumber\\
&|\{(r, s)\in [1:N_i]\times[1:N_j]:(\vecx_{ir},\vecx_{js}, \vecx_k)\in T^{n}_{X_i X_j X_k}\}|\nonumber\\
&\qquad\qquad\leq \exp\left\{n\left(\left|\left|R_i-I(X_i;X_k)\right|^++\left|R_j-I(X_j;X_k)\right|^+-I(X_i;X_j|X_k)\right|^+ +\epsilon/2\right)\right\}; \label{lemma_eq3a}\\
&|\{(r, u)\in [1:N_i]\times[1:N_i]:(\vecx_{ir},\vecx_{iu}, \vecx_k)\in T^{n}_{X_i X_i' X_k}, \, r\neq u\}|\nonumber\\
&\qquad\qquad\leq \exp\left\{n\left(\left|\left|R_i-I(X_i;X_k)\right|^++\left|R_i-I(X_i';X_k)\right|^+-I(X_i;X_i'|X_k)\right|^+ +\epsilon/2\right)\right\}; \label{lemma_eq3b}\\
&|\{(s, v)\in [1:N_j]\times[1:N_j]:(\vecx_{js},\vecx_{jv}, \vecx_k)\in T^{n}_{X_j X_j' X_k}, \, s\neq v\}|\nonumber\\
&\qquad\qquad\leq \exp\left\{n\left(\left|\left|R_j-I(X_j;X_k)\right|^++\left|R_j-I(X_j';X_k)\right|^+-I(X_j;X_j'|X_k)\right|^+ +\epsilon/2\right)\right\};\label{lemma_eq3e}\\
&|\{(r, w)\in [1:N_i]\times[1:N_k]:(\vecx_{ir},\vecx_{kw}, \vecx_k)\in T^{n}_{X_i, X_k', X_k}\}|\nonumber\\
&\qquad\qquad\leq \exp\left\{n\left(\left|\left|R_i-I(X_i;X_k)\right|^++\left|R_k-I(X_k';X_k)\right|^+-I(X_i;X_k'|X_k)\right|^+ +\epsilon/2\right)\right\}; \label{lemma_eq6c}\\
&|\{(s, w)\in [1:N_j]\times[1:N_k]:(\vecx_{js},\vecx_{kw}, \vecx_k)\in T^{n}_{X_j X_k' X_k}\}|\nonumber\\
&\qquad\qquad\leq \exp\left\{n\left(\left|\left|R_j-I(X_j;X_k)\right|^++\left|R_k-I(X_k';X_k)\right|^+-I(X_j;X_k'|X_k)\right|^+ +\epsilon/2\right)\right\};\label{lemma_eq6e}\\
&|\{(u, v)\in [1:N_i]\times[1:N_j]:(\vecx_{iu},\vecx_{jv}, \vecx_i, \vecx_j, \vecx_k)\in T^{n}_{X'_i X'_j X_i X_j X_k}\}|\nonumber\\
&\qquad\qquad\leq\exp{\left(n\left(\left||R_i-I(X_i';X_iX_jX_k)|^{+}+|R_j-I(X_j';X_iX_jX_k)|^{+}-I(X_i';X_j'|X_iX_jX_k)\right|^{+}+\epsilon/2\right)\right)}\label{lemma_eq7a_}	\\
&|\{(u, w)\in [1:N_i]\times[1:N_k]:(\vecx_{iu},\vecx_{kw}, \vecx_i, \vecx_j, \vecx_k)\in T^{n}_{X'_iX'_jX_iX_j X_k}\}|\nonumber\\
&\qquad\qquad\leq\exp{\left(n\left(\left||R_i-I(X_i';X_iX_jX_k)|^{+}+|R_k-I(X_k';X_iX_jX_k)|^{+}-I(X_i';X_k'|X_iX_jX_k)\right|^{+} + \epsilon/2\right)\right)}\label{lemma_eq7c_}\\
&\text{\pink{\em (iii) Joint typicality of three codewords}}\nonumber\\
&|\{(r, s, u)\in [1:N_i]\times[1:N_j]\times[1:N_i]:(\vecx_{ir},\vecx_{js},\vecx_{iu}, \vecx_k)\in T^{n}_{X_i X_j X_i'X_k}, \, r\neq u\}|\nonumber\\
&\qquad\qquad\leq \max_{\cS\subseteq \{1, 2, 3\}}\exp\left\{n\left({g_{X_i, X_j, X_i', X_k}^{\cS}(R_i, R_j, R_i)}+\epsilon/2\right)\right\}; \text{ and }\label{lemma_eq5}\\
&\text{\pink{\em (iv) Joint typicality of four codewords}}\nonumber\\
&|\{(r, s, u, v)\in [1:N_i]\times[1:N_j]\times[1:N_i]\times[1:N_j]:(\vecx_{ir}, \vecx_{js}, \vecx_{iu}, \vecx_{jv},\vecx_k)\in T^{n}_{X_iX_j X_i^{'} X'_j X_k}, \, r\neq u, s\neq v\}|\nonumber\\
&\qquad\qquad\leq \max_{\cS\subseteq \{1, 2, 3, 4\}}\exp\left\{n\left({g_{X_i, X_j, X_i', X_j', X_k}^{\cS}(R_i, R_j, R_i, R_j)}+\epsilon/2\right)\right\}\label{lemma_eq4}\\
&|\{(r, s, u, w)\in [1:N_i]\times[1:N_j]\times[1:N_i]\times[1:N_k]:(\vecx_{ir}, \vecx_{js}, \vecx_{iu}, \vecx_{jv},\vecx_k)\in T^{n}_{X_iX_j X_i^{'} X'_k X_k}, \, r\neq u\}|\nonumber\\
&\qquad\qquad\leq \max_{\cS\subseteq \{1, 2, 3, 4\}}\exp\left\{n\left({g_{X_i, X_j, X_i', X_k', X_k}^{\cS}(R_i, R_j, R_i, R_k)}+\epsilon/2\right)\right\}\label{lemma_eq6}.
\end{align}
\end{lemma}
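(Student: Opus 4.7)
The plan is a random coding argument in the spirit of the ``proof idea'' already sketched for Lemma~\ref{lemma:codebook}. For each $l\in\{1,2,3\}$, generate a random codebook $\inp{\vecX_{l1},\ldots,\vecX_{lN_l}}$ consisting of $N_l$ independent random vectors each drawn uniformly from the type class $T^n_{P_l}$, with the three codebooks mutually independent. For each of the inequalities \eqref{lemma_eq1a}--\eqref{lemma_eq6}, each fixed triple $(\vecx_i,\vecx_j,\vecx_k)\in\cX_i^n\times\cX_j^n\times\cX_k^n$, and each relevant joint type, I will show that the probability that the random codebooks violate that inequality decays doubly exponentially in $n$. Since the number of choices of $(\vecx_i,\vecx_j,\vecx_k)$ is at most $\exp(n\log|\cX_i\times\cX_j\times\cX_k|)$ and the number of joint types is polynomial in $n$ by \eqref{eq:poly_size}, a union bound then produces a codebook satisfying all the stated inequalities simultaneously.

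For the single-index counts \eqref{lemma_eq1a}, \eqref{lemma_eq1b}, \eqref{lemma_eq7b_}, the indicator variables are i.i.d. Bernoulli with success probability bounded using \eqref{eq:type_class}--\eqref{eq:conditional_type_class}, so standard Chernoff bounds directly give the double exponential concentration. For two-index counts \eqref{lemma_eq3a}--\eqref{lemma_eq7c_}, the indicators share at most one codeword and the argument is exactly the one written out for \eqref{lemma_eq7a_} in the proof idea: define $V_{\inp{ir,js}}$ (or its one-codebook variant for \eqref{lemma_eq3b}, \eqref{lemma_eq3e}) as the indicator of the joint-typicality event, compute $\bbE[U]$ using $|T^n_{X'_iX'_j|X_iX_jX_k}|\leq\exp\inp{nH(X'_iX'_j|X_iX_jX_k)}$, and apply Lemma~\ref{theorem:deletion_lemma_main} with $t=(\nu-1)E$, $\nu=\exp(n\epsilon/2)$, and $r=\exp(n\epsilon/8)$. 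Controlling $\tilde{U}_{\alpha}$ recursively (one ``inner'' deletion step for each shared index) yields the required bound, as done in detail for \eqref{lemma_eq7a_}.

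The genuinely new cases are the triple count \eqref{lemma_eq5} and the quadruple counts \eqref{lemma_eq4}, \eqref{lemma_eq6}, for which I will invoke Lemma~\ref{theorem:concentration} with $\beta=3$ and $\beta=4$ respectively. Fix for instance \eqref{lemma_eq4}: let $\cI=[1:N_i]\cup[1:N_j]$ (disjoint copies), let $\cJ$ be the set of $4$-tuples $\inp{ir,js,iu,jv}$ with $r\neq u$, $s\neq v$, and set $V_{\inp{ir,js,iu,jv}}=\mathbf{1}\{(\vecX_{ir},\vecX_{js},\vecX_{iu},\vecX_{jv},\vecx_k)\in T^n_{X_iX_jX'_iX'_jX_k}\}$. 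By standard type-class counting, $\bbE[U]$ is bounded by the $\cS=\{1,2,3,4\}$ term of the max on the right-hand side of \eqref{lemma_eq4}, and for every partial specification $a\in\cH_{\inp{ir,js,iu,jv}}$ with $1\leq|a|\leq 3$, the conditional expectation of $\sum_{\inp{g_1,\ldots,g_4}\sim[a,\cdot]} V$ given the fixed codewords is precisely captured by $g^{\cS}_{X_i,X_j,X'_i,X'_j,X_k}$ for the subset $\cS$ corresponding to the indices left free in $a$ (this is the whole point of the definition \eqref{eq: notation}). Taking $E$ equal to the right-hand side of \eqref{lemma_eq4}, $\nu=\exp(n\epsilon/2)$, and $\gamma=\exp(n\epsilon/8)$, the chain of inequalities $\tfrac{1}{(2\beta)^{\beta}}\inp{\tfrac{\delta_{i-1}-1}{2\gamma}E-\beta!}>\delta_iE$ can be met for sufficiently large $n$ by choosing $\delta_i=\exp(n\epsilon/8)$ (using $E\geq 1$), and Lemma~\ref{theorem:concentration} yields the doubly-exponentially-small failure probability. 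The same template handles \eqref{lemma_eq5} and \eqref{lemma_eq6} with obvious notational changes.

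The main obstacle is bookkeeping rather than conceptual: one must verify, for each partial specification $a$, that the conditional expected count of completions matches the exponent of $g^{\cS}$ for the right subset $\cS$, and that the maxima in \eqref{lemma_eq5}--\eqref{lemma_eq6} dominate each of these conditional bounds uniformly so that the single parameter $E$ in Lemma~\ref{theorem:concentration} can be used. Once this dictionary between partial specifications $a$ and subsets $\cS$ is set up, the $\delta_i$ recursion is an easy calculation, and the union bound closes the argument.
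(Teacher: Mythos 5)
Your proposal matches the paper's architecture exactly: generate independent type-constant random codebooks, fix $(\vecx_i,\vecx_j,\vecx_k)$ and a joint type, show the failure probability for each counting statement is doubly-exponentially small via Lemma~\ref{theorem:concentration}, then union-bound over the exponentially many $(\vecx_i,\vecx_j,\vecx_k)$ and polynomially many types. Your dictionary between partial specifications $a\in\cH$ and subsets $\cS$ is also precisely how the paper uses the definition in \eqref{eq: notation}, and your treatment of the lower-$\beta$ cases via direct Chernoff or one level of deletion is consistent with the paper.

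However, your specific parameter choice does not satisfy the recursion in Lemma~\ref{theorem:concentration}. You propose $\nu=\exp(n\epsilon/2)$, $\gamma=\exp(n\epsilon/8)$, and a \emph{constant} schedule $\delta_i=\exp(n\epsilon/8)$ for all $i\in[1:\beta]$. For $i=1$ the required inequality
\[
\frac{1}{(2\beta)^{\beta}}\Bigl(\frac{\delta_0-1}{2\gamma}E-\beta!\Bigr)>\delta_1 E
\]
holds for large $n$ since $\delta_0/\gamma=\exp(3n\epsilon/8)\gg\delta_1=\exp(n\epsilon/8)$. But for $i=2$ (and hence for all $\beta\ge 2$, which includes every case where Lemma~\ref{theorem:concentration} is actually needed), the left side is of order $\frac{1}{2(2\beta)^{\beta}}E\cdot(1-\exp(-n\epsilon/8))$, which is strictly smaller than $\delta_2E=\exp(n\epsilon/8)E$, so the condition fails. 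The recursion inherently forces $\delta_i\lesssim\delta_{i-1}/\gamma$, so the $\delta_i$ must decrease geometrically as $i$ grows and $\gamma$ must be small enough that the full chain down to $i=\beta$ still yields $\delta_\beta>1$. The paper resolves this by taking $\gamma=\exp(n\epsilon/4\beta)$ (so a $\beta$-dependent, smaller $\gamma$) and $\delta_i=\exp\bigl((4\beta-3i)n\epsilon/8\beta\bigr)$; one then checks $\delta_0=\nu$, $\delta_\beta=\exp(n\epsilon/8)>1$, and $\delta_i<\delta_{i-1}/\gamma$ for all $i$. Your argument goes through once you substitute such a decreasing schedule; with the constant schedule as stated, Lemma~\ref{theorem:concentration} does not apply.
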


\begin{proof}

We will generate the codebooks by a random experiment. For fixed $(\vecx_i,\vecx_j, \vecx_k)  \in \mathcal{X}^n_i \times \mathcal{X}^n_j \times \mathcal{X}^n_k$and joint type $P_{X_iX'_iX_jX'_jX_kX'_k}\in \cP^n_{\cX_{i}\times \cX_{i}\times\cX_{j}\times\cX_j\times\cX_k\times\cX_k}$, we will show that the probability that each of the  statements~\eqref{lemma_eq1a}~-~\eqref{lemma_eq6} does not hold, falls doubly exponentially in $n$. Since $|\cX_i^n|$, $|\cX_j^n|$, $|\cX_k^n|$ and $|\cP^n_{\cX_{i}\times \cX_{i}\times\cX_{j}\times\cX_j\times\cX_k\times\cX_k}|$  grow  only exponentially in $n$, a union bound will imply that the probability that any of the statements~\eqref{lemma_eq1a}~-~\eqref{lemma_eq6} fail for some $\vecx_i,\vecx_j, \vecx_k$ and $P_{X_iX'_iX_jX'_jX_kX'_k}$  also falls doubly exponentially. This will show the existence of a codebook satisfying~\eqref{lemma_eq1a}~-~\eqref{lemma_eq6}. The proof will employ Lemma~\ref{theorem:concentration} which we have restated below for quick reference.

\begin{duplicatelemma}
For an index set $\cI$, let $\{Y_i: i\in \cI\}$ be a set of independent random variables. Let $\beta$ be a positive integer. Let $\cJ\subseteq \cI^{\beta}$ be a set of $\beta$ length tuples consisting of distinct elements from $\cI$. 
For $\inp{i_1, \ldots, i_{\beta}}\in \cJ$, let $V_{\inp{i_1, \ldots, i_{\beta}}}$ be a binary random variable which is a function of $Y_{i_1}, \ldots, Y_{i_{\beta}}$. 
Suppose $U = \sum_{\inp{i_1, \ldots, i_{\beta}}\in \cJ}V_{\inp{i_1, \ldots, i_{\beta}}}$.
Let 
\begin{align}\label{eq:cond_on_E}
E\geq \max \left\{\max_{\inp{i_1,\ldots,i_{\beta}}\in\cJ}\max_{\substack{\inp{y_{i_1},\ldots, y_{i_{\beta}}},\\a\in \cH_{\inp{i_1,\ldots,i_{\beta}}}\\1\leq|a|\leq \beta-1}}\bbE\left[ \sum_{\stackrel{{\inp{j_1, \ldots, j_{\beta}}\in \cJ}:}{{\inp{j_1, \ldots, j_{\beta}}\sim \insq{a, \inp{i_1,\ldots,i_{\beta}}}}}}V_{\inp{j_1, \ldots, j_{\beta}}}\Bigg| \inp{Y_{i_1}, \ldots, Y_{i_{\beta}}} = \inp{y_{i_1}, \ldots, y_{i_{\beta}}}\right], \bbE[U]\right\},
\end{align}
For $\gamma>0$, $\nu >1$  %\textcolor{blue}{satisfying $\inp{\frac{1}{3(2\beta)^{\beta}}\inp{\frac{\nu-1}{2\gamma}-{\beta!}}}^{\beta}>1$}, 
if there exists $\delta_1, \delta_2, \ldots, \delta_{\beta}>1$ such that for all $i\in [1:\beta]$, $\frac{1}{(2\beta)^{\beta}}\inp{\frac{\delta_{i-1}-1}{2\gamma}E - \beta!}>\delta_{i} E$ for $\delta_0 := \nu$, then
\begin{align}\label{eq:concentration_final_d}
\bbP(U\geq \nu E)\leq {\inp{|\cI|\beta+1}^{{\beta}^2}}e^{-\gamma/3}.
\end{align}
\end{duplicatelemma} 

 Let $T^n_{l}, l\in \{1, 2,3\}$ denote the type class of $P_{l}$. We generate independent random codebooks for each user. The codebook for user $l\in \{1,2,3\}$, denoted by $\inp{\vecX_{l1}, \vecX_{l2}, \ldots, \vecX_{lN_{l}}}$, consists of independent random vectors each distributed uniformly on $T^n_{l}$.  
Fix $(\vecx_i,\vecx_j, \vecx_k)  \in \mathcal{X}^n_i \times \mathcal{X}^n_j \times \mathcal{X}^n_k$and a joint type $P_{X_iX'_iX_jX'_jX_kX'_k}\in \cP^n_{\cX_{i}\times \cX_{i}\times\cX_{j}\times\cX_j\times\cX_k\times\cX_k}$ such that for $l\in \{1,2,3\}, P_{X_l} =P_{X'_l}= P_{l}$ and $(\vecx_i,\vecx_j, \vecx_k)\in T^n_{X_iX_jX_k}$.

In order to obtain ~\eqref{lemma_eq1a}~-~\eqref{lemma_eq6}, we will use $\nu = \exp{\inp{n\epsilon/2}}$ and $\gamma = \exp(n\epsilon/4\beta)$ in Lemma~\ref{theorem:concentration}. For $i\in [1:\beta]$, let $\delta_i = \exp\inp{\frac{(4\beta-3i)n\epsilon}{8\beta}}$. Note that $\delta_i\geq 1$ for all $i\in [1:\beta]$, $\delta_0 = \exp{\inp{n\epsilon/2}}= \nu$ and there exists $n_0$ s.t. for all $n\geq n_0$,
\begin{align*}
\delta_i&=\exp\inp{\frac{(4\beta-3i)n\epsilon}{8\beta}}\\
&<\inp{\exp\inp{\frac{(4\beta-3i+1)n\epsilon}{8\beta}}}\\
&=\frac{\delta_{i-1}}{\gamma}\\
&\approx\frac{1}{\inp{2\beta}^{\beta}}\inp{\frac{\delta_{i-1}-1}{2\gamma}-\frac{\beta!}{E}}\text{ for large n}.
\end{align*}

The choice of $\beta$, $\cI$, $\cJ$ and the random variables will depend on the specific statement among ~\eqref{lemma_eq1a}~-~\eqref{lemma_eq6}. Though, $\beta$ will only range in $\{1, 2, 3, 4\}$. 

\noindent{\pink{\em \underline{(i) Analysis of \eqref{lemma_eq1a}, \eqref{lemma_eq1b} and \eqref{lemma_eq7b_} (Joint typicality of a codeword)}}}\\
To obtain \eqref{lemma_eq1a}, choose $\cI = \{i1, i2, \ldots, iN_i\}$ and the set $\{\vecX_{i1}, \vecX_{i2}, \ldots, \vecX_{iN_i}\}$ corresponding to $\{Y_i:\, i\in \cI\}$. We choose $\beta = 1$ and $\cJ = \inb{\inp{i1}, \inp{i2}, \ldots, \inp{iN_i}}$. For all $r\in [1:N_i]$,
\begin{align*}
V_{\inp{ir}}=\begin{cases}&1,\text{ if }\vecX_{ir}\in T^n_{X_i|X_k}(\vecx_k),\\
& 0,\text{ otherwise.}
\end{cases}
\end{align*}
Note that
\begin{align*}
\bbP\inp{V_{\inp{ir}}=1}  &= \frac{|T^n_{X_i|X_k}(\vecx_k)|}{|T^n_{X_i}|}\\
&\leq \frac{\exp\inb{nH(X_i|X_k)}}{(n+1)^{|\cX_i|}\exp\inb{nH(X_i)}}\\
&= (n+1)^{-|\cX_i|}\exp\inb{-nI(X_i;X_k)}\\
&\stackrel{(a)}{\leq} \exp\inb{-nI(X_i;X_k)}
\end{align*}
where $(a)$ follows because $(n+1)^{-|\cX_i|}\leq 1$.\\
Note that $U = \sum_{r\in [1:N_i]}V_{\inp{ir}} = |\{r\in[1:N_i]: (\vecX_{ir},\vecx_{k})\in T^{n}_{X_i X_k} \}|$. Note that for the case of $\beta = 1$, condition \eqref{eq:cond_on_E} reduces to $E\geq \bbE[U]$. 
Thus,  $\bbE[U] = \sum_{r\in [1:N_i]}\bbE\insq{V_{\inp{ir}}} = \sum_{r\in [1:N_i]}\bbP\inp{V_{\inp{ir}}=1}$$ \leq \exp\left\{n\left(R_i-I(X_i;X_k)\right)\right\}$ $\leq \exp\left\{n\left|R_i-I(X_i;X_k)\right|^+\right\}:=E$. Thus, \eqref{eq:concentration_final_d} gives us
\begin{align}\label{eq:lemma_prob1}
\bbP\left(|\{r\in[1:N_i]: (\vecX_{ir},\vecx_{k})\in T^{n}_{X_i X_k} \}| \geq \exp\left\{n\left(|R_{i}-I(X_i;X_k)|^{+}+\epsilon/2\right)\right\}\right)\leq (N_i+1)e^{-\exp(n\epsilon/4)/3}.
\end{align}

Replacing $x_k$ with $(x_i, x_j, x_k)$, $X_k$ with $(X_i, X_j, X_k)$, and $X_i$ with $X'_i$ in the above argument, one can show that
\begin{align}\label{eq:lemma_prob2}
&\bbP\left(|\{u\in [1:N_i]: (\vecX_{iu},\vecx_i, \vecx_j, \vecx_k)\in T^{n}_{X_i'X_i X_j X_k} \}| \geq \exp\left\{n\left(|R_{i}-I(X_i';X_iX_j X_kX_k)|^{+}+\epsilon/2\right)\right\}\right)\nonumber  \\
&\qquad\qquad\qquad\qquad\qquad\qquad\qquad\qquad\qquad\qquad\qquad\qquad\qquad\qquad\qquad\qquad\qquad\leq (N_i+1)e^{-\exp(n\epsilon/4)/3}.
\end{align}

Similarly, choosing $\cI = \{j1, j2, \ldots, jN_j\}$, $\{Y_i:i\in \cI\}= \{\vecX_{j1}, \vecX_{j2}, \ldots, \vecX_{jN_j}\}$, $\cJ = \inb{\inp{j1}, \inp{j2}, \ldots, \inp{jN_j}}$ and replacing $\vecx_k$ with $(\vecx_i, \vecx_k)$, $X_k$ with $(X_i, X_k)$, and $R_i$ with $R_j$ in the proof of \eqref{eq:lemma_prob1}, we can show that
\begin{align}\label{eq:lemma_prob3}
&\bbP\left(|\{s\in [1:N_j]: (\vecx_{i},\vecX_{js}, \vecx_k)\in T^{n}_{X_i X_j X_k} \}| \geq \exp\left\{n\left(|R_{j}-I(X_j;X_iX_k)|^{+}+\epsilon/2\right)\right\}\right)\nonumber\\
&\qquad\qquad\qquad\qquad\qquad\qquad\qquad\qquad\qquad\qquad\qquad\qquad\qquad\qquad\qquad\qquad\qquad\leq (N_j+1)e^{-\exp(n\epsilon/4)/3}.
\end{align}

\noindent{\em \pink{{\underline{(ii) Analysis of \eqref{lemma_eq3a}~-~\eqref{lemma_eq7c_} (Joint typicality of a pair of codewords)}}}}\\
% {\underline{Analysis of \eqref{lemma_eq3a}~-~\eqref{lemma_eq7c_}}}\\
We will only analyse \eqref{lemma_eq3a} and \eqref{lemma_eq3b}. The analysis of other statements is similar. To show \eqref{lemma_eq3a}, choose $\cI = \{i1, i2, \ldots, iN_i\}\cup\{j1, j2, \ldots, jN_j\}$,\, $\{Y_i:i\in \cI\}= \{\vecX_{i1}, \vecX_{i2}, \ldots, \vecX_{iN_i}\}\cup\{\vecX_{j1}, \vecX_{j2}, \ldots, \vecX_{jN_j}\}$. For $\beta =2$, let $\cJ = \inb{\inp{ir, js}:(r, s)\in [1:N_i]\times[1:N_j]}$. For all $(r, s)\in [1:N_i]\times [1:N_j]$,
\begin{align*}
V_{\inp{ir, js}}=\begin{cases}&1,\text{ if }(\vecX_{ir}, \vecX_{js})\in T^n_{X_iX_j|X_k}(\vecx_k),\\
& 0,\text{ otherwise.}
\end{cases}
\end{align*}
This implies that $U = \sum_{(r, s)\in [1:N_i]\times[1:N_j]}V_{\inp{ir, js}} = \left|\left\{(r, s)\in[1:N_i]\times[1:N_j]: (\vecX_{ir},\vecX_{js},\vecx_{k})\in T^{n}_{X_iX_j X_k} \right\}\right|$.

Note that
\begin{align*}
\bbP\inp{V_{\inp{ir, js}}=1}  &= \frac{|T^n_{X_iX_j|X_k}(\vecx_k)|}{|T^n_{X_i}||T^n_{X_j}|}\\
&\leq \frac{\exp\inb{nH(X_iX_j|X_k)}}{(n+1)^{|\cX_i|+|\cX_j|}\exp\inb{n(H(X_i)+H(X_j)}}\\
&= (n+1)^{-\inp{|\cX_i|+|\cX_j|}}\exp\inb{-n\inp{H(X_iX_j)-H(X_iX_j|X_k) - H(X_iX_j) + H(X_i)+H(X_j)}}\\
&= (n+1)^{-\inp{|\cX_i|+|\cX_j|}}\exp\inb{-n\inp{I(X_iX_j;X_k) + I(X_i;X_j)}}\\
&{\leq} \exp\inb{-n\inp{I(X_iX_j;X_k) + I(X_i;X_j)}}\\
&{=} \exp\inb{-n\inp{I(X_j;X_k) + I(X_i;X_k|X_j) + I(X_i;X_j)}}\\
&{=} \exp\inb{-n\inp{I(X_j;X_k) + I(X_i;X_jX_k)}}
\end{align*}
Thus,  
\begin{align}
\bbE[U] = &\sum_{(r, s)\in [1:N_i]\times[1:N_j]}\bbE\insq{V_{\inp{ir, js}}} = \sum_{(r, s)\in [1:N_i]\times[1:N_j]}\bbP\inp{V_{\inp{ir, js}}=1}\nonumber\\
& \leq \exp\left\{n\left(R_i+R_j-I(X_i;X_jX_k) - I(X_j;X_k)\right)\right\}\nonumber\\
&\leq \exp\left\{n\left|\left|R_i-I(X_i;X_k)\right|^++\left|R_j-I(X_j;X_k)\right|^+-I(X_i;X_j|X_k)\right|^+\right\}:=E.\label{eq:like1}
\end{align}
 We need to show that  
for any $(ir, js)\in \cJ$, and $(\vecX_{ir}, \vecX_{js}) = (\vecx_{ir}, \vecx_{js})$, \\$E\geq \max\inp{\bbE\insq{\sum_{v\neq s}V_{(ir, jv)}|(\vecX_{ir}, \vecX_{js}) = (\vecx_{ir}, \vecx_{js})}, \bbE\insq{\sum_{u\neq r}V_{(iu, js)}|(\vecX_{ir}, \vecX_{js}) = (\vecx_{ir}, \vecx_{js})}}$. Note that
\begin{align}
&\bbE\insq{\sum_{v\neq s}V_{(ir, jv)}|(\vecX_{ir}, \vecX_{js}) = (\vecx_{ir}, \vecx_{js})}\nonumber\\
&= \sum_{v\neq s}\bbE\insq{V_{(ir, jv)}|(\vecX_{ir}, \vecX_{js}) = (\vecx_{ir}, \vecx_{js})}\nonumber\\
&\leq \sum_{v\neq s}\bbP\inp{V_{(ir, jv)} = 1|(\vecX_{ir}, \vecX_{js}) = (\vecx_{ir}, \vecx_{js})}\nonumber\\
&= \sum_{v\neq s}\bbP\inp{\vecX_{jv}\in T^n_{X_j|X_i X_k}(\vecx_{ir}, \vecx_k)}\nonumber\\
&= \sum_{v\neq s}\frac{|T^n_{X_j|X_iX_k}(\vecx_{ir}, \vecx_k)|}{|T^n_{X_j}|}\nonumber\\
&\leq \exp\inb{nR_j}\frac{\exp\inb{nH(X_j|X_iX_k)}}{(n+1)^{|\cX_j|}\exp\inb{nH(X_j)}}\nonumber\\
&\leq \exp\inb{n\inp{|R_j-I(X_j;X_iX_k)|^+}}\label{eq:like2}\\
&\leq E.\nonumber
\end{align} 
Similarly, we can show that $\bbE\insq{\sum_{u\neq r}V_{(iu, js)}|(\vecX_{ir}, \vecX_{js}) = (\vecx_{ir}, \vecx_{js})}\leq E$.

Thus, \eqref{eq:concentration_final_d} implies that
\begin{align}
&\bbP\Bigg(\left|\left\{(r, s)\in[1:N_i]\times[1:N_j]: (\vecX_{ir},\vecX_{js},\vecx_{k})\in T^{n}_{X_iX_j X_k} \right\}\right| \nonumber\\
&\qquad\geq \exp\left\{n\left(\left|\left|R_i-I(X_i;X_k)\right|^++\left|R_j-I(X_j;X_k)\right|^+-I(X_i;X_j|X_k)\right|^++\epsilon/2\right)\right\}\Bigg)\nonumber\\
&\leq ((N_i+N_j)2+1)^4e^{-\exp(n\epsilon/8)/3} \label{eq:lemma_prob3a}.
\end{align}

To show \eqref{lemma_eq3b}, let $\cI = \{i1, i2, \ldots, iN_i\}$ and $\{Y_i: i\in \cI\} = \{\vecX_{i1}, \vecX_{i2}, \ldots, \vecX_{iN_i}\}$. We choose $\beta = 2$ and $\cJ = \inb{\inp{ir, iu}:(r, u)\in [1:N_i]\times[1:N_i] \text{ such that } r\neq u}$. For all $(r, u)$ such that $r\in [1:N_i]$ and $u\in [1:N_i]\setminus\{r\}$,
\begin{align*}
V_{\inp{ir, iu}}=\begin{cases}&1,\text{ if }(\vecX_{ir}, \vecX_{iu})\in T^n_{X_iX_i'|X_k}(\vecx_k),\\
& 0,\text{ otherwise.}
\end{cases}
\end{align*}
By replacing $X_j$ with $X_i'$ in the proof of \eqref{eq:lemma_prob3a} and following similar arguments, we can show that
\begin{align}
&\bbP\Bigg(\left|\left\{(r, u)\in [1:N_i]\times[1:N_i]:(\vecx_{ir},\vecx_{iu}, \vecx_k)\in T^{n}_{X_i X_i' X_k},\, r\neq u\right\}\right| \nonumber\\
&\qquad\geq \exp\left\{n\left(\left|\left|R_i-I(X_i;X_k)\right|^++\left|R_i-I(X_i';X_k)\right|^+-I(X_i;X_i'|X_k)\right|^+ +\epsilon/2\right)\right\}\Bigg)\nonumber\\
&\leq (2N_i+1)^4  e^{-\exp(n\epsilon/8)/3}. \label{eq:lemma_prob3b}
\end{align}

\noindent{\underline{\pink{\em (iii) Analysis of \eqref{lemma_eq5} (Joint typicality of three codewords)}}}\\
% {\underline{Analysis of \eqref{lemma_eq5}}}\\
Choose  $\cI = \{i1, i2, \ldots, iNi\}\cup\{j1, j2, \ldots, jNj\}$,\, $\{Y_i:i\in \cI\}= \{\vecX_{i1}, \vecX_{i2}, \ldots, \vecX_{iN_i}\}\cup\{\vecX_{j1}, \vecX_{j2}, \ldots, \vecX_{jN_j}\}$. For $\beta = 3$, \\
let $\cJ = \inb{\inp{ir, js, iu}:(r, s, u)\in [1:N_i]\times[1:N_j]\times[1:N_i], \, r\neq u}$. For all $\inp{ir, js, iu}\in \cJ$,
\begin{align*}
V_{\inp{ir, js, iu}}=\begin{cases}&1,\text{ if }(\vecX_{ir}, \vecX_{js}, \vecX_{iu})\in T^n_{X_iX_jX_i'|X_k}(\vecx_k),\\
& 0,\text{ otherwise.}
\end{cases}
\end{align*}
Therefore, $U = \sum_{(ir, js, iu)\in \cJ}V_{\inp{ir, js, iu}} = |\{(r, s, u)\in [1:N_i]\times[1:N_j]\times[1:N_i]:(\vecX_{ir},\vecX_{js},\vecX_{iu}, \vecx_k)\in T^{n}_{X_i X_j X_i'X_k}, \, r\neq u\}|.$

Note that
\begin{align*}
\bbP\inp{V_{\inp{ir, js, iu}}=1}  &= \frac{|T^n_{X_iX_jX_i'|X_k}{(\vecx_k)}|}{|T^n_{X_i}||T^n_{X_j}||T^n_{X_i'}|}\\
&\leq \frac{\exp\inb{nH(X_iX_jX_i'|X_k)}}{(n+1)^{2|\cX_i|+|\cX_j|}\exp\inb{n(H(X_i)+H(X_j)+H(X_i')}}\\
&= (n+1)^{-\inp{2|\cX_i|+|\cX_j|}}\exp\inb{n\inp{H(X_iX_jX_i'X_k)- H(X_i)-H(X_j)-H(X_i')-H(X_k)}}\\
&\stackrel{(a)}{=} (n+1)^{-\inp{|\cX_i|+|\cX_j|}}\exp\inb{-n\inp{C(X_i;X_j;X_i';X_k)}}\\
&{\leq} \exp\inb{-n\inp{C(X_i;X_j;X_i';X_k)}}
\end{align*}
where $(a)$ follows from \eqref{defn:tot_corr}.

Note that 
\begin{align}
\bbE[U] = &\sum_{(ir, js, iu)\in \cJ}\bbE\insq{V_{\inp{ir, js, iu}}} = \sum_{(ir, js, iu)\in \cJ}\bbP\inp{V_{\inp{ir, js, iu}}=1}\\
 &\leq \exp\left\{n\left(2R_i+R_j-C(X_i;X_j;X_i';X_k))\right)\right\}\\
 &\leq \max_{\cS\subseteq \{1, 2, 3\}}\exp\left\{n\left({g_{X_i, X_j, X_i', X_k}^{\cS}(R_i, R_j, R_i)}\right)\right\}\label{eq:like3}.
 \end{align}
This is because for $\cS = \{1,2,3\}$, $\exp\left\{n\left({g_{X_i, X_j, X_i', X_k}^{\cS}(R_i, R_j, R_i)}\right)\right\}$$=\exp\left\{n\left(2R_i+R_j-C(X_i;X_j;X_i';X_k))\right)\right\}$. \\ Let    $E:=\max_{\cS\subseteq \{1, 2, 3\}}\exp\left\{n\left({g_{X_i, X_j, X_i', X_k}^{\cS}(R_i, R_j, R_i)}\right)\right\}$. Using similar arguments as the ones used to obtain \eqref{eq:like1} and \eqref{eq:like2}, we can show that  
for any $(ir, js, iu)\in \cJ$, and $(\vecX_{ir}, \vecX_{js}, \vecX_{iu}) = (\vecx_{ir}, \vecx_{js}, \vecx_{iu})$, \\
\begin{align*}
E&\geq \max\Bigg(\bbE\insq{\sum_{v\neq s}V_{(ir, jv, iu)}|(\vecX_{ir}, \vecX_{js}, \vecX_{iu})= (\vecx_{ir}, \vecx_{js}, \vecx_{iu})},\\
 &\qquad\qquad\bbE\insq{\sum_{r'\notin \{r,u\}}V_{(ir', js, iu)}|(\vecX_{ir}, \vecX_{js}, \vecX_{iu}) = (\vecx_{ir}, \vecx_{js}, \vecx_{iu})},\\
 &\qquad \qquad\bbE\insq{\sum_{u'\notin \{r,u\}}V_{(ir, js, iu')}|(\vecX_{ir}, \vecX_{js}, \vecX_{iu}) = (\vecx_{ir}, \vecx_{js}, \vecx_{iu})},\\
 &\qquad \qquad\bbE\insq{\sum_{r', u'\notin \{r,u\}}V_{(ir', js, iu')}|(\vecX_{ir}, \vecX_{js}, \vecX_{iu}) = (\vecx_{ir}, \vecx_{js}, \vecx_{iu})},\\
 & \qquad \qquad\bbE\insq{\sum_{r'\notin \{r,u\}, v\neq s}V_{(ir', jv, iu)}|(\vecX_{ir}, \vecX_{js}, \vecX_{iu}) = (\vecx_{ir}, \vecx_{js}, \vecx_{iu})},\\
 & \qquad \qquad\bbE\insq{\sum_{u'\notin \{r,u\}, v\neq s}V_{(ir, jv, iv)}|(\vecX_{ir}, \vecX_{js}, \vecX_{iu}) = (\vecx_{ir}, \vecx_{js}, \vecx_{iu})}\Bigg).
 \end{align*}
Thus, we can use \eqref{eq:concentration_final_d} to obtain
\begin{align}
&\bbP\left(|\{(r, s, u)\in [1:N_i]\times[1:N_j]\times[1:N_i]:(\vecX_{ir},\vecX_{js},\vecX_{iu}, \vecx_k)\in T^{n}_{X_i X_j X_i'X_k}, \, r\neq u\}|\right.\nonumber\\
&\qquad\qquad\qquad\qquad> \left.\max_{\cS\subseteq \{1, 2, 3\}}\exp\left\{n\left({g_{X_i, X_j, X_i', X_k}^{\cS}(R_i, R_j, R_i)}+\epsilon/2\right)\right\}\right)\nonumber\\ 
&\leq (3(N_i+N_j)+1)^9 e^{-\exp(n\epsilon/12)/3}.   \label{eq:lemma_prob5}
\end{align}

\noindent{\underline{\pink{\em (iv) Analysis of \eqref{lemma_eq4}  and \eqref{lemma_eq6} (Joint typicality of four codewords)}}}\\
% {\underline{Analysis of \eqref{lemma_eq4}  and \eqref{lemma_eq6}}}\\
We will start with analysis of \eqref{lemma_eq6}. Choose $\cI = \{i1, i2, \ldots, iN_i\}\cup\{j1, j2, \ldots, jN_j\}\cup\{k1, k2, \ldots, kN_k\}$,\, $\{Y_i:i\in \cI\}= \{\vecX_{i1}, \vecX_{i2}, \ldots, \vecX_{iN_i}\}\cup\{\vecX_{j1}, \vecX_{j2}, \ldots, \vecX_{jN_j}\}\cup\{\vecX_{k1}, \vecX_{k2}, \ldots, \vecX_{kN_k}\}$. For $\beta = 4$, \\
let $\cJ = \inb{\inp{ir, js, iu, kw}:(r, s, u, w)\in [1:N_i]\times[1:N_j]\times[1:N_i]\times[1:N_k], \, r\neq u}$. For all $\inp{ir, js, iu, kw}\in \cJ$,
\begin{align*}
V_{\inp{ir, js, iu, kw}}=\begin{cases}&1,\text{ if }(\vecX_{ir}, \vecX_{js}, \vecX_{iu}, \vecX_{kw})\in T^n_{X_iX_jX_i'X_k'|X_k}(\vecx_k),\\
& 0,\text{ otherwise.}
\end{cases}
\end{align*}
Therefore, $U = \sum_{(ir, js, iu, kw)\in \cJ}V_{\inp{ir, js, iu, kw}} = |\{(r, s, u, w)\in [1:N_i]\times[1:N_j]\times[1:N_i]\times[1:N_k]:(\vecX_{ir}, \vecX_{js}, \vecX_{iu}, \vecX_{jv},\vecx_k)\in T^{n}_{X_iX_j X_i^{'} X'_k X_k}, \, r\neq u\}|.$

Note that
\begin{align*}
\bbP\inp{V_{\inp{ir, js, iu, kw}}=1}  &= \frac{|T^n_{X_iX_jX_i'X_k'|X_k}(\vecx_k)|}{|T^n_{X_i}||T^n_{X_j}||T^n_{X_i'}||T^n_{X_k'}|}\\
&\leq \frac{\exp\inb{nH(X_iX_jX_i'X_k'|X_k)}}{(n+1)^{2|\cX_i|+|\cX_j|+|\cX_k|}\exp\inb{n(H(X_i)+H(X_j)+H(X_i')+H(X_k')}}\\
&= (n+1)^{-\inp{2|\cX_i|+|\cX_j|}}\exp\inb{n\inp{H(X_iX_jX_i'X_k'X_k)- H(X_i)-H(X_j)-H(X_i')-H(X_k')-H(X_k)}}\\
&\stackrel{(a)}{=} (n+1)^{-\inp{|\cX_i|+|\cX_j|}}\exp\inb{-n\inp{C(X_i;X_j;X_i';X_k';X_k)}}\\
&{\leq} \exp\inb{-n\inp{C(X_i;X_j;X_i';X_k';X_k)}}
\end{align*}
where $(a)$ follows from \eqref{defn:tot_corr}.

Note that $\bbE[U] = \sum_{(ir, js, iu, kw)\in \cJ}\bbE\insq{V_{\inp{ir, js, iu, kw}}} = \sum_{(ir, js, iu, kw)\in \cJ}\bbP\inp{V_{\inp{ir, js, iu, kw}}=1}$\\$ \leq \exp\left\{n\left(2R_i+R_j+R_k-C(X_i;X_j;X_i';X_k';X_k))\right)\right\}$$\leq \max_{\cS\subseteq \{1, 2, 3, 4\}}\exp\left\{n\left({g_{X_i, X_j, X_i', X_k',X_k}^{\cS}(R_i, R_j, R_i, R_k)}\right)\right\}$. This is because for $\cS = \{1,2,3, 4\}$, $\exp\left\{n\left({g_{X_i, X_j, X_i',X_k', X_k}^{\cS}(R_i, R_j, R_i,R_k)}\right)\right\}$$=\exp\left\{n\left(2R_i+R_j+R_k-C(X_i;X_j;X_i';X_k';X_k))\right)\right\}$. \\ Using similar arguments as used on \eqref{eq:like1}, \eqref{eq:like2} and \eqref{eq:like3}, one can show that \\$E:=\max_{\cS\subseteq \{1, 2, 3, 4\}}\exp\left\{n\left({g_{X_i, X_j, X_i',X_k', X_k}^{\cS}(R_i, R_j, R_i, R_k)}\right)\right\}$ satisfies condition \eqref{eq:cond_on_E}.
Using \eqref{eq:concentration_final_d}, we obtain that
\begin{align}
&\bbP\left(|\{(r, s, u, w)\in [1:N_i]\times[1:N_j]\times[1:N_i]\times[1:N_k]:(\vecX_{ir},\vecX_{js},\vecX_{iu},\vecX_{kw}, \vecx_k)\in T^{n}_{X_iX_j X_i'X_k' X_k}, \, r\neq u\}|\right.\nonumber\\
&\qquad\qquad\qquad\qquad> \left.\max_{\cS\subseteq \{1, 2, 3, 4\}}\exp\left\{n\left({g_{X_i, X_j, X_i', X_k',X_k}^{\cS}(R_i, R_j, R_i, R_k)}+\epsilon/2\right)\right\}\right)\nonumber\\ 
&\leq (4(N_i+N_j+N_k)+1)^{16} e^{-\exp(n\epsilon/16)/3}.   \label{eq:lemma_prob6}
\end{align}

The analysis of  \eqref{lemma_eq4} is very similar to that of \eqref{lemma_eq6}. For \eqref{lemma_eq4}, we choose  $\cI = \{i1, i2, \ldots, iN_i\}\cup\{j1, j2, \ldots, jN_j\}$,\, $\{Y_i:i\in \cI\}= \{\vecX_{i1}, \vecX_{i2}, \ldots, \vecX_{iN_i}\}\cup\{\vecX_{j1}, \vecX_{j2}, \ldots, \vecX_{jN_j}\}$. For $\beta = 4$, \\
let $\cJ = \inb{\inp{ir, js, iu, jv}:(r, s, u, v)\in [1:N_i]\times[1:N_j]\times[1:N_i]\times[1:N_j], \, r\neq u,\, s\neq v}$. For all $\inp{ir, js, iu, jv}\in \cJ$,
\begin{align*}
V_{\inp{ir, js, iu, jv}}=\begin{cases}&1,\text{ if }(\vecX_{ir}, \vecX_{js}, \vecX_{iu}, \vecX_{jv})\in T^n_{X_iX_jX_i'X_j'|X_k}(\vecx_k),\\
& 0,\text{ otherwise.}
\end{cases}
\end{align*}
We follow the same analysis as done for \eqref{lemma_eq6} to obtain
\begin{align}
&\bbP\left(|\{(r, s, u, v)\in [1:N_i]\times[1:N_j]\times[1:N_i]\times[1:N_j]:(\vecX_{ir},\vecX_{js},\vecX_{iu},\vecX_{jv}, \vecx_k)\in T^{n}_{X_i X_j X_i'X_j' X_k}, \, r\neq u,, s\neq v\}|\right.\nonumber\\
&\qquad\qquad\qquad\qquad> \left.\max_{\cS\subseteq \{1, 2, 3, 4\}}\exp\left\{n\left({g_{X_i, X_j, X_i', X_j',X_k}^{\cS}(R_i, R_j, R_i, R_j)}+\epsilon/2\right)\right\}\right)\nonumber\\ 
&\leq (4(N_i+N_j)+1)^{16} e^{-\exp(n\epsilon/16)/3}.   \label{eq:lemma_prob4}
\end{align}

\end{proof}
\pink{Lemma~\ref{codebook} gives Lemma~\ref{lemma:codebook} as a corollary, which we prove in the next subsection.}
\subsection{Codebook}\label{new_appendix:codebook}
\pink{We restate Lemma~\ref{lemma:codebook} below and show how it follows from Lemma~\ref{codebook}.}
\begin{duplicatelemma}
For any  $\epsilon>0,\,  n\geq n_0(\epsilon), \, N_1, N_2, N_3\geq\exp(n\epsilon)$ and types $P_1\in \cP_{\cX_1}^n$, $ P_2\in \cP_{\cX_2}^n$,  $P_3\in \cP_{\cX_3}^n$, there exists codebooks $\left\{\vecx_{11}, \ldots,  \vecx_{1N_1}\in \mathcal{X}_1^n\right\}, \left\{\vecx_{21}, \ldots,  \vecx_{2N_2}\in \mathcal{X}_2^n\right\}$, $ \left\{\vecx_{31}, \ldots,  \vecx_{3N_3}\in \mathcal{X}_3^n\right\}$ whose codewords are of type $P_1, P_2$, $P_3$ respectively such that for every permutation $(i,j,k)$ of $(1,2,3)$; for every $(\vecx_i,\vecx_j, \vecx_k)  \in \mathcal{X}^n_i \times \mathcal{X}^n_j \times \mathcal{X}^n_k$; for every joint type $P_{X_iX'_iX_jX'_jX_kX'_k}\in \cP^n_{\cX_{i}\times \cX_{i}\times\cX_{j}\times\cX_j\times\cX_k\times\cX_k}$; and for $R_i \defineqq (1/ n )\log_2{N_i},R_j \defineqq (1/ n )\log_2{N_j}$, and $R_k \defineqq (1/ n )\log_2{N_k}$; the following holds:
\begin{align}
&|\{u\in [1:N_i]:(\vecx_{iu}, \vecx_i, \vecx_j, \vecx_k)\in T^{n}_{X'_i  X_i X_j X_k}\}|\leq\exp{\left(n\left(\left|R_1-I(X_1';X_1X_2X_3)\right|^{+}+\epsilon/2\right)\right)};\label{lemma_eq7b_app}\\
&|\{(u, v)\in [1:N_i]\times[1:N_j]:(\vecx_{iu},\vecx_{jv}, \vecx_i, \vecx_j, \vecx_k)\in T^{n}_{X'_i X'_j X_i X_j X_k}\}|\nonumber\\
&\qquad\qquad\leq\exp{\left(n\left(\left||R_i-I(X_i';X_iX_jX_k)|^{+}+|R_j-I(X_j';X_iX_jX_k)|^{+}-I(X_i';X_j'|X_iX_jX_k)\right|^{+}+\epsilon/2\right)\right)};\label{lemma_eq7a_app}\\
&|\{(u, w)\in [1:N_i]\times[1:N_k]:(\vecx_{iu},\vecx_{kw}, \vecx_i, \vecx_j, \vecx_k)\in T^{n}_{X'_iX'_jX_iX_j X_k}\}|\nonumber\\
&\qquad\qquad\leq\exp{\left(n\left(\left||R_i-I(X_i';X_iX_jX_k)|^{+}+|R_k-I(X_k';X_iX_jX_k)|^{+}-I(X_i';X_k'|X_iX_jX_k)\right|^{+} + \epsilon/2\right)\right)};\label{lemma_eq7c_app}\\
&\frac{1}{N_iN_j}|\{(r,s)\in [1:N_i]\times[1:N_j]: (\vecx_{ir},\vecx_{js}, \vecx_k)\in T^{n}_{X_i X_j X_k} \}| < \exp\left(-\frac{n\epsilon}{2}\right), \text{ if } I(X_i;X_k)+I(X_j;X_iX_k)\geq\epsilon; \label{code_eq2_app}\\
&\frac{1}{N_iN_j}|\{(r, s)\in [1:N_i]\times[1:N_j]: \exists \,(u, v)\in  [1:N_i]\times[1:N_j], \, u\neq r, v\neq s, \, (\vecx_{ir}, \vecx_{js}, \vecx_{iu}, \vecx_{jv},\vecx_k)\in T^{n}_{X_iX_j X_i^{'} X'_j X_k} \}|\nonumber\\
&< \exp\left(-\frac{n\epsilon}{2}\right),\nonumber\\
&\qquad\text{if }I(X_i;X_jX'_iX'_jX_k)+I(X_j;X_i'X_j'X_k)\geq \left|\left|R_i-I(X'_i;X_k)\right|^+ +\left|R_j-I(X'_j;X_k)\right|^+-I(X'_i;X'_j|X_k)\right|^+ + \epsilon; \label{code_eq3_app}\\
&\frac{1}{N_iN_j}|\{(r, s)\in [1:N_i]\times[1:N_j]: \exists\, (u, w)\in  [1:N_i]\times[1:N_k], \, u\neq r, \, (\vecx_{ir}, \vecx_{js}, \vecx_{iu}, \vecx_{kw}, \vecx_k)\in T^{n}_{X_iX_j X_i^{'} X'_k X_k} \}|\nonumber\\
&\qquad\qquad<\exp\left(-\frac{n\epsilon}{2}\right),\nonumber\\
&\qquad\text{if }I(X_i;X_jX'_iX'_kX_k)+I(X_j;X_i'X_k'X_k)\geq \left|\left|R_i-I(X'_i;X_k)\right|^+ +\left|R_k-I(X'_k;X_k)\right|^+-I(X'_i;X'_k|X_k)\right|^+ + \epsilon;\label{code_eq5_app}\\
&\frac{1}{N_iN_j}|\{(r, s)\in [1:N_i]\times[1:N_j]: \exists\, u \in [1:N_i], \, u\neq r,\, (\vecx_{ir}, \vecx_{js}, \vecx_{iu},\vecx_k)\in T^{n}_{X_iX_j X_i^{'} X_k} \}|< \exp\left(-\frac{n\epsilon}{2}\right),\nonumber\\
&\qquad\text{if }I(X_i;X_jX'_iX_k)+I(X_j;X_i'X_k)\geq \left|R_i-I(X'_i;X_k)\right|^+  + \epsilon. \label{code_eq4_app}
\end{align}
\end{duplicatelemma}
\begin{proof}
\pink{We will use the following identity, which follows from chain rule, throughout the proof. For random variables $U$, $V$ and $W$, the following holds,
\begin{align}\label{idenity:MI}
I(U;W)+I(V;W)+I(U;V|W)=I(U;VW)+I(V;W)=I(U;W)+I(V;UW).
\end{align}This identity is not often mentioned explicitly while using.}
Statements \eqref{lemma_eq7b_app}-\eqref{lemma_eq7c_app} are statements \eqref{lemma_eq7b_}, \eqref{lemma_eq7a_} and \eqref{lemma_eq7c_}, restated directly from Lemma~\ref{codebook}.
To show \eqref{code_eq2_app}, we divide the LHS and RHS of \eqref{lemma_eq3a} by $N_iN_j$ and substitute the expression for the notation given by \eqref{eq: notation} to obtain
\begin{align*}&\frac{1}{N_iN_j}|\{(r, s)\in [1:N_i]\times[1:N_j]:( \vecx_{ir}, \vecx_{js}, \vecx_k)\in T^{n}_{X_iX_jX_k}\}|\nonumber\\
&\qquad\qquad< \exp\left\{n\inp{\left|\left|R_i-I(X_i;X_k)\right|^+ +\left|R_j-I(X_j;X_k)\right|^+ -I(X_i;X_j|X_k)\right|^+ +\epsilon/2-R_i-R_j}\right\}. 
\end{align*}
We will evaluate the RHS for different values of $R_i$ and $R_j$. We see that when $R_i\leq I(X_i;X_k)$,  the RHS is 
\begin{align*}
&\exp\left\{n\inp{\left|R_j -I(X_j;X_iX_k)\right|^+-R_j +\epsilon/2-R_i}\right\}\\
&\leq \exp\left\{n\inp{\epsilon/2-R_i}\right\}\\
&\stackrel{\text{(a)}}{\leq} \exp\left\{-n\epsilon/2\right\},
\end{align*}
where (a) holds because $R_i>\epsilon$. Similarly, we can show the same upper bound of $\exp\left\{n\inp{-\epsilon/2}\right\}$ when $R_j\leq I(X_j;X_k)$. When  $R_i+R_j\leq I(X_i;X_k) + I(X_j;X_k) + I(X_i;X_j|X_k)$, the RHS is upper bounded by $\exp(n(\epsilon/2-R_i-R_j))\leq \exp(-n\epsilon/2)$. When $R_i> I(X_i;X_k)$, $R_j>I(X_j;X_k)$ and $R_i+R_j> I(X_i;X_k) + I(X_j;X_k) + I(X_i;X_j|X_k)$, the RHS evaluates to \pink{(using \eqref{idenity:MI})}
\begin{align*}
&\exp\left\{n\inp{-I(X_i;X_k)-I(X_j;X_iX_k) +\epsilon/2}\right\}\\
&\leq \exp\left\{-n\epsilon/2\right\},\qquad \text{if }I(X_i;X_k)+I(X_j;X_iX_k)\geq\epsilon.
\end{align*}
%%%%%%%%%%%%%%%%%%%%%%%%%%%%%%%%%%%%%%%%%%%%%%%%%%%%%%%%%%%%%%%%%%%%%%%%%%%%%%%%%%%%%%%%%%%%%%%%%%%%%%%%%

Next, we will prove \eqref{code_eq4_app}. For this, we will first show that
\begin{align}
&\frac{1}{N_iN_j}|\{(r, s)\in [1:N_i]\times[1:N_j]: \exists\, u \in [1:N_i], \, u\neq r,\, (\vecx_{ir}, \vecx_{js}, \vecx_{iu},\vecx_k)\in T^{n}_{X_iX_j X_i^{'} X_k} \}|<\exp\left(-\frac{n\epsilon}{2}\right),\label{eq:fall2}\\
&\text{ if any one of the following hold:}\nonumber\\
&\qquad R_i+R_j-\max_{\cS\subseteq \{1, 2, 3\}}{g_{X_i, X_j, X_i', X_k}^{S}(\pink{R_i, R_j, R_i})}\geq \epsilon\label{code_eq4_1},\\
&\qquad R_i+R_j - \left||R_i-I(X_i;X_k)|^++|R_j-I(X_j;X_k)|^+-I(X_i;X_j|X_k)\right|^+ \geq \epsilon. \label{code_eq_2}
\end{align}
We now show that \eqref{code_eq4_1} implies \eqref{eq:fall2}:
\begin{align*}
&\frac{1}{N_iN_j}|\{(r, s)\in [1:N_i]\times[1:N_j]: \exists\, u \in [1:N_i], \, u\neq r,\, (\vecx_{ir}, \vecx_{js}, \vecx_{iu},\vecx_k)\in T^{n}_{X_iX_j X_i^{'} X_k} \}|\\
&\stackrel{(a)}{\leq} \exp\left(-n\left(R_i+R_j-\max_{\cS\subseteq \{1, 2, 3\}}{g_{X_i, X_j, X_i', X_k}^{S}(\pink{R_i, R_j, R_i})}-\epsilon/2\right)\right)\\
&\leq \exp\left(-\frac{n\epsilon}{2}\right), \qquad\text{ if } R_i+R_j-\max_{\cS\subseteq \{1, 2, 3\}}{g_{X_i, X_j, X_i', X_k}^{S}(\pink{R_i, R_j, R_i})}\geq \epsilon,
\end{align*}
where $(a)$ follows from \eqref{lemma_eq5}. 
Next, we show that \eqref{code_eq_2} implies \eqref{eq:fall2}.
\begin{align*}
&\frac{1}{N_iN_j}|\{(r, s)\in [1:N_i]\times[1:N_j]: \exists\, u\in  [1:N_i], \, u\neq r,\, (\vecx_{ir}, \vecx_{js}, \vecx_{iu},\vecx_k)\in T^{n}_{X_iX_j X_i^{'} X_k} \}|\\
&\leq\frac{1}{N_iN_j}|\{(r, s)\in [1:N_i]\times[1:N_j]:  (\vecx_{ir}, \vecx_{js},\vecx_k)\in T^{n}_{X_iX_j X_k} \}|\\
&\stackrel{(a)}{\leq} \exp\left(-n\left(R_i+R_j-\left||R_i-I(X_i;X_k)|^++|R_j-I(X_j;X_k)|^+-I(X_i;X_j|X_k)\right|^+-\epsilon/2\right)\right)\\
&\leq \exp\left(-\frac{n\epsilon}{2}\right), \qquad\text{ if } R_i+R_j - \left||R_i-I(X_i;X_k)|^++|R_j-I(X_j;X_k)|^+-I(X_i;X_j|X_k)\right|^+ \geq \epsilon,
\end{align*}
where $(a)$ follows from \eqref{lemma_eq3a}.
Now, we will show that the condition in \eqref{code_eq4_app} implies at least one of \eqref{code_eq4_1} or \eqref{code_eq_2}. 
We restate the condition in \eqref{code_eq4_app} below for quick reference.
\begin{align}
I(X_i;X_jX'_iX_k)+I(X_j;X_i'X_k)\geq \left|R_i-I(X'_i;X_k)\right|^+  + \epsilon\label{code_eq4_codn}
\end{align}
To show that \eqref{code_eq4_codn} implies at least one of \eqref{code_eq4_1} or \eqref{code_eq_2}, we will do case analysis based on the value of $\hat{\cS}\defineqq \text{argmax}_{\cS}{g_{X_i, X_j, X_i', X_k}^{\cS}(R_i, R_j, R_i)}$, the set of maximizers of the expression  ${g_{X_i, X_j, X_i', X_k}^{{\cS}}(R_i, R_j, R_i)}$ in \eqref{code_eq4_1}. Evaluations of ${g_{X_i, X_j, X_i', X_k}^{\tilde{\cS}}(R_i, R_j, R_i, R_j)}$ under different values of $\tilde{\cS}$ are provided in Table~\ref{table:g2}. 
The table also gives the implications when $\tilde{S}\in \text{argmax}_{\cS}{g_{X_i, X_j, X_i', X_k}^{\cS}(R_i, R_j, R_i)}$ in the fourth column. For example, the $\Circled{6}$ row considers the case of $\{1,3\}\in\text{argmax}_{\cS}{g_{X_i, X_j, X_i', X_k}^{\cS}(R_i, R_j, R_i)}$. Under this case, we have, for instance, ${g_{X_i, X_j, X_i', X_k}^{\{1,3\}}(R_i, R_j, R_i)}\geq {g_{X_i, X_j, X_i', X_k}^{\{1,2,3\}}(R_i, R_j, R_i)}$, {\em i.e.}, $R_i-I(X_i;X_jX_k)+R_i-I(X'_i;X_iX_jX_k)\geq R_j - I(X_j;X_k) + R_i-I(X_i;X_jX_k)+R_i-I(X'_i;X_iX_jX_k)$. Hence, $R_j\leq I(X_j;X_k)$. 
This implication is given in the fifth column of the table against the ``reason'' $\Circled{6}\geq \Circled{8}$ where $\Circled{8}$ is the row corresponding to $\cS = \{1,2,3\}$.
The other implications are also easy to see from the table.
Instead of providing all the implications, the table only provide the ones which we will use in the proof of \eqref{code_eq4_app}. 
\paragraph*{\underline{Case 1: $\tilde{\cS}\in \hat{\cS}$ such that $|\tilde{\cS}| \leq 1$}}
In this case, \eqref{code_eq4_1} holds as $R_i, R_j\geq \epsilon$.
\paragraph*{\underline{Case 2: $\tilde{\cS}\in \hat{\cS}$ such that $|\tilde{\cS}| = 2$}}
If $\{1,2\}\in \hat{\cS}$, then it can be seen from the expression of ${g_{X_i, X_j, X_i', X_k}^{\{1,2\}}(R_i, R_j, R_i)}$ from Table~\ref{table:g2} that \eqref{code_eq4_codn} implies \eqref{code_eq4_1}. If $\{1,3\}\in \hat{\cS}$, then $R_j\leq I(X_j;X_k)$. This implies that the LHS of \eqref{code_eq_2} evaluates to
\begin{align*}
R_i+R_j - \left|R_i-I(X_i;X_jX_k)\right|^+ 
\end{align*} 
which is at least $\epsilon$ because $R_j\geq \epsilon$. Thus, \eqref{code_eq_2} holds. Similarly, when $\{2, 3\}\in \hat{\cS}$, one can use the fact that $R_i\leq I(X_i;X_k)$ and $R_i\geq \epsilon$ to show that \eqref{code_eq_2} holds. 
\paragraph*{\underline{Case 3: $\tilde{\cS}\in \hat{\cS}$ such that $|\tilde{\cS}| = 3$}}
In this case, $R_i\geq I(X'_i;X_k)$. Thus, conditions $\eqref{code_eq4_codn}$ and $\eqref{code_eq4_1}$ are same. Thus, $\eqref{code_eq4_codn}$ implies $\eqref{code_eq4_1}$. 

\begin{table}[h]
\begin{footnotesize}
\begin{center}
\begin{tabular}{ c|c|m{21em}|m{21em}|c} 
%\hline
%\footnotesize
Index & $\tilde{\cS}$ &\centering  ${g_{X_i, X_j, X_i', X_k}^{\tilde{\cS}}(R_i, R_j, R_i)}$ & \centering Implications of \\$\tilde{\cS} \in \text{argmax}_{\cS}{g_{X_i, X_j, X_i', X_k}^{S}(R_i, R_j, R_i)}$&reasons \\
\hline
\hline
{\Circled{1}} & $\emptyset$ &$0$ & &\\
\hline	
{\Circled{2}} & $\{1\}$ &$R_i-I(X_i;X_jX'_iX_k)$ & &\\
\hline	
{\Circled{3}} & $\{2\}$ &$R_j-I(X_j;X_iX'_iX_k)$ & &\\
\hline	
{\Circled{4}} & $\{3\}$ &$R_i-I(X'_i;X_iX_jX_k)$ & &\\
\hline
{\Circled{5}} & $\{1,2\}$ &$R_i-I(X_i;X_i'X_k)+R_j-I(X_j;X_iX_i'X_k)$ & &\\
\hline	
\Circled{6} & {$\{1,3\}$} & {$R_i-I(X_i;X_jX_k)+R_i-I(X'_i;X_iX_jX_k)$} & $R_j\leq I(X_j;X_k)$&{$\Circled{6}\geq \Circled{8}$}\\
\hline
{\Circled{7}} & {$\{2,3\}$} & {$R_j-I(X_j;X_iX_k)+R_i-I(X'_i;X_iX_jX_k)$} & $R_i\leq I(X_i;X_k)$&{$\Circled{7}\geq \Circled{8}$}\\
\hline
{\Circled{8}} & {$\{1,2,3\}$} & {$R_i-I(X_i;X_k)+R_j-I(X_j;X_iX_k)+R_i-I(X_i';X_iX_jX_k)$} & $R_i\geq I(X'_i;X_k)$&{$\Circled{8}\geq \Circled{5}$}\\
\end{tabular}
\end{center}
\end{footnotesize}
\caption{Table showing different evaluations of $\max_{\cS\subseteq \{1, 2, 3\}}{g_{X_i, X_j, X_i',  X_k}^{S}(R_i, R_j, R_i)}$ and their implications.}
\label{table:g2}
\end{table}

%%%%%%%%%%%%%%%%%%%%%%%%%%%%%%%%%%%%%%%%%%%%%%%%%%%%%%%%%%%%%%%%%%%%%%%%%%%%%%%%%%%%%%%%%%%%%%%%%%%%%%%%
Now, we will now show \eqref{code_eq3_app}. Its proof is very similar to the proof of \eqref{code_eq4_app}.
To show \eqref{code_eq3_app}, we will first show that
\begin{align}
&\frac{1}{N_iN_j}|\{(r, s)\in [1:N_i]\times[1:N_j]: \exists \,(u, v)\in  [1:N_i]\times[1:N_j], \, u\neq r, v\neq s, \, (\vecx_{ir}, \vecx_{js}, \vecx_{iu}, \vecx_{jv},\vecx_k)\in T^{n}_{X_iX_j X_i^{'} X'_j X_k} \}|\nonumber \\
&< \exp\left(-\frac{n\epsilon}{2}\right)\label{eq:fall1}\\
&\text{ if any one of the following inequalities hold:}\nonumber\\
&\qquad\qquad\qquad\qquad\quad R_i+R_j-\max_{\cS\subseteq \{1, 2, 3, 4\}}{g_{X_i, X_j, X_i', X_j', X_k}^{S}(R_i, R_j, R_i, R_j)}\geq \epsilon, \label{eq:cond}\\
&\qquad\qquad\qquad\qquad\quad R_j - \left|\left|R_j-I(X_j;X_k)\right|^+ +\left|R_i-I(X'_i;X_k)\right|^+-I(X_j;X'_i|X_k)\right|^+\geq \epsilon, \label{eq:cond1}\\
&\qquad\qquad\qquad\qquad\quad R_j - 
\left|\left|R_j-I(X_j;X_k)\right|^+ +\left|R_j-I(X'_j;X_k)\right|^+-I(X_j;X'_j|X_k)\right|^+\geq \epsilon, \label{eq:cond2}\\
&\qquad\qquad\qquad\qquad\quad R_i - \left|\left|R_i-I(X_i;X_k)\right|^+ +\left|R_i-I(X'_i;X_k)\right|^+-I(X_i;X'_i|X_k)\right|^+ \geq \epsilon, \label{eq:cond3}\\
&\qquad\qquad\qquad\qquad\quad R_i - \left|\left|R_i-I(X_i;X_k)\right|^+ +\left|R_j-I(X'_j;X_k)\right|^+-I(X_i;X'_j|X_k)\right|^+\geq \epsilon, \label{eq:cond4}\\
&\qquad\qquad\qquad\qquad\quad R_i+R_j - \left|\left|R_i-I(X_i;X_k)\right|^+ +\left|R_j-I(X_j;X_k)\right|^+-I(X_i;X_j|X_k)\right|^+\geq \epsilon. \label{eq:cond5}
\end{align}
The fact that \eqref{eq:cond} implies \eqref{eq:fall1} can be shown as follows:
\begin{align*}
\frac{1}{N_iN_j}&|\{(r, s)\in [1:N_i]\times[1:N_j]: \exists \,(u, v)\in  [1:N_i]\times[1:N_j], \, u\neq r, v\neq s, \, (\vecx_{ir}, \vecx_{js}, \vecx_{iu}, \vecx_{jv},\vecx_k)\in T^{n}_{X_iX_j X_i^{'} X'_j X_k} \}|\\
{\leq}&\frac{1}{N_iN_j}|\{(r, s, u , v)\in [1:N_i]\times[1:N_j]\times[1:N_i]\times[1:N_j]:  u\neq r, v\neq s, \, (\vecx_{ir}, \vecx_{js}, \vecx_{iu}, \vecx_{jv},\vecx_k)\in T^{n}_{X_iX_j X_i^{'} X'_j X_k} \}| \\
\stackrel{(a)}{\leq}&\exp{\left(-n\left(R_i+R_j - \max_{\cS\in\{1,2,3,4\}}g^{\cS}_{X_i,X_j,X'_i,X'_j,X_k}(R_i,R_j,R_i,R_j)-\epsilon/2\right)\right)}\\
\leq& \exp{(-n\epsilon/2)}, \qquad\text{ if }R_i+R_j - g^{\cS}_{X_i,X_j,X'_i,X'_j,X_k}(R_i,R_j,R_i,R_j)\geq \epsilon,
\end{align*}
where $(a)$ uses \eqref{lemma_eq4}. Next we show that \eqref{eq:cond1} implies \eqref{eq:fall1}.
\begin{align*}
\frac{1}{N_iN_j}&|\{(r, s)\in [1:N_i]\times[1:N_j]: \exists \,(u, v)\in  [1:N_i]\times[1:N_j], \, u\neq r, v\neq s, \, (\vecx_{ir}, \vecx_{js}, \vecx_{iu}, \vecx_{jv},\vecx_k)\in T^{n}_{X_iX_j X_i^{'} X'_j X_k} \}|\\
{\leq} &\frac{1}{N_iN_j}|\{(r, s, u)\in [1:N_i]\times[1:N_j]\times[1:N_i]:  \, (\vecx_{js}, \vecx_{iu}, \vecx_k)\in T^{n}_{X_j X_i^{'} X_k} \}| \\
{=} &\frac{1}{N_iN_j}|\{r\in[1:N_i]\}\times\{(s, u)\in [1:N_j]\times[1:N_i]:  \, (\vecx_{js}, \vecx_{iu}, \vecx_k)\in T^{n}_{X_j X_i^{'} X_k} \}| \\
= &\frac{1}{N_j}|\{(s, u)\in [1:N_j]\times[1:N_i]:  \, (\vecx_{js}, \vecx_{iu}, \vecx_k)\in T^{n}_{X_j X_i^{'} X_k} \}| \\
\stackrel{(a)}{\leq}&\exp{\left(-n\left(R_j-\left|\left|R_j-I(X_j;X_k)\right|^+ +\left|R_i-I(X'_i;X_k)\right|^+-I(X_j;X_i'|X_k)\right|^+ - \epsilon/2\right)\right)}\\
\leq& \exp{(-n\epsilon/2)}, \qquad\text{ if }R_j  -\left|\left|R_j-I(X_j;X_k)\right|^+ +\left|R_i-I(X'_i;X_k)\right|^+-I(X_j;X_i'|X_k)\right|^+ \geq \epsilon.
\end{align*}
Here $(a)$ uses \eqref{lemma_eq3b} with $X_i$ replaced with $X_i'$. The remaining conditions can also be obtained similarly.
We can show that \eqref{eq:cond2} implies \eqref{eq:fall1} by using  \eqref{lemma_eq3e} on the following upper bound.
\begin{align*}
\frac{1}{N_iN_j}&|\{(r, s)\in [1:N_i]\times[1:N_j]: \exists \,(u, v)\in  [1:N_i]\times[1:N_j], \, u\neq r, v\neq s, \, (\vecx_{ir}, \vecx_{js}, \vecx_{iu}, \vecx_{jv},\vecx_k)\in T^{n}_{X_iX_j X_i^{'} X'_j X_k} \}|\\
{\leq} &\frac{1}{N_iN_j}|\{(r, s, v)\in [1:N_i]\times[1:N_j]\times[1:N_j]: s\neq v \, (\vecx_{js}, \vecx_{jv}, \vecx_k)\in T^{n}_{X_j X_j^{'} X_k} \}| \\
= &\frac{1}{N_j}|\{( s, v)\in [1:N_j]\times[1:N_j]: s\neq v \, (\vecx_{js}, \vecx_{jv}, \vecx_k)\in T^{n}_{X_j X_j^{'} X_k} \}|.
\end{align*}
To show that \eqref{eq:cond3} implies \eqref{eq:fall1}, we use the following upper bound and \eqref{lemma_eq3b}.
\begin{align*}
\frac{1}{N_iN_j}&|\{(r, s)\in [1:N_i]\times[1:N_j]: \exists \,(u, v)\in  [1:N_i]\times[1:N_j], \, u\neq r, v\neq s, \, (\vecx_{ir}, \vecx_{js}, \vecx_{iu}, \vecx_{jv},\vecx_k)\in T^{n}_{X_iX_j X_i^{'} X'_j X_k} \}|\\
{\leq} &\frac{1}{N_iN_j}|\{(r, s, u)\in [1:N_i]\times[1:N_j]\times[1:N_i]: r\neq u \, (\vecx_{ir}, \vecx_{iu}, \vecx_k)\in T^{n}_{X_i X_i^{'} X_k} \}|\\
= &\frac{1}{N_i}|\{(r,  u)\in [1:N_i]\times[1:N_i]: r\neq u \, (\vecx_{ir}, \vecx_{iu}, \vecx_k)\in T^{n}_{X_i X_i^{'} X_k} \}|. 
\end{align*}
The condition \eqref{eq:cond4} can be obtained similarly by using \eqref{lemma_eq3b} (with $X_j$ replaced with $X'_j$) on the following upper bound:
\begin{align*}
\frac{1}{N_iN_j}&|\{(r, s)\in [1:N_i]\times[1:N_j]: \exists \,(u, v)\in  [1:N_i]\times[1:N_j], \, u\neq r, v\neq s, \, (\vecx_{ir}, \vecx_{js}, \vecx_{iu}, \vecx_{jv},\vecx_k)\in T^{n}_{X_iX_j X_i^{'} X'_j X_k} \}|\\
{\leq} &\frac{1}{N_iN_j}|\{(r, s, v)\in [1:N_i]\times[1:N_j]\times[1:N_j]: \, (\vecx_{ir}, \vecx_{jv}, \vecx_k)\in T^{n}_{X_i X_j^{'} X_k} \}|\\
= &\frac{1}{N_i}|\{(r, v)\in [1:N_i]\times[1:N_j]: \, (\vecx_{ir}, \vecx_{jv}, \vecx_k)\in T^{n}_{X_i X_j^{'} X_k} \}|. 
\end{align*}
For \eqref{eq:cond5}, we use \eqref{lemma_eq3a} on the following upper bound:
\begin{align*}
\frac{1}{N_iN_j}&|\{(r, s)\in [1:N_i]\times[1:N_j]: \exists \,(u, v)\in  [1:N_i]\times[1:N_j], \, u\neq r, v\neq s, \, (\vecx_{ir}, \vecx_{js}, \vecx_{iu}, \vecx_{jv},\vecx_k)\in T^{n}_{X_iX_j X_i^{'} X'_j X_k} \}|\\
{\leq} &\frac{1}{N_iN_j}|\{(r, s)\in [1:N_i]\times[1:N_j]: \, (\vecx_{ir}, \vecx_{js}, \vecx_k)\in T^{n}_{X_i X_j X_k} \}|.
\end{align*}

Now to show \eqref{code_eq3_app}, we will show 
that the condition in \eqref{code_eq3_app} implies at least one of \eqref{eq:cond}-\eqref{eq:cond4}.  We restate the condition in \eqref{code_eq3_app} below for quick reference. 
\begin{align}
I(X_i;X_jX'_iX'_jX_k)+I(X_j;X_i'X_j'X_k)\geq \left|\left|R_i-I(X'_i;X_k)\right|^+ +\left|R_j-I(X'_j;X_k)\right|^+-I(X'_i;X'_j|X_k)\right|^+ +\epsilon. \label{code_eq3_codn}
\end{align}
To show that \eqref{code_eq3_codn} implies at least one of \eqref{eq:cond}-\eqref{eq:cond4}, we do a case analysis similar to the one in proof of \eqref{code_eq4_app}. The case analysis will be based on the value of $\hat{\cS}\defineqq \text{argmax}_{\cS}{g_{X_i, X_j, X_i', X_j', X_k}^{\cS}(R_i, R_j, R_i, R_j)}$, the set of maximizers of the expression  ${g_{X_i, X_j, X_i', X_j', X_k}^{{\cS}}(R_i, R_j, R_i, R_j)}$ in \eqref{eq:cond}. For ease of reference, evaluations of the expression ${g_{X_i, X_j, X_i', X_j', X_k}^{\tilde{\cS}}(R_i, R_j, R_i, R_j)}$  under different values of $\tilde{\cS}$ are given in Table~\ref{table:g}. Table~\ref{table:g} is similar to Table~\ref{table:g2}. It also gives the implications when $\tilde{\cS}\in \text{argmax}_{\cS\subseteq \{1, 2, 3, 4\}}{g_{X_i, X_j, X_i', X_j', X_k}^{\cS}(R_i, R_j, R_i, R_j)}$ in the fourth column. For example, the $\Circled{9}$-th row considers the case of $\{2,3\}\in \text{argmax}_{\cS}{g_{X_i, X_j, X_i', X_j', X_k}^{\cS}(R_i, R_j, R_i, R_j)}$. Under this case, $g_{X_i, X_j, X_i', X_j', X_k}^{\{2,3\}}(R_i, R_j, R_i, R_j)$ $\geq g_{X_i, X_j, X_i', X_j', X_k}^{\{1,2,3\}}(R_i, R_j, R_i, R_j)$, i.e, $R_j- I(X_j;X_iX_j'X_k)+R_i-I(X'_i;X_iX_jX_j'X_k)\geq R_i-I(X_i;X_j'X_k) + R_j- I(X_j;X_iX_j'X_k)+R_i-I(X'_i;X_iX_jX_j'X_k)$. This implies that $R_i\leq I(X_i;X'_jX_k)$. It is given in the fifth column of the table against the ``reason'' $\Circled{9}\geq \Circled{15}$ where $\Circled{15}$ is the row corresponding to $\cS = \{1,2,3\}$.
 The other implications can also be seen easily from the table.
\paragraph*{\underline{Case 1: $\tilde{\cS}\in \hat{\cS}$ such that $|\tilde{\cS}|\leq 1$}}
For this case, substituting the expression of ${g_{X_i, X_j, X_i', X_j', X_k}^{\tilde{\cS}}(R_i, R_j, R_i, R_j)}$ from Table~\ref{table:g} and noting $R_i, R_j\geq \epsilon$, we see that \eqref{eq:cond} holds. 
\paragraph*{\underline{Case 2: $\tilde{\cS}\in \hat{\cS}$ such that $|\tilde{\cS}| = 2$}}
We start with $\{1, 2\}\in \hat{\cS}$. For this case, \eqref{eq:cond} evaluates to $I(X_i;X'_iX'_jX_k) + I(X_i;X'_iX'_jX_jX_k)\geq \epsilon$ which is directly implied by \eqref{code_eq3_codn}. If $\{1,3\}\in \hat{\cS}$, we see from Table~\ref{table:g} that $R_j\leq I(X_j;X'_jX_k)$, $R_j\leq I(X'_j;X_jX_k)$ and $R_j-I(X_j;X_k)+R_j-I(X'_j;X_k)-I(X_j;X'_j|X_k)\leq 0$. This implies that $\Big|\left|R_j-I(X_j;X_k)\right|^+ +\left|R_j-I(X'_j;X_k)\right|^+\\-I(X_j;X'_j|X_k)\Big|^+ = 0$. Thus, by noting that $R_j\geq \epsilon$, \eqref{eq:cond2} holds. Similarly; if $\{1,4\} \in\hat{\cS}$, \eqref{eq:cond1} holds; if $\{2,3\}\in \hat{\cS}$, \eqref{eq:cond4} holds; if $\{2,4\}\in \hat{\cS}$, \eqref{eq:cond3} holds; and if  $\{3,4\}\in\hat{\cS}$, \eqref{eq:cond5} holds.
\paragraph*{\underline{Case 3: $\tilde{\cS}\in \hat{\cS}$ such that $|\tilde{\cS}| = 3$}}
If $\{2,3,4\}\in \hat{\cS}$, from Table~\ref{table:g}, $R_i\leq I(X_i;X_k)$. This implies that the LHS of \eqref{eq:cond5} is $R_i+R_j-|R_j-I(X_j;X_iX_k)|^+$ which is at least $\epsilon$ because $R_i\geq \epsilon$. Thus, \eqref{eq:cond5} holds. Similarly, for $\{1,3,4\}\in \hat{\cS}$, $R_j\leq I(X_j;X_k)$, which implies that \eqref{eq:cond5} holds. If $\{1, 2, 4\}\in \hat{S}$, from Table~\ref{table:g}, $R_i\leq I(X'_i;X_k)$ and $R_j\geq I(X'_j;X_i'X_k)$. These imply that \eqref{code_eq3_codn} evaluates to
\begin{align*}
I(X_i;X_jX'_iX'_jX_k)+I(X_j;X_i'X_j'X_k)\geq R_j-I(X'_j;X'_iX_k) +\epsilon.
\end{align*} 
Moreover, since $\{1, 2, 4\}\in \hat{S}$,\eqref{eq:cond} evaluates to
\begin{align*}
&R_i+R_j-\left(R_i-I(X_i;X'_jX_k) +R_j-I(X_j;X_iX'_jX_k) + R_j -I(X'_j;X_iX_jX'_iX_k)\right)\geq \epsilon.
\end{align*}  It can be seen upon rearranging that \eqref{code_eq3_codn} and \eqref{eq:cond} are the same.
Thus, \eqref{code_eq3_codn} implies \eqref{eq:cond}. 
Similarly, for $\cS = \{1,2,3\}$, \eqref{code_eq3_codn} evaluates to 
\begin{align*}
I(X_i;X_jX'_iX'_jX_k)+I(X_j;X_i'X_j'X_k)\geq R_i-I(X'_i;X'_jX_k) +\epsilon.
\end{align*} which implies \eqref{eq:cond}.

For $\cS = \{1, 2, 3\}$, $R_j\leq I(X'_j;X_k)$ and $R_i\geq I(X'_i;X_j'X_k)$. This implies that \eqref{code_eq3_codn} evaluates to
\begin{align*}
I(X_i;X_jX'_iX'_jX_k)+I(X_j;X_i'X_j'X_k)\geq R_i-I(X'_i;X'_jX_k) +\epsilon.
\end{align*} 
and for $\cS = \{1, 2, 3\}$,\eqref{eq:cond} evaluates to
\begin{align*}
&R_i+R_j-\left(R_i-I(X_i;X'_jX_k) +R_j-I(X_j;X_iX'_jX_k) + R_i -I(X'_i;X_iX_jX'_jX_k)\right)\geq \epsilon.
\end{align*}  It can be seen upon rearranging that the above two inequalities are the same.
Thus, \eqref{code_eq3_codn} implies \eqref{eq:cond} if $\{1, 2, 4\}\in \hat{S}$. 
Similarly, for $\cS = \{1,2,4\}$, \eqref{code_eq3_codn} evaluates to 
\begin{align*}
I(X_i;X_jX'_iX'_jX_k)+I(X_j;X_i'X_j'X_k)\geq R_j-I(X'_j;X'_iX_k) +\epsilon
\end{align*} which implies \eqref{eq:cond}.
\paragraph*{\underline{Case 4: $\tilde{\cS}\in \hat{\cS}$ such that $|\tilde{\cS}| = 4$}}
For $\{1, 2, 3, 4\}\in \hat{S}$, \eqref{code_eq3_codn} evaluates to  
\begin{align*}
I(X_i;X_jX'_iX'_jX_k)+I(X_j;X_i'X_j'X_k)\geq R_j - I(X'_j;X_k) + R_i-I(X'_i;X'_jX_k) +\epsilon
\end{align*} 
and \eqref{eq:cond} evaluates to
\begin{align*}
R_i+R_j-2R_i-2R_j+ I(X'_j;X_k) +I(X'_i;X'_jX_k)+I(X_j;X_i'X_j'X_k)+I(X_i;X_jX'_iX'_jX_k) \geq \epsilon
\end{align*} which is same as \eqref{code_eq3_codn}.
Thus, \eqref{code_eq3_codn} implies \eqref{eq:cond}.

\begin{table}[h]
\begin{footnotesize}
\begin{center}
\begin{tabular}{ c|c|m{21em}|m{21em}|c} 
%\hline
%\footnotesize
Index & $\tilde{\cS}$ &\centering  ${g_{X_i, X_j, X_i', X_j', X_k}^{\tilde{\cS}}(R_i, R_j, R_i, R_j)}$ & \centering Implications of \\$\tilde{\cS} \in \text{argmax}_{\cS}{g_{X_i, X_j, X_i', X_j', X_k}^{S}(R_i, R_j, R_i, R_j)}$&reasons \\
\hline
\hline
{\Circled{1}} & $\emptyset$ &$0$ & &\\
\hline	
{\Circled{2}} & $\{1\}$ &$R_i-I(X_i;X_jX'_iX'_jX_k)$ & &\\
\hline	
{\Circled{3}} & $\{2\}$ &$R_j-I(X_j;X_iX'_iX'_jX_k)$ & &\\
\hline	
{\Circled{4}} & $\{3\}$ &$R_i-I(X'_i;X_iX_jX'_jX_k)$ & &\\
\hline	
{\Circled{5}} & $\{4\}$ &$R_j-I(X'_j;X_iX_jX'_iX_k)$ & &\\
\hline	
{\Circled{6}} & $\{1,2\}$ &$R_i-I(X_i;X_i'X'_jX_k)+R_j-I(X_j;X_iX_i'X'_jX_k)$ & &\\
\hline	
\multirow{3}{*}{\Circled{7}} & \multirow{3}{*}{$\{1,3\}$} & \multirow{3}{*}{$R_i-I(X_i;X_jX'_jX_k)+R_i-I(X'_i;X_iX_jX'_jX_k)$} & $R_j\leq I(X_j;X'_jX_k)$&{$\Circled{7}\geq \Circled{15}$}\\
\cline{4-5}
& & &$R_j\leq I(X'_j;X_jX_k)$&{$\Circled{7}\geq \Circled{13}$}\\
\cline{4-5}& & &$R_j-I(X_j;X_k)+R_j-I(X'_j;X_jX_k)\leq 0$ & {$\Circled{7}\geq \Circled{16}$}\\ 
\hline
\multirow{3}{*}{\Circled{8}} & \multirow{3}{*}{$\{1,4\}$} & \multirow{3}{*}{$R_i-I(X_i;X_jX'_iX_k)+R_j-I(X'_j;X_iX_jX'_iX_k)$} & $R_j\leq I(X_j;X'_iX_k)$&{$\Circled{8}\geq \Circled{14}$}\\
\cline{4-5}
& & &$R_i\leq I(X'_i;X_jX_k)$&{$\Circled{8}\geq \Circled{13}$}\\
\cline{4-5}& & &$R_j-I(X_j;X_k)+R_i-I(X'_i;X_jX_k)\leq 0$ & {$\Circled{8}\geq \Circled{16}$}\\ 
\hline
\multirow{3}{*}{\Circled{9}} & \multirow{3}{*}{$\{2,3\}$} & \multirow{3}{*}{$R_j-I(X_j;X_iX'_jX_k)+R_i-I(X'_i;X_iX_jX'_jX_k)$} & $R_i\leq I(X_i;X'_jX_k)$&{$\Circled{9}\geq \Circled{15}$}\\
\cline{4-5}
& & &$R_j\leq I(X'_j;X_iX_k)$&{$\Circled{9}\geq \Circled{12}$}\\
\cline{4-5}& & &$R_i-I(X_i;X_k)+R_j-I(X'_j;X_iX_k)\leq 0$ & {$\Circled{9}\geq \Circled{16}$}\\ 
\hline
\multirow{3}{*}{\Circled{10}} & \multirow{3}{*}{$\{2,4\}$} & \multirow{3}{*}{$R_j-I(X_j;X_iX'_iX_k)+R_j-I(X'_j;X_iX_jX'_iX_k)$} & $R_i\leq I(X_i;X'_iX_k)$&{$\Circled{10}\geq \Circled{14}$}\\
\cline{4-5}
& & &$R_i\leq I(X'_i;X_iX_k)$&{$\Circled{10}\geq \Circled{12}$}\\
\cline{4-5}& & &$R_i-I(X_i;X_k)+R_i-I(X'_i;X_iX_k)\leq 0$ & {$\Circled{10}\geq \Circled{16}$}\\ 
\hline
\multirow{3}{*}{\Circled{11}} & \multirow{3}{*}{$\{3,4\}$} & \multirow{3}{*}{$R_i-I(X'_i;X_iX_jX_k)+R_j-I(X'_j;X_iX_jX'_iX_k)$} & $R_i\leq I(X_i;X_jX_k)$&{$\Circled{11}\geq \Circled{13}$}\\
\cline{4-5}
& & &$R_j\leq I(X_j;X_iX_k)$&{$\Circled{11}\geq \Circled{12}$}\\
\cline{4-5}& & &$R_i-I(X_i;X_k)+R_i-I(X_j;X_iX_k)\leq 0$ & {$\Circled{11}\geq \Circled{16}$}\\ 
\hline
\Circled{12} & $\{2,3,4\}$& $R_j-I(X_j;X_iX_k)+R_i-I(X_i';X_iX_jX_k)+R_j-I(X'_j;X_iX_jX'_iX_k)$ & $R_i\leq I(X_i;X_k)$&$\Circled{12}\geq\Circled{16}$\\
\hline
\Circled{13} & $\{1,3,4\}$  &$R_i-I(X_i;X_jX_k)+R_i-I(X'_i;X_iX_jX_k)+R_j-I(X'_j;X_iX_jX'_iX_k)$ & $R_j\leq I(X_j;X_k)$&$\Circled{13}\geq \Circled{16}$\\
\hline 
\multirow{2}{*}{\Circled{14}} & \multirow{2}{*}{$\{1,2,4\}$} & {$R_i-I(X_i;X'_iX_k)+R_j-I(X_j;X_iX'_iX_k)$} & $R_i\leq I(X'_i;X_k)$&{$\Circled{14}\geq \Circled{16}$}\\
\cline{4-5}
& & $+R_j-I(X_j';X_iX_jX'_iX_k)$&$R_j\geq I(X'_j;X'_iX_k)$&{$\Circled{14}\geq \Circled{6}$}\\
\hline
\multirow{2}{*}{\Circled{15}} & \multirow{2}{*}{$\{1,2,3\}$} & {$R_i-I(X_i;X'_jX_k)+R_j-I(X_j;X_iX'_jX_k)$} & $R_j\leq I(X'_j;X_k)$&{$\Circled{15}\geq \Circled{16}$}\\
\cline{4-5}
& &$+R_i-I(X_i';X_iX_jX'_jX_k)$ &$R_i\geq I(X'_i;X'_jX_k)$&{$\Circled{15}\geq \Circled{6}$}\\
\hline

\multirow{3}{*}{\Circled{16}} & \multirow{3}{*}{$\{1,2,3,4\}$} & {$R_i-I(X_i;X_k)+R_j-I(X_j;X_iX_k)$} & $R_i\geq I(X'_i;X_k)$&{$\Circled{16}\geq \Circled{14}$}\\
\cline{4-5}
& &$+R_i-I(X'_i;X_iX_jX_k)+R_j-I(X'_j;X'_iX_iX_jX_k)$ &$R_j\geq I(X'_j;X_k)$&{$\Circled{16}\geq \Circled{15}$}\\
\cline{4-5}
& & &$R_i-I(X'_i;X_k) + R_j - I(X'_j;X'_i;X_k) \geq 0 $&{$\Circled{16}\geq \Circled{6}$}
\end{tabular}
\end{center}
\end{footnotesize}
\caption{Table showing different evaluations of $\max_{\cS\subseteq \{1, 2, 3, 4\}}{g_{X_i, X_j, X_i', X_j', X_k}^{S}(R_i, R_j, R_i, R_j)}$ and their implications.}
\label{table:g}
\end{table}  

Statement \eqref{code_eq5_app} can be proved similarly by using \eqref{lemma_eq3a}, \eqref{lemma_eq3b}, \eqref{lemma_eq6c},  \eqref{lemma_eq6e} and \eqref{lemma_eq6}, the equivalent statements of \eqref{lemma_eq3a}-\eqref{lemma_eq3e} and \eqref{lemma_eq4} on replacing $\vecx_{jv}$, $X'_j$ and the corresponding rate $R_j$ with $\vecx_{kw}$,  $X'_k$ and the corresponding rate $R_k$ respectively. In fact, by making these replacements in the proof of \eqref{code_eq3_app}, we obtain the proof of \eqref{code_eq5_app}. Note that the proof of \eqref{code_eq3_app} only depended on \eqref{lemma_eq3a}-\eqref{lemma_eq3e} and \eqref{lemma_eq4}.

\end{proof}
  \section{Proof of Lemma~\ref{lemma:maximal sets}}\label{appendix:proofs}
\pink{This appendix gives the proof of Lemma~\ref{lemma:maximal sets}. We restate it here for completeness.
\begin{duplicatelemma}
For any $\cQ_1, \ldots, \cQ_t\in \cA$, $t\in \bbN$ and $\cQ\subseteq[1:k]$ such that $\cQ= \cap_{i=1}^{t}\cQ_i$, $P_{e, \cQ}\leq \sum_{i=1}^{t}P_{e, \cQ_i}$. 
\end{duplicatelemma}}
\begin{proof}[Proof of Lemma~\ref{lemma:maximal sets}]
\begin{align*}
P_{e, \cQ} &= \max_{\vecx_{\cQ}}\frac{1}{(\prod_{j\in \cQ^c}N_j)}\sum_{m_{\cQ^c}\in \cM_{\cQ^c}}\bbP\inp{\inb{\phi(\vecY)_{\cQ^c} \neq m_{\cQ^c}}|\vecX_{\cQ^c} = f_{\cQ^c}(m_{\cQ^c}), \vecX_{\cQ} = \vecx_{\cQ}}\\
&= \max_{\vecx_{\cQ}}\frac{1}{(\prod_{j\in \cQ^c}N_j)}\sum_{m_{\cQ^c}\in \cM_{\cQ^c}}\bbP\inp{\cup_{i=1}^t\inb{\phi(\vecY)_{\cQ_i^c} \neq m_{\cQ_i^c}}|\vecX_{\cQ^c} = f_{\cQ^c}(m_{\cQ^c}), \vecX_{\cQ} = \vecx_{\cQ}}\\
&\stackrel{(a)}{\leq} \max_{\vecx_{\cQ}}\frac{1}{(\prod_{j\in \cQ^c}N_j)}\sum_{m_{\cQ^c}\in \cM_{\cQ^c}}\sum_{i=1}^t\bbP\inp{\inb{\phi(\vecY)_{\cQ_i^c} \neq m_{\cQ_i^c}}|\vecX_{\cQ^c} = f_{\cQ^c}(m_{\cQ^c}), \vecX_{\cQ} = \vecx_{\cQ}}\\
&\pink{= \sum_{i=1}^t\max_{\vecx_{\cQ}}\frac{1}{(\prod_{j\in \cQ^c}N_j)}\sum_{m_{\cQ^c}\in \cM_{\cQ^c}}\bbP\inp{\inb{\phi(\vecY)_{\cQ_i^c} \neq m_{\cQ_i^c}}|\vecX_{\cQ^c} = f_{\cQ^c}(m_{\cQ^c}), \vecX_{\cQ} = \vecx_{\cQ}}}\\
&\stackrel{(b)}{\leq} \sum_{i=1}^t\max_{\vecx_{\cQ_i}}\frac{1}{(\prod_{j\in \cQ^c}N_j)}\sum_{m_{\cQ^c}\in \cM_{\cQ^c}}\bbP\inp{\inb{\phi(\vecY)_{\cQ_i^c} \neq m_{\cQ_i^c}}|\vecX_{\cQ_i^c} = f_{\cQ_i^c}(m_{\cQ_i^c}), \vecX_{\cQ_i} = \vecx_{\cQ_i}}\\
&=\sum_{i=1}^t\max_{\vecx_{\cQ_i}}\frac{1}{(\prod_{j\in \cQ_i^c}N_j)}\sum_{m_{\cQ_i^c}\in \cM_{\cQ_i^c}}\bbP\inp{\inb{\phi(\vecY)_{\cQ_i^c} \neq m_{\cQ_i^c}}|\vecX_{\cQ_i^c} = f_{\cQ_i^c}(m_{\cQ_i^c}), \vecX_{\cQ_i} = \vecx_{\cQ_i}}\\
&=\sum_{i=1}^t P_{e, \cQ_i},
\end{align*}\pink{where $(a)$ follows from a union bound. To see $(b)$ note that $\cQ\subseteq \cQ_i$ for any $i\in [1:t]$ and thus, the maximization is over a larger set.}
\end{proof}

  %!TeX root=paper.tex
\section{Proof of achievability of Theorem~\ref{thm:random_k}}\label{section:k_user_rand}
For the achievability, we will require the following theorem which gives the randomized coding capacity region of a $t$-user AV-MAC $W_{Y|X_1, \ldots, X_t, S}$ where $\cX_i, \, i \in [1:k]$ are the input alphabets and $\cY$ and $\cS$ are the output and the state alphabets respectively.
The theorem can be proved along the lines of the two user result given in \cite{Jahn81} \pink{and \cite{370123}( see \cite[Remark~IIA3]{Jahn81}).}
\begin{thm}[AV-MAC randomized capacity region for $t$-users]\label{thm:AV_MAC_t}
The randomized capacity region of the AV-MAC $W_{Y|X_1, \ldots, X_t, S}$ is the set of rate tuples such that
\begin{align}\label{eq:AV_MAC_K_proof}
\sum_{j\in\cJ}R_j\leq \min_{q(s|u)}I(X_{\cJ};Y|X_{\cJ^c}, S, \pink{U}) \text{ for every }\cJ\subseteq [1:t]
\end{align}
for some joint distribution $p(u)q(s|u)\prod_{i=1}^t p(x_i|u)$ with $|\cU|\leq t$.
\end{thm}

\begin{proof}[Proof (Achievability of Theorem~\ref{thm:random_k})] This proof is along the lines of the proof of Theorem~\ref{thm:random}. For each $\cQ\in\cA$, let $W^{\cQ}$ be the $|\cQ^c|$-user AV-MAC  which corresponds to users in the set $\cQ$ as adversary and the users in the set $\cQ^c$ as the legitimate users. For users in $\cQ$, their combined input $\vecx_{\cQ}$ and the product input alphabet $\times_{i\in \cQ}\cX_i$ correspond to the adversarial state input and the state alphabet respectively.  Let $(R_1,R_2,\ldots,R_k)$ be a rate tuple such that for some $p(u)\cdot p(x_1|u)\cdot p(x_2|u)\cdot\ldots\cdot p(x_k|u)$, the following conditions hold for all $\cQ\in\cA$ and $\cJ\subseteq \cQ^c$,
\begin{align}
\sum_{j\in \cJ}R_j\leq \min_{q(\vecx_{\cQ}|u)}I(X_{\cJ};Y|X_{(\cQ\cup\cJ)^c}, \pink{U})\label{eq:proof_rand_k}
\end{align}
where the mutual information above is evaluated using the joint distribution
$p(u) q(\vecx_{\cQ}|u)\prod_{j\in\cJ}p(x_j|u)W(y|\vecx_{\cQ},\vecx_{\cQ^c})$. Here $|\mathcal{U}|\leq k$.
 Let $\epsilon>0$ be arbitrary and let $n$ be large enough. Note that, by Theorem~\ref{thm:AV_MAC_t}, the rate tuple $R_{\cQ^c}$ is an  achievable rate pair for the AVMAC $W^{\cQ}$. 
For each $i\in[1:k]$, let $\tilde{\cM}_i=[1:2^{nR_i}]$ and $\cM_i=[1:2^{nR_i}/v]$ for the largest integer $v\leq (k|\cA|)/\epsilon$. In the following, we show the existence of a randomized $(2^{nR_1}/v,\ldots,2^{nR_k}/v,n)$ code  $(F_1,\ldots,F_k,\phi)$ with \pink{$P^{\text{rand}}_e$} no larger than $\epsilon$, for sufficiently large $n$.

\usetikzlibrary{arrows, shapes,positioning,
                chains,% <--- new
                decorations.markings,
                shadows, shapes.arrows}
  \definecolor{lightgreen}{rgb}{0.4,0.4,0.1}
\definecolor{lightblue}{rgb}{0.7372549019607844,0.8313725490196079,0.9019607843137255}
\definecolor{darkblue}{rgb}{0.08235294117647059,0.396078431372549,0.7529411764705882}
\definecolor{orangered}{rgb}{0.6,0.3,0.1}
\definecolor{gray}{rgb}{0.7529411764705882,0.7529411764705882,0.7529411764705882}

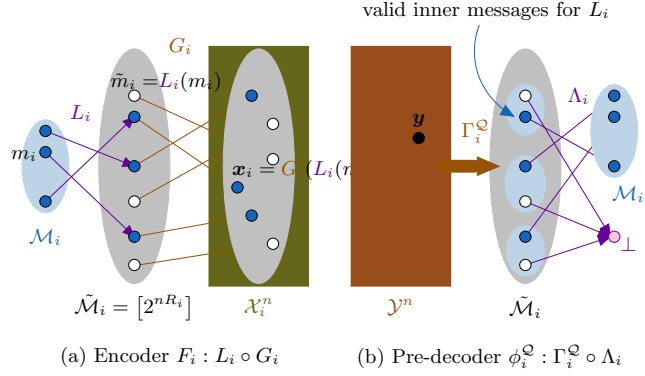
\begin{figure}\longonly{\centering} \scalebox{0.85}{\begin{tikzpicture}[line cap=round,line join=round,>=triangle 45,x=1.1cm,y=1.1cm]
\tikzstyle{myarrows}=[line width=0.5mm,draw = black!20!green!40!red,-triangle 45,postaction={draw, line width=2mm, shorten >=4mm, -}]

\fill[line width=0.4pt,color=lightgreen,fill=lightgreen,fill opacity=0.07] (-9.2,2.7) -- (-9.2,-.7) -- (-7.8,-.7) -- (-7.8,2.7) -- cycle;
\draw [rotate around={90:(-11.5,1)},line width=0.4pt,dotted,color=lightblue,fill=lightblue,fill opacity=0.46] (-11.5,1) ellipse (0.73cm and 0.367cm);
%%%%%%%%%%%%%%%%%%%%%%%%%%%%%%%%%%%%%%%%%%%%%%%%%%%%
\draw [rotate around={90:(-10.25,1.8)},line width=0.4pt,dotted,color=lightblue,fill=lightblue,fill opacity=0.46] (-10.25,1.8) ellipse (0.40cm and 0.300cm);
\draw [rotate around={90:(-10.25,0.75)},line width=0.4pt,dotted,color=lightblue,fill=lightblue,fill opacity=0.46] (-10.25,0.75) ellipse (0.45cm and 0.320cm);
\draw [rotate around={90:(-10.25,-0.2)},line width=0.4pt,dotted,color=lightblue,fill=lightblue,fill opacity=0.46] (-10.25,-0.2) ellipse (0.40cm and 0.300cm);
%%%%%%%%%%%%%%%%%%%%%%%%%%%%%%%%%%%%%%%%%%%%%%%%%%%%
\draw [rotate around={90:(-10.25,1)},line width=0.4pt,color=gray,fill=gray,fill opacity=0.2] (-10.25,1) ellipse (1.83cm and 0.55cm);
\draw [line width=0.4pt,color=lightgreen] (-9.2,2.7)-- (-9.2,-.7);
\draw [line width=0.4pt,color=lightgreen] (-9.2,-.7)-- (-7.8,-.7);
\draw [line width=0.4pt,color=lightgreen] (-7.8,-.7)-- (-7.8,2.7);
\draw [line width=0.4pt,color=lightgreen] (-7.8,2.7)-- (-9.2,2.7);

\draw [color = blue!60!red,->,line width=0.1pt] (-11.5,0.5) -- (-10.25,1.7);
\draw [color = blue!60!red,->,line width=0.1pt] (-11.5,1.5) -- (-10.25,1);
\draw [color = blue!60!red,->,line width=0.1pt] (-11.5,1.2) -- (-10.25,0);

\draw [color = black!20!green!40!red,->,line width=0.1pt] (-10.25,2) -- (-8.3,1.1);
\draw [color = black!20!green!40!red,->,line width=0.1pt] (-10.25,1) -- (-8.6,2);
\draw [color = black!20!green!40!red,->,line width=0.1pt] (-10.25,1.7) -- (-8.8,0.7);
\draw [color = black!20!green!40!red,->,line width=0.1pt] (-10.25,0.5) -- (-8.3,1.6);
\draw [color = black!20!green!40!red,->,line width=0.1pt] (-10.25,0.0) -- (-8.6,0.3);
\draw [color = black!20!green!40!red,->,line width=0.1pt] (-10.25,-0.41) -- (-8.3,-0.1);
%\draw [color = black!20!green!40!red,->,line width=0.1pt] (-10.25,2.5) -- (-8.4,2.4);

\draw [rotate around={-90:(-8.5,1)},line width=0.4pt,color=gray,fill=gray,fill opacity=0.26] (-8.5,1) ellipse (1.83cm and 0.55cm);

\draw [fill=darkblue] (-11.5,1.2) circle (2.5pt);
\draw [fill=darkblue,opacity=0.8] (-11.5,1.5) circle (2.5pt);
\draw [fill=darkblue,opacity=0.8] (-11.5,0.5) circle (2.5pt);
\draw[color=darkblue] (-11.5,0) node {\small $\cM_i$};
\draw[color=black] (-11.8,1.15) node {{\small  $m_i$}};
\draw [fill=white] (-10.25,0.5) circle (2.5pt);
%%%%%%%%%%%%%%%%%%%%%%%%%%%%%%%%%%%%%%%%%%
\draw[fill=darkblue,opacity=0.8] (-10.25,0.0) circle (2.5pt);
\draw [fill=white] (-10.25,-0.4) circle (2.5pt);
%%%%%%%%%%%%%%%%%%%%%%%%%%%%%%%%%%%%%%%%%%
\draw [fill=darkblue,opacity=0.8] (-10.25,1.7) circle (2.5pt);%
\draw [fill=white] (-10.25,2) circle (2.5pt);%
\draw [fill=darkblue,opacity=0.8] (-10.25,1) circle (2.5pt);%
%\draw [fill=white] (-10.25,2.5) circle (2.5pt);
\draw (-10.25,-1) node {\small $\tilde{\cM}_i=\left[2^{nR_i}\right]$};
\draw[color = lightgreen] (-8.5,-1) node {\small $\cX_i^n$};
\draw[color = blue!60!red] (-11,1.75) node {\small $L_i$};
\draw[color = blue!60!red] (-9.8,2.2) node {{\small  ${\color{black}\tilde{m}_i=}{\color{blue!60!red}L_i}{\color{black}(m_i)}$}};
\draw[color = black!20!green!40!red] (-9.6,2.7) node {\small $G_i$};
\draw[color = black] (-7.8,0.95) node {{\small  $\vecx_i={\color{black!20!green!40!red}G_i}({\color{blue!60!red}L_i}{\color{black}(m_i)})$}};

\draw [fill=darkblue,opacity=0.8] (-8.6,2) circle (2.5pt);%
\draw [fill=white] (-8.3,1.1) circle (2.5pt);%
\draw [fill=darkblue,opacity=0.8] (-8.8,0.7) circle (2.5pt);%
\draw [fill=white] (-8.3,1.6) circle (2.5pt);
\draw [fill=darkblue,opacity=0.8] (-8.6,0.3) circle (2.5pt);
\draw [fill=white] (-8.3,-0.1) circle (2.5pt);
%\draw [fill=white] (-8.4,2.4) circle (2.5pt);

%Y^n block
\draw [line width=0.4pt,color=orangered] (-7.2,2.7)-- (-7.2,-.7);
\draw [line width=0.4pt,color=orangered] (-7.2,-.7)-- (-5.8,-.7);
\draw [line width=0.4pt,color=orangered] (-5.8,-.7)-- (-5.8,2.7);
\draw [line width=0.4pt,color=orangered] (-5.8,2.7)-- (-7.2,2.7);
\fill[line width=0.4pt,color=orangered,fill=orangered,fill opacity=0.07] (-7.2,2.7) -- (-7.2,-.7) -- (-5.8,-.7) -- (-5.8,2.7) -- cycle;
\draw[color = orangered] (-6.5,-1) node {\small $\cY^n$};
\draw [fill=black] (-6.25,1.4) circle (2.5pt);
\draw[color = black] (-6.25,1.65) node {\small $\vecy$};

% decoder step 1
\draw [rotate around={90:(-4.75,1)},line width=0.4pt,color=gray,fill=gray,fill opacity=0.2] (-4.75,1) ellipse (1.83cm and 0.55cm);
%\draw [rotate around={90:(-4.75,1.5)},line width=0.4pt,dotted,color=lightblue,fill=lightblue,fill opacity=0.46] (-4.75,1.5) ellipse (0.73cm and 0.367cm);
\draw [fill=white] (-4.75,0.5) circle (2.5pt);
%\draw [fill=white] (-4.75,0.2) circle (2.5pt);
%\draw [fill=white] (-4.75,-0.5) circle (2.5pt);
%\draw [fill=darkblue] (-4.75,1.7) circle (2.5pt);%
%\draw [fill=darkblue] (-4.75,2) circle (2.5pt);%
%\draw [fill=darkblue] (-4.75,1) circle (2.5pt);%
%\draw [fill=white] (-4.75,2.5) circle (2.5pt);
\draw (-4.75,-1) node {\small $\tilde{\cM}_i$};
\draw[color = black!20!green!40!red] (-5.45,1.5) node {\small $\Gamma^{\cQ}_i$};
\draw[color = blue!60!red] (-4,2) node {\small $\Lambda_i$};

\draw[color = black] (-5.3,3.2) node {\small valid inner messages for $L_i$};

\draw [myarrows](-5.9,1)--(-5.1,1);
%%%%%%%%%%%%%%%%%%%%%%%%%%%%%%%%
\draw [rotate around={90:(-4.75,1.8)},line width=0.4pt,dotted,color=lightblue,fill=lightblue,fill opacity=0.46] (-4.75,1.8) ellipse (0.40cm and 0.300cm);
\draw [rotate around={90:(-4.75,0.75)},line width=0.4pt,dotted,color=lightblue,fill=lightblue,fill opacity=0.46] (-4.75,0.75) ellipse (0.45cm and 0.320cm);
\draw [rotate around={90:(-4.75,-0.2)},line width=0.4pt,dotted,color=lightblue,fill=lightblue,fill opacity=0.46] (-4.75,-0.2) ellipse (0.40cm and 0.300cm);
%%%%%%%%%%%%%%%%%%%%%%%%%%%%%%%%%%%%%
\draw [color = blue!60!red,->,line width=0.2pt] (-4.75,1.7) -- (-3.5,1);
\draw [color = blue!60!red,->,line width=0.2pt] (-4.75,1) -- (-3.5,2);
\draw [color = blue!60!red,->,line width=0.2pt] (-4.75,2) -- (-3.55,0);

\draw [color = blue!60!red,->,line width=0.2pt] (-4.75,0.5) -- (-3.55,0);
\draw [color = blue!60!red,->,line width=0.2pt] (-4.75,0.0) -- (-3.5,1.7);;
\draw [color = blue!60!red,->,line width=0.2pt] (-4.75,-0.4) -- (-3.55,0);
%\draw [color = blue!60!red,->,line width=0.2pt] (-4.75,2.5) -- (-3.55,0);   
%%%%%%%%%%%%%%%%%%%%%%%%%%%%%%%%%%%%%%%%%%
\draw[fill=darkblue,opacity=0.8] (-4.75,0.0) circle (2.5pt);
\draw [fill=white] (-4.75,-0.4) circle (2.5pt);
%%%%%%%%%%%%%%%%%%%%%%%%%%%%%%%%%%%%%%%%%%

\draw [fill=white] (-4.75,0.5) circle (2.5pt);
%\draw [fill=white] (-4.75,0.2) circle (2.5pt);
%\draw [fill=white] (-4.75,-0.5) circle (2.5pt);
\draw [fill=darkblue] (-4.75,1.7) circle (2.5pt);%
\draw [fill=white] (-4.75,2) circle (2.5pt);%
\draw [fill=darkblue] (-4.75,1) circle (2.5pt);%
%\draw [fill=white] (-4.75,2.5) circle (2.5pt);

\draw [->,color=darkblue] (-5.4,3) to [out=250,in=150] (-4.9,1.7);
%decoder step 2
\draw [rotate around={90:(-3.5,1.5)},line width=0.4pt,dotted,color=lightblue,fill=lightblue,fill opacity=0.46] (-3.5,1.5) ellipse (0.73cm and 0.367cm);
\draw [fill=darkblue] (-3.5,1.7) circle (2.5pt);
\draw [fill=darkblue] (-3.5,2) circle (2.5pt);
\draw [fill=darkblue] (-3.5,1) circle (2.5pt);
\draw [color=darkblue] (-3.3,.6) node {\small $\cM_i$};
\draw [color=red!50!blue,fill=red!70!blue!20!white] (-3.5,0) circle (2.5pt);
\draw [color=red!50!blue] (-3.3,-0.1) node {\small $\bot$};

\draw (-9.75,-1.7) node {\small {(a)} Encoder $F_i:L_i\circ G_i$};
\draw (-5.25,-1.7) node {\small {(b)} Pre-decoder $\phi^{\cQ}_i:\Gamma_i^{\cQ}\circ\Lambda_i$};

\end{tikzpicture}}
\caption{The encoders and pre-decoders for Theorem~\ref{thm:random_k}.}\label{fig:random_k}

\end{figure}

\paragraph*{Code design} 
Before describing the code, we describe the following maps which will help in describing the encoders and the decoder (see Figure~\ref{fig:random_k}). For each user $i$, let $G_i:\tilde{\cM}_i\to\cX^n_i$ be a randomized map such that it maps $m_i\in \tilde{\cM}_i$ to an $n$-length i.i.d. sequence $G_i(m_i)$ generated according to the distribution $p_i$. 
The sequences $G_i(m)$ are independent across $i\in [1:k]$ and $m\in \cM_i$. The realization of $G_i(m_i)$ for all $i\in [1:k]$ and $m_i\in \cM_i$ is shared with the decoder. 
For any $\cQ\in\cA$, consider the $|\cQ^c|$-user AV-MAC $W^{\cQ}$ as described above.
For each $i\in \cQ^c$, if we consider $\tilde{\cM_i}$  as the message set and $G_i$ as the corresponding encoder, then this construction ensures that the random encoders $G_i, \, i\in \cQ^c$ are independent and their randomness is also private from the adversarial users in the set $\cQ$. 
Thus, the joint distribution of the encoders $G_i, i\in \cQ^c$ (and the corresponding codewords) is the same as that of the encoders of AV-MAC $W^{\cQ}$ in the direct part of \cite[Theorem~1, Section~III-C]{Jahn81} (and its extension to a $t$-user AV-MAC as in Theorem~\ref{thm:AV_MAC_t}). 
For $G_i, i\in \cQ^c$  as encoders, let $\Gamma^{\cQ}$ denote the corresponding decoder for the AV-MAC $W^{\cQ}$ in Theorem~\ref{thm:AV_MAC_t}. 
Suppose $(\Gamma^{\cQ}_j, j\in \cQ^c):=\Gamma^{\cQ}$ where $\Gamma^{\cQ}_j:\cY^n\to\tilde{\cM}_i$. 
For all $\epsilon>0$, by Theorem~\ref{thm:AV_MAC_t}, there exists a large enough $n$ such that for all $\cQ\in \cA$, the code $((G_i, i\in \cQ^c), \Gamma^{\cQ})$ has error probability no larger than $\epsilon/(k|\cA|)$.
We consider that $n$.

For each $i\in [1:k]$, the message set $\cM_i$ is randomly embedded into the set $\tilde{\cM_i}$ as follows: We choose an arbitrary partition of $\tilde{\cM_i}$ into $|\cM_i|$ many disjoint equal-sized subsets (each subset size is $v$). 
Let us denote the partition by $\cS_{m_i}, \, m_i\in \cM_i$ where $\cup_{m_i\in \cM_i}\cS_{m_i}=     \tilde{\cM_i}$ and $\cS_{m_i}\cap\cS_{m'_i}=\emptyset$ for all $m_i\neq m'_i$, $m_i, m'_i\in \cM_i$.
The size of each $\cS_{m_i}, \, m_i\in \cM_i$ is $v$ ($\leq k|\cA|/\epsilon$). 
The maps $L_i:\cM_i\to\tilde{\cM}_i$ and $\Lambda_i:\tilde{\cM}_i\to\cM_i$ are the forward and reverse maps for an injection from $\cM_i$ to $\tilde{\cM}_i$ where, independently for each $m_i\in \cM_i$, $L_i(m_i)$ is chosen uniformly at random from $\cS_{m_i}$. 
Both the encoder maps $G_i$ and $L_i$ are independent for $i=1,2,\ldots, k$ and are made available to the decoder as the shared secret between user-$i$ and the decoder, unknown to other users. 
 
For the code of the \bmac, for each $i\in[1:k]$, the encoder map $F_i:\cM_i\to\cX^n_i$ is defined as $F_i(m_i)= G_i(L_i(m_i))$ for every $m_i\in\cM_i$. For each $\cQ\in\cA$ and $i\in\cQ^c$, we define pre-decoder\footnote{In this notation $\phi_i^{\cQ}(\vecy)$, we are suppressing the dependence of the pre-decoder (and later the decoder) on the randomness of the encoders.}
$$\phi_i^{\cQ}(\vecy)=\begin{cases}\Lambda_i(\Gamma^{\cQ}_i(\vecy)) &\mbox{if }\Gamma^{\cQ}_i(\vecy)\in L_i(\cM_i),\\ \bot &\mbox{otherwise}.\end{cases}$$ 
The decoder $\phi:\cY^n\to\times_{i\in [1:k]}\cM_i$ outputs $\phi(\vecy)=(\hat{m}_1,\ldots,\hat{m}_k)$, where, for each $i\in[1:k]$ and $\cQ\in\cA$,
$$\hat{m}_i=\begin{cases}\phi_i^{\cQ}(\vecy)& \mbox{ if }  |\{\phi_i^{\tilde{\cQ}}(\vecy):\tilde{\cQ}\in\cA\}| = 1\text{ and }\phi_i^{\cQ}(\vecy)\neq\bot \\
\phi_i^{\cQ}(\vecy)& \mbox{ if } \{\phi_i^{\tilde{\cQ}}(\vecy):\tilde{\cQ}\in\cA\} = \{\Phi_i^{\cQ}(\vecy), \bot\} \text{ where }\phi_i^{\cQ}(\vecy)\neq\bot\\
1&\mbox{otherwise.}
\end{cases}$$

\paragraph*{Error Analysis}
We first show that as long as the rate tuple $(R_1,R_2,\ldots,R_k)$ satisfy the rate constraints~\eqref{eq:proof_rand_k}, the following hold simultaneously for every honest user $i$ which sends message $m_i\in \cM_i$, potentially adversarial set of users $\cQ\in\cA$ with $i\notin \cQ$ and for channel output $\vecY$: \emph{(i)} $\phi_i^{\cQ}(\vecY)$ equals $m_i$ with probability at least $1-\epsilon/(k|\cA|)$ if users $\cQ$ are indeed adversarial and \emph{(ii)} $\phi_i^{\cQ}(\vecY)$ either equals $\bot$ or $m_i$, with probability at least $1-\epsilon/(k|\cA|)$, if users $\cQ$ are not adversarial. 
To this end, consider $\cQ\in \cA$ and assume that the adversarial users (if any) are users in set $\cQ$ which send $\vecX_{\cQ}$ as their  potentially adversarial input to the channel. Suppose, for $i\in \cQ^c$ and  $m_i\in \cM_i$, user-$i$ sends $F_i(m_i)$. Let $\vecY$ denote the channel output.
\begin{description}
\item[$(i)$] First, consider the AV-MAC $W^{\cQ}$.  Recall that $\Gamma_i^{\cQ}(\vecY)=L_i(m_i)$ with probability at least $1-\epsilon/(k|\cA|)$. Thus, with probability at least $1-\epsilon/(k|\cA|)$, $\phi_i^{\cQ}(\vecY)$ equals $m_i$. This also holds for any $\tilde{\cQ}\subset \cQ$, as we can think of this as adversarial users $\cQ$ where users in $\cQ\setminus \tilde{\cQ}$ send valid codewords.
\item[$(ii)$] Next, consider the AV-MAC $W^{\tilde{\cQ}}$, for $\tilde{\cQ}\in \cA$ where $\tilde{\cQ}\setminus \cQ\neq \emptyset$.  
We would like to compute $\bbP\inp{\phi_i^{\tilde{\cQ}}(\vecY)\notin \set{m_i,\bot}}$ where for $i\in \cQ^c$, the probability is over $G_i(L_i(m_i))$, $\vecX_{\cQ}$ and the channel.  
\begin{align*}
&\bbP\inp{\phi_i^{\tilde{\cQ}}(\vecY)\notin \set{m_i,\bot}}\\
& = \bbP\inp{\Gamma_i^{\tilde{\cQ}}(\vecY)\in L_i(\cM_i\setminus\{m_i\})}\\
& = \sum_{\tilde{m}_i\in\tilde{\cM}\setminus\cS_{m_i}}\bbP\inp{\Gamma_i^{\tilde{\cQ}}(\vecY)=\tilde{m}_i, \tilde{m}_i\in L_i(\cM_i\setminus\{m_i\})}\\
& = \sum_{\tilde{m}_i\in\tilde{\cM}\setminus\cS_{m_i}}\bbP\inp{\Gamma_i^{\tilde{\cQ}}(\vecY)=\tilde{m}_i}\bbP\inp{ \tilde{m}_i\in L_i(\cM_i\setminus\{m_i\})\big|\Gamma_i^{\tilde{\cQ}}(\vecY)=\tilde{m}_i}\\
& \stackrel{(a)}{=} \sum_{\tilde{m}_i\in\tilde{\cM}\setminus\cS_{m_i}}\bbP\inp{\Gamma_i^{\tilde{\cQ}}(\vecY)=\tilde{m}_i}\bbP\inp{ \tilde{m}_i\in L_i(\cM_i\setminus\{m_i\})}\\
& \stackrel{(b)}{=}  \sum_{\tilde{m}_i\in\tilde{\cM}\setminus\cS_{m_i}}\bbP\inp{\Gamma_i^{\tilde{\cQ}}(\vecY)=\tilde{m}_i}\cdot\frac{1}{v}\\
& \leq 1/v \leq \epsilon/(k|\cA|).
\end{align*} Here, $(a)$ holds as $\Gamma_i^{\tilde{\cQ}}(\vecY)\,\indep \,L_i(\cM_i\setminus\{m_i\})$. This is because $G_i(L_i(m_i))$ which produces $\vecY$ is independent of $L_i(\cM_i\setminus\{m_i\})$ and  $\Gamma_i^{\tilde{\cQ}}\,\indep \,L_i(\cM_i\setminus\{m_i\})$ as $\Gamma_i^{\tilde{\cQ}}$ is a function of AV-MAC encoders $G_i, i\in \tilde{\cQ}^c$ which are independent of $L_i$. 
The equality $(b)$ holds because for $\tilde{m}_i\in\tilde{\cM}\setminus\cS_{m_i}$, 
\begin{align*}
&\bbP\inp{ \tilde{m}_i\in L_i(\cM_i\setminus\{m_i\})}\\
&= \sum_{m_i'\in \cM_i\setminus\{m_i\}}\bbP\inp{ L_i(m_i')=\tilde{m}_i }\\
 &= \sum_{m_i'\in \cM_i\setminus\{m_i\}}1_{\{\tilde{m}_i\in \cS_{m_i'}\}}\cdot\frac{1}{v}\\
 &=1/v.
 \end{align*}

By taking union bound over all users and all $\cQ\in \cA$,  with probability $1-\epsilon$, for each non-adversarial user $i$, at least one of the decoders $\phi_i^{\cQ}$, $\cQ\in \cA$  outputs the true message while the other decoder outputs either the true message or $\bot$.
\end{description}

\end{proof}

  %!TeX root=paper.tex
\section{Randomness reduction lemma}\label{app:randomness_reduction}
\begin{lemma}[Randomness reduction]\label{lemma:randomness_reduction}
Suppose $\epsilon>0$. For large enough $n$, given any $(N_1, \ldots, N_k,n)$ randomized code $(F_{[1:k]}, \phi(F_{[1:k]}))$ satisfying  \pink{$$P^{\text{rand}}_{e}(P_{F_{[1:k]}}, \phi)<2^{\epsilon/2}-1,$$} there exist $n^2$  deterministic encoding maps  $f_{j,i}, i \in [1:n^2]$ for each user $j\in [1:k]$ such that for every $\cQ\in \cA, j_{\cQ}\in [1:n^2]^{|\cQ|}$, $\vecx_{\cQ}\in \cX^n_{\cQ}$ and the decoder $\phi$,
\begin{align*}
&\frac{1}{(n^2)^{|\cQ^c|}}\sum_{j_{\cQ^c}\in [1:n^2]^{|\cQ^c|}}\frac{1}{(\prod_{i\in \cQ^c}N_i)}\\
&\qquad\qquad\sum_{m_{\cQ^c}\in \cM_{\cQ^c}}\bbP\inb{\inp{\phi(\vecY, f_{\cQ, j_{\cQ}}, f_{\cQ^c, j_{\cQ^c}})= \hat{m}_{[1:k]}\text{ such that }\hat{m}_{\cQ^c}\neq m_{\cQ^c}}|\vecX_{\cQ^c}=f_{\cQ^c, j_{\cQ^c}}(m_{\cQ^c}), \vecX_{\cQ}=\vecx_{\cQ}}<\epsilon.
\end{align*}
Here, $f_{{\cQ}, j_{\cQ}}$ denotes $(f_{i, j_i}: i\in \cQ)$.
\end{lemma}
\begin{remark}
\pink{Lemma~\ref{lemma:randomness_reduction} states that given a randomized code with a small probability of error ($2^{\epsilon/2}-1$), there exists another randomized code of the same rate for all users which uses only $2\log{n}$ random bits at each user such that the new code also has a small probability of error ($\epsilon$).}
\end{remark}
\begin{proof}
The proof follows along the lines of Jahn~\cite[Theorem~1]{Jahn81}, \pink{though there are significant differences because of the byzantine nature of users. In particular, our result needs to incorporate the fact that a malicious user can maliciously choose their encoding map to influence decoding.} For each $i\in [1:K]$, let $\inb{F_{ij}}_{j=1}^{n^2}$ be independent samples of codebook $F_i$ (also independent across $i$). This gives the set of codes $\left\{(F_{ij}, \phi(F_{ij})), i\in [1:K], j\in [1:n^2], \phi:=\phi\right\}$. 
For every $\cQ\in\cA$, define $e_{\cQ}(f_{\cQ}, f_{\cQ^c}, \vecx_{\cQ})$ to be the error probability for fixed encoding maps $f_{\cQ}$ for the adversarial users and $f_{\cQ^c}$ for the non-adversarial users and the channel inputs chosen by the adversarial users as $\vecx_{\cQ}\in \cX^n_{\cQ}$, {\em i.e.},

\begin{align*}
&e_{\cQ}(f_{\cQ}, f_{\cQ^c}, \vecx_{\cQ})\\
&\qquad:=\frac{1}{(\prod_{i\in \cQ^c}N_i)}\sum_{m_{\cQ^c}\in \cM_{\cQ^c}}\hspace{0.2cm}\sum_{\substack{\vecy:\phi(\vecy, f_{\cQ}, f_{\cQ^c})= \hat{m}_{[1:k]},\\\text{ where }\hat{m}_{\cQ^c}\neq m_{\cQ^c}}}W_{Y|\vecX_{\cQ^c}\vecX_{\cQ}} \inp{\vecy|f_{\cQ^c}(m_{\cQ^c}), \vecx_{\cQ}}.
\end{align*}
Note that for $j_{\cQ}\in [1:n^2]^{|\cQ|}$, $j_{\cQ^c}\in [1:n^2]^{|\cQ^c|}$, $e_{\cQ}(F_{\cQ, j_{\cQ}}, F_{\cQ^c, j_{\cQ^c}}, \vecx_{\cQ})$, as a function of $F_{\cQ, j_{\cQ}}$ and  $F_{\cQ^c, j_{\cQ^c}}$ is a random variable.
We wish to show that 
\begin{align*}
&\lim_{n\to\infty}\bbP\inp{\frac{1}{(n^2)^{|\cQ^c|}}\sum_{j_{\cQ^c}\in [1:n^2]^{\pink{|\cQ^c|}}}  e_{\cQ}(F_{\cQ, j_{\cQ}}, F_{\cQ^c, j_{\cQ^c}}, \vecx_{\cQ})\geq \epsilon\text{ for {some} }\cQ\in \cA, j_{\cQ}\in [1:n^2]^{|\cQ|}\text{ and }\vecx_{\cQ}\in \cX^n_{\cQ}}= 0
\end{align*}

Using a union bound over $\cQ\in \cA$, $j_{\cQ}\in [1:n^2]^{|\cQ|}$, and $\vecx_{\cQ}\in \cX^n_{\cQ}$, we have
\begin{align*}
&\bbP\inp{\frac{1}{(n^2)^{|\cQ^c|}}\sum_{j_{\cQ^c}\in [1:n^2]^{\pink{|\cQ^c|}}}  e_{\cQ}(F_{\cQ, j_{\cQ}}, F_{\cQ^c, j_{\cQ^c}}, \vecx_{\cQ})\geq \epsilon\text{ for {some} }\cQ\in \cA, j_{\cQ}\in [1:n^2]^{|\cQ|}\text{ and }\vecx_{\cQ}\in \cX^n_{\cQ}}\\
&\leq \sum_{\cQ\in \cA, j_{\cQ}\in [1:n^2]^{|\cQ|},\vecx_{\cQ}\in \cX^n_{\cQ}}\bbP\inp{\frac{1}{(n^2)^{|\cQ^c|}}\sum_{j_{\cQ^c}\in [1:n^2]^{\pink{|\cQ^c|}}}  e_{\cQ}(F_{\cQ, j_{\cQ}}, F_{\cQ^c, j_{\cQ^c}}, \vecx_{\cQ})\geq \epsilon}
\end{align*}

Note that the summands in $\sum_{j_{\cQ^c}\in [1:n^2]^{\pink{|\cQ^c|}}}  e_{\cQ}(F_{\cQ, j_{\cQ}}, F_{\cQ^c, j_{\cQ^c}}, \vecx_{\cQ})$ are not necessarily independent. Hence, an exponential concentration inequality cannot be directly argued. However, using a similar procedure as Jahn~\cite[Theorem~1]{Jahn81}, we decompose this sum into parts that consist of  summands that are conditionally independent given the adversary's choices.

To this end, let $\Sigma_{n^2}:=\{\tau_i: i\in [0:n^2-1]\}$ be a set of permutations of $\{1, 2, \ldots, n^2\}$ with $$\tau_i(j) = (i+j)\, \textsf{mod} \,n^2\mbox{ for all }j\in [1:n^2].$$ Suppose $|\cQ|=l$ for some $l\in [1:k]$. For ease of notation, let $Q= \{1, 2, \ldots, l\}$. Then, 
\begin{align*}
\frac{1}{(n^2)^{|\cQ^c|}}&\sum_{j_{\cQ^c}\in [1:n^2]^{\pink{|\cQ^c|}}}  e_{\cQ}(F_{\cQ, j_{\cQ}}, F_{\cQ^c, j_{\cQ^c}},\vecx_{\cQ}) = \frac{1}{(n^2)^{k-l}}\sum_{(j_{l+1}, \ldots, j_{k})\in [1:n^2]^{k-l}}  e_{\cQ}(F_{\cQ, j_{\cQ}}, F_{\cQ^c, j_{\cQ^c}}, \vecx_{\cQ})\\
& = \frac{1}{(n^2)^{k-l-1}}\sum_{(\sigma_{l+2}, \sigma_{l+3}, \ldots, \sigma_k)\in \Sigma_{n^2}^{k-l-1}}\inp{\frac{1}{n^2}\sum_{j\in [n^2]}e_{\cQ}(F_{\cQ, j_{\cQ}}, (F_{l+1, j},F_{l+2, \sigma_{l+2}(j)},\ldots, F_{k, \sigma_{k}(j)}),  \vecx_{\cQ})}.
\end{align*} Now, 
\begin{align*}
&\bbP\inp{\frac{1}{(n^2)^{|\cQ^c|}}\sum_{j_{\cQ^c}\in [1:n^2]^{\pink{|\cQ^c|}}}  e_{\cQ}(F_{\cQ, j_{\cQ}}, F_{\cQ^c, j_{\cQ^c}}, \vecx_{\cQ})\geq \epsilon}\\
&= \bbP\inp{\sum_{(\sigma_{l+2}, \sigma_{l+3}, \ldots, \sigma_k)\in \Sigma_{n^2}^{k-l-1}}\inp{\frac{1}{n^2}\sum_{j\in [n^2]}e_{\cQ}(F_{\cQ, j_{\cQ}}, (F_{l+1, j},F_{l+2, \sigma_{l+2}(j)},\ldots, F_{k, \sigma_{k}(j)}), \vecx_{\cQ})}\geq {(n^2)^{k-l-1}}\epsilon}\\
&\leq \bbP\inp{\cup_{(\sigma_{l+2}, \sigma_{l+3}, \ldots, \sigma_k)\in \Sigma_{n^2}^{k-l-1}}\inb{\frac{1}{n^2}\sum_{j\in [n^2]}e_{\cQ}(F_{\cQ, j_{\cQ}}, (F_{l+1, j},F_{l+2, \sigma_{l+2}(j)},\ldots, F_{k, \sigma_{k}(j)}),  \vecx_{\cQ})\geq \epsilon}}\\
&\leq \sum_{(\sigma_{l+2}, \sigma_{l+3}, \ldots, \sigma_k)\in \Sigma_{n^2}^{k-l-1}}\bbP\inb{\frac{1}{n^2}\sum_{j\in [n^2]}e_{\cQ}(F_{\cQ, j_{\cQ}}, (F_{l+1, j},F_{l+2, \sigma_{l+2}(j)},\ldots, F_{k, \sigma_{k}(j)}),  \vecx_{\cQ})\geq \epsilon}.
\end{align*}We note that $e_{\cQ}(F_{\cQ, j_{\cQ}}, (F_{l+1, j},F_{l+2, \sigma_{l+2}(j)},\ldots, F_{k, \sigma_{k}(j)}), \vecx_{\cQ})$ is identically distributed  for all $(\sigma_{l+2}, \sigma_{l+3}, \ldots, \sigma_k)\in \Sigma_{n^2}^{k-l-1}$. Thus, 
\begin{align*}
&\sum_{(\sigma_{l+2}, \sigma_{l+3}, \ldots, \sigma_k)\in \Sigma_{n^2}^{k-l-1}}\bbP\inb{\frac{1}{n^2}\sum_{j\in [n^2]}e_{\cQ}(F_{\cQ, j_{\cQ}}, (F_{l+1, j},F_{l+2, \sigma_{l+2}(j)},\ldots, F_{k, \sigma_{k}(j)}), \vecx_{\cQ})\geq \epsilon}\\
%& \leq (n^2)^{k-l-1}\bbP\inb{\frac{1}{n^2}\sum_{j\in [n^2]}e_{\cQ}(F_{\cQ, j_{\cQ}}, (F_{l+1, j},F_{l+2, \sigma_{0}(j)},\ldots, F_{k, \sigma_{0}(j)}), \vecx_{\cQ})\geq \epsilon}\\
& \leq (n^2)^{k-l-1}\bbP\inb{\frac{1}{n^2}\sum_{j\in [n^2]}e_{\cQ}(F_{\cQ, j_{\cQ}}, (F_{l+1, j},F_{l+2, \tau_{0}(j)},\ldots, F_{k, \tau_{0}(j)}), \vecx_{\cQ})\geq \epsilon}\\
&= (n^2)^{k-l-1}\bbP\inb{\frac{1}{n^2}\sum_{j\in [n^2]}e_{\cQ}(F_{\cQ, j_{\cQ}}, (F_{l+1, j},F_{l+2, j},\ldots, F_{k, j}), \vecx_{\cQ})\geq \epsilon}\\
&= (n^2)^{k-l-1}\bbP\inb{\sum_{j\in [n^2]}e_{\cQ}(F_{\cQ, j_{\cQ}}, (F_{l+1, j},F_{l+2, j},\ldots, F_{k, j}), \vecx_{\cQ})\geq n^2\epsilon}\\
&= (n^2)^{k-l-1}\bbP\inb{\exp\inb{\sum_{j\in [n^2]}e_{\cQ}(F_{\cQ, j_{\cQ}}, (F_{l+1, j},F_{l+2, j},\ldots, F_{k, j}),  \vecx_{\cQ})}\geq \exp\inb{n^2\epsilon}}\\
&\stackrel{(a)}{\leq} (n^2)^{k-l-1}\exp\inb{-n^2\epsilon}\bbE\insq{\exp\inb{\sum_{j\in [n^2]}e_{\cQ}(F_{\cQ, j_{\cQ}}, (F_{l+1, j},F_{l+2, j},\ldots, F_{k, j}),  \vecx_{\cQ})}}\\
&=(n^2)^{k-l-1}\exp\inb{-n^2\epsilon}\bbE\insq{\prod_{j\in [n^2]}\exp\inb{e_{\cQ}(F_{\cQ, j_{\cQ}}, (F_{l+1, j},F_{l+2, j},\ldots, F_{k, j}),  \vecx_{\cQ})}}\\
&=(n^2)^{k-l-1}\exp\inb{-n^2\epsilon}\bbE_{F_{\cQ, j_{\cQ}}}\insq{\bbE\insq{\prod_{j\in [n^2]}\exp\inb{e_{\cQ}(F_{\cQ, j_{\cQ}}, (F_{l+1, j},F_{l+2, j},\ldots, F_{k, j}),  \vecx_{\cQ})}\Bigg|F_{\cQ, j_{\cQ}}}}\\
&\stackrel{(b)}{=}(n^2)^{k-l-1}\exp\inb{-n^2\epsilon}\bbE_{F_{\cQ, j_{\cQ}}}\insq{\prod_{j\in [n^2]}\bbE\insq{\exp\inb{e_{\cQ}(F_{\cQ, j_{\cQ}}, (F_{l+1, j},F_{l+2, j},\ldots, F_{k, j}),  \vecx_{\cQ})}\Bigg|F_{\cQ, j_{\cQ}}}}\\
&\stackrel{(c)}{=}(n^2)^{k-l-1}\exp\inb{-n^2\epsilon}\bbE_{F_{\cQ, j_{\cQ}}}\insq{\inp{\bbE\insq{\exp\inb{e_{\cQ}(F_{\cQ, j_{\cQ}}, (F_{l+1, 1},F_{l+2, 1},\ldots, F_{k, 1}),  \vecx_{\cQ})}\Bigg|F_{\cQ, j_{\cQ}}}^{n^2}}}\\
&\stackrel{(d)}{\leq}(n^2)^{k-l-1}\exp\inb{-n^2\epsilon}\bbE_{F_{\cQ, j_{\cQ}}}\insq{\inp{1+\bbE\insq{{e_{\cQ}(F_{\cQ, j_{\cQ}}, (F_{l+1, 1},F_{l+2, 1},\ldots, F_{k, 1}),  \vecx_{\cQ})}\Bigg|F_{\cQ, j_{\cQ}}}}^{n^2}}\\
&\stackrel{(e)}{\leq}(n^2)^{k-l-1}\exp\inb{-n^2\epsilon}\bbE_{F_{\cQ, j_{\cQ}}}\insq{\inp{1+\pink{P^{\text{rand}}_{e}\inp{P_{F_{[1:k]}}, \phi}}}^{n^2}}\\
&=(n^2)^{k-l-1}\exp\inb{-n^2\epsilon}\inp{1+\pink{P^{\text{rand}}_{e}\inp{P_{F_{[1:k]}}, \phi}}}^{n^2}\\
&=\exp\inb{-n^2\inp{\epsilon-\log\inp{1+\pink{P^{\text{rand}}_{e}(P_{F_{[1:k]}}, \phi))}}-\frac{k-l-1}{n^2}\log(n^2)}}
\end{align*} where $(a)$ follows from Markov's inequality. $(b), (c)$ and $(d)$ hold because for each $j\in [1:n^2]$, conditioned on $F_{\cQ, j_{\cQ}}$,   $e_{\cQ}(F_{\cQ, j_{\cQ}}, (F_{l+1, j},F_{l+2, j},\ldots, F_{k, j}),  \vecx_{\cQ})$ are i.i.d. random variables taking values between $0$ and $1$ (recall that $2^t\leq 1+t$ for $t\in [0:1]$). The inequality $(e)$ follows from the definition of $\pink{P^{\text{rand}}_{e}(P_{F_{[1:k]}}, \phi))}$ by noting that for every realization $f_{\cQ}\in \cF_{\cQ}$ of $F_{\cQ, j_{\cQ}}$, $\bbE\insq{{e_{\cQ}(f_{\cQ}, (F_{l+1, 1},F_{l+2, 1},\ldots, F_{k, 1}),  \vecx_{\cQ})}}\leq \pink{P^{\text{rand}}_{e}(P_{F_{[1:k]}}, \phi))}$. This implies that the random variable $\bbE\insq{{e_{\cQ}(F_{\cQ, j_{\cQ}}, (F_{l+1, 1},F_{l+2, 1},\ldots, F_{k, 1}),  \vecx_{\cQ})}\Bigg|F_{\cQ, j_{\cQ}}}$ is upper bounded by $\pink{P^{\text{rand}}_{e}(P_{F_{[1:k]}}, \phi))}$.
Thus, 
\begin{align*}
&\bbP\inp{\frac{1}{(n^2)^{|\cQ^c|}}\sum_{j_{\cQ^c}\in [1:n^2]^{\pink{|\cQ^c|}}}  e_{\cQ}(F_{\cQ, j_{\cQ}}, F_{\cQ^c, j_{\cQ^c}}, \vecx_{\cQ})\geq \epsilon\text{ for some }\cQ\in \cA, j_{\cQ}\in [1:n^2]^{|\cQ|}\text{ and }\vecx_{\cQ}\in \cX^n_{\cQ}}\\
&\stackrel{(a)}{\leq} 2^k (n^2)^k\prod_{i\in [1:k]}|\cX_i|^{n}\exp\inb{-n^2\inp{\epsilon-\log\inp{1+\pink{P^{\text{rand}}_{e}(P_{F_{[1:k]}}, \phi))}}-\frac{k-l-1}{n^2}\log(n^2)}}\\
&\rightarrow 0\text{ for enough }n.
\end{align*}Here,  $(a)$ follows by recalling that $\pink{P^{\text{rand}}_{e}(P_{F_{[1:k]}}, \phi))}<2^{\epsilon/2}-1$ and thus, $\epsilon> 2\log\inp{1+\pink{P^{\text{rand}}_{e}(P_{F_{[1:k]}}, \phi))}}$.
\end{proof}

  %!TeX root=paper.tex
\section{Proof of Lemma~\ref{lem:positivity_implies_capacity}}\label{sec:positivity_implies_capacity}
\begin{proof}[Proof of Lemma~\ref{lem:positivity_implies_capacity}]
This can be shown along the lines of the proof of \cite[Theorem~12.11]{CK11}. 
For  $\epsilon>0$ and large enough $n$, let $(F_{[1:k]}, \phi(F_{[1:k]}))$ be an $(N_1, \ldots, N_k,n)$ randomized code  satisfying  $$\pink{P^{\text{rand}}_{e}(P_{F_{[1:k]}}, \phi)}<2^{\epsilon/2}-1.$$   
Applying Lemma~\ref{lemma:randomness_reduction} on this code,  for each user $j\in [1:k]$, we obtain  $n^2$ deterministic encoding maps  $f_{j,i}, i\in[1:n^2]$ such that for every $\cQ\in \cA, l_{\cQ}\in [1:n^2]^{|\cQ|}$, $\vecx_{\cQ}\in \cX^n_{\cQ}$ and the decoder $\phi$,
\begin{align}
&\frac{1}{(n^2)^{|\cQ^c|}}\sum_{l_{\cQ^c}\in [1:n^2]^{|\cQ^c|}}\frac{1}{(\prod_{i\in \cQ^c}N_i)}\nonumber\\
&\quad\sum_{m_{\cQ^c}\in \cM_{\cQ^c}}\bbP\inb{\inp{\phi(\vecY, f_{\cQ, l_{\cQ}}, f_{\cQ^c, l_{\cQ^c}})= \hat{m}_{[1:k]}\text{ such that }\hat{m}_{\cQ^c}\neq m_{\cQ^c}}|\vecX_{\cQ^c}=f_{\cQ^c, l_{\cQ^c}}(m_{\cQ^c}), \vecX_{\cQ}=\vecx_{\cQ}}<\epsilon.
 \label{eq:one}
\end{align}
Further, since $R_i>0$ is achievable for all $i\in [1:k]$, there exists an $(n^2, \ldots, n^2, k_n)$ code $(\hat{f}_{[1:k]}, \hat{\phi})$ where $k_n/n\rightarrow 0$ and 
\begin{align}\pink{\max_{\cQ\in \cA}P^{\text{rand}}_{e, \cQ}(\hat{f}_{[1:k]}, \hat{\phi})}\leq \epsilon\label{eq:two}\end{align}
for large enough $n$. We choose sufficiently large $n$ such that both \eqref{eq:one} and \eqref{eq:two} hold. For a vector sequence $\tilde{\vecs}\in \cS^{n+k_n}$ for any alphabet $\cS$, we write $\tilde{\vecs}= ({\hat{\vecs},\vecs})$, where $\hat{\vecs}$ denotes the first $k_n$-length part of $\tilde{\vecs}$ and ${\vecs}$ denotes the last $n$-length part of the $\tilde{\vecs}$. 
Let $(\tilde{f}_{[1:k]}, \tilde{\phi})$ be a new $(\tilde{N}_1, \ldots, \tilde{N}_k,\tilde{n})$ code where $\tilde{n}:=k_n+n$, message set for user-$i$, $\tilde{\cM}_i=[1:\tilde{N}_i]:=\{1, 2, \ldots, n^2\}\times [1:{N}_i]$. Further, for $l\in [1:n^2], m\in [1:{N}_i]$, let $\tilde{m} := (l,m)$. We define $\tilde{f}_i(\tilde{m}) = \tilde{f}_{i}(l, m):= (\hat{f}_{i}(l),f_{i,l}(m))$. For $\tilde{\vecy}=(\hat{\vecy},\vecy)$, let $\tilde{\phi}(\tilde{\vecy}):=(\hat{l}_{[1:k]}, \phi(\vecy,f_{[1:k],\hat{l}_{[1:k]}}))$ where  $\hat{l}_{[1:k]} = \hat{\phi}(\hat{\vecy})$. 
Then, for $\cQ\in \cA$,    
\begin{align*}
&P^{\text{rand}}_{e, \cQ}(\tilde{f}_{[1:k]}, \tilde{\phi}) \\
&= \max_{\stackrel{({\hat{\vecx}_{\cQ}, \vecx}_{\cQ})}{\in \cX^{k_n}_{\cQ}\times \cX^n_{\cQ}}}\frac{1}{(\prod_{i\in \cQ^c}\tilde{{N}}_i)}\sum_{\tilde{m}_{\cQ^c}\in \tilde{\cM}_{\cQ^c}}\bbP\inp{\inb{\tilde{\phi}(\tilde{\vecY})= {m'}_{[1:k]}\text{ such that }{m'}_{\cQ^c}\neq \tilde{m}_{\cQ^c}}\bigg|\tilde{\vecX}_{\cQ^c} = \tilde{f}_{\cQ^c}(\tilde{m}_{\cQ^c}), \tilde{\vecX}_{\cQ} = ({\hat{\vecx}_{\cQ},\vecx}_{\cQ})}\\
&\leq \max_{\stackrel{\hat{\vecx}_{\cQ},{\vecx}_{\cQ}}{l_{\cQ}}}\frac{1}{(n^2)^{|\cQ^c|}(\prod_{i\in \cQ^c}{{N}}_i)}\sum_{l_{\cQ^c}\in [1:n^2]^{|\cQ^c|}}\sum_{m_{\cQ^c}\in {\cM}_{\cQ^c}}\bbP\Bigg(\inb{\hat{\phi}(\hat{\vecY})= \bar{l}_{[1:k]}\text{ such that }\bar{l}_{\cQ^c}\neq l_{\cQ^c}}\\
&\hspace{0.6cm}\cup\inb{{\phi}({\vecY}, f_{\cQ, {l}_{\cQ}},f_{\cQ^c, {l}_{\cQ^c}})= \bar{m}_{[1:k]}\text{ such that }\bar{m}_{\cQ^c}\neq m_{\cQ^c}}\bigg|(\hat{\vecX}_{\cQ^c},{\vecX}_{\cQ^c}) = (\hat{f}_{\cQ^c}(l_{\cQ^c}),f_{\cQ^c,l_{\cQ^c}}(m_{\cQ^c})), (\hat{\vecX}_{\cQ} = \hat{\vecx}_{\cQ}, {\vecX}_{\cQ} = {\vecx}_{\cQ})\Bigg)\\
&\leq \max_{\hat{\vecx}_{\cQ}}\frac{1}{(n^2)^{|\cQ^c|}}\sum_{l_{\cQ^c}\in [1:n^2]^{|\cQ^c|}}\bbP\Bigg(\inb{\hat{\phi}(\hat{\vecY})= \bar{l}_{[1:k]}\text{ such that }\bar{l}_{\cQ^c}\neq l_{\cQ^c}}\bigg|\hat{\vecX}_{\cQ^c} = \hat{f}_{\cQ^c}(l_{\cQ^c}), \hat{\vecX}_{\cQ} = \hat{\vecx}_{\cQ}\Bigg)\\
&+ \max_{{\vecx}_{\cQ}, l_{\cQ}}\frac{1}{(n^2)^{|\cQ^c|}(\prod_{i\in \cQ^c}{{N}}_i)}\sum_{l_{\cQ^c}\in [1:n^2]^{|\cQ^c|}}\sum_{m_{\cQ^c}\in {\cM}_{\cQ^c}}\bbP\Bigg(\inb{{\phi}({\vecY}, f_{\cQ, {l}_{\cQ}},f_{\cQ^c, {l}_{\cQ^c}})= \bar{m}_{[1:k]}\text{ such that }\bar{m}_{\cQ^c}\neq m_{\cQ^c}}\bigg|\\
&\hspace{8cm}{\vecX}_{\cQ^c} = f_{\cQ^c,l_{\cQ^c}}(m_{\cQ^c}),  {\vecX}_{\cQ} = {\vecx}_{\cQ}\Bigg)\\
&\stackrel{(a)}{\leq}2\epsilon
\end{align*} where $(a)$ follows from \eqref{eq:one} and \eqref{eq:two}.
\end{proof}
  %!TeX root=paper.tex

\section{Proof of Lemma~\ref{achievability_k_users}}\label{app:k_users_positivity}
\pink{We first give the codebook which is given by Lemma~\ref{codebook_general} below. Its proof is along the lines of \cite[Lemma 2]{AhlswedeC99} and \cite[Lemma 3]{CsiszarN88} and is given later.}
\begin{lemma}\label{codebook_general}
For any  $\epsilon>0$, large enough $n, \, N\geq\exp(n\epsilon)$ and types $P_i\in \cP_{\cX_i}^n:i\in [1:k]$, there exist codebooks $\cC_i, \, i\in [1:k]$ for message sets $\cM_i = [1:N],\, i\in[1:k]$,  whose codewords are of type $P_i, \, i\in [1:k]$ respectively such that for every $\cQ\in \cA$ such that $|\cQ|<k$\footnote{Note that there are no decoding guarantees when all users are malicious, so we only consider the case when at least one user is honest.}, $\vecx_{\cQ}\in \cX^n_{\cQ}$,  $\cT\subseteq \cQ^c$, $\cJ\subseteq\cQ$, and joint type $P_{X_{\cQ^c}X_{\cQ}X'_{\cT}X'_{\cJ}}\in \cP^n_{\cX_{\cQ^c}\times\cX_{\cQ}\times\cX'_{\cT}\times\cX'_{\cJ}}$, the following holds:
\begin{enumerate}[label=(\alph*)]
    \item \label{codebook_general_eq1}  If for any $i\in \cQ^c$, $I(X_i;X_{\cQ^c\setminus\{i\}}X_{\cQ})\geq \epsilon$, then,
    \begin{align*}
\frac{1}{N^{|\cQ^c|}}|\{m_{\cQ^c}\in \cM_{\cQ^c}: (f_{\cQ^c}(m_{\cQ^c}),\vecx_{Q})\in T^{n}_{X_{\cQ^c} X_{\cQ}} \}| <\exp\left\{-n\epsilon/2\right\}.
\end{align*}
\item \label{codebook_general_eq2} If for any $i\in \cQ^c$, $I(X_i;X_{\cQ^c\setminus\{i\}}X'_{\cT} X'_{\cJ}X_{\cQ})\geq (|\cT|+|\cJ|)(1/ n )\log_2{N}+\epsilon$, then,
\begin{align*}
&\frac{1}{N^{|\cQ^c|}}|\{m_{\cQ^c}\in \cM_{\cQ^c}: \exists m'_{\cJ}\in \cM_{\cJ},\, m'_{\cT}\in \cM_{\cT},\, m'_i \neq m_i, \,\forall i \in \cT,\, (f_{\cQ^c}(m_{\cQ^c}),f_{\cT}(m'_{\cT}),f_{\cJ}(m'_{\cJ}),\vecx_{Q})\in T^{n}_{X_{\cQ^c}X'_{\cT} X'_{\cJ} X_{\cQ}} \}|\nonumber \\
&\hspace{4cm} <\exp\left\{-n\epsilon/2\right\}.
\end{align*}
\end{enumerate}
\end{lemma}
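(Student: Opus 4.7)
The plan is to construct the codebooks by independent random sampling from the type classes and show that, for each fixed choice of parameters $(\cQ,\cT,\cJ,\vecx_{\cQ},\text{joint type})$, the probability that condition~\ref{codebook_general_eq1} or~\ref{codebook_general_eq2} fails is doubly-exponentially small in $n$. Since the number of such parameter choices is only singly-exponential in $n$, a union bound will then guarantee the existence of a codebook satisfying all the conditions simultaneously. This mirrors the strategy used to prove Lemma~\ref{lemma:codebook} in the three-user case.

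First I would independently draw, for each $i\in[1:k]$ and $m\in[1:N]$, a random codeword $\vecX_i(m)$ uniformly from $T^n_{P_i}$. For fixed parameters in part~\ref{codebook_general_eq1}, define $U_a := |\{m_{\cQ^c}\in\cM_{\cQ^c}: (\vecX_{\cQ^c}(m_{\cQ^c}),\vecx_{\cQ})\in T^n_{X_{\cQ^c}X_{\cQ}}\}|$. The standard type-class estimates~\eqref{eq:type_class}--\eqref{eq:conditional_type_class}, combined with independence of codewords across users, give via a chain-rule decomposition that places the ``bad'' index $i$ (witnessing $I(X_i;X_{\cQ^c\setminus\{i\}}X_{\cQ})\geq\epsilon$) last:
$$\bbE[U_a]\leq N^{|\cQ^c|}\exp(-nI(X_i;X_{\cQ^c\setminus\{i\}}X_{\cQ}))\leq N^{|\cQ^c|}\exp(-n\epsilon).$$
For part~\ref{codebook_general_eq2}, I would upper-bound the desired count by the total count $U_b$ of tuples $(m_{\cQ^c},m'_{\cT},m'_{\cJ})$ satisfying the joint-typicality relation and $m'_t\neq m_t$ for $t\in\cT$. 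A similar chain-rule bound combined with the hypothesis of part~\ref{codebook_general_eq2} yields
$$\bbE[U_b]\leq N^{|\cQ^c|+|\cT|+|\cJ|}\exp(-nI(X_i;X_{\cQ^c\setminus\{i\}}X'_{\cT}X'_{\cJ}X_{\cQ}))\leq N^{|\cQ^c|}\exp(-n\epsilon).$$

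Next comes concentration. The indicator summands in $U_a$ and $U_b$ are not mutually independent because different message tuples share codewords (the codeword $\vecX_i(m)$ appears in every summand whose $i$-th coordinate is $m$). However, two summands are independent whenever their underlying index tuples are disjoint, so the dependency graph has bounded local structure. I would apply Janson's inequality in the form of Lemma~\ref{theorem:concentration} with $\nu=\exp(n\epsilon/2)$ and $\gamma=\exp(n\epsilon/(4\beta))$, where $\beta\leq|\cQ^c|+|\cT|+|\cJ|\leq 2k$ is the tuple length. Verifying the hypothesis on $E$ in~\eqref{eq:cond_on_E} requires checking that, for every nonempty subset $a$ of fixed indices of a summand, the expected partial sum after conditioning on those indices is at most $E$; this follows by repeating the chain-rule calculation using the conditioned indices as side information and the same ``bad index last'' ordering. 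A final union bound over $(\cQ,\cT,\cJ)$ (at most $2^{3k}$ choices), $\vecx_{\cQ}$ (at most $\prod_i|\cX_i|^n$ choices), and joint types (polynomial in $n$) completes the proof.

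The main obstacle will be the bookkeeping for the concentration hypothesis in part~\ref{codebook_general_eq2}: the conditional partial sums after fixing a subset of primed/unprimed indices admit several different chain-rule decompositions, and one must verify that each such decomposition yields a bound of the same order as $\bbE[U_b]$. This is the direct analog, in the general $k$-user adversary-structure setting, of the case-by-case analysis carried out via Tables~\ref{table:g2} and~\ref{table:g} in the proof of Lemma~\ref{lemma:codebook}. Because the hypothesis here is stated as a single mutual-information inequality rather than as a maximum over subsets of indices, the case analysis should be cleaner than in Lemma~\ref{lemma:codebook}, but will still require careful enumeration of which subsets of primed and unprimed indices can be conditioned on and still deliver the required bound via the chain rule.
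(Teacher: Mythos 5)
Your overall scaffold — random codebooks of the prescribed types, a bad-event bound for fixed parameters, union bound — matches the paper, and your treatment of part~\ref{codebook_general_eq1} is sound. But the concentration step you propose for part~\ref{codebook_general_eq2} does not go through, and this is not a bookkeeping issue: it is a genuine mismatch between the tool and the statement. Lemma~\ref{theorem:concentration} requires that $E$ dominate \emph{every} conditional partial sum obtained by pinning a nonempty proper subset of tuple coordinates. Take the partial tuple that pins all of $m_{\cQ^c}$ and leaves $(m'_{\cT},m'_{\cJ})$ free. The conditional expected count is then of order $N^{|\cT|+|\cJ|}\exp\bigl(-n\,I(X'_{\cT}X'_{\cJ};X_{\cQ^c}X_{\cQ})-n(\text{intra-correlation})\bigr)$, and the hypothesis of part~\ref{codebook_general_eq2} controls only $I(X_i;X_{\cQ^c\setminus\{i\}}X'_{\cT}X'_{\cJ}X_{\cQ})$ for some $i\in\cQ^c$ — it says nothing about $I(X'_{\cT}X'_{\cJ};X_{\cQ^c}X_{\cQ})$. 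In a joint type where $X'_{\cT}X'_{\cJ}$ is nearly independent of $X_{\cQ^c}X_{\cQ}$ while the large-$I$ condition is met via correlations internal to $X_{\cQ^c}X_{\cQ}$ (and where $|\cT|+|\cJ|\geq|\cQ^c|$, which is allowed since $\cJ$ can equal $\cQ$ and $\cT$ can equal $\cQ^c$), this conditional partial sum exceeds your target $E=N^{|\cQ^c|}\exp(-n\epsilon)$, so the Janson-style Lemma~\ref{theorem:concentration} cannot be invoked. There is also a secondary problem: you upper-bound the count of $m_{\cQ^c}$ admitting a witness by the count of all typical tuples, which throws away exactly the existential structure that makes the statement true and forces the failed hypothesis above.

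The paper handles part~\ref{codebook_general_eq2} with a different tool: Csisz\'ar--Narayan's Lemma~A1 (restated as Lemma~\ref{lemma:CN}), a martingale-style bound, not the Janson dependency-graph inequality used for Lemma~\ref{lemma:codebook}. The point is that the quantity to control is a \emph{fraction} of outer indices $m_l$ ($l\in\cQ^c$) for which \emph{some} $(m'_{\cT},m'_{\cJ})$ triggers, and one should reveal the codewords $\vecX_{l,1},\vecX_{l,2},\dots$ one at a time, with the rest of the codebooks fixed as side information. The conditional probability that the $j$-th new codeword $\vecX_{l,j}$ triggers — summed over all $N^{|\cT|+|\cJ|}$ candidate witnesses, each contributing $\exp\bigl(-n\,I(X_l;X_{\cQ^c\setminus\{l\}}X'_{\cT}X'_{\cJ}X_{\cQ})\bigr)$ — is precisely what the hypothesis of~\ref{codebook_general_eq2} bounds by $\exp(-n\epsilon)$ (up to a polynomial). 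Lemma~\ref{lemma:CN} then gives the doubly-exponential tail directly. There is one more wrinkle your plan misses: when $l\in\cT$ the witness index $m'_l$ is drawn from the \emph{same} codebook $\cC_l$ and must differ from $m_l$, so the event is not a function of $\vecZ_1,\dots,\vecZ_j$ alone; the paper fixes this by splitting the ``exists $m'_l\neq m_l$'' into $m'_l<m_l$ and $m'_l>m_l$ and applying the martingale bound separately in each direction. If you want to salvage a Janson-type route here, you would need a fundamentally different counting quantity, not the total tuple count.
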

\begin{proof}[Proof of Lemma~\ref{achievability_k_users}]
For $\epsilon>0$ \pink{(fixed later)}, large enough $n, \, N\geq\exp(n\epsilon)$ and types $P_i\in \cP_{\cX_i}^n$, $i\in [1:k]$, such that $\min_{i\in [1:k]}\min_{x_i\in \cX_i}P_i>0$, consider the codebooks $\cC_i, \, i\in [1:k]$ for message sets $\cM_i = [1:N],\, i\in[1:k]$  as given by Lemma~\ref{codebook_general}. The rates of the codebooks $R_i = R = \log_2(N)/n$ for some $R\geq\epsilon$. The decoder is given by Definition~\ref{defn:dec_general} for $\eta$ satisfying Lemma~\ref{lemma:dec_disambiguity}.
We will choose $\epsilon$ such that $\eta>(2k+1)(k+1)\epsilon$.

Let $\cQ\in \cA$ be the set of adversarial users who attack with input vector $\vecx_{\cQ}\in \cX^n_{\cQ}$. The probability of error is given by 
\begin{align}\label{prob_error}
\frac{1}{N^{|\cQ^c|}}\sum_{m_{\cQ^c}}\bbP\inp{\inb{\vecy: \phi(\vecy)\neq (m_{\cQ^c}, m_{\cQ}) \text{for some }m_{\cQ}}|\vecX_{\cQ^c} = f_{\cQ^c}(m_{\cQ^c}), \vecX_{\cQ} = \vecx_{\cQ}}
\end{align}
From the decoder definition (Definition~\ref{defn:dec_general}), we know that if $\phi(\vecy) = \tilde{m}_{[1:k]}$ where $m_{\cQ^c}\neq \tilde{m}_{\cQ^c}$, then $\vecy\notin\cap_{i\in \cQ^c}\cD^{(i)}_{m_i}$, that is, $\vecy\in \cup_{i\in \cQ^c}(\cD^{i}_{m_1})^c$. Thus, \eqref{prob_error} can be written as
\begin{align}
&\frac{1}{N^{|\cQ^c|}}\sum_{m_{\cQ^c}}\bbP\inp{\inb{\vecy: \vecy\notin\cap_{i\in \cQ^c}\cD^{(i)}_{m_i}}|\vecX_{\cQ^c} = f_{\cQ^c}(m_{\cQ^c}), \vecX_{\cQ} = \vecx_{\cQ}}\nonumber\\
&\leq\sum_{i\in \cQ^c}\frac{1}{N^{|\cQ^c|}}\sum_{m_{\cQ^c}}\bbP\inp{\inb{\vecy: \vecy\notin\cD^{(i)}_{m_i}}|\vecX_{\cQ^c} = f_{\cQ^c}(m_{\cQ^c}), \vecX_{\cQ} = \vecx_{\cQ}}\label{prob_error_1}
\end{align}
We will show that each term in \eqref{prob_error_1} falls exponentially. 
It holds when for joint distribution $P_{X_{\cQ^c}X_{\cQ}}$ defined by $(f_{\cQ^c}(m_{\cQ^c}),\vecx_{\cQ})\in T^n_{X_{\cQ^c} X_{\cQ}}$,  $I(X_i;X_{\cQ^c\setminus\{i\}}X_{\cQ})\geq \epsilon$ for any  $i\in \cQ^c$. This is because for any $j\in \cQ^c$,
\begin{align*}
&\frac{1}{N^{|\cQ^c|}}\sum_{\stackrel{m_{\cQ^c}:(f_{\cQ^c}(m_{\cQ^c}),\vecx_{Q})\in T^{n}_{X_{\cQ^c} X_{\cQ}},}{I(X_i;X_{\cQ^c\setminus\{i\}}X_{\cQ})\geq \epsilon \text{ for some }i\in \cQ^c}}\bbP\inp{\inb{\vecy: \vecy\notin\cD^{(j)}_{m_j}}|\vecX_{\cQ^c} = f_{\cQ^c}(m_{\cQ^c}), \vecX_{\cQ} = \vecx_{\cQ}}\\
&\leq \sum_{\stackrel{P_{X_{\cQ^c}X_{\cQ}}:}{I(X_i;X_{\cQ^c\setminus\{i\}}X_{\cQ})\geq \epsilon}\text{ for some }i\in \cQ^c}\frac{1}{N^{|\cQ^c|}}|\{m_{\cQ^c}\in \cM_{\cQ^c}: (f_{\cQ^c}(m_{\cQ^c}),\vecx_{Q})\in T^{n}_{X_{\cQ^c} X_{\cQ}} \}|\\
&\stackrel{(a)}{<}\sum_{P_{X_{\cQ^c}X_{\cQ}}}\exp\left\{-n\epsilon/2\right\}\\
&\rightarrow 0
\end{align*} where $(a)$ follows from Lemma~\ref{codebook_general}\ref{codebook_general_eq1}.
Thus, we can assume that $I(X_i;X_{\cQ^c\setminus\{i\}}X_{\cQ})< \epsilon$ for all  $i\in \cQ^c$. This implies that
\begin{align*}
|\cQ^c|\epsilon> &\sum_{i\in \cQ^c}I(X_i;X_{\cQ^c\setminus\{i\}}X_{\cQ})\\
\geq & D\inp{P_{X_{\cQ^c}X_{\cQ}}\Big|\Big|(\prod_{i\in \cQ^c}P_{X_{i}})P_{X_{\cQ}}}.
\end{align*}
Under this case, for any $j\in \cQ^c$,
\begin{align}
&\frac{1}{N^{|\cQ^c|}}\sum_{\stackrel{m_{\cQ^c:(f_{\cQ^c}(m_{\cQ^c}),\vecx_{\cQ})\in T^n_{X_{\cQ^c}X_{\cQ}}},}{I(X_i;X_{\cQ^c\setminus\{i\}}X_{\cQ})<\epsilon \,\forall \,i\in \cQ^c}}\bbP\inp{\inb{\vecy: \vecy\notin\cD^{(j)}_{m_j}}|\vecX_{\cQ^c} = f_{\cQ^c}(m_{\cQ^c}), \vecX_{\cQ} = \vecx_{\cQ}}\label{prob_error_3}\\
&\leq\sum_{\stackrel{P_{X_{\cQ^c}X_{\cQ}}:}{I(X_i;X_{\cQ^c\setminus\{i\}}X_{\cQ})<\epsilon} \,\forall \,i\in \cQ^c}\frac{1}{N^{|\cQ^c|}}\sum_{\stackrel{m_{\cQ^c}:}{ (f_{\cQ^c}(m_{\cQ^c}),\vecx_{\cQ})\in T^n_{X_{\cQ^c}X_{\cQ}}}}\sum_{y\in T^n_{Y|X_{\cQ^c}X_{\cQ}}(f_{\cQ^c}(m_{\cQ^c}),\vecx_{\cQ})}W^n(y|f_{\cQ^c}(m_{\cQ^c}),\vecx_{\cQ})\nonumber\\
&\leq\sum_{\stackrel{P_{X_{\cQ^c}X_{\cQ}}:}{I(X_i;X_{\cQ^c\setminus\{i\}}X_{\cQ})<\epsilon} \,\forall \,i\in \cQ^c}\frac{1}{N^{|\cQ^c|}}\sum_{m_{\cQ^c}: (f_{\cQ^c}(m_{\cQ^c}),\vecx_{\cQ}) \in T^n_{X_{\cQ^c}X_{\cQ}}}\exp\inb{-n D\inp{P_{X_{\cQ^c}X_{\cQ}Y}\Big|\Big|P_{X_{\cQ^c}X_{\cQ}}W}}\nonumber\\
&\leq\sum_{\stackrel{P_{X_{\cQ^c}X_{\cQ}}:}{I(X_i;X_{\cQ^c\setminus\{i\}}X_{\cQ})<\epsilon} \,\forall \,i\in \cQ^c}\exp\inb{-n D\inp{P_{X_{\cQ^c}X_{\cQ}Y}\Big|\Big|P_{X_{\cQ^c}X_{\cQ}}W}}\nonumber\\
&\leq\sum_{\stackrel{P_{X_{\cQ^c}X_{\cQ}}:}{I(X_i;X_{\cQ^c\setminus\{i\}}X_{\cQ})<\epsilon} \,\forall \,i\in \cQ^c}\exp\inb{-n \inp{D\inp{P_{X_{\cQ^c}X_{\cQ}Y}\Big|\Big|(\prod_{i\in \cQ^c}P_{X_{i}})P_{X_{\cQ}}W}-D\inp{P_{X_{\cQ^c}X_{\cQ}}\Big|\Big|(\prod_{i\in \cQ^c}P_{X_{i}})P_{X_{\cQ}}}}}\label{eq:star1}
\end{align}
We will break \eqref{eq:star1} into two terms, first corresponding to joint distributions $P_{X_{\cQ^c}X_{\cQ}Y}$ for which \\$D\inp{P_{X_{\cQ^c}X_{\cQ}Y}\Big|\Big|(\prod_{i\in \cQ^c}P_{X_{i}})P_{X_{\cQ}}W}\geq \eta$ and second corresponding to joint distributions for which \\$D\inp{P_{X_{\cQ^c}X_{\cQ}Y}\Big|\Big|(\prod_{i\in \cQ^c}P_{X_{i}})P_{X_{\cQ}}W} < \eta$. Let us start by considering $P_{X_{\cQ^c}X_{\cQ}Y}$ such that \\$D\inp{P_{X_{\cQ^c}X_{\cQ}Y}\Big|\Big|(\prod_{i\in \cQ^c}P_{X_{i}})P_{X_{\cQ}}W}\geq \eta$.
\begin{align*}
&\sum_{\substack{P_{X_{\cQ^c}X_{\cQ}}:\\I(X_i;X_{\cQ^c\setminus\{i\}}X_{\cQ})<\epsilon,\,\forall \,i\in \cQ^c,\\D\inp{P_{X_{\cQ^c}X_{\cQ}Y}\Big|\Big|(\prod_{i\in \cQ^c}P_{X_{i}})P_{X_{\cQ}}W}\geq \eta} }\exp\inb{-n \inp{D\inp{P_{X_{\cQ^c}X_{\cQ}Y}\Big|\Big|(\prod_{i\in \cQ^c}P_{X_{i}})P_{X_{\cQ}}W}-D\inp{P_{X_{\cQ^c}X_{\cQ}}\Big|\Big|(\prod_{i\in \cQ^c}P_{X_{i}})P_{X_{\cQ}}}}}\\
&\leq\sum_{\stackrel{P_{X_{\cQ^c}X_{\cQ}}:}{I(X_i;X_{\cQ^c\setminus\{i\}}X_{\cQ})<\epsilon} \,\forall \,i\in \cQ^c}\exp\inb{-n \inp{\eta-|\cQ^c|\epsilon}}\\
& \rightarrow 0 \text{ for }\eta>k\epsilon.
\end{align*}
Now, we need to evaluate \eqref{eq:star1}  for joint distributions $P_{X_{\cQ^c}X_{\cQ}Y}$ for which $D\inp{P_{X_{\cQ^c}X_{\cQ}Y}\Big|\Big|(\prod_{i\in \cQ^c}P_{X_{i}})P_{X_{\cQ}}W}< \eta$. In this case, since decoding condition \ref{dec:1} holds, $\vecy\notin\cD^{(j)}_{m_j}$ if decoding condition \ref{dec:2} fails. That is, there exist $\cQ'\in \cA$, not necessarily distinct from $\cQ$, a non-empty set $\cT\subseteq (\cQ\cup\cQ')^c$ with $j\in \cT$,  $\vecx'_{\cQ'}\in \mathcal{X}^n_{\cQ'}$, $m'_{\cQ\setminus \cQ'}\in \cM_{\cQ\setminus \cQ'}$, ${m'}_{\cT}\in \cM_{\cT}$  such that $m'_{t}\neq {m}_t$ for all $t \in \cT$ such that for the joint distribution $P_{X_{\cQ^c}X_{\cQ}X'_{\cT} X'_{\cQ\setminus \cQ'}X'_{\cQ'}Y}$ defined by $(f_{\cQ^c}({m}_{\cQ^c}), \vecx_{\cQ},f_{\cT}(m'_{\cT}),f_{\cQ\setminus \cQ'}(m'_{\cQ\setminus \cQ'}),\vecx'_{\cQ'},  \vecy) \in \allowbreak T^{n}_{X_{\cQ^c}X_{\cQ}X'_{\cT} X'_{\cQ\setminus \cQ'}X'_{\cQ'}Y}$, \\
\begin{align*}
     &D(P_{X'_{\cT}X'_{\cQ\setminus \cQ'}X_{\cQ^c\setminus{(\cT\cup\cQ')}} X'_{\cQ'} Y}||(\prod_{t\in \cT}P_{X'_{t}})(\prod_{j\in \cQ\setminus \cQ'}P_{X'_{j}})(\prod_{l\in {\cQ^c\setminus{(\cT\cup\cQ')}}}P_{X_{l}}) P_{X'_{\cQ'}} W)< \eta \\
    &\text{and }I(X_{\cQ^c}Y;X'_{\cT} X'_{\cQ\setminus \cQ'}|X_{\cQ}) \geq \eta. 
\end{align*}
Let $\cH := {\cQ^c\setminus{(\cT\cup\cQ')}}$ and $\cP^1_{X_{\cQ^c}X_{\cQ}X'_{\cT} X'_{\cQ\setminus \cQ'}X'_{\cQ'}Y}$ be the set of distributions $P_{X_{\cQ^c}X_{\cQ}X'_{\cT} X'_{\cQ\setminus \cQ'}X'_{\cQ'}Y}$ satisfying $D\inp{P_{X_{\cQ^c}X_{\cQ}Y}\Big|\Big|(\prod_{i\in \cQ^c}P_{X_{i}})P_{X_{\cQ}}W}\leq \eta$, $D(P_{X'_{\cT},X'_{\cQ\setminus \cQ'},X_{\cH}, X'_{\cQ'}, Y}||(\prod_{t\in \cT}P_{X'_{t}})(\prod_{j\in \cQ\setminus \cQ'}P_{X'_{j}})(\prod_{l\in \cH}P_{X_{l}}) P_{X'_{\cQ'}} W)< \eta$ and $I(X_{\cQ^c}Y;X'_{\cT} X'_{\cQ\setminus \cQ'}|X_{\cQ}) \geq \eta$. Using these definitions we see that, in this case, \eqref{prob_error_3} is upper bounded by
\begin{align*}
&\sum_{\substack{P_{X_{\cQ^c}X_{\cQ}X'_{\cT} X'_{\cQ\setminus \cQ'}X'_{\cQ'}Y}\\\in\cP^1_{X_{\cQ^c}X_{\cQ}X'_{\cT} X'_{\cQ\setminus \cQ'}X'_{\cQ'}Y}}}\frac{1}{N^{|\cQ^c|}}\Big|\Big\{m_{\cQ^c}\in \cM_{\cQ^c}: \exists m'_{\cQ\setminus\cQ'}\in \cM_{\cQ\setminus\cQ'},\, m'_{\cT}\in \cM_{\cT}\text{ where } m'_i \neq m_i \text{ for all } i \in \cT,\\
&\hspace{5cm}\, \text{ such that }(f_{\cQ^c}(m_{\cQ^c}),f_{\cT}(m'_{\cT}),f_{\cQ\setminus\cQ'}(m'_{\cQ\setminus\cQ'}),\vecx_{Q})\in T^{n}_{X_{\cQ^c}X'_{\cT} X'_{\cQ\setminus\cQ'} X_{\cQ}} \Big\}\Big|\\
\leq &\sum_{\substack{P_{X_{\cQ^c}X_{\cQ}X'_{\cT} X'_{\cQ\setminus \cQ'}X'_{\cQ'}Y}\\\in\cP_{X_{\cQ^c}X_{\cQ}X'_{\cT} X'_{\cQ\setminus \cQ'}X'_{\cQ'}Y}}}\exp\{-n\epsilon/2\}\hspace{1cm}\rightarrow 0
\end{align*}if for any $i\in \cQ^c$, $I(X_i;X_{\cQ^c\setminus\{i\}}X'_{\cT} X'_{\cQ\setminus\cQ'}X_{\cQ})\geq (|\cT|+|(\cQ\setminus\cQ')|)R+\epsilon$. This follows from the codebook property Lemma~\ref{codebook_general}\ref{codebook_general_eq2}. Thus, we only need to consider joint distributions for which $I(X_i;X_{\cQ^c\setminus\{i\}}X'_{\cT} X'_{\cQ\setminus\cQ'}X_{\cQ})< (|\cT|+|(\cQ\setminus\cQ')|)R+\epsilon$ for all $i\in \cQ^c$. This implies that $I(X_{\cQ^c};X'_{\cQ\setminus\cQ'}X'_{\cT}|X_{\cQ})\leq |\cQ^c|((|\cT|+|(\cQ\setminus\cQ')|)R+\epsilon)$. This is because $I(X_{\cQ^c};X'_{\cQ\setminus\cQ'}X'_{\cT}|X_{\cQ})\leq \sum_{i\in \cQ^c}I(X_{i};X'_{\cQ\setminus\cQ'}X'_{\cT}X_{\cQ^c\setminus\{i\}}|X_{\cQ})$.

 Let $\cP^2_{X_{\cQ^c}X_{\cQ}X'_{\cT} X'_{\cQ\setminus \cQ'}X'_{\cQ'}Y} = \{P_{X_{\cQ^c}X_{\cQ}X'_{\cT} X'_{\cQ\setminus \cQ'}X'_{\cQ'}Y}\in\cP^1_{X_{\cQ^c}X_{\cQ}X'_{\cT} X'_{\cQ\setminus \cQ'}X'_{\cQ'}Y}:I(X_{\cQ^c};X'_{\cQ\setminus\cQ'}X'_{\cT}|X_{\cQ})\leq |\cQ^c|((|\cT|+|(\cQ\setminus\cQ')|)R+\epsilon)\}$.
So, for any $j\in \cQ^c$, it is sufficient to analyze the following:
\begin{align*}
&\sum_{\cQ'\in\cA}\sum_{\stackrel{P_{X_{\cQ^c}X_{\cQ}X'_{\cT} X'_{\cQ\setminus \cQ'}X'_{\cQ'}Y}\in:}{\cP^2_{X_{\cQ^c}X_{\cQ}X'_{\cT} X'_{\cQ\setminus \cQ'}X'_{\cQ'}Y}}}\frac{1}{N^{|\cQ^c|}}\sum_{\stackrel{m_{\cQ^c}:}{ (f_{\cQ^c}(m_{\cQ^c}),\vecx_{\cQ})\in T^n_{X_{\cQ^c}X_{\cQ}}}}\sum_{\stackrel{\vecy\in T^n_{Y|X_{\cQ^c}X_{\cQ}}(f_{\cQ^c}(m_{\cQ^c}),\vecx_{\cQ})}{\vecy\notin\cD^{(j)}_{m_j}}}W^n(y|f_{\cQ^c}(m_{\cQ^c}),\vecx_{\cQ})\\
&\leq \sum_{\cQ'\in\cA}\sum_{\stackrel{P_{X_{\cQ^c}X_{\cQ}X'_{\cT} X'_{\cQ\setminus \cQ'}X'_{\cQ'}Y}\in:}{\cP^2_{X_{\cQ^c}X_{\cQ}X'_{\cT} X'_{\cQ\setminus \cQ'}X'_{\cQ'}Y}}}\frac{1}{N^{|\cQ^c|}}\sum_{\stackrel{m_{\cQ^c}:}{ (f_{\cQ^c}(m_{\cQ^c}),\vecx_{\cQ})\in T^n_{X_{\cQ^c}X_{\cQ}}}}\sum_{\substack{m'_{\cQ\setminus \cQ'}\in \cM_{\cQ\setminus \cQ'},\\{m'}_{\cT}\in \cM_{\cT}\\(f_{\cQ\setminus\cQ'}(m'_{\cQ\setminus \cQ'}),f_{\cT}(m'_{\cT}))\in T^n_{X'_{\cQ\setminus\cQ'}X'_{\cT}|X_{\cQ^c}X_{\cQ}}(f_{\cQ^c}(m_{\cQ^c}),\vecx_{\cQ})}}\\
&\sum_{{\vecy\in T^n_{Y|X_{\cQ^c}X_{\cQ}X'_{\cQ\setminus\cQ'}X'_{\cT}}(f_{\cQ^c}(m_{\cQ^c}),\vecx_{\cQ},f_{\cQ\setminus\cQ'}(m'_{\cQ\setminus \cQ'}),f_{\cT}(m'_{\cT}))}}W^n(y|f_{\cQ^c}(m_{\cQ^c}),\vecx_{\cQ})\\
&\leq \sum_{\cQ'\in\cA}\sum_{\stackrel{P_{X_{\cQ^c}X_{\cQ}X'_{\cT} X'_{\cQ\setminus \cQ'}X'_{\cQ'}Y}\in:}{\cP^2_{X_{\cQ^c}X_{\cQ}X'_{\cT} X'_{\cQ\setminus \cQ'}X'_{\cQ'}Y}}}\exp\inb{n(|\cQ\setminus\cQ'|+|\cT|)R-I(Y;X'_{\cQ\setminus\cQ'}X'_{\cT}|X_{\cQ^c}X_{\cQ})+\epsilon)}\\
&= \sum_{\cQ'\in\cA}\sum_{\stackrel{P_{X_{\cQ^c}X_{\cQ}X'_{\cT} X'_{\cQ\setminus \cQ'}X'_{\cQ'}Y}\in:}{\cP^2_{X_{\cQ^c}X_{\cQ}X'_{\cT} X'_{\cQ\setminus \cQ'}X'_{\cQ'}Y}}}\exp\inb{n(|\cQ\setminus\cQ'|+|\cT|)R-I(X_{\cQ^c}Y;X'_{\cQ\setminus\cQ'}X'_{\cT}|X_{\cQ})+ I(X_{\cQ^c};X'_{\cQ\setminus\cQ'}X'_{\cT}|X_{\cQ}) +\epsilon)}\\
\end{align*}
From the definitions of $\cP^2_{X_{\cQ^c}X_{\cQ}X'_{\cT} X'_{\cQ\setminus \cQ'}X'_{\cQ'}Y}$ and $\cP^1_{X_{\cQ^c}X_{\cQ}X'_{\cT} X'_{\cQ\setminus \cQ'}X'_{\cQ'}Y}$, we not that $I(X_{\cQ^c}Y;X'_{\cQ\setminus\cQ'}X'_{\cT}|X_{\cQ})\geq \eta$ and $I(X_{\cQ^c};X'_{\cQ\setminus\cQ'}X'_{\cT}|X_{\cQ})\leq |\cQ^c|((|\cT|+|(\cQ\setminus\cQ')|)R+\epsilon)$. 
This implies that 
\begin{align*}
&\sum_{\cQ'\in\cA}\sum_{\stackrel{P_{X_{\cQ^c}X_{\cQ}X'_{\cT} X'_{\cQ\setminus \cQ'}X'_{\cQ'}Y}\in:}{\cP^2_{X_{\cQ^c}X_{\cQ}X'_{\cT} X'_{\cQ\setminus \cQ'}X'_{\cQ'}Y}}}\exp\inb{n(|\cQ\setminus\cQ'|+|\cT|)R-I(X_{\cQ^c}Y;X'_{\cQ\setminus\cQ'}X'_{\cT}|X_{\cQ})+ I(X_{\cQ^c};X'_{\cQ\setminus\cQ'}X'_{\cT}|X_{\cQ}) +\epsilon)}\\
&\leq \sum_{\cQ'\in\cA}\sum_{\stackrel{P_{X_{\cQ^c}X_{\cQ}X'_{\cT} X'_{\cQ\setminus \cQ'}X'_{\cQ'}Y}\in:}{\cP^2_{X_{\cQ^c}X_{\cQ}X'_{\cT} X'_{\cQ\setminus \cQ'}X'_{\cQ'}Y}}}\exp\inb{n((|\cQ\setminus\cQ'|+|\cT|)R-\eta+ |\cQ^c|((|\cT|+|(\cQ\setminus\cQ')|)R+\epsilon) +\epsilon)}\\
&\leq 2^k\sum_{\stackrel{P_{X_{\cQ^c}X_{\cQ}X'_{\cT} X'_{\cQ\setminus \cQ'}X'_{\cQ'}Y}\in:}{\cP^2_{X_{\cQ^c}X_{\cQ}X'_{\cT} X'_{\cQ\setminus \cQ'}X'_{\cQ'}Y}}}\exp\inb{n(kR-\eta+ k(kR+\epsilon) +\epsilon)}\\
&= 2^k\sum_{\stackrel{P_{X_{\cQ^c}X_{\cQ}X'_{\cT} X'_{\cQ\setminus \cQ'}X'_{\cQ'}Y}\in:}{\cP^2_{X_{\cQ^c}X_{\cQ}X'_{\cT} X'_{\cQ\setminus \cQ'}X'_{\cQ'}Y}}}\exp\inb{n((k+k^2)R-\eta+(k+1)\epsilon)}\\
&\rightarrow 0 \text{ for }R<\frac{\eta-(k+1)\epsilon}{k+k^2}.
\end{align*}
Since $\eta>(2k+1)(k+1)\epsilon$, $\frac{\eta-(k+1)\epsilon}{k+k^2}>2\epsilon$. Thus, we can choose $R$ between $\epsilon$ and $2\epsilon$.
\end{proof}

\begin{proof}[Proof of Lemma~\ref{codebook_general}]
This proof is along the lines of \cite[Lemma 2]{AhlswedeC99} and \cite[Lemma 3]{CsiszarN88}.
We will generate the codebooks by a random experiment. For any $\cQ\in \cA$, $\vecx_{\cQ}\in \cX^n_{\cQ}$ and joint type $P_{X_{\cQ^c}X'_{\cT}X'_{\cJ}X_{\cQ}}\in \cP^n_{\cX_{\cQ^c}\times \cX_{\cT}\times\cX'_{\cJ}\times\cX_{\cQ}}$, we will show that the probability that statement \ref{codebook_general_eq2} does not hold, falls doubly exponentially in $n$. We will only analyze statement \ref{codebook_general_eq2} as choosing $\cT = \cJ = \emptyset$ in \ref{codebook_general_eq2}will also imply that the probability that \ref{codebook_general_eq1} does not hold also falls doubly exponentially. Since $\cA$,  $|\cX_{\cQ^c}^n|$ and $|\cP^n_{\cX_{\cQ^c}\times \cX_{\cT}\times\cX'_{\cJ}\times\cX_{\cQ}}|$  grow  at most exponentially in $n$, a union bound will imply  the existence of codebooks satisfying~\ref{codebook_general_eq1} and \ref{codebook_general_eq2}. The proof will use \cite[Lemma A1]{CsiszarN88} which we restate here for a quick reference. 
\begin{lemma}\cite[Lemma A1]{CsiszarN88}\label{lemma:CN}
Let $\vecZ_1, \ldots, \vecZ_N$ be arbitrary random variables, and let $f_j(\vecZ_1, \ldots, \vecZ_j)$ be arbitrary with $0\leq f_j\leq 1, \, j\in 1, \ldots, N$. Then the condition
\begin{align*}
E\insq{f_j(\vecZ_1, \ldots, \vecZ_j)|\vecZ_1, \ldots, \vecZ_{j-1}}\leq a, \qquad j \in [1:N], 
\end{align*}
implies that for any real number $t$,
\begin{align*}
\bbP\inb{\frac{1}{N}\sum_{j=1}^{N}f_j(\vecZ_1, \ldots, \vecZ_j)>t}\leq \exp{\inb{-N\inp{t-a\log{e}}}}.
\end{align*}
\end{lemma}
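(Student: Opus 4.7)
The plan is to prove this concentration inequality via a Chernoff-type bound combined with an iterated tower-property argument, exploiting the $[0,1]$-boundedness of the $f_j$'s to produce an almost-sure bound on the conditional moment generating function. Throughout I will take $\exp$ and $\log$ to the same base (say base $2$), so that the target bound reads $2^{-N(t - a\log_2 e)}$; the interesting regime is $t > a\log e$, as otherwise the inequality is trivial.

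First I would apply Markov's inequality to the exponentiated sum. For any fixed base $c > 1$,
\begin{align*}
\bbP\inb{\frac{1}{N}\sum_{j=1}^{N} f_j(\vecZ_1,\ldots,\vecZ_j) > t}
 = \bbP\inb{c^{\sum_j f_j} > c^{Nt}}
 \leq c^{-Nt}\, \bbE\insq{c^{\sum_{j=1}^N f_j(\vecZ_1,\ldots,\vecZ_j)}}.
\end{align*}
Taking $c = 2$ lets me use the elementary (convexity-based) inequality $2^x \leq 1 + x$ for $x \in [0,1]$.

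Next I would peel off the sum via the tower property. For each $j$, conditioning on $\vecZ_1,\ldots,\vecZ_{j-1}$ and using $f_j \in [0,1]$ together with the hypothesis $\bbE[f_j \mid \vecZ_1,\ldots,\vecZ_{j-1}] \leq a$, I get
\begin{align*}
\bbE\insq{2^{f_j(\vecZ_1,\ldots,\vecZ_j)} \,\Big|\, \vecZ_1,\ldots,\vecZ_{j-1}}
 \leq 1 + \bbE\insq{f_j \,\big|\, \vecZ_1,\ldots,\vecZ_{j-1}}
 \leq 1 + a \leq e^a = 2^{a\log e}.
\end{align*}
Crucially this upper bound is a deterministic constant, so by iterating the tower property from $j = N$ down to $j = 1$,
\begin{align*}
\bbE\insq{2^{\sum_{j=1}^N f_j}} \leq \bigl(2^{a\log e}\bigr)^N = 2^{Na\log e}.
\end{align*}
Substituting back into the Markov bound yields
\begin{align*}
\bbP\inb{\frac{1}{N}\sum_{j=1}^N f_j > t} \leq 2^{-Nt}\cdot 2^{Na\log e} = \exp\inb{-N(t - a\log e)},
\end{align*}
which is the desired inequality.

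The only subtle point is ensuring the iteration of the tower property goes through cleanly: because the conditional bound $\bbE[2^{f_j}\mid \vecZ_1,\ldots,\vecZ_{j-1}] \leq 2^{a\log e}$ is a sure constant (not merely an expectation bound), I can factor it out of the inner conditional expectation at every stage and recurse on $\bbE[2^{\sum_{j<N} f_j}]$; this is exactly the standard super-martingale-style argument for sums of conditionally bounded increments and presents no real obstacle. Everything else is a direct application of the convexity bound $2^x \leq 1 + x$ on $[0,1]$ and Markov's inequality.
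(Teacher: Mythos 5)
Your proof is correct: the Markov bound on $2^{\sum_j f_j}$, the chord inequality $2^{x}\leq 1+x$ on $[0,1]$, the bound $1+a\leq e^{a}=2^{a\log e}$, and the backward iteration of the tower property (valid because the conditional bound is a sure constant and $\sum_{j<N}f_j$ is measurable with respect to $\vecZ_1,\ldots,\vecZ_{N-1}$) give exactly the stated exponent. The paper itself does not prove this lemma -- it only restates it, citing Lemma A1 of Csisz\'ar and Narayan -- and your argument is essentially the same as the proof in that reference, so there is nothing further to reconcile.
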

 Let $T^n_{i}, \, i \in [1:k]$ denote the type class of $P_{i}$. We generate independent random codebooks for each user. The codebook for user $i\in [1:k]$, denoted by $\cC_i = \inp{\vecX_{i,1}, \vecX_{i,2}, \ldots, \vecX_{i,N}}$, consists of independent random vectors each distributed uniformly on $T^n_{i}$.  
Fix $\cQ\in \cA$,  $\vecx_{\cQ}  \in \mathcal{X}^n_{\cQ}$ and a joint type $P_{X_{\cQ^c}X'_{\cT}X'_{\cJ}X_{\cQ}}$ such that for every $i\in \cQ^c$, $P_{X_{i}} = P_{i}$, for $t\in \cT$, $P_{X'_{t}} = P_{t}$ and for $j\in \cJ$, $P_{X'_{j}} = P_{j}$ and $\vecx_{\cQ}\in T^n_{X_{\cQ}}$.
We will analyze the probability that \ref{codebook_general_eq2} does not hold under the randomness of codebook generation process. Note that the bound in \ref{codebook_general_eq2} is non-trivial only when $\cQ^c\neq \emptyset$. For any $l\in\cQ^c$,
\begin{align}
&\bbP\Big(\frac{1}{N^{|\cQ^c|}}|\{m_{\cQ^c}\in \cM_{\cQ^c}: \exists m'_{\cT}\in \cM_{\cT},\, m'_i \neq m_i \text{ for all } i \in \cT,\, m'_{\cJ}\in \cM_{\cJ},	\,\nonumber\\
&\hspace{3cm}(\vecX_{{\cQ^c},m_{\cQ^c}},\vecX_{{\cT},m'_{\cT}},\vecX_{\cJ,m'_{\cJ}},\vecx_{Q})\in T^{n}_{X_{\cQ^c}X'_{\cT} X'_{\cJ} X_{\cQ}} \}|> \exp\left\{-n\epsilon/2\right\}\Big)\nonumber\\
&=\bbP\Big(\sum_{m_{\cQ^c\setminus\{l\}}\in\cM_{\cQ^c\setminus\{l\}}}\frac{1}{N}|\{m_{l}\in \cM_{l}: \exists m'_{\cT}\in \cM_{\cT}\, m'_i \neq m_i \text{ for all } i \in \cT,\, m'_{\cJ}\in \cM_{\cJ},\, \nonumber\\
&\hspace{3cm}(\vecX_{\cQ^c,m_{\cQ^c}},\vecX_{\cT,m'_{\cT}},\vecX_{\cJ,m'_{\cJ}},\vecx_{Q})\in T^{n}_{X_{\cQ^c}X'_{\cT} X'_{\cJ} X_{\cQ}} \}|> N^{|\cQ^c|-1}\exp\left\{-n\epsilon/2\right\}\Big)\nonumber\\
&\leq\sum_{m_{\cQ^c\setminus\{l\}}\in\cM_{\cQ^c\setminus\{l\}}}\bbP\Big(\frac{1}{N}|\{m_{l}\in \cM_{l}: \exists m'_{\cT}\in \cM_{\cT}\, m'_i \neq m_i \text{ for all } i \in \cT,\, m'_{\cJ}\in \cM_{\cJ},\, \nonumber\\
&\hspace{3cm}(\vecX_{\cQ^c,m_{\cQ^c}},\vecX_{\cT,m'_{\cT}},\vecX_{\cJ,m'_{\cJ}},\vecx_{Q})\in T^{n}_{X_{\cQ^c}X'_{\cT} X'_{\cJ} X_{\cQ}} \}|> \exp\left\{-n\epsilon/2\right\}\Big)\nonumber\\
&\leq\sum_{\stackrel{m_{\cQ^c\setminus\{l\}}}{\in\cM_{\cQ^c\setminus\{l\}}}}\Bigg(\sum_{\stackrel{\vecx_{\cQ^c\setminus\{l\}}}{\in T^n_{X_{\cQ^c\setminus\{l\}}|X_{\cQ}}(\vecx_Q)}}\bbP\big(\vecX_{\cQ^c\setminus\{l\}, m_{\cQ^c\setminus\{l\}}}=\vecx_{\cQ^c\setminus\{l\}}\big)\bbP\Big(\frac{1}{N}|\{m_{l}\in \cM_{l}: \exists m'_{\cT}\in \cM_{\cT},\, m'_i \neq m_i \text{ for all } i \in \cT, \nonumber\\
&\hspace{1.5cm}\,m'_{\cJ}\in \cM_{\cJ}: (\vecX_{l, m_{l}}, \vecx_{\cQ^c\setminus\{l\}},\vecX_{\cT,m'_{\cT}},\vecX_{\cJ,m'_{\cJ}},\vecx_{Q})\in T^{n}_{X_{\cQ^c}X'_{\cT} X'_{\cJ} X_{\cQ}} \}|> \exp\left\{-n\epsilon/2\right\}\Big)\Bigg).\label{eq:last}
\end{align}
To analyze this, we first consider the case when $\cT\neq \emptyset$. Recall that $\cT\subseteq\cQ^c$. Without loss of generality, suppose $1\in \cQ^c\cap\cT$. Then for $l = 1$, we note that
\begin{align}
&\bbP\Big(\frac{1}{N}|\{m_{1}\in \cM_{1}: \exists m'_{\cT}\in \cM_{\cT},\, m'_t \neq m_t \text{ for all } t \in \cT, m'_{\cJ}\in \cM_{\cJ},\,\nonumber\\
&\hspace{1.5cm}\,(\vecX_{1, m_{1}}, \vecx_{\cQ^c\setminus\{1\}},\vecX_{\cT,m'_{\cT}},\vecX_{\cJ,m'_{\cJ}},\vecx_{Q})\in T^{n}_{X_{\cQ^c}X'_{\cT} X'_{\cJ} X_{\cQ}} \}|> \exp\left\{-n\epsilon/2\right\}\Big)\Bigg).\nonumber\\
&=\bbP\Big(\frac{1}{N}|j\in \cM_{1}: \exists i<j, i \in \cM_1, m'_{\cT\setminus\{1\}}\in \cM_{\cT\setminus\{1\}},\, m'_t \neq m_t \text{ for all } t \in \cT\setminus\{1\}, m'_{\cJ}\in \cM_{\cJ},\,\nonumber\\
&\hspace{1.5cm}\,((\vecX_{1, j}, \vecx_{\cQ^c\setminus\{1\}}),(\vecX_{1,i},\vecX_{\cT\setminus\{1\},m'_{\cT\setminus\{1\}}}),\vecX_{\cJ,m'_{\cJ}},\vecx_{Q})\in T^{n}_{X_{\cQ^c}X'_{\cT} X'_{\cJ} X_{\cQ}} \}|\nonumber\\
&\hspace{1.5cm}+\frac{1}{N}|j\in \cM_{1}: \exists i>j, i \in \cM_1, m'_{\cT\setminus\{1\}}\in \cM_{\cT\setminus\{1\}},\, m'_t \neq m_t \text{ for all } t \in \cT\setminus\{1\}, m'_{\cJ}\in \cM_{\cJ},\,\nonumber\\
&\hspace{1.5cm}\,((\vecX_{1, j}, \vecx_{\cQ^c\setminus\{1\}}),(\vecX_{1,i},\vecX_{\cT\setminus\{1\},m'_{\cT\setminus\{1\}}}),\vecX_{\cJ,m'_{\cJ}},\vecx_{Q})\in T^{n}_{X_{\cQ^c}X'_{\cT} X'_{\cJ} X_{\cQ}} \}|> \exp\left\{-n\epsilon/2\right\}\Big)\Bigg).\nonumber\\
&\leq\bbP\Big(\frac{1}{N}|j\in \cM_{1}: \exists i<j, i \in \cM_1, m'_{\cT\setminus\{1\}}\in \cM_{\cT\setminus\{1\}},\, m'_t \neq m_t \text{ for all } t \in \cT\setminus\{1\}, m'_{\cJ}\in \cM_{\cJ},\,\nonumber\\
&\hspace{1.5cm}\,((\vecX_{1, j}, \vecx_{\cQ^c\setminus\{1\}}),(\vecX_{1,i},\vecX_{\cT\setminus\{1\},m'_{\cT\setminus\{1\}}}),\vecX_{\cJ,m'_{\cJ}},\vecx_{Q})\in T^{n}_{X_{\cQ^c}X'_{\cT} X'_{\cJ} X_{\cQ}} \}|> \frac{1}{2}\exp\left\{-n\epsilon/2\right\}\Big)\Bigg)\label{eq:last1}\\
&\hspace{0.5cm}+\bbP\Big(\frac{1}{N}|j\in \cM_{1}: \exists i>j, i \in \cM_1, m'_{\cT\setminus\{1\}}\in \cM_{\cT\setminus\{1\}},\, m'_t \neq m_t \text{ for all } t \in \cT\setminus\{1\}, m'_{\cJ}\in \cM_{\cJ},\,\nonumber\\
&\hspace{1.5cm}\,((\vecX_{1, j}, \vecx_{\cQ^c\setminus\{1\}}),(\vecX_{1,i},\vecX_{\cT\setminus\{1\},m'_{\cT\setminus\{1\}}}),\vecX_{\cJ,m'_{\cJ}},\vecx_{Q})\in T^{n}_{X_{\cQ^c}X'_{\cT} X'_{\cJ} X_{\cQ}} \}|> \frac{1}{2}\exp\left\{-n\epsilon/2\right\}\Big)\Bigg).\label{eq:last2}
\end{align}

We will now analyze \eqref{eq:last1} using Lemma~\ref{lemma:CN}. For $j\in [1:N]$, let $Z_j = (\vecX_{1,j}, \cC_{[2:k]})$ where the codewords for $m_{\cQ^c\setminus\{i\}}$ are fixed to $\vecx_{\cQ^c\setminus\{1\}}$.  Let $f^{\vecx_{\cQ^c\setminus\{1\}}}_j(\vecZ_1, \ldots, \vecZ_j)$ be defined as
\begin{align*}
&f^{\vecx_{\cQ^c\setminus\{1\}}}_j(\vecZ_{1}, \ldots, \vecZ_{j}) = f^{\vecx_{\cQ^c\setminus\{1\}}}_j((\vecX_{1,1},\cC_{[2:k]}), \ldots, (\vecX_{1,j}, \cC_{[2:k]}))\\ 
&\hspace{2cm}= \begin{cases}1,&\text{ if }\exists \,i <j, (m'_{\cT\setminus\{1\}}, m'_{\cJ})\in \cM_{\cT\setminus\{1\}}\times\cM_{\cJ}\text{ such that }\forall t\in \cT\setminus\{1\}, m'_t\neq m_t,
\\&((\vecX_{1,j},\vecx_{\cQ^c\setminus\{1\}}),(\vecX_{1,i},\vecX_{(\cT\setminus\{1\},m'_{\cT\setminus\{1\}})}),\vecX_{(\cJ,m'_{\cJ})}, \vecx_{\cQ})\in T^n_{X_{\cQ^c}X'_{\cT}X'_{\cJ}X_{\cQ}},\\
		0,&\text{ otherwise. }\end{cases}
\end{align*} 
For $t= \frac{1}{2}\exp\inb{-n\epsilon/2}$, \eqref{eq:last1} can be written as
\begin{align*}
\inp{\frac{1}{N}\sum_{j= 1}^{N}f^{\vecx_{\cQ^c\setminus\{1\}}}_j(\vecZ_{1}, \ldots, \vecZ_{j})>t}
\end{align*}

We will compute $a$ in Lemma~\ref{lemma:CN}.
\begin{align*}
&\bbE\insq{f^{\vecx_{\cQ^c\setminus\{1\}}}_j((\vecX_{1,1},\cC_{[2:k]})), \ldots, (\vecX_{1,j}, \cC_{[2:k]}))|(\vecX_{1,1},\cC_{[2:k]}), \ldots, (\vecX_{1,(j-1)}, \cC_{[2:k]})}\\
&\leq \sum_{\substack{i \in \cM_1, i<j\\ (m'_{\cT\setminus\{1\}}, m'_{\cJ})\in \cM_{\cT\setminus\{1\}}\times\cM_{\cJ}\\m'_t\neq m_t, t\in \cT\setminus\{1\}}}\bbP\Big(((\vecX_{1,j},\vecx_{\cQ^c\setminus\{1\}}),(\vecX_{1,i},\vecX_{(\cT\setminus\{1\},m'_{\cT\setminus\{1\}})}),\vecX_{(\cJ,m'_{\cJ})}, \vecx_{\cQ})\in T^n_{X_{\cQ^c}X'_{\cT}X'_{\cJ}X_{\cQ}}\\
&\hspace{12cm}\Big|(\vecX_{1,1},\cC_{[2:k]}), \ldots, (\vecX_{1,(j-1)}, \cC_{[2:k]})\Big)\\
&\leq N^{\inp{|\cT|+|\cJ|}}\frac{\exp{\{nH(X_1|X_{\cQ^c\setminus\{1\}}X'_{\cT}X'_{\cJ}X_{\cQ})\}}}{(n+1)^{-|\cX_1|}\exp{\{nH(X_1)\}}}\\
&=(n+1)^{|\cX_1|}\exp{\inb{n\inp{(|\cT|+|\cJ|)(1/n)\log_2{N}-I(X_1;X_{\cQ^c\setminus\{1\}}X'_{\cT}X'_{\cJ}X_{\cQ})}}}
\end{align*}
Suppose $I(X_1;X_{\cQ^c\setminus\{1\}}X'_{\cT}X'_{\cJ}X_{\cQ})>(|\cT|+|\cJ|)(1/n)\log_2{N} +\epsilon$. Then, $a = (n+1)^{|\cX_1|}\exp{\inb{-n\epsilon}}$.
Thus,
\begin{align*}
&\inp{\frac{1}{N}\sum_{j= 1}^{N}f^{\vecx_{\cQ^c\setminus\{1\}}}_j(\vecZ_{1}, \ldots, \vecZ_{j})>t}\\
&\leq \exp\inb{-N\inp{t-a\log_2e}}\\
& = \exp\inb{-N\inp{\frac{1}{2}\exp\inb{-n\epsilon/2}-(n+1)^{|\cX_1|}\exp\inb{-n\epsilon}}}\\
&\leq \exp\inb{\inp{-\frac{1}{2}\exp\inb{n\epsilon/2}+(n+1)^{|\cX_1|}}}\text{ because }N\leq \exp\inb{n\epsilon}.\\
\end{align*}
Thus, \eqref{eq:last1} falls doubly exponentially. Since \eqref{eq:last2} is symmetric to \eqref{eq:last1}, we can obtain the same upper bound for \eqref{eq:last2} as well. This implies that \eqref{eq:last} falls doubly exponentially when $\cT\neq \emptyset$. Now, we consider the case when $\cT=\emptyset$. In this case, in order to show that \eqref{eq:last} falls doubly exponentially, we need to show that
\begin{align}\label{eq:last3}
\bbP\Big(\frac{1}{N}|\{m_{l}\in \cM_{l}: \exists m'_{\cJ}\in \cM_{\cJ},\,(\vecX_{l, m_{l}}, \vecx_{\cQ^c\setminus\{l\}},\vecX_{\cJ,m'_{\cJ}},\vecx_{Q})\in T^{n}_{X_{\cQ^c} X'_{\cJ} X_{\cQ}} \}|> \exp\left\{-n\epsilon/2\right\}\Big)\Bigg)
\end{align} falls doubly exponentially. This can be shown in a similar manner as the previous case. Again, without loss of generality, suppose $l = 1$. Let  $Z_j = (\vecX_{1,j}, \cC_{[2:k]})$, $j\in [1:N]$, where the codewords $\vecx_{\cQ^c\setminus\{1\}}$ corresponding to messages $m_{\cQ^c\setminus\{1\}}$ are fixed.  Let $g^{\vecx_{\cQ^c\setminus\{1\}}}_j(\vecZ_1, \ldots, \vecZ_j)$ be defined as
\begin{align*}
&g^{\vecx_{\cQ^c\setminus\{1\}}}_j(\vecZ_{1}, \ldots, \vecZ_{j}) = g^{\vecx_{\cQ^c\setminus\{1\}}}_j((\vecX_{1,1},\cC_{[2:k]}) \ldots, (\vecX_{1,j}, \cC_{[2:k]}))\\ 
&\hspace{2cm}= \begin{cases}1,&\text{ if }\exists  m'_{\cJ}\in \cM_{\cJ}\text{ such that }((\vecX_{1,j},\vecx_{\cQ^c\setminus\{1\}}),\vecX_{(\cJ,m'_{\cJ})}, \vecx_{\cQ})\in T^n_{X_{\cQ^c}X'_{\cJ}X_{\cQ}},\\
        0,&\text{ otherwise. }\end{cases}
\end{align*} 
For $t = \exp\left\{-n\epsilon/2\right\}$, \eqref{eq:last3} is
\begin{align*}
\bbP\Big(\frac{1}{N}\sum_{j=1}^Ng^{\vecx_{\cQ^c\setminus\{1\}}}_j(\vecZ_{1}, \ldots, \vecZ_{j})>t\Big).
\end{align*} Computing $a$,
\begin{align*}
&\bbE\insq{g^{\vecx_{\cQ^c\setminus\{1\}}}_j((\vecX_{1,1},\cC_{[2:k]}) \ldots, (\vecX_{1,j}, \cC_{[2:k]}))|(\vecX_{1,1},\cC_{[2:k]}) \ldots, (\vecX_{1,(j-1)}, \cC_{[2:k]})}\\
&\leq \sum_{m'_{\cJ}\in \cM_{\cJ}}\bbP\Big(((\vecX_{1,j},\vecx_{\cQ^c\setminus\{1\}}),\vecX_{(\cJ,m'_{\cJ})}, \vecx_{\cQ})\in T^n_{X_{\cQ^c}X'_{\cT}X'_{\cJ}X_{\cQ}}|(\vecX_{1,1},\cC_{[2:k]}) \ldots, (\vecX_{1,(j-1)}, \cC_{[2:k]})\Big)\\
&\leq N^{|\cJ|}\frac{\exp{\{nH(X_1|X_{\cQ^c\setminus\{1\}}X'_{\cJ}X_{\cQ})\}}}{(n+1)^{-|\cX_1|}\exp{\{nH(X_1)\}}}\\
&=(n+1)^{|\cX_1|}\exp{\inb{n\inp{(|\cJ|)(1/n)\log_2{N}-I(X_1;X_{\cQ^c\setminus\{1\}}X'_{\cJ}X_{\cQ})}}}
\end{align*}
Suppose $I(X_1;X_{\cQ^c\setminus\{1\}}X'_{\cJ}X_{\cQ})>(|\cJ|)(1/n)\log_2{N} +\epsilon$. Then, $a = (n+1)^{|\cX_1|}\exp{\inb{-n\epsilon}}$.
Thus,
\begin{align*}
&\inp{\frac{1}{N}\sum_{j= 1}^{N}g^{\vecx_{\cQ^c\setminus\{1\}}}_j(\vecZ_{1}, \ldots, \vecZ_{j})>t}\\
&\leq \exp\inb{-N\inp{t-a\log_2e}}\\
& = \exp\inb{-N\inp{\exp\inb{-n\epsilon/2}-(n+1)^{|\cX_1|}\exp\inb{-n\epsilon}}}\\
&\leq \exp\inb{-\inp{\exp\inb{n\epsilon/2}+(n+1)^{|\cX_1|}}}
\end{align*}
which falls doubly exponentially.
\end{proof}

\begin{lemma}\label{lemma:dec_disambiguity}
Suppose the Byzantine MAC $(W, \cA)$ is not symmetrizable. Let ${P}_{i}\in \cP_{\cX_i},\, i\in [1:k]$ be distributions such that ${P}_{i}(x_i)> 0,\,x_i\in \cX_i, \, i\in [1:k]$. Let $f_i:\mathcal{M}_i\rightarrow T^{n}_{X_i}, \,i\in [1:k]$ be some encoding maps. There exists a choice of $\eta>0$  such that if $(m_1,m_2,\ldots,m_k)\neq (\tilde{m}_1,\tilde{m}_2,\ldots, \tilde{m}_k)$, $\cD_{m_1,m_2,\ldots,m_k}\cap\cD_{\tilde{m}_1,\tilde{m}_2,\ldots,\tilde{m}_k} = \emptyset$.
\end{lemma}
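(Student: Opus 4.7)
The plan is to generalize the case analysis in the proof of Lemma~\ref{lemma:dec} using the unified symmetrizability condition \eqref{eq:thirty}. Suppose $\vecy\in\cD_{m_1,\ldots,m_k}\cap\cD_{\tilde{m}_1,\ldots,\tilde{m}_k}$ for distinct message tuples, and fix any $i$ with $m_i\neq\tilde{m}_i$. The defining conditions of $\dm{i}{m_i}$ and $\dm{i}{\tilde{m}_i}$ supply two ``explanations'': $(\cQ,\vecx_{\cQ},\hat{m}_{\cQ^c})$ with $\hat{m}_i=m_i$ and $(\cQ',\vecx'_{\cQ'},\hat{m}'_{(\cQ')^c})$ with $\hat{m}'_i=\tilde{m}_i$, each with the respective empirical joint type meeting the \ref{dec:1}-divergence bound. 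I would then set
\[
\cT:=\{j\in(\cQ\cup\cQ')^c:\hat{m}_j\neq\hat{m}'_j\},
\]
which contains $i$ (hence is nonempty) and automatically satisfies $\cT\cap\cQ=\cT\cap\cQ'=\emptyset$; this $\cT$ will be the set of users symmetrized at the end.

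Next, I would feed each explanation to the other as the ``alternative'' called for in condition~\ref{dec:2}. The key bookkeeping observation is that $(\cQ')^c$ partitions as $\cT\sqcup(\cQ\setminus\cQ')\sqcup(\cQ^c\setminus(\cT\cup\cQ'))$, and on the last part $\hat{m}_j=\hat{m}'_j$ by the very definition of $\cT$. Consequently the joint type appearing in premise~\eqref{dec:2.1} of $\dm{i}{m_i}$, when the alternative is taken as $(\cQ',\vecx'_{\cQ'},\hat{m}'_{\cT},\hat{m}'_{\cQ\setminus\cQ'})$, coincides with the empirical type of the other explanation and is therefore automatically bounded by the \ref{dec:1} divergence of $\dm{i}{\tilde{m}_i}$. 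Also $\hat{m}'_t\neq\hat{m}_t$ for $t\in\cT$ by construction, so the quantifier on alternative messages is met. The conclusion~\eqref{dec:2.2} then yields $I(X_{\cQ^c}Y;X'_{\cT}X'_{\cQ\setminus\cQ'}\mid X_{\cQ})<\eta$, and by swapping the roles of the two explanations, $I(X'_{(\cQ')^c}Y;X_{\cT}X_{\cQ'\setminus\cQ}\mid X'_{\cQ'})<\eta$.

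From here I would mimic the log-sum manipulation of Case~1(a) in the proof of Lemma~\ref{lemma:dec}: combining each \ref{dec:1} divergence with its companion mutual-information bound collapses the whole lot into a single KL divergence between the joint empirical type and a product-times-channel law; Pinsker's inequality then gives an $O(\sqrt{\eta})$ bound in total variation; and because every $P_i$ has full support (minimum mass $\alpha>0$), one can strip the input marginals to obtain a uniform $\ell_\infty$ bound between the two conditional channels
\[
\tilde V(y\mid\tilde{x}_{\cT},x_{\cQ\setminus\cQ'},x'_{\cQ'\setminus\cQ},x_{(\cT\cup\cQ\cup\cQ')^c})=\sum_{x_{\cQ}}q(x_{\cQ}\mid\tilde{x}_{\cT},x_{\cQ\setminus\cQ'})\,W(y\mid\cdot),
\]
\[
V(y\mid\tilde{x}_{\cT},x_{\cQ\setminus\cQ'},x'_{\cQ'\setminus\cQ},x_{(\cT\cup\cQ\cup\cQ')^c})=\sum_{x'_{\cQ'}}q'(x'_{\cQ'}\mid\tilde{x}_{\cT},x_{\cQ'\setminus\cQ})\,W(y\mid\cdot),
\]
for suitable conditionals $q,q'$ extracted from the two empirical types. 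The identity $\tilde V\equiv V$ is precisely the symmetrizability equation~\eqref{eq:thirty} for the triple $(\cT,\cQ,\cQ')$, which is admissible since $\cT\cap\cQ=\cT\cap\cQ'=\emptyset$. Non-symmetrizability of $(W,\cA)$, together with a compactness argument over the finite family of admissible triples and the compact set of conditional pairs $(q,q')$, furnishes a uniform $\zeta>0$ below which $\|\tilde V-V\|_\infty$ cannot fall; choosing $\eta$ so that the $O(\sqrt{\eta}/\alpha^{*})$ estimate is strictly below $\zeta$ produces the contradiction.

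The hard part will be the divergence bookkeeping in the third step -- tracking how the two \ref{dec:1} bounds and the two \eqref{dec:2.2} bounds fold, via iterated log-sum, into a single divergence of the exact form needed to produce the channels $\tilde V$ and $V$ of \eqref{eq:thirty}. The conceptual content is already present in the three explicit cases treated in Lemma~\ref{lemma:dec}; the simplification that makes the general argument essentially case-free is the observation in the second paragraph that the apparently complicated premise~\eqref{dec:2.1} is supplied free of charge by the other side's \ref{dec:1} bound, so one never needs to split the proof along the structure of $\cA$.
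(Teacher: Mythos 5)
Your proposal is correct and matches the paper's argument step for step: fixing a user $\gamma=i$ where the decoded tuples disagree, extracting the two explanations $(\cQ,\vecx_{\cQ},m_{\cQ^c})$ and $(\cQ',\vecx'_{\cQ'},\tilde m_{(\cQ')^c})$, defining $\cT$ as the coordinates in $(\cQ\cup\cQ')^c$ where they disagree, feeding each explanation as the alternative in the other's condition \ref{dec:2} (with the key observation that the \ref{dec:1} divergence of one explanation is exactly the premise \eqref{dec:2.1} for the other), then combining the resulting bounds via chain rule/log-sum, Pinsker, and the full-support floor $\alpha$ to reach an $\ell_\infty$ contradiction with $\cT$-non-symmetrizability for the triple $(\cT,\cQ,\cQ')$. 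You even make explicit the compactness step over conditional pairs that the paper leaves implicit; otherwise this is the same proof.
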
 
\begin{proof}
Suppose for $(\hat{m}_1,\hat{m}_2,\ldots,\hat{m}_k)\neq (\bar{m}_1,\bar{m}_2,\ldots, \bar{m}_k)$, there exists $\vecy\in \cD_{\hat{m}_1,\hat{m}_2,\ldots,\hat{m}_k}\cap \cD_{\bar{m}_1,\bar{m}_2,\ldots,\bar{m}_k}$. This implies that there exists $\gamma\in [1:k]$ such that $\vecy\in \dm{\gamma}{\hat{m}_{\gamma}}\cap\dm{\gamma}{\bar{m}_{\gamma}}$ for $\hat{m}_{\gamma}\neq\bar{m}_{\gamma}$. Then, by the decoder definition, there exist $\cQ,\tilde{\cQ}\in \cA$, not necessarily distinct, with $\gamma\notin\cQ, \tilde{\cQ}$; $\vecx_{\cQ}\in \cX^n_{\cQ}, \, \tilde{\vecx}_{\tilde{\cQ}}\in\cX^n_{\tilde{\cQ}}$; $m_{\cQ^c}\in \cM_{\cQ^c};\, \tilde{m}_{\tilde{\cQ}^c}\in \cM_{\tilde{\cQ}^c}$ with $m_{\gamma} = \hat{m}_{\gamma}$ and $\tilde{m}_{\gamma} = \bar{m}_{\gamma}$ such that 
for $\cT:=\inb{i\in(\cQ\cup\tilde{\cQ})^c:m_i\neq \tilde{m}_i}$ (note that $\gamma\in \cT$), and for the joint distributions $P_{{X}_{{\cT}} {X}_{\tilde{Q}\setminus{\cQ}}{X}'_{(\cT\cup\cQ\cup\tilde{\cQ})^c} {X}_{{\cQ}}Y}$ and $P_{\tilde{X}_{{\cT}} \tilde{X}_{{Q}\setminus\tilde{\cQ}}{X}'_{(\cT\cup\cQ\cup\tilde{\cQ})^c} \tilde{X}_{\tilde{\cQ}}Y}$ defined by \\
$(f_{{\cT}}({m}_{{\cT}}), f_{\tilde{\cQ}\setminus{\cQ}}({m}_{\tilde{\cQ}\setminus{\cQ}}),f_{(\cT\cup\cQ\cup\tilde{\cQ})^c}({m}_{(\cT\cup\cQ\cup\tilde{\cQ})^c}),{\vecx}_{{\cQ}}, \vecy) \in T^{n}_{{X}_{{\cT}} {X}_{\tilde{Q}\setminus{\cQ}}{X}'_{(\cT\cup\cQ\cup\tilde{\cQ})^c} {X}_{{\cQ}}Y}$
 and\\ $(f_{{\cT}}(\tilde{m}_{{\cT}}), f_{{\cQ}\setminus\tilde{\cQ}}(\tilde{m}_{{\cQ}\setminus\tilde{\cQ}}),f_{(\cT\cup\cQ\cup\tilde{\cQ})^c}({m}_{(\cT\cup\cQ\cup\tilde{\cQ})^c}),\tilde{\vecx}_{\tilde{\cQ}}, \vecy) \in T^{n}_{\tilde{X}_{{\cT}} \tilde{X}_{{Q}\setminus\tilde{\cQ}}{X}'_{(\cT\cup\cQ\cup\tilde{\cQ})^c} \tilde{X}_{\tilde{\cQ}}Y}$ respectively, the following holds.

\begin{align}
%&D(P_{X_{\cQ^c}X_{\cQ}Y}||(\prod_{i\in \cQ^c}{P}_{{i}})P_{X_{\cQ}}W)< \eta \text{ and }\label{disamb:1}\\
&D\inp{P_{{X}_{\cT} {X}_{\tilde{Q}\setminus{\cQ}}{X'}_{\inp{\cT\cup\cQ\cup\tilde{\cQ}}^c} {X}_{{\cQ}}Y}\Bigg|\Bigg|\inp{\prod_{i\in {\cT}}{P}_{{i}}}\inp{\prod_{j\in {\tilde{\cQ}\setminus{\cQ}}}P_{{j}}}\inp{\prod_{l\in \inp{\cT\cup\cQ\cup\tilde{\cQ}}^c}P_{{l}}}P_{{X}_{{\cQ}}}W}< \eta\label{disamb:1},\\
&D\inp{P_{\tilde{X}_{\cT} \tilde{X}_{{Q}\setminus\tilde{\cQ}}{X'}_{\inp{\cT\cup\cQ\cup\tilde{\cQ}}^c} \tilde{X}_{\tilde{\cQ}}Y}\Bigg|\Bigg|\inp{\prod_{i\in {\cT}}{P}_{{i}}}\inp{\prod_{j\in {{\cQ}\setminus\tilde{\cQ}}}P_{{j}}}\inp{\prod_{l\in \inp{\cT\cup\cQ\cup\tilde{\cQ}}^c}P_{{l}}}P_{\tilde{X}_{\tilde{\cQ}}}W}< \eta\label{disamb:2}.
\end{align}
 Then, the decoding condition~\ref{dec:2} implies that
\begin{align}
%&I(X_{\cQ^c},Y;\tilde{X}_{\cT},\tilde{X}_{\cQ\setminus \tilde{\cQ}}|X_{\cQ}) < \eta\text{ and }\label{eq:dec_cond1}\\
&I({X}_{\cT} {X}_{\tilde{\cQ}\setminus{\cQ}}{X'}_{(\cT\cup\cQ\cup\tilde{\cQ})^c} Y;\tilde{X}_{\cT} \tilde{X}_{{\cQ}\setminus \tilde{\cQ}}|X_{{\cQ}}) < \eta,\label{eq:dec_cond1}\\
&I(\tilde{X}_{\cT} \tilde{X}_{{\cQ}\setminus\tilde{\cQ}}{X'}_{(\cT\cup\cQ\cup\tilde{\cQ})^c} Y;X_{\cT} X_{\tilde{\cQ}\setminus \cQ}|\tilde{X}_{\tilde{\cQ}}) < \eta.\label{eq:dec_cond2}
\end{align}
For ease of notation, let $\cH:=(\cT\cup\cQ\cup\tilde{\cQ})^c$.
From \eqref{disamb:2}, by the chain rule of relative entropy, we get \\$D(P_{\tilde{X}_{\cT} \tilde{X}_{{Q}\setminus\tilde{\cQ}}}||(\prod_{i\in {\cT}}{P}_{{i}})(\prod_{j\in {{Q}\setminus\tilde{\cQ}}}P_{{j}}))< \eta$. Using this, \eqref{disamb:1} and \eqref{eq:dec_cond1}, we get

\begin{align*}
3&\eta>D(P_{{X}_{\cT} {X}_{\tilde{Q}\setminus{\cQ}}{X'}_{\cH} {X}_{{\cQ}}Y}||(\prod_{i\in {\cT}}{P}_{{i}})(\prod_{j\in {\tilde{\cQ}\setminus{\cQ}}}P_{{j}})(\prod_{l\in \cH}P_{{l}})P_{{X}_{{\cQ}}}W) + D(P_{\tilde{X}_{\cT}, \tilde{X}_{{Q}\setminus\tilde{\cQ}}}||(\prod_{i\in {\cT}}{P}_{{i}})(\prod_{j\in {{Q}\setminus\tilde{\cQ}}}P_{{j}}))\\
&\hspace{10cm} + I({X}_{\cT} {X}_{\tilde{\cQ}\setminus{\cQ}}{X'}_{\cH} Y;\tilde{X}_{\cT} \tilde{X}_{{\cQ}\setminus \tilde{\cQ}}|X_{{\cQ}})\\
&=D(P_{{X}_{\cT}{X}_{\tilde{Q}\setminus{\cQ}}{X'}_{\cH}{X}_{{\cQ}}Y}||(\prod_{i\in {\cT}}{P}_{{i}})(\prod_{j\in {\tilde{\cQ}\setminus{\cQ}}}P_{{j}})(\prod_{l\in \cH}P_{{l}})P_{{X}_{{\cQ}}}W) + D(P_{\tilde{X}_{\cT}, \tilde{X}_{{Q}\setminus\tilde{\cQ}}}||(\prod_{i\in {\cT}}{P}_{{i}})(\prod_{j\in {{Q}\setminus\tilde{\cQ}}}P_{{j}}))\\
&\hspace{7cm} + D({P_{{X}_{\cT} {X}_{\tilde{\cQ}\setminus{\cQ}}{X'}_{\cH}X_{\cQ} \tilde{X}_{\cT} \tilde{X}_{\cQ\setminus\tilde{\cQ}}Y}}||{P_{\tilde{X}_{\cT} \tilde{X}_{\cQ\setminus\tilde{\cQ}}}P_{X_{\cQ|\tilde{X}_{\cT} \tilde{X}_{\cQ\setminus\tilde{\cQ}}}}P_{{X}_{\cT} {X}_{\tilde{\cQ}\setminus{\cQ}}{X'}_{\cH}Y|X_{\cQ}}})\\
&\quad = D\inp{{P_{{X}_{\cT} {X}_{\tilde{\cQ}\setminus{\cQ}}{X'}_{\cH}X_{\cQ} \tilde{X}_{\cT} \tilde{X}_{\cQ\setminus\tilde{\cQ}}Y}}||(\prod_{i\in {\cT}}{P}_{{i}})(\prod_{j\in {\tilde{\cQ}\setminus{\cQ}}}P_{{j}})(\prod_{l\in \cH}P_{{l}})(\prod_{i\in {\cT}}{P}_{{i}})(\prod_{j\in {{Q}\setminus\tilde{\cQ}}}P_{{j}})P_{X_{\cQ|\tilde{X}_{\cT} \tilde{X}_{\cQ\setminus\tilde{\cQ}}}}W_{Y|X_{\cT} {X}_{\tilde{\cQ}\setminus{\cQ}}{X'}_{\cH}X_{\cQ}}}\\
&\quad \stackrel{\text{(b)}}{\geq} D\inp{{P_{{X}_{\cT} {X}_{\tilde{\cQ}\setminus{\cQ}}{X'}_{\cH} \tilde{X}_{\cT} \tilde{X}_{\cQ\setminus\tilde{\cQ}}Y}}||(\prod_{i\in {\cT}}{P}_{{i}})(\prod_{j\in {\tilde{\cQ}\setminus{\cQ}}}P_{{j}})(\prod_{l\in \cH}P_{{l}})(\prod_{i\in {\cT}}{P}_{{i}})(\prod_{j\in {{Q}\setminus\tilde{\cQ}}}P_{{j}})V_{Y|X_{\cT} {X}_{\tilde{\cQ}\setminus{\cQ}}{X'}_{\cH}\tilde{X}_{\cT} \tilde{X}_{\cQ\setminus\tilde{\cQ}}}}\\
&\qquad\text{ where }V_{Y|X_{\cT} {X}_{\tilde{\cQ}\setminus{\cQ}}{X'}_{\cH} \tilde{X}_{\cT} \tilde{X}_{\cQ\setminus\tilde{\cQ}}}(y|x_{\cT}, {x}_{\tilde{\cQ}\setminus{\cQ}},{x'}_{\cH},\tilde{x}_{\cT},\tilde{x}_{\cQ\setminus\tilde{\cQ}}) = \sum_{x_{\cQ}}P_{X_{\cQ|\tilde{X}_{\cT} \tilde{X}_{\cQ\setminus\tilde{\cQ}}}}(x_{\cQ}|\tilde{x}_{\cT},\tilde{x}_{\cQ\setminus\tilde{\cQ}})W(y|x_{\cT}, {x}_{\tilde{\cQ}\setminus{\cQ}},{x'}_{\cH},x_{\cQ}),
\end{align*}
and (b) follows from the chain rule of relative entropy. Using Pinsker's inequality, it follows that 
\begin{align}
&\sum_{x_{\cT}, {x}_{\tilde{\cQ}\setminus{\cQ}},{x'}_{\cH},\tilde{x}_{\cT}, \tilde{x}_{\cQ\setminus\tilde{\cQ}},y}\Big|P_{X_{\cT}{X}_{\tilde{\cQ}\setminus{\cQ}}{X'}_{\cH} \tilde{X}_{\cT} \tilde{X}_{\cQ\setminus\tilde{\cQ}}Y}({x_{\cT}, {x}_{\tilde{\cQ}\setminus{\cQ}},{x'}_{\cH}, \tilde{x}_{\cT}, \tilde{x}_{\cQ\setminus\tilde{\cQ}},y})\nonumber\\
&-(\prod_{i\in {\cT}}{P}_{{i}}(x_i))(\prod_{j\in {\tilde{\cQ}\setminus{\cQ}}}P_{{j}}(x_j))(\prod_{l\in \cH}P_{{l}}(x'_l))(\prod_{i\in {\cT}}{P}_{{l}}(\tilde{x_l}))(\prod_{j\in {{Q}\setminus\tilde{\cQ}}}P_{{j}}(\tilde{x}_j))V_{Y|X_{\cT} {X}_{\tilde{\cQ}\setminus{\cQ}}{X'}_{\cH} \tilde{X}_{\cT} \tilde{X}_{\cQ\setminus\tilde{\cQ}}}(y|x_{\cT}, {x}_{\tilde{\cQ}\setminus{\cQ}},{x'}_{\cH},\tilde{x}_{\cT},\tilde{x}_{\cQ\setminus\tilde{\cQ}})\Big|\nonumber\\
& \leq c\sqrt{3\eta},\label{eq:V}
\end{align}
where $c$ is some positive constant. By a symmetric analysis, we can show that 
\begin{align}
&\sum_{x_{\cT}, {x}_{\tilde{\cQ}\setminus{\cQ}},{x'}_{\cH},\tilde{x}_{\cT}, \tilde{x}_{\cQ\setminus\tilde{\cQ}},y}\Big|P_{X_{\cT}{X}_{\tilde{\cQ}\setminus{\cQ}}{X'}_{\cH} \tilde{X}_{\cT} \tilde{X}_{\cQ\setminus\tilde{\cQ}}Y}({x_{\cT}, {x}_{\tilde{\cQ}\setminus{\cQ}},{x'}_{\cH}, \tilde{x}_{\cT}, \tilde{x}_{\cQ\setminus\tilde{\cQ}},y})\nonumber\\
&-(\prod_{i\in {\cT}}{P}_{{i}}(x_i))(\prod_{j\in {\tilde{\cQ}\setminus{\cQ}}}P_{{j}}(x_j))(\prod_{l\in \cH}P_{{l}}(x'_l))(\prod_{i\in {\cT}}{P}_{{l}}(\tilde{x_l}))(\prod_{j\in {{Q}\setminus\tilde{\cQ}}}P_{{j}}(\tilde{x}_j))V'_{Y|X_{\cT} {X}_{\tilde{\cQ}\setminus{\cQ}}{X'}_{\cH} \tilde{X}_{\cT} \tilde{X}_{\cQ\setminus\tilde{\cQ}}}(y|x_{\cT}, {x}_{\tilde{\cQ}\setminus{\cQ}},{x'}_{\cH},\tilde{x}_{\cT},\tilde{x}_{\cQ\setminus\tilde{\cQ}})\Big|\nonumber\\
& \leq c\sqrt{3\eta},\label{eq:V'}
\end{align}
for 
\begin{align*}
V'_{Y|X_{\cT} {X}_{\tilde{\cQ}\setminus{\cQ}}{X'}_{\cH} \tilde{X}_{\cT} \tilde{X}_{\cQ\setminus\tilde{\cQ}}}(y|x_{\cT}, {x}_{\tilde{\cQ}\setminus{\cQ}},{x'}_{\cH},\tilde{x}_{\cT},\tilde{x}_{\cQ\setminus\tilde{\cQ}}) = \sum_{\tilde{x}_{\tilde{\cQ}}}P_{\tilde{X}_{\tilde{\cQ}|{X}_{\cT} {X}_{\tilde{\cQ}\setminus{\cQ}}}}(\tilde{x}_{\tilde{\cQ}}|{x}_{\cT},{x}_{\tilde{\cQ}\setminus{\cQ}})W(y|\tilde{x}_{\cT}, \tilde{x}_{{\cQ}\setminus\tilde{\cQ}},{x'}_{\cH},\tilde{x}_{\tilde{\cQ}}).
\end{align*}
By \eqref{eq:V} and \eqref{eq:V'}, 

\begin{align*}
&\sum_{x_{\cT}, x_{{\tilde{\cQ}\setminus{\cQ}}}, x'_{\cH}, \tilde{x}_{\cT},\tilde{x}_{{Q}\setminus\tilde{\cQ}},y}(\prod_{i\in {\cT}}{P}_{{i}}(x_i))(\prod_{j\in {\tilde{\cQ}\setminus{\cQ}}}P_{{j}}(x_j))(\prod_{l\in \cH}P_{{l}}(x'_l))(\prod_{i\in {\cT}}{P}_{{l}}(\tilde{x_l}))(\prod_{j\in {{Q}\setminus\tilde{\cQ}}}P_{{j}}(\tilde{x}_j))\\
&\Big|V_{Y|X_{\cT} {X}_{\tilde{\cQ}\setminus{\cQ}}{X'}_{\cH} \tilde{X}_{\cT} \tilde{X}_{\cQ\setminus\tilde{\cQ}}}(y|x_{\cT}, {x}_{\tilde{\cQ}\setminus{\cQ}},{x'}_{\cH},\tilde{x}_{\cT},\tilde{x}_{\cQ\setminus\tilde{\cQ}})-V'_{Y|X_{\cT} {X}_{\tilde{\cQ}\setminus{\cQ}}{X'}_{\cH} \tilde{X}_{\cT} \tilde{X}_{\cQ\setminus\tilde{\cQ}}}(y|x_{\cT}, {x}_{\tilde{\cQ}\setminus{\cQ}},{x'}_{\cH},\tilde{x}_{\cT},\tilde{x}_{\cQ\setminus\tilde{\cQ}})\Big| \leq 2c\sqrt{3\eta}.
\end{align*}
This implies that for $\alpha:=\min_{i\in[1:k]}\min_{x_i}{P}_{i}(x_i)$ (note that $\alpha>0$),
\begin{align}
&\max_{x_{\cT}, x_{{\tilde{\cQ}\setminus{\cQ}}}, x'_{\cH}, \tilde{x}_{\cT},\tilde{x}_{{Q}\setminus\tilde{\cQ}}, y}\Big|V_{Y|X_{\cT} {X}_{\tilde{\cQ}\setminus{\cQ}}{X'}_{\cH} \tilde{X}_{\cT} \tilde{X}_{\cQ\setminus\tilde{\cQ}}}(y|x_{\cT}, {x}_{\tilde{\cQ}\setminus{\cQ}},{x'}_{\cH},\tilde{x}_{\cT},\tilde{x}_{\cQ\setminus\tilde{\cQ}})\nonumber\\
&\hspace{4cm}-V'_{Y|X_{\cT} {X}_{\tilde{\cQ}\setminus{\cQ}}{X'}_{\cH} \tilde{X}_{\cT} \tilde{X}_{\cQ\setminus\tilde{\cQ}}}(y|x_{\cT}, {x}_{\tilde{\cQ}\setminus{\cQ}},{x'}_{\cH},\tilde{x}_{\cT},\tilde{x}_{\cQ\setminus\tilde{\cQ}})\Big| \leq \frac{2c\sqrt{3\eta}}{\alpha^j}\label{eq:final}
\end{align}
for some integer $j$.
Since $(W, \cA)$ is not symmetrizable, there exist $\zeta>0$ such that
\begin{align*}
&\max_{x_{\cT}, x_{{\tilde{\cQ}\setminus{\cQ}}}, x'_{\cH}, \tilde{x}_{\cT},\tilde{x}_{{Q}\setminus\tilde{\cQ}},y}\Big|\sum_{x_{\cQ}}P_{X_{\cQ}|\tilde{X}_{\cT} \tilde{X}_{\cQ\setminus\tilde{\cQ}}}(x_{\cQ}|\tilde{x}_{\cT},\tilde{x}_{\cQ\setminus\tilde{\cQ}})W(y|x_{\cT}, {x}_{\tilde{\cQ}\setminus{\cQ}},{x'}_{\cH},x_{\cQ})\nonumber\\
 &\hspace{3cm}-\sum_{\tilde{x}_{\tilde{\cQ}}}P_{\tilde{X}_{\tilde{\cQ}}|{X}_{\cT} {X}_{\tilde{\cQ}\setminus{\cQ}}}(\tilde{x}_{\tilde{\cQ}}|{x}_{\cT},{x}_{\tilde{\cQ}\setminus{\cQ}})W(y|\tilde{x}_{\cT}, \tilde{x}_{{\cQ}\setminus\tilde{\cQ}},{x'}_{\cH},\tilde{x}_{\tilde{\cQ}})\Big|> \zeta.
\end{align*}
That is
\begin{align*}
&\max_{x_{\cT}, x_{{\tilde{\cQ}\setminus{\cQ}}}, x'_{\cH}, \tilde{x}_{\cT},\tilde{x}_{{Q}\setminus\tilde{\cQ}}, y}\Big|V_{Y|X_{\cT} {X}_{\tilde{\cQ}\setminus{\cQ}}{X'}_{\cH} \tilde{X}_{\cT} \tilde{X}_{\cQ\setminus\tilde{\cQ}}}(y|x_{\cT}, {x}_{\tilde{\cQ}\setminus{\cQ}},{x'}_{\cH},\tilde{x}_{\cT},\tilde{x}_{\cQ\setminus\tilde{\cQ}})\\
&\hspace{4cm}-V'_{Y|X_{\cT} {X}_{\tilde{\cQ}\setminus{\cQ}}{X'}_{\cH} \tilde{X}_{\cT} \tilde{X}_{\cQ\setminus\tilde{\cQ}}}(y|x_{\cT}, {x}_{\tilde{\cQ}\setminus{\cQ}},{x'}_{\cH},\tilde{x}_{\cT},\tilde{x}_{\cQ\setminus\tilde{\cQ}})\Big| > \zeta.
\end{align*}
This contradicts \eqref{eq:final} for choice of $\eta$ and $\alpha$ satisfying $\frac{2c\sqrt{3\eta}}{\alpha^j}<\zeta$.
\end{proof}

  %!TeX root=paper.tex
\section{Proof of  the converse of Theorem~\ref{achievability_k_users_final}}\label{sec:empty_interior}
\begin{proof}
Suppose the given \bmac $(W, \cA)$ is symmetrizable. Then, there exist $\cT\subseteq [1:k]$, $\cQ, \cQ'\in  \cA$, not necessarily distinct, satisfying $\cQ\cap \cT = \cQ'\cap\cT = \emptyset$,  and  a pair of conditional distributions $P_{X_{\cQ}|X_{\cT\cup (\cQ\setminus \cQ')}}$ and $P'_{X_{\cQ'}|X_{\cT\cup (\cQ'\setminus \cQ)}}$ satisfying \eqref{eq:converse_k_user_0} below:
\begin{align}
&\sum_{x'_{\cQ}\in \cX_{\cQ}}P_{X_{\cQ}|X_{\cT\cup (\cQ\setminus \cQ')}}(x'_{\cQ}|x_{\cT}, x_{\cQ\setminus \cQ'})W(y|x'_{\cQ}, \tilde{x}_{\cT}, x_{\cQ'\setminus \cQ}, x_{(\cT\cup \cQ \cup \cQ')^c})\nonumber\\
 &= \sum_{\tilde{x}_{\cQ'}\in \cX_{\cQ'}}P'_{X_{\cQ'}|X_{\cT\cup (\cQ'\setminus \cQ)}}(\tilde{x}_{\cQ'}|\tilde{x}_{\cT}, x_{\cQ'\setminus \cQ})W(y|\tilde{x}_{\cQ'}, {x}_{\cT}, x_{\cQ\setminus \cQ'}, x_{(\cT\cup \cQ \cup \cQ')^c})\label{eq:converse_k_user_0}
\end{align} for every $y,  {x}_{\cT}, x_{\cQ\setminus \cQ'}, x_{(\cT\cup \cQ \cup \cQ')^c}, \tilde{x}_{\cT}$ and  $x_{\cQ'\setminus \cQ}$. 
Let $m_{\cT},\tilde{m}_{\cT}\in\mathcal{M}_3$ be such that $m_{i}\neq \tilde{m}_{i}$ for all $i\in \cT$. We consider two different scenarios in which users in set $\cT$ send $f_{\cT}(m_{\cT})$ and $f_{\cT}(\tilde{m}_{\cT})$ respectively:
\begin{enumerate}
\item[(i)] In the first setting, users in the set $\cQ$ are adversarial. They choose a message tuple $M_{\cQ\setminus\cQ'}\sim\textup{Unif}(\mathcal{M}_{\cQ\setminus\cQ'})$. Let $\vecX_{\cQ\setminus\cQ'}=f_{\cQ\setminus\cQ'}(M_{\cQ\setminus\cQ'})$. To produce their input ${\vecX'}_{\cQ, {m}_{\cT}}$ to the channel, they pass $(f_{\cT}(m_{\cT}),\vecX_{\cQ\setminus\cQ'})$ through $P^n_{X_{\cQ}|X_{\cT\cup (\cQ\setminus \cQ')}}$, the $n$-fold product of the channel $P_{X_{\cQ}|X_{\cT\cup (\cQ\setminus \cQ')}}$. Users in the set $(\cT\cup\cQ)^c$, being non-adversarial, send as their input to the channel $\vecX_{(\cT\cup\cQ)^c}=f_{(\cT\cup\cQ)^c}(M_{(\cT\cup\cQ)^c})$, where $M_{(\cT\cup\cQ)^c}\sim\textup{Unif}(\mathcal{M}_{(\cT\cup\cQ)^c})$. Users in the set $\cT$ send $f_{\cT}(\tilde{m}_{\cT})$. The probability of any vector $\vecy$ under this scenario is
\begin{align}
&\sum_{\vecx'_{\cQ}\in \cX^n_{\cQ}}\inp{\frac{1}{|\cM_{\cQ\setminus\cQ'}|}\sum_{m_{\cQ\setminus \cQ'}\in \cM_{\cQ\setminus \cQ'}}P^n_{X_{\cQ}|X_{\cT\cup (\cQ\setminus \cQ')}}(\vecx'_{\cQ}|f_{\cT}(m_{\cT}), f_{\cQ\setminus \cQ'}(m_{\cQ\setminus \cQ'}))}\nonumber\\
&\hspace{5cm}\frac{1}{|\cM_{(\cT\cup\cQ)^c}|}\sum_{m_{(\cT\cup\cQ)^c}\in \cM_{(\cT\cup\cQ)^c}}W^n\inp{\vecy|\vecx'_{\cQ}, f_{\cT}\inp{\tilde{m}_{\cT}}, f_{(\cT\cup \cQ)^c}\inp{m_{(\cT\cup\cQ)^c}}}\nonumber\\
&=\frac{1}{|\cM_{\cQ\setminus\cQ'}|\times|\cM_{(\cT\cup\cQ)^c}|}\sum_{m_{\cQ\setminus \cQ'}\in \cM_{\cQ\setminus \cQ'}}\sum_{m_{(\cT\cup\cQ)^c}\in \cM_{(\cT\cup\cQ)^c}}\prod_{t=1}^n\inp{\sum_{\vecx'_{\cQ,t}\in \cX_{\cQ}}   P_{X_{\cQ}|X_{\cT\cup (\cQ\setminus \cQ')}}(\vecx'_{\cQ,t}|f_{\cT, t}(m_{\cT}), f_{\cQ\setminus \cQ',t}(m_{\cQ\setminus \cQ'}))}\nonumber\\
&\hspace{8cm}W\inp{\vecy_t|\vecx'_{\cQ,t}, f_{\cT,t}\inp{\tilde{m}_{\cT}}, f_{(\cT\cup \cQ)^c,t}\inp{m_{(\cT\cup\cQ)^c}}}\nonumber\\
&=\frac{1}{|\cM_{(\cT\cup(\cQ\cap\cQ'))^c}|}\sum_{m_{(\cT\cup(\cQ\cap\cQ'))^c}\in \cM_{(\cT\cup(\cQ\cap\cQ'))^c}}\prod_{t=1}^n\inp{\sum_{\vecx'_{\cQ,t}\in \cX_{\cQ}}   P_{X_{\cQ}|X_{\cT\cup (\cQ\setminus \cQ')}}(\vecx'_{\cQ,t}|f_{\cT, t}(m_{\cT}), f_{\cQ\setminus \cQ',t}(m_{\cQ\setminus \cQ'}))}\nonumber\\
&\hspace{8cm}W\inp{\vecy_t|\vecx'_{\cQ,t}, f_{\cT,t}\inp{\tilde{m}_{\cT}}, f_{(\cT\cup \cQ)^c,t}\inp{m_{(\cT\cup\cQ)^c}}}\nonumber\\
&= \bbE_{\vecX'_{\cQ, {m}_{\cT}}}\insq{e_{\cQ, \cT}(\vecy,\tilde{m}_{\cT}, \vecX'_{\cQ, {m}_{\cT}})}.\label{eq:converse_k_user_1}
 \end{align}
 where $e_{\cQ, \cT}(\bar{\vecy},\bar{m}_{\cT}, \vecx_{\cQ})$ denotes $\frac{1}{|\cM_{(\cT\cup\cQ)^c}|}\sum_{m_{(\cT\cup\cQ)^c}\in \cM_{(\cT\cup\cQ)^c}}W^n\inp{\bar{\vecy}|\vecx_{\cQ}, f_{\cT}\inp{\bar{m}_{\cT}}, f_{(\cT\cup \cQ)^c}\inp{m_{(\cT\cup\cQ)^c}}}$ for $\bar{\vecy}$, $\bar{m}_{\cT}\in \cM_{\cT}$ and $\vecx_{\cQ}\in \cX^n_{\cQ}$.
The notation $\vecy_t$ represents the $t^{\text{th}}$ component of the vector $\vecy$ and for any set $\cS$ and message tuple $m_{\cS}\in \cM_{\cS}$, $f_{\cS, t}(m_{\cS})$ and $\vecx_{\cS, t}$ represents the $|\cS|$-length tuple containing the $t^{\text{th}}$ components of the vectors in $f_{\cS}(m_{\cS})$ and $\vecx_{\cS}$ respectively.

\item[(ii)] In the second setting, users in the set $\cQ'$ are adversarial. They choose a message tuple $M_{\cQ'\setminus\cQ}\sim\textup{Unif}(\mathcal{M}_{\cQ'\setminus\cQ})$. Let $\vecX_{\cQ'\setminus\cQ}=f_{\cQ'\setminus\cQ}(M_{\cQ'\setminus\cQ})$. To produce their input $\tilde{\vecX}_{\cQ', \tilde{m}_{\cT}}$ to the channel, they pass $(f_{\cT}(\tilde{m}_{\cT}),\vecX_{\cQ\setminus\cQ'})$ through $P'^{n}_{X_{\cQ'}|X_{\cT\cup (\cQ'\setminus \cQ)}}$, the $n$-fold product of the channel $P'_{X_{\cQ'}|X_{\cT\cup (\cQ'\setminus \cQ)}}$. Users in the set $(\cT\cup\cQ')^c$, being non-adversarial, send  $\vecX_{(\cT\cup\cQ')^c}=f_{(\cT\cup\cQ')^c}(M_{(\cT\cup\cQ')^c})$ as their input to the channel, where $M_{(\cT\cup\cQ')^c}\sim\textup{Unif}(\mathcal{M}_{(\cT\cup\cQ')^c})$. Users in the set $\cT$ send $f_{\cT}({m}_{\cT})$. The probability of any vector $\vecy$ under this scenario is
\begin{align}
&\sum_{\tilde{\vecx}_{\cQ'}\in \cX^n_{\cQ'}}\inp{\frac{1}{|\cM_{\cQ'\setminus\cQ}|}\sum_{m_{\cQ'\setminus \cQ}\in \cM_{\cQ'\setminus \cQ}}P'^n_{X_{\cQ'}|X_{\cT\cup (\cQ'\setminus \cQ)}}(\tilde{\vecx}_{\cQ'}|f_{\cT}(\tilde{m}_{\cT}), f_{\cQ'\setminus \cQ}(m_{\cQ'\setminus \cQ}))}\nonumber\\
&\hspace{6cm}\frac{1}{|\cM_{(\cT\cup\cQ')^c}|}\sum_{m_{(\cT\cup\cQ')^c}\in \cM_{(\cT\cup\cQ')^c}}W^n\inp{\vecy|\tilde{\vecx}_{\cQ'}, f_{\cT}\inp{{m}_{\cT}}, f_{(\cT\cup \cQ')^c}\inp{m_{(\cT\cup\cQ')^c}}}\nonumber\\
&=\frac{1}{|\cM_{\cQ'\setminus\cQ}|\times|\cM_{(\cT\cup\cQ')^c}|}\sum_{m_{\cQ'\setminus \cQ}\in \cM_{\cQ'\setminus \cQ}}\sum_{m_{(\cT\cup\cQ')^c}\in \cM_{(\cT\cup\cQ')^c}}\prod_{t=1}^n\sum_{\tilde{x}_{\cQ',t}\in \cX_{\cQ'}}   P'_{X_{\cQ'}|X_{\cT\cup (\cQ'\setminus \cQ)}}(\tilde{x}_{\cQ',t}|f_{\cT, t}(\tilde{m}_{\cT}), f_{\cQ'\setminus \cQ,t}(m_{\cQ'\setminus \cQ}))\nonumber\\
&\hspace{8cm}W\inp{\vecy_t|\tilde{x}_{\cQ',t}, f_{\cT,t}\inp{{m}_{\cT}}, f_{(\cT\cup \cQ')^c,t}\inp{m_{(\cT\cup\cQ')^c}}}\nonumber\\
&=\frac{1}{|\cM_{(\cT\cup(\cQ\cap\cQ'))^c}|}\sum_{m_{(\cT\cup(\cQ\cap\cQ'))^c}\in \cM_{(\cT\cup(\cQ\cap\cQ'))^c}}\prod_{t=1}^n\sum_{\tilde{x}_{\cQ',t}\in \cX_{\cQ'}}   P'_{X_{\cQ'}|X_{\cT\cup (\cQ'\setminus \cQ)}}(\tilde{x}_{\cQ',t}|f_{\cT, t}(\tilde{m}_{\cT}), f_{\cQ'\setminus \cQ,t}(m_{\cQ'\setminus \cQ}))\nonumber\\
&\hspace{8cm}W\inp{\vecy_t|\tilde{x}_{\cQ',t}, f_{\cT,t}\inp{{m}_{\cT}}, f_{(\cT\cup \cQ')^c,t}\inp{m_{(\cT\cup\cQ')^c}}}\nonumber\\
&= \bbE_{\tilde{\vecX}_{\cQ', \tilde{m}_{\cT}}}\insq{e_{\cQ', \cT}(\vecy, m_{\cT},\tilde{\vecX}_{\cQ', \tilde{m}_{\cT}})}\label{eq:converse_k_user_2}
 \end{align}
where $e_{\cQ', \cT}(\bar{\vecy},\bar{m}_{\cT}, \vecx_{\cQ'})$ denotes $\frac{1}{|\cM_{(\cT\cup\cQ')^c}|}\sum_{m_{(\cT\cup\cQ')^c}\in \cM_{(\cT\cup\cQ')^c}}W^n\inp{\bar{\vecy}|\vecx_{\cQ'}, f_{\cT}\inp{\bar{m}_{\cT}}, f_{(\cT\cup \cQ')^c}\inp{m_{(\cT\cup\cQ')^c}}}$ for $\bar{\vecy}\in \cY^n$, $\bar{m}_{\cT}\in \cM_{\cT}$ and $\vecx_{\cQ'}\in \cX^n_{\cQ'}$.
\end{enumerate}
Note that
\begin{align}
&\bbE_{\tilde{\vecX}_{\cQ', \tilde{m}_{\cT}}}\insq{e_{\cQ', \cT}(\vecy, m_{\cT},\tilde{\vecX}_{\cQ', \tilde{m}_{\cT}})}\\
&=\frac{1}{|\cM_{(\cT\cup(\cQ\cap\cQ'))^c}|}\sum_{m_{(\cT\cup(\cQ\cap\cQ'))^c}\in \cM_{(\cT\cup(\cQ\cap\cQ'))^c}}\prod_{t=1}^n\sum_{\tilde{x}_{\cQ',t}\in \cX_{\cQ'}}   P'_{X_{\cQ'}|X_{\cT\cup (\cQ'\setminus \cQ)}}(\tilde{x}_{\cQ',t}|f_{\cT, t}(\tilde{m}_{\cT}), f_{\cQ'\setminus \cQ,t}(m_{\cQ'\setminus \cQ}))\nonumber\\
&\hspace{8cm}W^n\inp{\vecy_t|\tilde{x}_{\cQ',t}, f_{\cT,t}\inp{{m}_{\cT}}, f_{(\cT\cup \cQ')^c,t}\inp{m_{(\cT\cup\cQ')^c}}}\nonumber\\
&\stackrel{(a)}{=}\frac{1}{|\cM_{(\cT\cup(\cQ\cap\cQ'))^c}|}\sum_{m_{(\cT\cup(\cQ\cap\cQ'))^c}\in \cM_{(\cT\cup(\cQ\cap\cQ'))^c}}\prod_{t=1}^n\sum_{{x'}_{\cQ,t}\in \cX_{\cQ}}   P_{X_{\cQ}|X_{\cT\cup (\cQ\setminus \cQ')}}({x'}_{\cQ,t}|f_{\cT, t}({m}_{\cT}), f_{\cQ\setminus \cQ',t}(m_{\cQ\setminus \cQ'}))\nonumber\\
&\hspace{8cm}W^n\inp{\vecy_t|{x'}_{\cQ,t}, f_{\cT,t}\inp{\tilde{m}_{\cT}}, f_{(\cT\cup \cQ)^c,t}\inp{m_{(\cT\cup\cQ)^c}}}\nonumber\\
&=\bbE_{\vecX'_{\cQ, {m}_{\cT}}}\insq{e_{\cQ, \cT}(\vecy,\tilde{m}_{\cT}, \vecX'_{\cQ, {m}_{\cT}})}\label{eq:why_1}
\end{align} where $(a)$ follows from \eqref{eq:converse_k_user_0}.

Arguing along the lines of~\cite[(3.29) in page~187]{CsiszarN88},\shortonly{ we can show that the average probability of error is at least 1/8. See the extended version~\cite{ExtendedDraft} for more details.}

\longonly{
\begin{align*}
2P_{e}&\geq \frac{1}{|\cM_{\cT}|}\sum_{\tilde{m}_{\cT}}\sum_{\vecy:\phi(\vecy)_{\cT}\neq \tilde{m}_{\cT}}\bbE_{\vecX'_{\cQ}}\insq{e_{\cQ, \cT}(\vecy,\tilde{m}_{\cT}, \vecX'_{\cQ})} + \frac{1}{|\cM_{\cT}|}\sum_{{m}_{\cT}}\sum_{\vecy:\phi(\vecy)_{\cT}\neq {m}_{\cT}}\bbE_{\tilde{\vecX}_{\cQ'}}\insq{e_{\cQ', \cT}(\vecy, m_{\cT},\tilde{\vecX}_{\cQ'})}
\end{align*}
for any attack vectors $\vecX'_{\cQ}$ and $\tilde{\vecX}_{\cQ'}$. In particular, for the attack vectors $\frac{1}{|\cM_{\cT}|}\sum_{\tilde{m}_{\cT}}\tilde{\vecX}_{\cQ', \tilde{m}_{\cT}}$ and $\frac{1}{|\cM_{\cT}|}\sum_{{m}_{\cT}}\vecX'_{\cQ, {m}_{\cT}}$,
\begin{align*}
2P_{e}&\geq \frac{1}{|\cM_{\cT}|^2}\sum_{\tilde{m}_{\cT}}\sum_{{m}_{\cT}}\inp{\sum_{\vecy:\phi(\vecy)_{\cT}\neq \tilde{m}_{\cT}}\bbE_{\vecX'_{\cQ}}\insq{e_{\cQ, \cT}(\vecy,\tilde{m}_{\cT}, \vecX'_{\cQ, m_{\cT}})} + \sum_{\vecy:\phi(\vecy)_{\cT}\neq {m}_{\cT}}\bbE_{\tilde{\vecX}_{\cQ'}}\insq{e_{\cQ', \cT}(\vecy, m_{\cT},\tilde{\vecX}_{\cQ',\tilde{m}_{\cT}})}}\\
&\stackrel{(a)}{=}\frac{1}{|\cM_{\cT}|^2}\sum_{\tilde{m}_{\cT}}\sum_{{m}_{\cT}}\inp{\sum_{\vecy:\phi(\vecy)_{\cT}\neq \tilde{m}_{\cT}}\bbE_{\vecX'_{\cQ}}\insq{e_{\cQ, \cT}(\vecy,\tilde{m}_{\cT}, \vecX'_{\cQ, m_{\cT}})} + \sum_{\vecy:\phi(\vecy)_{\cT}\neq {m}_{\cT}}\bbE_{\vecX'_{\cQ}}\insq{e_{\cQ, \cT}(\vecy,\tilde{m}_{\cT}, \vecX'_{\cQ, m_{\cT}})}}
\end{align*} where $(a)$ follows from \eqref{eq:why_1}.
For $m_{\cT}\neq \tilde{m}_{\cT}$, the term in brackets on the right is upper bounded by 1, otherwise it is upper bounded by zero. Thus, 
\begin{align*}
P_{e}&\geq \frac{|\cM_{\cT}|(|\cM_{\cT}|-1)/2}{2|\cM_{\cT}|^2}\geq \frac{1}{8}.
\end{align*}}
This completes the proof of the converse.
\end{proof}

	\end{document}